\def\enbox#1{\enskip\mbox{#1}\enskip}
\renewcommand{\evaluate}{\mathclose{\Downarrow}}
\renewcommand{\paralS}{\mathbin{\|}}
\renewcommand{\permin}[1]{\ensuremath{\mathopen{\modin}#1}\xspace}
\renewcommand{\permout}[1]{\ensuremath{\mathopen{\modout}#1}\xspace}
\renewcommand{\esep}{\mathbin{\bot}}
\renewcommand{\envmap}[2]{\ensuremath{#1\colon#2}\xspace}
\renewcommand{\sepproc}[3]{{\ensuremath{ #2 \esep_{#1} #3}}}
\renewcommand{\figref}[1]{Figure~\ref{#1}}
\renewcommand{\secref}[1]{Section~\ref{#1}}
\renewcommand{\defref}[1]{Definition~\ref{#1}}
\renewcommand{\exref}[1]{Example~\ref{#1}}
\renewcommand{\thmref}[1]{Theorem~\ref{#1}}
\newenvironment{theorembis}[2]{\begin{trivlist}
\item[\hskip \labelsep {\bfseries Theorem #1}] {(#2).}
\it}{\end{trivlist}}
\newenvironment{lemmabis}[2]{\begin{trivlist}
\item[\hskip \labelsep {\bfseries Lemma #1}] {(#2).}
\it}{\end{trivlist}}
\newenvironment{propbis}[2]{\begin{trivlist}
\item[\hskip \labelsep {\bfseries Proposition #1}] {(#2).}
\it}{\end{trivlist}}
\def\doi{7 (3:07) 2011}
\begin{document}

\title[Permission-based Separation Logic for Message-Passing Concurrency]{Permission-based Separation Logic for Message-Passing Concurrency}

\author[A.~Francalanza]{Adrian Francalanza\rsuper a}	%required
\address{{\lsuper a}ICT, University of Malta}	%required
\email{adrian.francalanza@um.edu.mt}  %optional
%\thanks{thanks 1, optional.}	%optional

\author[J.~Rathke]{Julian Rathke\rsuper b}	%optional
\address{{\lsuper{b,c}}ECS, University of Southampton, UK}	%optional
\email{jr2@ecs.soton.ac.uk, vs@ecs.soton.ac.uk}  %optional
% \thanks{thanks 2, optional.}	%optional

\author[V.~Sassone]{Vladimiro Sassone\rsuper c}	%optional
\address{\vskip -6 pt}	%optional
%\email{}  %optional
% \thanks{}	%optional

%% etc.

%% required for running head on odd and even pages, use suitable
%% abbreviations in case of long titles and many authors:

%% mandatory lists of keywords and classifications:
\keywords{process calculi, separation logic, deterministic concurrency}
\subjclass{F.3.1, F.3.2, F.3.3}
%\titlecomment{OPTIONAL comment concerning the title, \eg, if a variant
%or an extended abstract of the paper has appeared elsewhere}
%%%%%%%%%%%%%%%%%%%%%%%%%%%%%%%%%%%%%%%%%%%%%%%%%%%%%%%%%%%%%%%%%%%%%%%%%%%

%% the abstract has to PRECEED the command \maketitle:
%% be sure not to issue the \maketitle command twice!

\begin{abstract}
  \noindent  We develop local reasoning techniques for message passing concurrent programs based on ideas from separation logics and resource usage analysis.
We extend processes with permission-resources and define a reduction semantics 
for this extended language. This provides a foundation for interpreting separation formulas for message-passing concurrency. We also define a sound proof system 
permitting us to infer satisfaction compositionally using local, separation-based reasoning.
\end{abstract}

\maketitle

%% start the paper here:

\section{Introduction}
\label{sec:introduction}

Reasoning about concurrent programs is widely acknowledged to be a
difficult business due to the intricate interferences between 
threads scheduled non-deterministically and to the intrinsic difficulty 
of scaling reasoning techniques to account for these.  
The use of \emph{local} reasoning techniques in the guise of separation logic 
\cite{Reynolds02,Ohearn04} represents a promising advance for this area. 
Here, the state of resources acted upon by threads are reasoned about
independently, where possible.  This approach has spawned numerous  papers \cite{Brookes07,
  permission,Vafeiadis07amarriage,Calcagno07modularsafety,Feng, ParkinsonBornat,
 Calcagno07localaction} targetting the 
shared-variable concurrency model. 

An alternative, albeit slightly higher-level, model of concurrency is
that of message-passing whereby the only shared resources
allowed are the message-passing channels themselves.  Access to these
shared resources is  controlled by the message-passing
programming interface and so interfering behaviour is more explicit 
and therefore can be tracked more readily. This paradigm has been extensively studied using
process calculi \cite{Hoare78, Milner89,
  Milner99,SangiorgiWalker01} but
has also been efficiently implemented  and deployed in more programming
 oriented settings \cite{Galletly97,Reppy99,Armstrong07}.

\smallskip

In this paper we develop a local reasoning proof system for
message-passing concurrent programs, based on ideas from
both concurrent separation logics~\cite{Ohearn04} and permission-based resource 
analyses~\cite{Boyland03,permission}.  Our initial step towards the broader and ambitious goal of local reasoning 
for message-passing systems focusses on the study of confluent value-passing programs, a 
class large enough to present a significant 
theoretical challenge while still being of considerable practical interest.

Our approach to using processes as a model for separation-based Hoare-style reasoning centers around the conceptual partitioning of message-passing programs  
into `program state',  \ie  the values emitted on asynchronous outputs,  and `program code', \ie  the remaining parallel processes acting on this state.    
For instance, one way  to view the program
\begin{align}\label{eq:intro1}
  & \pioutA{c_1}{4} \quad\paralS\quad \pioutA{c_2}{2}\quad \paralS\quad \piin{c_1}{x}{\piin{c_2}{y}{\,\cmpCB{x = y}{\pioutA{c_1}{(x, y, x\!+\!x)}\paral \pioutA{d}{\,}}{\pioutA{c_2}{(x, y, x\!+\!y)}\paral \pioutA{d}{\,} }}}
\end{align}
would be to consider the  asynchronous
outputs
\begin{align} \label{eq:intro-state-proc}
&\pioutA{c_1}{4} \paralS \pioutA{c_2}{2}  
\end{align}
 as the `state', holding values $4$ and $2$  at `addresses' $c_1$ and $c_2$ and the process
 \begin{align}\label{eq:intro-st-tran-proc}
 &   \piin{c_1}{x}{\piin{c_2}{y}{\,\cmpCB{x =
         y}{\pioutA{c_1}{(x,x\!+\!x)}\paral \pioutA{d}{\,}}{\pioutA{c_2}{(x,y,x\!+\!y)} \paral \pioutA{d}{\,}}}}
 \end{align}
as the `code', or % which  acts as a
state transformer, consuming the values on channels $c_1$ and $c_2$ and producing a new state holding the previous values consumed from  $c_1$ and $c_2$ together with their summation  on either of the \emph{previously used} channels $c_1$ and $c_2$, depending on whether these values were equal or not, and signals on channel $d$.
Using such an analogy, we can decompose our analysis  and reason about sub-programs independently. We can
 interpret assertions over  processes such as 
\begin{align}
  &\label{eq:intro-state-formula}
  \fcons{\fstate{c_1}{4}}{\fstate{c_2}{2}} 
\end{align}
 This  assertion, % \eqref{eq:intro-state-formula}
  a conjunction, describes a process reducing to a `soup' of two asynchronous outputs on channels $c_1$ and $c_2$, holding values $4$ and $2$, respectively; the process in \eqref{eq:intro-state-proc} would satisfy this assertion.      This state-based process view also permits an intuitive formulation of Hoare-style  sequents of the form 
\begin{align}
  & \label{eq:intro-conj-formula}
  \seqNEB{ \fcons{\fstate{c_1}{4}}{\fstate{c_2}{2}}}{\_\_\_\_}{\,\fcons{\fstate{c_2}{4, 2, 6}}{\fstate{d}{}}} 
\end{align}
Such a sequent   describes a process that, once composed with the state described by the precondition
 $ \fcons{\fstate{c_1}{4}}{\fstate{c_2}{2}}$,   reduces to some other stable state  described by the postcondition \fcons{\fstate{c_2}{4,2,6}}{\fstate{d}{}}, with  values $4,2,6$  on channel $c_2$ and an empty tuple on channel $d$ acting as a signal, indicating that the data on channel $c_2$ can now be accessed; the process in \eqref{eq:intro-st-tran-proc} would satisfy this sequent. 
  In compositional fashion,  we can then determine that the entire program of (\ref{eq:intro1}) reduces to a stable state satisfying \fcons{\fstate{c_2}{4,2,6}}{\fstate{d}{}}  from  separate analyses relating to the two sub-programs.

This state-based logical view of processes  lends itself well to the specification of deterministic computation whose operation can be decomposed into asynchronous parallel subcomponents.  Application examples range from   parallel processing of data, \cite{Knuth98}, to distributed agreement problems, \cite{Lynch96}.   State-based specifications would allow a more natural expression of the expected behaviour of these algorithms because they are agnostic \wrt the specific temporal order of the generation and consumption of this state.  For instance, as opposed to temporal logics such as \cite{Stirling01},   the formula \eqref{eq:intro-state-formula} does not specify whether the sub-state \fstate{c_1}{4} is to be produced before \fstate{c_2}{2} or vice-versa. Dually, sequents such as \eqref{eq:intro-conj-formula} do not necessarily specify if and how this data on channels is to be consumed.   The temporal agnosticism in `spatial' specifications is also more amenable to intuitive decompositions and composition of analysis; we can verify that a process $P$ satisfies the formula \eqref{eq:intro-state-formula} from sub-processes making up $P$ that satisfy \fstate{c_1}{4} and \fstate{c_2}{2}.   

The state-based logical process view is also appealing because the specifications of the algorithms we are considering are also, in some sense, more data-centric rather than control-centric and focus more on the relationships between data at the beginning and the end of computation. One can in fact view the sequent in \eqref{eq:intro-conj-formula} as a description on how the data on channels $c_1$ and $c_2$ in the precondition changes to the data on $c_2$ in the postcondition; the dependencies between such data will be made more explicit later on once we introduce value variables.  Finally, data-centric applications such as in-place sorting also tend to reuse data-placeholders during computation, possibly at different types and formats \eg the code in \eqref{eq:intro-st-tran-proc}, in order to minimise resource usage.  Correctness specifications such as the sequent in \eqref{eq:intro-conj-formula} handle this aspect rather naturally as opposed to traditional correctness analysis for message passing programs, such as  type systems in \cite{BergerHondaYoshida08,TerauchiAiken08}, which often limit channel usage to one form of data.

\medskip

A central assumption underlying our process interpretations is the absence of program interference and the deterministic reduction of processes.  In a message-passing paradigm, program interference is caused by \emph{races} for  values, through  multiple outputs or inputs competing for  \emph{shared} channels.   In cases such as \eqref{eq:intro1} above, where channels are reused,  rudimentary analysis based on the free names of processes \eg \cite{AmDam96} are too coarse for adequate race detection.  Moreover, these type based safety analyses \eg \cite{BergerHondaYoshida08,TerauchiAiken08} tend to avoid reasoning about data, approximating control over branching as a result.

To reason about such interferences in the presence of channel reuse, we define a resource-semantics for processes, based on \emph{linear} input and output permissions.   Every process is embellished with a set of permissions,  \conCP, denoting that  process $P$ `owns' the permissions in set \eV (\cf\emph{ownership hypothesis},\cite{Ohearn04}).  The resource-semantics limits communication to the permissions owned by a process.  Thus, for example, for the following reduction  to occur
\begin{align}\label{eq:perm-trans}
  & \conC{\,\pioutA{c_1}{4}\,} \paralS \con{\,\eVV}{\piin{c_1}{x}{P}\,} \quad\reduc\quad \con{\eV\cup\eVV}{\,P\subC{4}{x}\,}
\end{align}
the output process, \pioutA{c_1}{4}, (\resp the input process, \piin{c_1}{x}{P}), 
must have the permission to output  (\resp to input) on channel $c_1$ in its permission-set $\eV$ (\resp \eVV). 
  Since permissions are  not part of the original process semantics (they are only added in the resource-semantics to aid reasoning) the above enriched reduction also describes the \emph{implicit} transfer of permissions \eV from the output process, \pioutA{c_1}{4}, to the input process, \piin{c_1}{x}{P}, \ie adding \eV to the already owned permissions \eVV, as a result of their synchronisation (\cf \emph{ownership transfer} \cite{Ohearn04}).   

  \begin{align}
    \label{eq:61}
    & \seqNEBS{ \fcons{\fstate{c_1}{4}}{\fstate{c_2}{2}}}{\con{\sset{\permin{c_1},\permin{c_2},\permout{d}}}{
        \begin{array}{l}
        \piin{c_1}{x}{\piin{c_2}{y}{\,\cmpCA{x =y}{\pioutA{c_1}{(x,x\!+\!x)}\paral \pioutA{d}{\,}}\\
         \qquad\qquad\qquad\qquad\elseA{\pioutA{c_2}{(x,y,x\!+\!y)} \paral \pioutA{d}{\,}}}}
   \end{array} }}{\,\fcons{\fstate{c_2}{4, 2, 6}}{\fstate{d}{}}}
  \end{align}

The earlier sequent \eqref{eq:intro-conj-formula} can now be stated in terms of the process of \eqref{eq:intro-st-tran-proc} confined by the permissions   $\permin{c_1},\permin{c_2}$ and $\permout{d}$, as shown in \eqref{eq:61}.  Note how channel reuse    % at the level of permissions
manifests itself through the fact that our permission-confined % state-transforming
process in \eqref{eq:61} does not own the output permissions $\permout{c_1}$ and $\permout{c_2}$, even though they are clearly used in this code.  These however will be obtained from the precondition; from a permission perspective, the inputs on channels $c_1$ and $c_2$ act as guards, masking the use of the permissions   $\permout{c_1}$ and $\permout{c_2}$.

Making ownership explicit also simplifies the detection of races in the model and provides an immediate notion of process separation in terms of owned permissions.  For instance, in  \eqref{eq:perm-trans} we determine that there are no races across the two processes  \conC{\,\pioutA{c_1}{4}\,} and \con{\,\eVV}{\piin{c_1}{x}{P}\,} without having to analyse the actual structure of the respective confined processes \pioutA{c_1}{4} and \piin{c_1}{x}{P}; instead we  simply check that their permission sets are disjoint \ie $\eV \cap \eVV = \emptyset$    (\cf \emph{separation property} \cite{Ohearn04}).   This assumed disjunction of permissions will also play a major role in the semantic definition of \eqref{eq:61}, as it allows us to give a separation interpretation to our sequents.

\begin{align}
  \label{eq:62}
    & \seqNEBS{ \fcons{\fstate{c_1}{4}}{\fstate{c_2}{2}}}{
        \piin{c_1}{x}{\piinBB{c_2}{y}{\begin{array}{l}
        \,\cmpCA{x =y}{\pioutA{c_1}{(x,x\!+\!x)}\paral \pioutA{d}{\,}} \\
          \qquad\qquad% \qquad\qquad
         \elseA{\pioutA{c_2}{(x,y,x\!+\!y)} \paral \pioutA{d}{\,}}
    \end{array}}}
 }{\,\fcons{\fstate{c_2}{4, 2, 6}}{\fstate{d}{}}}
  \end{align}

Another pleasing property of this embellishment is that, in the absence of races,  % these characteristics ensure that
this resource-semantics corresponds to the standard (permission-less) reduction semantics. Thus the permission semantics can  be used as a \emph{narrative} to support reasoning about confluent reductions of processes. This therefore means that we can abstract over the existence of such a narrative % describing process determinism
in our sequents and express  \eqref{eq:61} simply as the permission-less sequent in \eqref{eq:62}, thereby returning to our original aim and obtaining Hoare-triple specifications in terms of processes. 

\medskip

We  define a sound proof system for the aforementioned logic and resource-confined processes with judgements of the form:
\begin{align*}
  \seqEB{\fV}{S}{\fVV}
\end{align*}
The environment, \env, associates channels with ownership transfer invariants  of permissions, 
and  $S$ denotes a system of  processes confined by permissions. These sequents depart slightly from previous work on concurrent separation logic \cite{Ohearn04}, as value-domain assertions - assertions interpreted exclusively in terms of the domain of values communicated  and thus independent of the process  structure, $S$ -  are extracted from the pre and post-conditions, \fV
  and \fVV,    and consolidated  as a boolean expression, $\bV$.    Correctness proofs in this proof system weave together two inter-dependent mechanisms. On the one hand, they verify, \emph{in sequential fashion}, the satisfaction of the post-condition \fVV for system $S$, assuming the precondition \fV; the soundness of this sequential analysis 
  stems from the non-interference properties guaranteed by the resource semantics of $S$.  On the other hand, sequents construct  race-free systems $S$, using assumptions from the environment, \env, and the pre-condition, \fV.

We have already argued for the naturality of our specifications \wrt deterministic message-passing programs and how our analysis can handle more refined branching control analysis, even when this is \emph{data dependent} as in (\ref{eq:intro1}). Another, perhaps even more crucial  advantage of our approach over existing  analysis techniques for message-passing concurrency (\eg  \cite{Hliu93,AnStWi94,Dam02}) is \emph{locality of reasoning}.  By concentrating on deterministic code, our reasoning need not take into account the different interleaving of concurrent code executing in  context; this facilitates substantially proof \emph{compositionality} and induces a \emph{lightweight} sequential form of  analysis.  Explicit permission ownership simplifies % resource usage analysis and
interference  delineation, even in the presence of channel reuse; % , \eg $c_1$ in (\ref{eq:intro1}); 
such delineation % turns out to be
is a major obstacle when defining manageable compositional proof rules (\eg \cite{Dam02}). % ;   static techniques such as type systems for resource analysis (\eg \cite{BergerHondaYoshida08,TerauchiAiken08}) tend to approximate control over branching.

The paper is structured as follows. We introduce  our  language in \secref{sec:language}. In \secref{sec:conf-value-pass} we define a resource-semantics for permission-confined processes and state its  key properties.  We define our assertion logic and interpret it using a separation model over confined processes in \secref{sec:logic}.  In \secref{sec:proof-system} we present our proof system  and declare its soundness whereas in \secref{sec:application} we apply this system to prove properties about message-passing programs.  Finally, in \secref{sec:conclusion} we make  concluding remarks regarding related and future work.

\section{Language}
\label{sec:language}

Our language, an asynchronous value-passing CCS, is described 
 in \figref{fig:structural-reduc} and consists of three syntactic categories. 
 Values, $v,\, u\in\Values$, are numerals denoting integers. 
 \emph{Side-effect free} expressions, $e$, denote integer operations 
 that may contain variables $x,\,y\in\Vars$.  
 We assume an evaluation function from closed expressions to values, 
 $e\evaluate v$.   We also assume a 
 denumerable set of channel names $c,\,d\in \Names$ and process names $\pV\in\PNames$ and denote lists of values, variables and channels as $\lst{v}, \lst{x}$ and $\lst{c}$ respectively.

\begin{small_display}{Processes, Structural Equivalence and Reduction} {fig:structural-reduc}
  \begin{align*}
\\
 &\lefteqn{\textbf{Values, Expressions, Boolean Expressions and Processes}}\\
  &\begin{array}{r@{\hspace{1ex}}c@{\hspace{1ex}}l r@{\hspace{1ex}}c@{\hspace{1ex}}l r@{\hspace{1ex}}c@{\hspace{1ex}}l}
\; v,u     &  \bnfdef  &  \num{0}  \bnfsepp  \num{1}  % \bnfsepp  \num{2} 
  {}\mid{}  \ldots  \quad\qquad 
 & e     &  \bnfdef  &  v   \bnfsepp   x   \bnfsepp   e+e   \bnfsepp   e-e  \quad\qquad
&\bV &   \bnfdef  &  e\leq e   \bnfsepp   \bnot{\bV}   \bnfsepp   \band{\bV}{\bV}\\[4pt]
\; P,Q  &  \bnfdef 	&  \multicolumn{7}{l}{  \pioutA{c}{\lst{e}} \; \bnfsepp \; \piin{c}{\lst{x}}{P}\;   \bnfsepp \; \cmpC{\bV}{P}{Q}\;  \bnfsepp\; \pCall{\pV}{\lst{e}}\subS{\lst{c}}{\lst{d}}
   \; \bnfsepp \; \inert \; \bnfsepp \;  P\paralS Q   \; \bnfsepp \;    \rest{c}{P} 
 }
\end{array}\\[10pt]
  &\lefteqn{\textbf{Structural Equivalence Rules}}\\
  &  \begin{array}{rlrl}
      \rtit{sCom} \quad&  P\paralS Q  \steqS  Q\paralS P \quad  &
  \rtit{sAss}  \quad&  P\paralS(Q\paralS R)  \steqS  (P\paralS Q)\paralS R \\[4pt]
\rtit{sNew} \quad & \rest{c}{\inert} \steqS \inert  \qquad\qquad&
 \rtit{sSwp} \quad & \rest{c}{\rest{d}{P}} \steqS \rest{d}{\rest{c}{P}} \\[4pt]
  \rtit{sNil} \quad&  P\paralS \inert  \steqS  P   &
\rtit{sExt} \quad &  P\paralS\rest{c}{Q}  \steqS  \restB{c}{P\paralS Q} \qquad  \text{if }c\not\in\fn{P}
    \end{array}\\[10pt]
  &\textbf{Reduction Rules}\\
    &\begin{array}{c}
  \inference[\rtit{rThn}]{\bV\evaluate \boolT }{\cmpC{\bV}{P}{Q}\reduc P}\qquad \qquad
    \inference[\rtit{rEls}]{\bV\evaluate \boolF }{\cmpC{\bV}{P}{Q}\reduc Q} \\[15pt]
 \inference[\rtit{rCom}]{\lst{e}\evaluate \lst{v}}{\pioutA{c}{\lst{e}} \paralS \piin{c}{\lst{x}}{P}\reduc P\subC{\lst{v}}{\lst{x}}}\qquad\qquad
    \inference[\rtit{rPrc}]{\pCall{\pV}{\lst{x}}\deftri P \qquad \lst{e}\evaluate\lst{v}}{\pCall{\pV}{\lst{e}}\subS{\lst{c}}{\lst{d}}\reduc P\subC{\lst{v}}{\lst{x}}\subC{\lst{c}}{\lst{d}}}\\[15pt]
\inference[\rtit{rRes}]{P\reduc P'}{\rest{c}{P}\reduc\rest{c}{P'}}\quad
 \inference[\rtit{rPar}]{P\reduc P'}{P\paralS Q\reduc P'\paralS Q} \quad 
\inference[\rtit{rStr}]{P\steq P'\reduc Q'\steq Q}{P\reduc Q} \\
    \end{array}\\
  \end{align*}
\end{small_display}

\subsection{Syntax}
\label{sec:syntax}
The main syntactic category is that of processes which can asynchronously send the evaluation of expressions on 
 a channel\footnote{Our language does not allow channel names to be communicated, as in the \pic \cite{Milner99,SangiorgiWalker01}.}, \pioutA{c}{\lst{e}}, receive values on a channel,  \piin{c}{\lst{x}}{P}, and branch on the evaluation of boolean expressions,  \cmpC{\bV}{P}{Q}. Processes may assume a number of parameterised (possibly recursive) process definitions,  $\pCall{\pV}{\lst{x}}\deftri P$; these  can be invoked by the call $\pCall{\pV}{\lst{e}}\subS{\lst{c}}{\lst{d}}$, instantiating the process variables $\lst{x}$ with the evaluation of $\lst{e}$ and renaming the names $\lst{d}$ to $\lst{c}$.  Finally, processes may also be inactive, \inert, execute in parallel, $P \paral Q$, and can restrict the scope of  channels to subsets of processes, \rest{c}{P}.

% \noindent

 \subsection{Reduction Semantics}
 \label{sec:reduction-semantics}

 The rules for the judgement $P \reduc Q$ in \figref{fig:structural-reduc} 
   describe the dynamics of \emph{closed} processes \ie processes whose
  message variables~$\lst{x}$  are all bound by 
 input constructs \piin{c}{\lst{x}}{\_}, and process names are all defined.   Closed boolean expressions, \ie boolean formulas without free variables,  have a \emph{classical} interpretation over the boolean domain $\sset{\boolT,\,\boolF}$,  characterised by the two judgements $\bV\evaluate\boolT$ and $\bV\evaluate\boolF$.   Although this is entirely standard, we explicitly stated here in \defref{fig:bool-interp} due to its central role in subsequent development (\cf \secref{sec:proof-system}).

\begin{defi}[Boolean Condition Interpretation]\label{fig:bool-interp}
  \begin{align*}
      e_1\leq e_2   &  \evaluate
  \begin{cases}
    \boolT & \text{if } e_1\evaluate v_1,\;e_2\evaluate v_2 \text{ and } v_1 \leq v_2\\
    \boolF & \text{if } e_1\evaluate v_1,\;e_2\evaluate v_2 \text{ and } v_2 < v_1
  \end{cases} \qquad\qquad\quad
 \bnot{\bV}    \evaluate
 \begin{cases}
   \boolT & \text{if } \bV\evaluate\boolF\\
  \boolF & \text{if } \bV\evaluate\boolT
 \end{cases}
   \\ 
\band{\bV_1}{\bV_2}  &  \evaluate
\begin{cases}
  \boolT  & \text{if } \bV_1\evaluate \boolT \text{ and }  \bV_2\evaluate \boolT \\
  \boolF& \text{if } (\bV_1\evaluate \boolF \text{ and }  \bV_2\evaluate \boolT) \text{ or } (\bV_1\evaluate \boolT \text{ and }  \bV_2\evaluate \boolF) 
\text{ or } (\bV_1\evaluate \boolF \text{ and }  \bV_2\evaluate \boolF)
\end{cases} 
  \end{align*}
\end{defi}

\noindent A number of shorthand conventions are used. We write \pioutA{c}{} for \pioutA{c}{\lst{e}} and \piin{c}{}{P}   for  \piin{c}{\lst{x}}{P} when 
 $|\lst{e}| =0$ (\resp $|\lst{x}| =0$). We elide arguments and renaming from process calls, \resp \pCallNA{\pV}\subS{\lst{c}}{\lst{d}} and \pCall{\pV}{\lst{e}},  whenever these are empty lists.
 We also write  $e_1=e_2$ for $\band{(e_1\leq e_2)}{(e_2\leq e_1)}$, $e_1<e_2$ for \bnot{(e_2\leq e_1)}, \btrue for $0\leq 1$, \bfalse for  $1\leq 0$, \bor{\bV_1}{\bV_2} for \bnot{(\band{\bnot{\bV_1}}{\bnot{\bV_2}})} and \bimpl{\bV_1}{\bV_2} for \bor{\bnot{\bV_1}}{\bV_2}.  Finally, we use the shorthand $\lst{e}\evaluate \lst{v}$ for the evaluation of lists of expressions $e_1\evaluate v_1 \ldots e_n\evaluate v_n$ whenever $\lst{e}=e_1\ldots e_n $ and $\lst{v}=v_1\ldots v_n $.   

Substitutions, $\sV\in\Subs$,  are total maps from variables to values,  $\Vars \map \Values$,  and are used to define the semantics of rules \rtit{rCom} and \rtit{rPrc}.  They are finitely denoted as $\subC{\lst{v}}{\lst{x}}$, meaning that every $x_i\in\lst{x}$ is mapped to its respective $v_i \in \lst{v}$, while abstracting over all the other variable mappings in the substitution.   In the case of \rtit{rPrc} only, we abuse this notation to express the \emph{renaming}  of \lst{d} to \lst{c}.  In \secref{sec:proof-system} we abuse again this notation to describe substitutions from variables to expressions, $\subC{\lst{e}}{\lst{x}}$.  Our semantics assumes the following property of expression evaluations, which will be useful later in \secref{sec:proof-system}.

\begin{asm}\label{asm:expression-eval}
  \begin{math}
    e_1\subC{\lst{v}}{\lst{x}} \evaluate v_1 \text{ and } \lst{e}\evaluate\lst{v} \text{ implies } e_1\subC{\lst{e}}{\lst{x}} \evaluate v_1
  \end{math}
\end{asm}

A brief note on some conventions used. To improve readability we have attempted to minimise the use of universal and existential quantifiers in our statements.  Thus, unless explicitly stated, free variables introduced to the left of an implication are to be understood as universally quantified, whereas free variables introduced to the right  of an implications are  understood as
existentially quantified.

As is standard in process calculi presentations \cite{Milner99,SangiorgiWalker01}, the definition of the reduction semantics is kept compact through the rule \rtit{rStr} and  the use of process structural equivalence rules, $P\steq Q$, defined also in \figref{fig:structural-reduc}. Later on, this structural equivalence will play a role in abstracting away from the precise structure of processes when describing the satisfaction of our logic (\cf \secref{sec:logic}).

 \subsection{Process Determinism}
 \label{sec:process-determinism}

 The reduction semantics of \figref{fig:structural-reduc}  induces the following definitions relating to  stability,
 evaluation and determinism, where $\reduc^\ast$ denotes the reflexive transitive closure of $\reduc$.

\begin{defi}[Stability]\label{def:proc-stability} \rm 
  \quad\begin{math}
    P \reducNot \;\deftxt\;\; \not\exists Q.\  P \reduc Q
  \end{math}
\end{defi}

\begin{defi}[Evaluation]\label{def:proc-evaluation} \rm 
  \quad
   \begin{math}
    P \evaluate Q \;\deftxt\;\;  \exists Q'.  \;  P \reduc^\ast Q' \text{ and } Q' \reducNot \text{ and } Q' \steq Q
  \end{math}
\end{defi}

Our definition of process determinism, \defref{def:proc-determinism},  differs from that given in \cite{Milner89} in that it requires convergence, $P\converge$ \cf \defref{def:convergece}.   % Dually in \defref{def:convergece}
We also define divergence, as the dual of convergence in standard fashion, in order to describe the existence of an infinite reduction path.  Defining determinism in terms of convergence  carries other advantages apart from the obvious relevance of termination in resource-aware settings of computation; it arguably allows for a more intuitive definition of determinism in terms of the comparison of the stable processes evaluated to (the second clause in \defref{def:proc-determinism}).  Moreover, it  fits well with our running theme of a state-based view of processes.  

\begin{defi}[Convergence and Divergence]\label{def:convergece} $\converge$ is the  least predicate over processes satisfying the conditions: 
  \begin{align*}
    P \converge & =  P \reducNot \text{ or } (\forall Q.\ P \reduc Q \text{ implies } Q\converge )
  \end{align*}
Divergence, $P \diverge$, denotes the inverse, $P \not\converge$.
\end{defi}

\newpage

\begin{defi}[Determinism]\label{def:proc-determinism} \rm P is deterministic iff:
  \begin{enumerate}[(1)]
  \item
    \begin{math}
      P \converge
    \end{math}

  \item   \begin{math}
      P \evaluate {Q_1} \text{ and } P \evaluate Q_2 \text{ implies } {Q_1\steq Q_2}
  \end{math}
  \end{enumerate}
\end{defi}

% \paragraph{Shorthand:}
% \label{sec:shorthand:}

% We write \pioutA{c}{} for \pioutA{c}{\lst{e}} and \piin{c}{}{P}   for  \piin{c}{\lst{x}}{P} when 
%  $|\lst{e}| =0$ (\resp $|\lst{x}| =0$) and elide arguments and renaming from process calls, whenever these are empty lists.
%  We also write  $e_1=e_2$ for $\band{(e_1\leq e_2)}{(e_2\leq e_1)}$, $e_1<e_2$ for \bnot{(e_2\leq e_1)}, \btrue for $0\leq 1$, \bfalse for  $1\leq 0$, \bor{\bV_1}{\bV_2} for \bnot{(\band{\bnot{\bV_1}}{\bnot{\bV_2}})} and \bimpl{\bV_1}{\bV_2} for \bor{\bnot{\bV_1}}{\bV_2}.

 %-- NOT USED IN THE CONFERENCE PAPER.

% The rules are standard, but note the omission of reordering of
% scoped names in \figref{fig:structural-equivalence}.\mnote{this is too mysterious a remark.}

Concurrent code is notoriously hard to analyse.   One major source of complication is the potential non-deterministic behaviour of this code,  which impacts  the ability to tractably define manageable compositional proof rules (\eg \cite{Dam02}) .   More precisely,  generic non-deterministic code requires one to take into account the various interleaving of concurrent code executing in its context potentially affecting its execution.   

Although message passing concurrency minimises this interference to well defined interfaces, problems persist due to races on shared channels.  Channel reuse together with the lack of an explicit account of resource usage  makes interference hard to delineate.

 \begin{exa}\label{ex:running}  The (composite) process $\ptit{Prg}$ takes two inputs $x_1,\,x_2$ on channels $c_1,\,c_2$ respectively. It discards $x_2$ and, if  $x_1$ is less than $10$,  outputs the value $x_1$ itself together with its double on $c_1$ % (again)
   while using $c_4$ as a signal. % and forwards $x_1$ % (the previous value on $c_1$)
   % to $c_4$.
   Otherwise, it uses  $c_4$ to output $x_1$ by itself .  
   \begin{align*}
      \ptit{Prg} &\deftri \restB{c_3}{ \ptit{Fltr} \paral \ptit{Dbl}}  \\  % \qquad \qquad \qquad \qquad
      \ptit{Dbl}  & \deftri \piin{c_2}{x_2}{\piin{c_3}{x_4}{\pioutA{c_1}{(x_4\!+\!x_4)}}}\\
     \ptit{Fltr} & \deftri \piin{c_1}{x_1}{\cmp{x_1}{9}{\,\pioutA{c_3}{x_1}\paralS\piinBB{c_1}{x_3}{\pioutA{c_1}{(x_1,x_3)}\paralS \pioutA{c_4}{}} \,}{\,\pioutA{c_4}{x_1}}}
   \end{align*}
Internally,  \ptit{Prg} is composed of two sub-processes, \ptit{Fltr} and \ptit{Dbl}, sharing a scoped channel, $c_3$. Process \ptit{Fltr}\ filters whether $x_1$ is less than $10$ and forwards the value to process \ptit{Dbl}\ on channel $c_3$ which, in turn, \emph{reuses} channel $c_1$ to return the doubled value.  

The process $\ptit{Prg}$ trivially converges as it is stable. When placed in the context of race-free outputs such as $\pioutA{c_1}{v_1}\paral\pioutA{c_2}{v_2}$,   $\ptit{Prg}$ still converges and
 evaluates \emph{deterministically} to
\begin{align*}
\ptit{Prg}\paral\pioutA{c_1}{v_1}\paral\pioutA{c_2}{v_2}  &\quad\evaluate\quad  \pioutA{c_4}{} \paralS \pioutA{c_1}{(v_1, 2 \times v_1)}  &&\quad \text{when } v_1 \leq 9 \;\text{and;} \\
\ptit{Prg}\paral\pioutA{c_1}{v_1}\paral\pioutA{c_2}{v_2}  &\quad\evaluate\quad \pioutA{c_4}{v_1} \paralS \restB{c_3}{\piin{c_3}{x_4}{\pioutA{c_1}{(x_4\!+\!x_4)}}}  &&\quad \text{when } v_1 > 9 
\end{align*}
On the other hand, races on,  for example,
channel $c_1$ make   $\ptit{Prg}$ behave \emph{non-deterministically}.   For instance, when placed in the context of two outputs on $c_1$, such as $\pioutA{c_1}{1} \paral \pioutA{c_2}{v_2}\paral\pioutA{c_1}{3}$,   we have a race for  the processing of $\ptit{Prg}$ yeilding two possible outcomes;
\begin{align*}
  \ptit{Prg}\paralS \pioutA{c_1}{1}\paral \pioutA{c_2}{v_2}\paral\pioutA{c_1}{3} & \quad \evaluate \quad \pioutA{c_4}{} \paralS \pioutA{c_1}{(1,2)} \paralS \pioutA{c_1}{3} && \text{or};\\
  \ptit{Prg}\paralS \pioutA{c_1}{1} \paral \pioutA{c_2}{v_2} \paral\pioutA{c_1}{3} & \quad \evaluate\quad \pioutA{c_4}{} \paralS \pioutA{c_1}{(3,6)} \paralS \pioutA{c_1}{1}
\end{align*}
More subtly, $\ptit{Prg}\paralS \pioutA{c_1}{1}\paral \pioutA{c_2}{v_2}\paral\pioutA{c_1}{3}$ may also behave in unexpected ways, since we have a second race condition when channel $c_1$ is reused internally in  $\ptit{Prg}$, \ie when \ptit{Dbl} sends back its answer to \ptit{Fltr} on $c_1$,  thereby obtaining
\begin{align*}
  \ptit{Prg}\paralS \pioutA{c_1}{1}\paral \pioutA{c_2}{v_2}\paral\pioutA{c_1}{3} & \quad \evaluate \quad \pioutA{c_4}{} \paralS \pioutA{c_1}{(1,3)} \paralS \pioutA{c_1}{2} && \text{or};\\
  \ptit{Prg}\paralS \pioutA{c_1}{1} \paral \pioutA{c_2}{v_2} \paral\pioutA{c_1}{3} & \quad \evaluate\quad \pioutA{c_4}{} \paralS \pioutA{c_1}{(3,2)} \paralS \pioutA{c_1}{6}
\end{align*}
When placed in the context of two outputs on $c_1$ with values that are less than $10$ and also values that are bigger or equal to $10$, such as $\pioutA{c_1}{1} \paral \pioutA{c_2}{v_2}\paral\pioutA{c_1}{10}$, non-deterministic behaviour varies even more widely in structure.  In fact we can have:
\begin{align*}
  \ptit{Prg}\paralS \pioutA{c_1}{1} \paral \pioutA{c_2}{v_2} \paral\pioutA{c_1}{10}& \quad \evaluate \quad \pioutA{c_4}{} \paralS \pioutA{c_1}{(1,2)} \paralS \pioutA{c_1}{10} && \text{or};\\
  \ptit{Prg}\paralS \pioutA{c_1}{1} \paral \pioutA{c_2}{v_2} \paral\pioutA{c_1}{10}& \quad \evaluate\quad \pioutA{c_4}{10}  \paralS \restB{c_3}{\piin{c_3}{x_4}{\pioutA{c_1}{(x_4\!+\!x_4)}}} \paralS \pioutA{c_1}{1}
\end{align*}
Dually, when  $\ptit{Prg}$ is placed in the context of $\pioutA{c_1}{1}\paral \pioutA{c_2}{v_2} \paral\piin{c_1}{x}{\inert}$, which introduces another input competing for the output on $c_1$, we have even more non-deterministic behaviour.  We can have: 
\begin{align*}
  \ptit{Prg}\paralS \pioutA{c_1}{1} \paral \pioutA{c_2}{v_2} \paral\piin{c_1}{x}{\inert}& \quad \evaluate \quad \pioutA{c_4}{} \paralS \pioutA{c_1}{(1,2)} \paralS \piin{c_1}{x}{\inert} && \text{or};\\
  \ptit{Prg}\paralS \pioutA{c_1}{1} \paral \pioutA{c_2}{v_2} \paral\piin{c_1}{x}{\inert}& \quad \evaluate\quad \restB{c_3}{\ptit{Fltr} \paral  \piin{c_3}{x_4}{\pioutA{c_1}{(x_4\!+\!x_4)}}} && \text{or even};\\
  \ptit{Prg}\paralS \pioutA{c_1}{1}  \paral \pioutA{c_2}{v_2}\paral\piin{c_1}{x}{\inert} & \quad \evaluate\quad  \piinBB{c_1}{x_3}{\pioutA{c_1}{(1,x_3)}\paral \pioutA{c_4}{}}
\end{align*}
\end{exa}\vspace{3 pt}

\noindent In practice, a substantial body of concurrent code  is expected to behave deterministically under some form of non-interference assumptions. One example is the in-place quicksort algorithm, which can be encoded in our language as shown in \exref{ex:quicksort}.  In this example, determinism is even harder to ascertain because, apart from channel reuse, the code is also recursively defined. This gives us scope for developing refined analysis techniques  for deterministic code
which lend themselves better to compositionality.  

\newcommand{\quick}[1]{\ensuremath{\ptit{Qck}(#1)\xspace}}
\newcommand{\prtn}[1]{\ensuremath{\ptit{Prt}(#1)\xspace}}
\newcommand{\prtnlp}[1]{\ensuremath{\ptit{Trv}(#1)\xspace}}

\begin{exa}[In-Place Quicksort] \label{ex:quicksort} The process definition \quick{i,j} defines a quicksort algorithm, sorting arrays of values \emph{in-place} and signalling on channel $r$ once sorting completes. Arrays of integers $a=[v_1,\ldots,v_n]$  are represented as a set of  messages   \( \pioutA{a_1}{v_1}\paral\ldots\paral\pioutA{a_n}{v_n}  \)  on an indexed set of channels $a_1\ldots a_n$.\footnote{Since our language can branch on integer values, channel indexing can be encoded.}  When arrays are of length $1$, \quick{i,i} signals immediately on channel $r$.  Otherwise, it chooses the value at the lowest index, $\pioutA{a_i}{v_i}$, as the pivot, partitions the array, and then calls quicksort recursively on the two partitions, renaming the returning signal to a fresh channel name in each case. Once the two sub-sortings signal back, the process can signal back on $r$.  
  \begin{align*}
    \quick{i,j} & \deftri 
    \begin{cases}
      \cmpCA{i= j}{\pioutA{r}{}}\\[-15pt]
      \qquad\quad \elseA{\restB{r_3}{
          \begin{array}{l}% {l||l}
            % \begin{array}{l}
              \prtn{i,j}\subS{r_3}{r}\\[-15pt]     
            % \end{array}
             % &
              \paralS 
             \;\piin{r_3}{x}{\restB{r_1,r_2}{
                 \begin{array}{l}
                    \quick{i,x-1}\subS{r_1}{r} \\
                    \paralS\; \quick{x\!+\!1,j}\subS{r_2}{r}  \\
                    \paralS\; \piin{r_1}{}{\piin{r_2}{}{\pioutA{r}{}}}
                 \end{array}}}
           \end{array}
          }}
    \end{cases}
\end{align*}
At the heart of quicksort is \prtn{i,j}, which partitions an array into two sub-arrays separated by a pivot cell, \pioutA{a_p}{v_p}, and signals completion by outputting the partition index as a value, $\pioutA{r}{p}$.  After partitioning completes, the values in the first sub-array (\ie indexes less than $p$) are less than $v_p$ and the values of the second sub-array (\ie indexes greater than $p$) are bigger or equal to $v_p$. Partitioning  calls the array traversal process \prtnlp{l,h,x,p,c}, initialising the pivot value $x$ to $v_i$, the pivot index $p$ to $i$, the counter index $c$ to $i\!+\!1$ and  low and high array boundaries $l,h$ to $i$ and $j$ respectively. 
\begin{align*}
    \prtn{i,j} &\deftri \piin{a_i}{x}{\,\prtnlp{i,j,x,i,i\!+\!1}}
\end{align*}
Traversal loops through the indexes $i+1$ up to $h$, $(6)$ then $(1)$, comparing their values with the pivot value, $(2)$.  If the current value is greater or equal to $x$, in-place partitioning restores the cell and increments the counter, $(3)$. Otherwise, it increments the pivot index to $p\!+\!1$, swaps the current value with the value at the new pivot index, and proceeds to the next index, $(4)$. Since reads are destructive in value passing concurrency, swapping occurs only when the two indexes are distinct, $(5)$. Once traversal exceeds the highest index of the array, $(6)$, the pivot value at the lowest index $l$ is swapped with the value at the current pivot index $p$ and the pivot index  is returned as the return value  $\pioutA{r}{p}$, $(7)$; again swapping is avoided if these two indexes are the same.  
\begin{align*}
    \prtnlp{l,h,x,p,c} & \deftri 
    \begin{cases} \small
      \cmpCA{\overbrace{c > h}^{(6)}}{\overbrace{\cmpC{l\!=\! p} {(\pioutA{a_l}{x}\paral \pioutA{r}{p})}  {\piinBB{a_p}{y}{\pioutA{a_l}{y}\paral \pioutA{a_p}{x} \paral \pioutA{r}{p}  }}}^{(7)}  }\\ 
      \small\hspace{10pt}\elseA{\overbrace{a_c?y.}^{(1)}  {
          \left(\begin{array}{l}
              \cmpCA{ \overbrace{x \leq y }^{(2)}}{\overbrace{\pioutA{a_c}{y} \paralS\prtnlp{l,h,x,p,c\!+\!1}}^{(3)} }\\
              \quad\elseAB{\overbrace{
                  \begin{array}{l}
                    \cmpCA{\overbrace{c\! =\! p\!+\!1}^{(5)}}{(\pioutA{a_c}{y} \paral\prtnlp{l,h,x,p\!+\!1,c\!+\!1})}\\% [-7pt]
                    \quad\elseA{\piinBB{a_{p+1}}{z}{
                        \begin{array}{l}
                          \pioutA{a_c}{z}\paral\pioutA{a_{p+1}}{y}\paral\\
                          \prtnlp{l,h,x,p\!+\!1,c\!+\!1}
                        \end{array}
                      }}
                  \end{array}}^{(4)}
              }
            \end{array}\right)
          }}      
    \end{cases}
  \end{align*}
\end{exa}\vspace{3 pt}

\noindent We note that the splitting of the array during recursive calls in \quick{i,j} in \exref{ex:quicksort} is \emph{data dependent}, based on the pivot value returned after a call to  \prtn{i,j}.   This fact complicates confluence analysis through static techniques such as type systems for resource usage (\eg \cite{BergerHondaYoshida08,TerauchiAiken08}). To be able to deal with the refined analysis required for this example, we define a resource-semantics for our processes in \secref{sec:conf-value-pass}, which does not approximate over data dependent branching. This extended semantics then serves as a model for a resource-aware separation logic for processes, given in \secref{sec:logic}.  In \secref{sec:proof-system}  we then define a compositional proof system for verifying properties in this logic.

\section{Resourcing for Processes}
\label{sec:conf-value-pass}

We define a reduction semantics for our programs by \emph{confining} their behaviour through linear permissions for channel input and output.   This confined-process semantics  
helps us to reason about deterministic behaviour  of processes
and lays the foundation for the semantics of the logic to be presented in \secref{sec:logic}.  In particular, it 
\begin{inparaenum}[\itshape \upshape(1\upshape)]
\item gives us a basis for process separation, in terms of the permissions owned by processes,
\item assists race detection, and
\item acts as a narrative as to why a process is deterministic.
\end{inparaenum}

\begin{small_display}[h]{A Permission-Confined CCS} {fig:confined-lang}
  \begin{align*} \\
   \lefteqn{\textbf{Confined Processes (Systems)}} \\% [4pt]
  &\begin{array}{rl}
    \; S, T, R  &  \bnfdef    \conCP \; \bnfsepp \;   S\paralS T   \; \bnfsepp \;    \rest{c}{\,S} 
  \end{array}    \\[10pt]
   \lefteqn{\textbf{Permission Violation Detection Rules} }\\
  & \begin{array}{c}
    \hspace{1cm}\inference[\rtit{eOut}]{\permout{c}\not\in\eV}{\conC{\pioutA{c}{\lst{e}}} \err} \qquad \inference[\rtit{eIn}]{\permin{c}\not\in\eV}{\conC{\piin{c}{\lst{x}}{P}} \err}  \\[20pt]% \quad
    \hspace{2cm}\inference[\rtit{ePar}]{S\err}{S\paral T \err}\qquad \inference[\rtit{eRes}]{S\err}{\rest{c}{S} \err} \qquad \inference[\rtit{eStr}]{T\steq S\err}{T \err} 
  \end{array}\\[10pt]
  \lefteqn{\textbf{Structural Equivalence Rules} }\\
  &  \begin{array}{rlrlrl}
       \rtit{scCom} \quad&  S\paralS T  \steqS  T\paralS S \quad\;  &
   \rtit{scAss}  \quad&  S\paralS(T\paralS R)  \steqS  (S\paralS T)\paralS R \\[4pt]
\rtit{scNew} \quad & \rest{c}{\con{\emptyset}{\inert}} \steqS \con{\emptyset}{\inert}  \;& 
 \rtit{scSwp} \quad & \rest{c}{\rest{d}{S}} \steqS \rest{d}{\rest{c}{S}} \\[4pt]
  \rtit{scNil} \quad&  S\paralS \con{\emptyset}{\inert}  \steqS  S   &
\rtit{scExt} \quad &  S\paralS\rest{c}{T}  \steqS  \restB{c}{S\paralS T} \qquad  \text{if }c\not\in\fn{S}
    \end{array}\\[10pt]  
 \lefteqn{\textbf{Reduction Rules}}\\
    &\begin{array}{c}
\inference[\rtit{cThn}]{\bV\evaluate \boolT }{\conC{\cmpC{\bV}{P}{Q}}\reduc \conC{P}}\qquad 
    % \rtit{cEls} \ldots
     \inference[\rtit{cEls}]{\bV\evaluate \boolF }{\conC{\cmpC{\bV}{P}{Q}}\reduc \conC{Q}}  \\[20pt]
\inference[\rtit{cCom}]{\lst{e}\evaluate \lst{v}\qquad \permout{c}\in\eV\qquad\permin{c}\in\eVV}{\con{\eV}{\pioutA{c}{\lst{e}}} \paralS \con{\eVV}{\piin{c}{\lst{x}}{P}}\reduc \con{\eVV\cup\eV}{P\subC{\lst{v}}{\lst{x}}}}\qquad
    \inference[\rtit{cPrc}]{\pCall{\pV}{\lst{x}}\deftri P \qquad \lst{e}\evaluate\lst{v}}{\conC{\pCall{\pV}{\lst{e}}\subS{\lst{c}}{\lst{d}}}\reduc \conC{P\subC{\lst{v}}{\lst{x}}\subC{\lst{c}}{\lst{d}}}} 
\\[25pt] 
    %  \inference[\rtit{cGco}]{\permout{c}\in\eV}{\rest{c}{\conC{\pioutA{c}{\lst{e}}}}\reduc \rest{c}{\conC{\inert}}} \qquad \qquad
%      \inference[\rtit{cGci}]{\permin{c}\in\eV}{\rest{c}{\conC{\piin{c}{\lst{x}}{P}}}\reduc \rest{c}{\conC{\inert}}} \\[15pt]
\inference[\rtit{cRes}]{S\reduc S'}{\rest{a}{S}\reduc\rest{a}{S'}}\qquad
 \inference[\rtit{cPar}]{S\reduc S'}{S\paralS T\reduc S'\paralS T} \qquad 
\inference[\rtit{cStr}]{S\steq S' \quad S'\reduc T' \quad T'\steq T}{S\reduc T}\\[20pt]
\inference[\rtit{cSpl}]{ }{\con{\eV\eadd\eVV}{P\paral Q} \reduc \con{\eV}{P} \paralS \con{\eVV}{Q}} \qquad
\inference[\rtit{cLcl}]{ }{\conC{\rest{c}{P}}\reduc \rest{c}{\,\con{\eV\uplus\eset{\permin{c},\permout{c}}}{P}}}\\[20pt]
 \inference[\rtit{cTgh}]{\sset{\permin{c},\permout{c}} \cap \eV \neq \emptyset \qquad c\not\in\fn{P}}{\restB{c}{\conCP \paralS S} \reduc \con{\eV\setminus\sset{\permin{c},\permout{c}}}{P} \paralS \rest{c}{S}} \qquad 
  \inference[\rtit{cDsc}]{\eV\neq \emptyset}{\conC{\inert} \reduc \con{\emptyset}{\inert}}
    \end{array}\\
  \end{align*}
\end{small_display}

\subsection{Systems}
\label{sec:systems}

We start by defining \emph{permission sets}.  These are used as logical embellishments to readily track channel usage and detect race conditions through conflicting permission usage.

\begin{defi}[\rm\textit{Permissions}]%
\label{def:permissions-effects}\rm
 The set of \emph{permissions} is
  $\Perm\deftxt% \triangleq
  \sset{\modin,\modout}\times\Names$, where $\permin{c}$ 
  (resp.\ $\permout{c}$) represents the permission to input 
  (resp.\ output) on channel $c$. A \textit{permission-set}, ranged over by the variables $\eV,\eVV$, is a subset of
  permissions, $\eV\subseteq\Perm$.
\end{defi}

 % With abuse of notation we omit the curly brackets around 
%  singleton permissions, and use $\perminout{c}$ to denote 
%  $\{\permin{c},\permout{c}\}$.

Permissions are linear in the sense that there is at most one output permission and one input permission per channel.  This is not to be confused with linear (\resp affine) assumptions\cite{Girard:1989}  or types \cite{KobayashiPT:linearity}, which restrict channel usage to exactly (\resp at most) once.  In our case,  permissions are not consumed once used, but are instead transferred around and reused.  Thus, instead of restricting the number of uses of a particular channel,  they ensure that, at any stage during computation, there is at most one processes that can output (\resp input) on a particular channel.

\figref{fig:confined-lang} defines the syntax and semantics of \emph{systems} of confined processes,  $S,T,R \in \Sys$, whereby processes, $P$, are confined by permission sets, \eV, and denoted as  \conCP. Systems can also be composed in parallel and their channels can also be scoped.  

Confinement 
allows us to define \emph{separation} across systems, $S \perp T$ on the basis of the (visible) permissions owned by a system, \defref{def:owned-perm}.  In what follows, we assume systems of confined processes to always be \emph{well-resourced},  meaning that all confined parallel processes are \emph{separate}, \ie there is no overlap across owned permission sets, and that permissions are \emph{linear}.  System well-resourcing, denoted $\vdash S$,  is formalised in \defref{def:wf-systems}.  It can be easily checked statically by induction on the structure of systems.

\begin{defi}[\rm \textit{(Visibly) Owned Permissions}]\label{def:owned-perm}
   \begin{align*}
    \perm{S} & \deftxt
    \begin{cases}
      \eV & \text{if}\ S=\conCP\\
      \perm{T}\cup\perm{R} & \text{if}\ S= T\paral R\\
      \perm{T}\setminus\sset{\permin{c},\permout{c}} &  \text{if}\ S=\rest{c}{T}
    \end{cases} 
\end{align*}
\end{defi}

\begin{defi}[\rm\textit{Separation}]  \label{def:separation} \quad
 \begin{math}
    S \perp T \;\deftxt\; \perm{S} \cap \perm{T} = \emptyset 
  \end{math}
\end{defi}  

\begin{defi}[\rm \textit{Well-Resourced System}]\label{def:wf-systems}  A system $S$ is well-resourced, denote as $\vdash S$, if it is included in the least set defined by the following three rules.
  \begin{displaymath}
      \begin{array}{c}
      \inference[\rtit{wPrc}]{}{\vdash \conCP} \qquad \inference[\rtit{wPar}]{\vdash S \quad \vdash T \quad S \perp T}{\vdash S \paral T} \qquad \inference[\rtit{wRes}]{\vdash S}{\vdash \rest{c}{S}}
    \end{array}
  \end{displaymath}
\end{defi}\vspace{5 pt}

\noindent Process confinement also facilitates the \emph{detection of races}, which leads to non-deterministic behaviour in the % (unconfined)
process semantics  of \secref{sec:language}.   
The judgement $S \err$, defined by the rules in \figref{fig:confined-lang}, describes the \emph{detection} of % system
permission violations.  As we shall see later on in \secref{sec:system-determinism} and \secref{sec:process-determinism-1}, the absence of permission violations  also
implies the absence of channel communication races.

The reduction rules in \figref{fig:confined-lang} enforce proper permission usage. Rule \rtit{cCom}  imposes additional restrictions  to \rtit{rCom} of \figref{fig:structural-reduc}: the output process (\resp the input process) is required to own the permission to output, \permout{c} (\resp input, \permin{c}) on channel $c$.   Confined processes cannot arbitrarily create % or destroy
permissions but need to \emph{transfer} them to other processes at specific interaction points (\ie communication through \rtit{cCom}).    The new rules \rtit{cSpl} and \rtit{cLcl} enforce this \emph{resourcing} of  permissions:  \rtit{cSpl} requires that newly spawned processes \emph{partition} the parent permissions amongst them whereas \rtit{cLcl} ensures that scoped names generate a single pair of input-output permissions for every channel. Note that \rtit{cSpl} is inherently non-deterministic as it does not specify how the permissions are partitioned amongst the parallel processes: \cf Section~\ref{sec:discussion} for a discussion of this.  Rules \rtit{cThn}, \rtit{cEls}, \rtit{cPrc}, \rtit{cRes}, \rtit{cPar} and \rtit{cStr} in  \figref{fig:confined-lang} are analogous to those in   \figref{fig:structural-reduc}. Structural equivalence extends to systems directly with $\con{\emptyset}{\inert}$ as the parallel composition identity.

The rule \rtit{cDsc}  allows  systems to discard permissions whenever it is clear that they will not be used anymore, whereas \rtit{cTgh} is a convenient rule that allows us to tighten name scoping irrespective of permissions; together with \rtit{cStr} and \rtit{scNil} and \rtit{scNew} it allows us to discard redundant scoping of channels as computation progresses (\cf \exref{ex:running-confined} for an example on how this rule is used.)  These last two rules are not essential for  determining whether a process is deterministic but help declutter extraneous permissions.  This enables us to express eventual stable systems more succinctly which, in turn, permits simpler definitions for assertion satisfaction later on in \secref{sec:logic}.     

\medskip

\subsection{Dynamic Properties of Systems}
\label{sec:dynam-prop-syst}

Reductions preserve locality.  This means that the permissions owned by a process provide a footprint for its reductions and that any process  it reduces to will be confined to these permissions.  This property is key for compositional reasoning when ensuring that global properties, such as that of being well-resourced, are preserved.  For instance, if the system $S \paral T$ is well-resourced, then by \defref{def:wf-systems} it must be the case that the two sub systems are separate \ie $S \perp T$.  If $S \reduc S'$, locality \ie $\perm{S'} \subseteq \perm{S}$   immediately implies that $S' \perp T$ and therefore, that the global system   $S' \paral T$  is still well-resourced.   Thus reduction also preserves well-resourcing.

\begin{lem}[Locality]\quad \label{lem:locality}
    \begin{math}
     S \reduc T \quad \text{implies} \quad  \perm{T} \subseteq \perm{S}  
  \end{math}
\end{lem}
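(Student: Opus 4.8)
The plan is to prove \thmref{lem:locality} by rule induction on the derivation of $S \reduc T$, using the structural congruence rules and the definition of \perm{\_} from \defref{def:owned-perm}. First I would set up the induction on the last rule applied in the reduction $S \reduc T$, observing that there are eleven reduction rules in \figref{fig:confined-lang} to consider, but several group together naturally.

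First, the base cases with no permission movement: \rtit{cThn}, \rtit{cEls} and \rtit{cPrc} all have the shape $\conCP \reduc \con{\eV}{P'}$ for the same permission set \eV, so $\perm{\conCP} = \eV = \perm{\con{\eV}{P'}}$ and the inclusion is an equality. For \rtit{cCom}, the redex is $\con{\eV}{\pioutA{c}{\lst{e}}} \paralS \con{\eVV}{\piin{c}{\lst{x}}{P}}$ with owned permissions $\eV \cup \eVV$, and it reduces to $\con{\eVV\cup\eV}{P\subC{\lst{v}}{\lst{x}}}$ with owned permissions exactly $\eVV \cup \eV$ — again an equality. For \rtit{cSpl}, $\con{\eV\eadd\eVV}{P\paral Q}$ has owned permissions $\eV \cup \eVV$ and reduces to $\con{\eV}{P} \paralS \con{\eVV}{Q}$ with owned permissions $\eV \cup \eVV$: equality once more. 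For \rtit{cDsc}, $\conC{\inert}$ has owned permissions \eV and reduces to $\con{\emptyset}{\inert}$ with owned permissions $\emptyset \subseteq \eV$ — here the inclusion is proper, which is exactly why the statement is an inclusion rather than an equality. For \rtit{cLcl}, $\conC{\rest{c}{P}}$ has owned permissions $\perm{\conC{\rest{c}{P}}} = \eV$ (the permissions of $P$ are just \eV, and restricting on $c$ removes $\{\permin{c},\permout{c}\}$, but wait — here the body is $\rest{c}{P}$, so actually $\perm{\conC{\rest{c}{P}}}=\eV$ directly since the confined process as a whole owns \eV); it reduces to $\rest{c}{\,\con{\eV\uplus\eset{\permin{c},\permout{c}}}{P}}$ whose owned permissions are $(\eV \uplus \{\permin{c},\permout{c}\}) \setminus \{\permin{c},\permout{c}\} = \eV$: equality. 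For \rtit{cTgh}, $\restB{c}{\conCP \paralS S}$ has owned permissions $(\eV \cup \perm{S}) \setminus \{\permin{c},\permout{c}\}$, and it reduces to $\con{\eV\setminus\sset{\permin{c},\permout{c}}}{P} \paralS \rest{c}{S}$ with owned permissions $(\eV \setminus \{\permin{c},\permout{c}\}) \cup (\perm{S} \setminus \{\permin{c},\permout{c}\}) = (\eV \cup \perm{S}) \setminus \{\permin{c},\permout{c}\}$: equality.

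For the congruence rules \rtit{cRes}, \rtit{cPar} and \rtit{cStr} I would use the induction hypothesis. For \rtit{cPar}: if $S \paralS T \reduc S' \paralS T$ from $S \reduc S'$, then by IH $\perm{S'} \subseteq \perm{S}$, so $\perm{S' \paralS T} = \perm{S'} \cup \perm{T} \subseteq \perm{S} \cup \perm{T} = \perm{S \paralS T}$. For \rtit{cRes}: if $\rest{a}{S} \reduc \rest{a}{S'}$ from $S \reduc S'$, then by IH $\perm{S'} \subseteq \perm{S}$, and since set difference is monotone, $\perm{\rest{a}{S'}} = \perm{S'} \setminus \{\permin{a},\permout{a}\} \subseteq \perm{S} \setminus \{\permin{a},\permout{a}\} = \perm{\rest{a}{S}}$. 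For \rtit{cStr}, I would first establish the auxiliary fact that structural equivalence preserves owned permissions, i.e.\ $S \steq T$ implies $\perm{S} = \perm{T}$; this is itself a routine induction on the derivation of $S \steq T$, checking each of \rtit{scCom}, \rtit{scAss}, \rtit{scNew}, \rtit{scSwp}, \rtit{scNil}, \rtit{scExt} (the only slightly delicate one being \rtit{scExt}, where the side condition $c \notin \fn{S}$ must be used — though for the purpose of computing \perm{\_} one actually just needs that $\{\permin{c},\permout{c}\}$ does not meet the owned permissions of $S$ in the relevant direction, which follows from well-resourcing or can be argued directly). Given that, the \rtit{cStr} case is immediate: $S \steq S' \reduc T' \steq T$ gives $\perm{S} = \perm{S'}$, $\perm{S'} \supseteq \perm{T'}$ by IH, and $\perm{T'} = \perm{T}$, hence $\perm{S} \supseteq \perm{T}$.

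I expect the main obstacle to be the auxiliary lemma that structural congruence preserves owned permissions, specifically the scope-extrusion rule \rtit{scExt}, where one must be careful that $\perm{\restB{c}{P\paralS Q}}$ and $\perm{P \paralS \rest{c}{Q}}$ genuinely coincide: the former is $(\perm{P} \cup \perm{Q}) \setminus \{\permin{c},\permout{c}\}$ while the latter is $\perm{P} \cup (\perm{Q} \setminus \{\permin{c},\permout{c}\})$, and these agree precisely because $c \notin \fn{P}$ ensures $\{\permin{c},\permout{c}\} \cap \perm{P} = \emptyset$ — though strictly this last implication may need the standing well-resourcing assumption or a separate observation that a permission on $c$ cannot be owned without $c$ occurring free, so I would state and use it explicitly. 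Everything else is a direct computation with the clauses of \defref{def:owned-perm}.
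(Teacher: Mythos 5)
Your proposal is correct and takes essentially the same approach as the paper: a rule induction on $S \reduc T$, checking for each reduction rule that the owned permissions are either preserved exactly or strictly shrink (only in \rtit{cDsc}), with the inductive hypothesis handling the congruence rules. The paper exhibits only the main cases (\rtit{cCom}, \rtit{cPar}, \rtit{cSpl}, \rtit{cDsc}) in a proof shared with the Resourcing lemma, whereas you additionally work through \rtit{cLcl}, \rtit{cTgh} and, in particular, the \rtit{cStr} case together with the auxiliary fact that structural equivalence preserves owned permissions (including the delicate \rtit{scExt} clause), which the paper leaves implicit.
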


\begin{lem}[Resourcing]\quad\label{lem:resourcing}
  \begin{math}
    \vdash S \text{ and } S \reduc T \quad \text{implies} \quad \vdash T%  \text{ and } \perm{T} \subseteq \perm{S}  
  \end{math}
\end{lem}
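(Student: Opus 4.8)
The plan is to prove Lemma~\ref{lem:resourcing} by induction on the derivation of $S \reduc T$, using Lemma~\ref{lem:locality} (Locality) as the crucial ingredient in the compositional cases. Before starting, I would record an auxiliary observation: the reduction rules only ever \emph{restructure} permissions owned by the redex (splitting them with \rtit{cSpl}, merging them with \rtit{cCom}, creating a fresh linear pair for a scoped name with \rtit{cLcl}, removing a linear pair in \rtit{cTgh}, or discarding in \rtit{cDsc}), so in each base case I can directly inspect which permissions the contractum owns and check that linearity and separation are maintained; Locality then tells me no \emph{new} permissions escape to interfere with parallel siblings.

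For the base cases I would argue as follows. For \rtit{cThn}, \rtit{cEls}, \rtit{cPrc} and \rtit{cDsc} the contractum is a single confined process $\conCP$ (or with a smaller permission set), and $\vdash \conCP$ holds by \rtit{wPrc} with nothing to check. For \rtit{cCom}, the left-hand side $\con{\eV}{\pioutA{c}{\lst{e}}} \paralS \con{\eVV}{\piin{c}{\lst{x}}{P}}$ being well-resourced gives $\eV \cap \eVV = \emptyset$; the contractum $\con{\eVV\cup\eV}{P\subC{\lst{v}}{\lst{x}}}$ is a single confined process, so \rtit{wPrc} applies, and I note the union of two disjoint linear sets is again linear. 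For \rtit{cSpl}, $\vdash \con{\eV\eadd\eVV}{P\paral Q}$ holds trivially; the contractum $\con{\eV}{P}\paralS\con{\eVV}{Q}$ is well-resourced because $\eV$ and $\eVV$ are disjoint (that is what $\eadd$, the disjoint-union notation, asserts) so \rtit{wPar} applies. For \rtit{cLcl}, from $\vdash \conC{\rest{c}{P}}$ I need $\vdash \rest{c}{\con{\eV\uplus\eset{\permin{c},\permout{c}}}{P}}$; by \rtit{wRes} it suffices that $\con{\eV\uplus\eset{\permin{c},\permout{c}}}{P}$ is well-resourced, which is immediate by \rtit{wPrc}, the $\uplus$ ensuring the fresh pair does not clash. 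For \rtit{cTgh}, from $\vdash \restB{c}{\conCP \paralS S}$ I get $\vdash \conCP$, $\vdash S$ and $\conCP \perp S$, i.e.\ $\eV \cap \perm{S} = \emptyset$; the contractum is $\con{\eV\setminus\sset{\permin{c},\permout{c}}}{P} \paralS \rest{c}{S}$, and since removing permissions can only shrink the owned set, $(\eV\setminus\sset{\permin{c},\permout{c}}) \cap \perm{\rest{c}{S}} \subseteq \eV \cap \perm{S} = \emptyset$, so \rtit{wPar} applies with \rtit{wRes} on the second component.

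For the inductive step I treat \rtit{cPar}, \rtit{cRes} and \rtit{cStr}. In \rtit{cPar}, from $S\reduc S'$ and $\vdash S\paral T$ I obtain $\vdash S$, $\vdash T$ and $S\perp T$; the IH gives $\vdash S'$, and Locality gives $\perm{S'}\subseteq\perm{S}$, hence $\perm{S'}\cap\perm{T}\subseteq\perm{S}\cap\perm{T}=\emptyset$, so $S'\perp T$ and \rtit{wPar} yields $\vdash S'\paral T$. In \rtit{cRes}, from $S\reduc S'$ and $\vdash\rest{c}{S}$ I get $\vdash S$, the IH gives $\vdash S'$, and \rtit{wRes} gives $\vdash\rest{c}{S'}$. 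In \rtit{cStr} I additionally need that structural equivalence preserves well-resourcing, i.e.\ $\vdash S$ and $S\steq S'$ imply $\vdash S'$; this is a routine separate induction on the structural-equivalence rules (the only slightly delicate cases are \rtit{scExt} and \rtit{scAss}/\rtit{scCom}, which merely reassociate or re-partition disjoint owned sets, and \rtit{scNew}/\rtit{scNil}, which involve the empty set), so I would either factor it out as a small sub-lemma or fold it in.

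The main obstacle I anticipate is not any single case but getting the bookkeeping of the three distinct set operations $\cup$, $\eadd$ (disjoint union) and $\uplus$ exactly right, and being careful that ``linearity'' (at most one $\permin{c}$ and one $\permout{c}$ per channel across the whole system) is genuinely an invariant of each rule — in particular that \rtit{cLcl} is the only source of fresh permissions and that the side condition on the scoped name $c$ prevents a collision with an existing pair. Apart from that, the proof is essentially mechanical once Locality is invoked in the \rtit{cPar} case.
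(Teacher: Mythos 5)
Your proof is correct and follows essentially the same route as the paper's: a rule induction on $S \reduc T$, with the parallel case closed via $\perm{S'}\subseteq\perm{S}$ (the paper proves Locality and Resourcing \emph{simultaneously} in one induction rather than invoking Locality as a previously established lemma, but the content is identical). You are in fact somewhat more thorough than the published proof, which only exhibits the cases \rtit{cCom}, \rtit{cPar}, \rtit{cSpl} and \rtit{cDsc} and silently omits \rtit{cStr} together with the sub-lemma, which you correctly identify as necessary, that structural equivalence preserves well-resourcing.
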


\proof[\textit{(Proof for Lemma~\ref{lem:resourcing} \& Lemma~\ref{lem:locality})}]  The proof is by rule induction on $S \reduc T$.  The main cases are:
  \begin{desCription}
  \item\noindent{\hskip-12 pt\bf \rtit{cCom}:}\ $S = \conC{\pioutA{c}{\lst{e}}} \paral \conCC{\piin{c}{\lst{x}}{P}}$ $T=\con{\eV\cup\eVV}{P\subC{\lst{v}}{\lst{x}}}$ where $\lst{e}=\lst{v}$.  It is immediate that $\perm{S} = \perm{T}$.  Moreover, $\vdash T$ by \rtit{wPrc}. 
  \item\noindent{\hskip-12 pt\bf \rtit{cPar}:}\ We have $S= R_1\paral R_2$ , $T=R'_1\paral R_2$ and $R_1\reduc R'_1$.  Moreover, $\vdash S$ implies $\perm{R_1}\cap\perm{R_2}=\emptyset$, $\vdash R_1$ and $\vdash R_2$.  Also recall that $\perm{S} = \perm{R_1}\cup\perm{R_2}$ and that $\perm{T} = \perm{R'_1}\cup\perm{R_2}$.

By  $\vdash R_1$,  $R_1\reduc R'_1$ and I.H. we obtain $\vdash R'_1$ and $\perm{R'_1}\subseteq \perm{R_1}$. By, $\perm{R'_1}\subseteq \perm{R_1}$ and  $\perm{R_1}\cap\perm{R_2}=\emptyset$ we deduce $\perm{R'_1}\cap\perm{R_2}=\emptyset$ and by $\vdash R'_1$ and $\vdash R_2$ we deduce $\vdash T$.  Moreover, by $\perm{R'_1}\subseteq \perm{R_1}$ we obtain $\perm{T}\subseteq \perm{S}$.
  \item\noindent{\hskip-12 pt\bf\rtit{cSpl}:}\ $S=\con{\eV\uplus\eVV}{P\paral Q}$ and $T=\conCP\paral\conCC{Q}$.  $\eV\uplus\eVV$ implies $\perm{\conCP}\cap\perm{\conCC{Q}}=\emptyset$ and since $\vdash \conCP$ and $\vdash \conCC{Q}$ (by \rtit{wPrc}), we get $\vdash T$.  Moreover $\perm{T}=\perm{S}$.
  \item\noindent{\hskip-12 pt\bf \rtit{cDsc}:}\ $S=\conC{\inert}$ and $T=\con{\emptyset}{\inert}$. Trivially, $\vdash T$ (by \rtit{wPrc}) and $\perm{T}=\emptyset\subseteq\perm{S}$. \qed
  \end{desCription}

\medskip

Another important property of our resource semantics is that reductions do not hide prior permission violations \ie permission violations are preserved by reductions.  This allows us to ignore intermediary steps during the evaluation of a confined process (\cf \defref{def:safe-stability-eval}) and simply inspect the resulting stable system to determine whether that evaluation 
resulted in any permission violations.  In what follows, we shall refer to evaluations without permission violations as \emph{safe}.

\begin{lem}[Violation Preservation]\quad\label{lem:violation-pres}
  \begin{math}
    S \reduc^\ast T\text{ and }\, S\err \quad\text{implies}\quad  T\err
  \end{math}
 \end{lem}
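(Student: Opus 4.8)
Since $\reduc^\ast$ is the reflexive--transitive closure of $\reduc$, the plan is to prove first the one-step statement, $S \reduc T$ and $S\err$ imply $T\err$; the general claim then follows by an easy induction on the length of $S \reduc^\ast T$, the empty path being immediate and a path $S \reduc S' \reduc^\ast T$ being closed by the one-step claim (yielding $S'\err$) together with the induction hypothesis. I would prove the one-step statement by rule induction on the derivation of $S \reduc T$.

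\textbf{The key lemma: an inversion principle for $\err$.} Before the induction I would isolate the one genuinely non-routine ingredient: a structural characterisation of $\err$. The structural-equivalence rules for systems only reshuffle parallel components, slide and swap restrictions, and add or remove copies of $\con{\emptyset}{\inert}$; in particular they never alter the permission set attached to a confined sub-process, nor do they create or destroy a confined sub-process that is a \emph{bad I/O} — one of the form $\con{\eV}{\pioutA{c}{\lst{e}}}$ with $\permout{c}\notin\eV$ or $\con{\eV}{\piin{c}{\lst{x}}{Q}}$ with $\permin{c}\notin\eV$. Consequently, by an easy induction on the $\err$-derivation (the only interesting case being \rtit{eStr}, discharged by this observation): $S\err$ holds iff $S$ contains a bad-I/O confined sub-process. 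From this one reads off the closure laws used below — $\err$ is preserved by $\steq$ (also directly from \rtit{eStr} and symmetry of $\steq$); $\con{\eV}{P}\err$ iff $P$ is a top-level input/output lacking the corresponding permission in $\eV$; $S\paral T\err$ iff $S\err$ or $T\err$; and $\rest{c}{S}\err$ iff $S\err$. Establishing this inversion principle is the step I expect to be the main obstacle; the rest is bookkeeping.

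\textbf{The rule induction.} The rules split into three groups. \emph{(i) Computation and housekeeping rules} (\rtit{cThn}, \rtit{cEls}, \rtit{cCom}, \rtit{cPrc}, \rtit{cSpl}, \rtit{cLcl}, \rtit{cDsc}): here, under the rule's side conditions, the redex $S$ cannot err, so the implication is vacuous. For \rtit{cThn}/\rtit{cEls}/\rtit{cPrc}/\rtit{cSpl}/\rtit{cLcl}/\rtit{cDsc} the redex is a single $\con{\eV}{P}$ whose body $P$ is a conditional, a process call, a process-level parallel, a process-level restriction, or $\inert$ — never a top-level input or output — so $S\not\err$ by the inversion principle; for \rtit{cCom} the redex is $\con{\eV}{\pioutA{c}{\lst{e}}}\paral\con{\eVV}{\piin{c}{\lst{x}}{Q}}$ and the premises $\permout{c}\in\eV$, $\permin{c}\in\eVV$ say that neither confined component is a bad I/O, so again $S\not\err$. \emph{(ii) Congruence rules} (\rtit{cPar}, \rtit{cRes}, \rtit{cStr}): push the induction hypothesis through. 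In \rtit{cPar}, from $S = R_1\paral R_2 \reduc R_1'\paral R_2 = T$ with $R_1\reduc R_1'$: if $S\err$ then $R_1\err$ or $R_2\err$ by inversion; in the first case the induction hypothesis gives $R_1'\err$, hence $T\err$ by \rtit{ePar}; in the second, $T\err$ by \rtit{ePar} after an \rtit{scCom}. \rtit{cRes} is identical with \rtit{eRes}; in \rtit{cStr}, from $S\steq S'\reduc T'\steq T$ we get $S'\err$, then $T'\err$ by the induction hypothesis, then $T\err$, all by $\steq$-closure of $\err$. \emph{(iii) Scope tightening} (\rtit{cTgh}): here $S = \restB{c}{\con{\eV}{P}\paral R} \reduc \con{\eV\setminus\sset{\permin{c},\permout{c}}}{P} \paral \rest{c}{R} = T$ with $c\notin\fn{P}$. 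If $S\err$, inversion through the restriction and then the parallel gives either $R\err$ — whence $\rest{c}{R}\err$ by \rtit{eRes} and $T\err$ by \rtit{ePar} — or $\con{\eV}{P}\err$, i.e. $P$ is a top-level input/output on some channel $d$ with the corresponding permission absent from $\eV$; since $c\notin\fn{P}$ we have $d\neq c$, so deleting $\permin{c},\permout{c}$ from $\eV$ cannot repair the violation, giving $\con{\eV\setminus\sset{\permin{c},\permout{c}}}{P}\err$ and hence $T\err$ by \rtit{ePar}. This is the only rule for which the side condition $c\notin\fn{P}$ plays a role.
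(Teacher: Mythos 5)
Your proof is correct and takes essentially the same route as the paper's: reduce to the one-step statement, prove it by rule induction on $S\reduc T$ (vacuous for the computation rules since their redexes cannot err, inversion of $\err$ plus the induction hypothesis for the congruence rules), and close $\reduc^\ast$ by induction on the length of the reduction sequence. The only difference is that you make explicit, as a separate inversion lemma, the characterisation of $\err$ up to $\steq$ that the paper uses silently in its \rtit{cCom} and \rtit{cPar} cases — a worthwhile elaboration, not a different argument.
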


\proof
  First we show \begin{math}
    S \reduc T\text{ and }\, S\err \quad\text{implies}\quad  T\err
  \end{math} by rule induction on $S \reduc T$.  The main cases are:
  \begin{desCription}
  \item\noindent{\hskip-12 pt\bf \rtit{cCom}:}\ $S = \conC{\pioutA{c}{\lst{e}}} \paral \conCC{\piin{c}{\lst{x}}{P}}$ where $\permout{c}\in\eV$ and $\permin{c}\in\eVV$.  By case analysis, if $S\err$ then either $\conC{\pioutA{c}{\lst{e}}}\err$ because $\permout{c}\not\in\eV$ (by \rtit{eOut}) or $\conCC{\piin{c}{\lst{x}}{P}}$ because $\permin{c}\not\in\eVV$ (by \rtit{eIn}); both cases lead to a contradiction.
    \item\noindent{\hskip-12 pt\bf \rtit{cPar}:}\ $S= R_1\paral R_2$ , $T=R'_1\paral R_2$ and $R_1\reduc R'_1$.  By $S= R_1\paral R_2$, \rtit{ePar}, \rtit{eStr} and \rtit{scCom} we know $S\err$ because either:
      \begin{description}
      \item[$R_1\err$]  By $R_1\reduc R'_1$ and I.H. $R'_1\err$ and by $T=R'_1\paral R_2$ and \rtit{ePar} we get $T\err$.
      \item[$R_2\err$]  By $T=R'_1\paral R_2, \rtit{ePar}, \rtit{eStr} \text{ and } \rtit{scCom}$ we obtain  $T\err$.
      \end{description}
  \end{desCription}
  The second part of the proof is by induction on the number $n$ of reductions used \ie $S \reduc^n T$. \qed

\smallskip

\subsection{System Determinism}
\label{sec:system-determinism}

The first two main results of our resource semantics establish that system evaluation is deterministic up-to the terminal  permissions owned (\cf \thmref{thm:eval-eq-upto-permissions} and \thmref{thm:sys-eval-implies-sys-conv}).   

We first lay the ground for these results by giving the following definitions.    Systems evaluation in \defref{def:safe-stability-eval}, $S \evaluate T$,  is limited to safe-stability, $T \checkmark$,  and excludes reductions to racy systems.  The operation ${| - |}$ denotes a permission-erasure function whereby  $|S|$ returns the process in $S$ stripped of all its confining permissions; it allows us to express equivalence up-to owned permissions in \thmref{thm:eval-eq-upto-permissions}.  System Convergence, \defref{def:sys-converge}, is the least set of systems  that  converge to a stable state (but not necessarily a safe one) and is used for \thmref{thm:sys-eval-implies-sys-conv}.

\begin{defi}[Safe-Stability and Evaluation]\label{def:safe-stability-eval} \rm 
  \quad\begin{align*}
  %   S \reducNot &\;\deftxt\;\; \text{no } T \text{ exists such that } S \reduc T\\
     S \checkmark &\;\deftxt\;\;  S\reducNot \text{ and } S \errNot % & 
%     \qquad
   \\
    S \evaluate T &\;\deftxt\;\;  \exists T'.\   S\reduc^\ast T' \text{ and } T' \checkmark \text{ and } T \steq T' 
  \end{align*}
\end{defi}

\begin{defi}[Permission Confinement Erasure]\label{def:perm-erasure}
  \begin{align*}
    |S| & \;\deftxt\quad
    \begin{cases}
      P  &\;\text{if }  S = \conCP \\
      |T| \;\paralS \;|R| &\;\text{if } S = T \paral R \\ 
      \rest{c}{\;|T|}   &\;\text{if } S = \rest{c}{T}
    \end{cases}
  \end{align*}
\end{defi}

\begin{defi}[System Convergence]\label{def:sys-converge}  \converge\ is the least predicate over systems satisfying the equation\begin{displaymath}
    S \converge\; =\; S \reducNot \text{ or } (\forall T.\ S \reduc T \text{ implies } T \converge)
  \end{displaymath}
\end{defi}\medskip

\noindent In conformance with \defref{def:proc-determinism}, by system determinism we understand that 
\begin{inparaenum}[\itshape \upshape(1\upshape)]
\item no system can evaluate to two distinct safely-stable systems, up-to owned permissions  \ie \thmref{thm:eval-eq-upto-permissions} and that
\item no system can evaluate to a safely-stable system and, at the same time, diverge along a different execution path \ie\thmref{thm:sys-eval-implies-sys-conv}.
\end{inparaenum}

\begin{thm}[Evaluation Determinism]\label{thm:eval-eq-upto-permissions}\quad
  \begin{math}  
    S \evaluate T_1 \text{ and } S \evaluate T_2 \text{ implies } |T_1| \steq |T_2|
  \end{math}
\end{thm}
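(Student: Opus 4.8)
The statement is a confluence property, so the plan is to reduce it to a one-step diamond for the resource reduction relation, restricted to the systems that actually occur along an evaluation --- those that are \emph{well-resourced} and \emph{violation-free}. Unfolding $S \evaluate T_1$ gives $S \reduc^\ast T_1'$ with $T_1' \checkmark$ and $T_1 \steq T_1'$; by \lemref{lem:violation-pres} the property $T_1' \errNot$ propagates backwards along the path, so $S$ and every intermediate system are violation-free, and by the standing well-resourcedness assumption together with \lemref{lem:resourcing} they are all well-resourced. Likewise for the path witnessing $S \evaluate T_2$. Since structural equivalence on systems projects to structural equivalence on processes under $|\cdot|$, it then suffices to prove: \emph{(Confluence)} for well-resourced, violation-free $S$, if $S \reduc^\ast S_1$ and $S \reduc^\ast S_2$ then there are $U_1, U_2$ with $S_1 \reduc^\ast U_1$, $S_2 \reduc^\ast U_2$ and $|U_1| \steq |U_2|$. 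Indeed, instantiating $S_1 := T_1'$ and $S_2 := T_2'$ and using $T_1' \reducNot$, $T_2' \reducNot$ forces $U_1 = T_1'$ and $U_2 = T_2'$, whence $|T_1| \steq |T_1'| \steq |T_2'| \steq |T_2|$.

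\textbf{From a diamond to (Confluence).} I would derive \emph{(Confluence)} from a \emph{strong diamond modulo $\steq$ and erasure}: for well-resourced, violation-free $S$, if $S \reduc S_1$ and $S \reduc S_2$ then there are $U_1, U_2$, each reachable from $S_1$, $S_2$ respectively in \emph{at most one} reduction step, with $|U_1| \steq |U_2|$; moreover $U_1, U_2$ are again well-resourced and violation-free (by \lemref{lem:resourcing}, \lemref{lem:locality}, \lemref{lem:violation-pres}). Because the joining legs have length $\le 1$, no termination hypothesis is needed --- which matters, since system convergence is only established afterwards in \thmref{thm:sys-eval-implies-sys-conv} --- so a standard strip-lemma/tiling induction on the lengths of $S \reduc^\ast S_1$ and $S \reduc^\ast S_2$, carrying the erasure-equivalence and the well-resourced/violation-free invariants through each cell, upgrades the diamond to \emph{(Confluence)}.

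\textbf{The diamond.} This is the technical core, proved by case analysis on the two rules of \figref{fig:confined-lang} applied to $S$, after classifying the rules as \emph{computational} (\rtit{cCom}, \rtit{cThn}, \rtit{cEls}, \rtit{cPrc}), which are mirrored step-for-step by the underlying process, i.e.\ a computational $S \reduc T$ satisfies $|S| \reduc |T|$, and \emph{administrative} (\rtit{cSpl}, \rtit{cLcl}, \rtit{cTgh}, \rtit{cDsc}), which only redistribute/create/discard permissions and retighten scope, hence an administrative $S \reduc T$ satisfies $|S| \steq |T|$; the congruence rules \rtit{cPar}, \rtit{cRes} and the up-to-$\steq$ rule \rtit{cStr} are absorbed by working with redexes modulo structural equivalence and scope extrusion. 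If $S \reduc S_1$ and $S \reduc S_2$ are both administrative, then $|S_1| \steq |S| \steq |S_2|$ and we take $U_i := S_i$ with empty legs. If exactly one, say $S \reduc S_2$, is administrative, one checks that the computational redex fired in $S \reduc S_1$ still occurs in $S_2$ --- the permission-deleting administrative rules \rtit{cTgh} and \rtit{cDsc} touch only, respectively, a channel not free in the affected subprocess and an $\inert$ component, so neither can disable it --- and, symmetrically, that the administrative redex still occurs in $S_1$; firing the surviving copies gives $U_1, U_2$, and since the administrative contribution is invisible to $|\cdot|$ up to $\steq$, the goal $|U_1| \steq |U_2|$ collapses to the trivial process-level identity ``one step of $|S|$''. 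If both are computational, then $|S| \reduc |S_1|$ and $|S| \reduc |S_2|$; here well-resourcedness (disjoint permission sets) rules out output/output and input/input races, so the two redexes are either the very same rule instance --- in which case $S_1 \steq S_2$, the classical, functional boolean interpretation of \defref{fig:bool-interp} excluding a \rtit{cThn}/\rtit{cEls} clash on one conditional --- or lie at parallel-disjoint positions, in which case \rtit{cCom} merely unions permission sets and so leaves the other redex fully resourced, the two steps commute, and the ordinary process-level diamond for disjoint redexes delivers $|U_1| \steq |U_2|$.

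\textbf{Expected main obstacle.} The delicate point is the diamond's case analysis: making precise ``the same instance versus parallel-disjoint instances'' in the presence of \rtit{cStr} and scope extrusion, and verifying that an administrative step --- most awkwardly \rtit{cTgh}, which simultaneously tightens a restriction and subtracts channel permissions --- can neither be disabled by, nor disable, a concurrently enabled step, so that both genuinely survive on the opposite leg. Once these commutation facts, together with the observation that administrative steps are erasure-invisible up to $\steq$, are pinned down, assembling the diamond, the tiling induction, and the final reduction to the theorem is routine bookkeeping.
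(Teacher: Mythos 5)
Your reduction of the theorem to a multi-step \emph{(Confluence)} statement, and the final instantiation at the two $\checkmark$ endpoints, is fine, and your one-step case analysis (administrative versus computational steps, disjoint redexes, functional booleans) matches the paper's Lemma~\ref{lem:confluence} in spirit. The gap is in the passage from one step to many. Your diamond closes the critical pair of two \rtit{cSpl} applications to the same component with different permission partitions by taking $U_i := S_i$ and observing $|S_1|\steq|S_2|$. But the two corners are then \emph{distinct systems} that agree only after erasure, and erasure-equivalence does not support the simulation your tiling needs for the adjacent cell: to match a step $U_2\reduc V_2$ by a step of $U_1$ you must know that the corresponding redex is still \emph{enabled} in $U_1$, and enabledness of \rtit{cCom} (as opposed to raising $\err$) depends exactly on the permission distribution that $|\cdot|$ forgets. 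This is why the paper introduces the finer relation $\quaseq$, notes explicitly that $|S|\steq|T|$ does \emph{not} imply $S\quaseq T$, and proves the simulation (Lemma~\ref{lem:quaseq-preserve-absence-err}, Lemma~\ref{lem:eval-preserve-quaseq}) only for $\quaseq$-related systems.

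The invariant you propose to carry through the cells also fails. Lemma~\ref{lem:violation-pres} preserves violations \emph{forwards}, so its contrapositive only pushes $\errNot$ \emph{backwards} along a path that ends in a $\checkmark$ system; it does not make violation-freeness forward-invariant, and indeed it is not: a bad \rtit{cSpl} partition of $\con{\eV\uplus\eVV}{\,(\cmpC{\btrue}{\pioutA{c}{v}}{\inert})\paral Q\,}$ that withholds $\permout{c}$ from the left component is $\errNot$ immediately (no error rule inspects a conditional) and becomes $\err$ only after a \rtit{cThn} step. More directly, for $S=\con{\sset{\permout{c},\permin{c}}}{\pioutA{c}{v}\paral\piin{c}{x}{\inert}}$ the good split evaluates to $\con{\emptyset}{\inert}$ while the swapped split is stuck and erroneous with erasure $\pioutA{c}{v}\paral\piin{c}{x}{\inert}$; so both your diamond's conclusion (the $U_i$ violation-free) and your \emph{(Confluence)} statement (which quantifies over \emph{all} reducts $S_1,S_2$) are false. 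The theorem is saved only because it quantifies over paths that themselves reach $\checkmark$ systems, which is why the paper proves Lemma~\ref{lem:quaseq-impl-quaseq-eval} under that hypothesis and needs the corrective-reductions lemma (Lemma~\ref{cor:corrective-reductions}): when a split goes wrong along a path from an $S$ with $S\evaluate$, there is an alternative, $\quaseq$-related split from the same source that still evaluates safely. Without $\quaseq$ and corrective reductions, your tiling does not go through.
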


\begin{thm}[System Evaluation implies System Convergence] \label{thm:sys-eval-implies-sys-conv}
  \begin{math}
    S\evaluate \text{ implies } S\converge
  \end{math}
\end{thm}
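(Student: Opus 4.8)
The plan is a Newman-style argument: establish a suitable local-confluence (``diamond''/``strip'') property for well-resourced systems and combine it with the existence of a safely-stable reduct, in the familiar way in which a strong diamond property upgrades weak normalisation to strong normalisation. Since \converge is the least predicate of \defref{def:sys-converge}, $S\converge$ is equivalent to the non-existence of an infinite reduction sequence out of $S$, so it suffices to prove: if $S\reduc^\ast T'$ with $T'\checkmark$ via a path of length $n$, then every reduction sequence from $S$ has length at most $n$. This is shown by induction on $n$. In the inductive step write $S\reduc S_1\reduc^\ast T'$, so that $S_1$ reaches a safely-stable system in $n-1$ steps and, by the induction hypothesis, every sequence from $S_1$ has length at most $n-1$; then take an arbitrary step $S\reduc R$. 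If $R$ equals $S_1$ (up to \steq) we are done; otherwise the local-confluence lemma closes the fork $S\reduc S_1$, $S\reduc R$ at a common continuation, from which, reapplying the induction hypothesis to $S_1$, one bounds the length of every sequence through $R$ by $n$ as well.

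The workhorse is therefore a local-confluence lemma for well-resourced systems, essentially the same ingredient that must underpin \thmref{thm:eval-eq-upto-permissions}. Its proof is a case analysis on the pair of rules firing at $S$. A crucial simplification is that a well-resourced system has at most one input- and one output-permission per channel (\defref{def:wf-systems} and \defref{def:separation}), so \rtit{cCom} is itself deterministic and the only forks to treat are: two \emph{independent} redexes in disjoint sub-systems, which commute straightforwardly, using Lemma~\ref{lem:resourcing} to keep both residuals well-resourced and Lemma~\ref{lem:locality} to track their footprints; and — the genuinely non-trivial case — two applications of \rtit{cSpl} to the same confinement with \emph{different} permission partitions. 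In the latter case the residuals agree on the underlying process but not on their permission decoration, so the lemma must be phrased modulo the erasure $|\cdot|$ of \defref{def:perm-erasure} (together with \steq) — the same phenomenon that forces the ``up to $|T_1|\steq|T_2|$'' in \thmref{thm:eval-eq-upto-permissions}. Two routine ingredients get folded in here: the purely administrative rules \rtit{cSpl}, \rtit{cLcl}, \rtit{cTgh}, \rtit{cDsc} form a terminating sub-system under a structural measure (parallel operators and restrictions occurring directly under a confinement, non-empty permission sets, redundant scope extent), so the ``length'' controlling the outer induction is really the lexicographic pair (number of communication/unfolding/branching steps reaching $T'$, administrative measure); and Lemma~\ref{lem:violation-pres}, read contrapositively, guarantees that no residual reduction can create a permission violation, so the closing continuation still reaches a safely-stable system.

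I expect the local-confluence lemma to be the main obstacle, and inside it the interplay of the non-deterministic partitioning of \rtit{cSpl}, the permission-merging of \rtit{cCom}, and the scope-manipulating rules \rtit{cLcl} and \rtit{cTgh}. The delicate point is that erasure $|\cdot|$ is, on its own, \emph{too coarse}: two systems with the same underlying process can differ in whether they are safely stable or stuck at a violation, so the ``modulo'' relation used to state and iterate the lemma must be exactly the right one (equality of underlying process \emph{together with} relocatability of permissions between parallel confinements spawned by \rtit{cSpl}), and one has to check that reductions of systems related this way stay related and reach safely-stable systems of the same residual measure. Everything else — reducing $S\converge$ to the absence of infinite sequences, the outer induction on the length of the safe evaluation, and the bookkeeping with \steq through \rtit{cStr} — is standard.
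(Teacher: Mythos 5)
Your overall strategy --- an outer induction on the length of one safe evaluation, closed under forks by a local-confluence lemma stated modulo a permission-relocation relation --- is essentially the paper's (its $\quaseq$ of \defref{def:quaseq}, Lemma~\ref{lem:confluence}, and Lemmas~\ref{lem:eval-preserve-quaseq} and~\ref{lem:quaseq-impl-quaseq-eval}). But there is a concrete gap in how you close the fork, and the one justification you give for it is wrong. You claim that Lemma~\ref{lem:violation-pres}, ``read contrapositively, guarantees that no residual reduction can create a permission violation, so the closing continuation still reaches a safely-stable system.'' The contrapositive of that lemma only says that a violation-free \emph{target} has a violation-free source; it says nothing about forward preservation of $\errNot$, and reductions do create violations: a bad split $\con{\eV\uplus\eVV}{\pioutA{c}{\lst{e}}\paral Q}\reduc\con{\eV}{\pioutA{c}{\lst{e}}}\paral\conCC{Q}$ with $\permout{c}\notin\eV$ turns an error-free system into an erring one. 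Consequently, in your inductive step the common continuation $T_3$ closing the fork $S\reduc S_1$, $S\reduc R$ need not reach a safely-stable system at all, so you cannot reapply the induction hypothesis through $R$: your hypothesis is conditioned on reaching a $\checkmark$ system, and $T_3$ (hence $R$) may be unable to do so.

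This is exactly the hole the paper's corrective-reduction machinery fills (Lemma~\ref{cor:corrective-reductions} together with Lemma~\ref{lem:eval-preserve-quaseq}): every one-step reduct of an evaluating system is $\quaseq$-related to a reduct that \emph{still evaluates}, and $\quaseq$-related systems have safe evaluations of exactly equal length (Lemma~\ref{lem:quaseq-impl-quaseq-eval}); these two facts are what transport the length bound across the $\quaseq$ branch of local confluence and across corrected splits. You do identify that the modulo relation must be permission-relocatability rather than erasure, and you flag ``reductions of related systems stay related and reach safely-stable systems of the same residual measure'' as the delicate point --- but that statement \emph{is} the real content of the argument, and your proposal does not supply it. The auxiliary lexicographic measure on administrative rules is also unnecessary once one has that lemma (all safe evaluation paths turn out to have literally the same length, administrative steps included) and does not substitute for it.
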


 These properties follow, at an intuitive level,  from the partial-confluence property, as stated in Lemma~\ref{lem:confluence2}.  

\begin{lem}[Partial Confluence]\label{lem:confluence2}
    \begin{math}
    S \reduc T_1 \text{ and }  S \reduc T_2 \; \text{implies  either of the following:}
   \end{math}
    \begin{enumerate}[\em(1)]
    \item  $|T_1| \steq |T_2| \text{ or;}$
    \item $\exists T_3. \, T_1 \reduc T_3 \text{ and } T_2 \reduc T_3$
    \end{enumerate}
\end{lem}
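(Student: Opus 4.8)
The plan is to prove the dichotomy by induction on the structure of $S$ (equivalently, by rule induction on one of the two reduction derivations), working throughout under the standing assumption that the system is well-resourced and using Lemma~\ref{lem:resourcing} to keep that invariant along reductions. Since $\reduc$ is defined modulo structural equivalence through \rtit{cStr}, and $\steq$ is a congruence respected both by $\reduc$ and by the erasure $|\cdot|$, I would first reduce the whole argument to analysing $S$ according to its top-level shape — a single confined process $\conCP$, a parallel composition $S_1\paral S_2$, or a restriction $\rest{c}{S'}$ — discharging the familiar ``reasoning up to $\steq$'' obligation in the usual way (a size measure stable under $\steq$, or a normal-form lemma).

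If $S=\conCP$, both reductions are root applications of one of the axioms \rtit{cThn}, \rtit{cEls}, \rtit{cPrc}, \rtit{cSpl}, \rtit{cLcl}, \rtit{cDsc}, and the applicable one is fixed by the outermost constructor of $P$: \rtit{cDsc} is disjoint from the rest (it requires $P=\inert$), \rtit{cThn} and \rtit{cEls} are mutually exclusive because boolean evaluation is deterministic (Definition~\ref{fig:bool-interp}), and \rtit{cPrc}, \rtit{cLcl} each act in a unique way. The only genuine freedom is in \rtit{cSpl}, which may partition the permission set $\eV$ in several ways, but every such split leaves the same process component $P\paral Q$; hence $|T_1|\steq|T_2|$, i.e.\ alternative~(1).

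If $S=S_1\paral S_2$, then either the two reductions act within one component, in which case the induction hypothesis applied to that component closes the case, or they involve disjoint sets of components, in which case \rtit{cPar} (with \rtit{scCom}) lets them commute and we take $T_3$ to be the system obtained by performing both, giving alternative~(2). The only remaining possibility — and precisely the one that produces nondeterminism in the underlying permission-free calculus — is a \rtit{cCom} synchronisation on a channel $c$ whose output participant (the holder of $\permout{c}$) or input participant (the holder of $\permin{c}$) is shared with the second reduction; but linearity of permissions in a well-resourced system means there is no second process holding $\permout{c}$ (resp.\ $\permin{c}$) able to compete for that participant, so the second reduction necessarily touches only disjoint components and we are back in the commuting case. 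This exclusion of communication races is the conceptual heart of the lemma, and it is a one-line appeal to well-resourcedness rather than a combinatorial argument. The restriction case $S=\rest{c}{S'}$ is comparatively easy: two \rtit{cRes} steps give the induction hypothesis on $S'$, while \rtit{cTgh} only reshuffles the scope of $c$ and the ownership of $\permin{c},\permout{c}$ without changing any process term, so it commutes with reductions inside its body, and two rival \rtit{cTgh} steps reconverge by a further \rtit{cTgh} step modulo structural rearrangement.

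I expect the main obstacle to be bookkeeping rather than insight: checking carefully that the scope- and permission-manipulating rules (in particular \rtit{cTgh}, but also \rtit{cLcl}, \rtit{cDsc}, \rtit{cSpl}) commute with \rtit{cCom} and with one another while all their side conditions — $c\notin\fn{P}$, the non-emptiness and disjointness of permission sets — and the well-resourcing invariant are maintained through each specific decomposition, and making the up-to-$\steq$ induction watertight. The \rtit{cTgh}-versus-\rtit{cTgh} and \rtit{cTgh}-versus-\rtit{cCom} overlaps will be the fiddliest, since there one must track scope extrusion/intrusion simultaneously with the transfer of $\permin{c},\permout{c}$ between confined processes.
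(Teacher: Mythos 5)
Your proposal is correct in its essential content and rests on exactly the insight the paper uses: linearity of permissions in a well-resourced system rules out competing participants in a \rtit{cCom}, so overlapping redexes are either literally the same one (where only the \rtit{cSpl} permission split is nondeterministic, leaving the erased processes equal) or disjoint ones that commute. The packaging, however, is genuinely different. The paper never proves this lemma directly: it first establishes the stronger Lemma~\ref{lem:confluence}, whose first disjunct is $T_1\quaseq T_2$ (same process structure, possibly different permissions), and obtains the present statement as an immediate corollary via Proposition~\ref{lem:quas-impl-unconst-steq}; and it handles the ``up to $\steq$'' problem not by structural induction on $S$ but by a global inversion lemma (Lemma~\ref{lem:reduc-proc-struct}) that normalises any reducible system into the shape $\restB{\lst{c}}{\text{redex}\paral R}$ before comparing redexes. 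Your direct structural induction is self-contained, but it does not deliver the $\quaseq$-strengthened invariant the paper needs downstream (Lemmas~\ref{lem:eval-preserve-quaseq} and~\ref{lem:quaseq-impl-quaseq-eval}), and the normal-form lemma you defer is essentially Lemma~\ref{lem:reduc-proc-struct} anyway. One concrete caution: the \rtit{cTgh}-versus-\rtit{cCom} overlap you rightly single out does \emph{not} commute on the nose when the process extruded by \rtit{cTgh} is the very output consumed by the communication and the receiver's continuation still mentions the restricted name; the two reducts then rejoin only up to redundant permissions (\ie up to $\quaseq$), not to a syntactically common $T_3$, because no rule can shed $\permin{c},\permout{c}$ from a continuation in which $c$ occurs free. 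The paper's own sketch is equally silent here (its inversion lemma omits a \rtit{cTgh} case), so this is a shared soft spot rather than a defect peculiar to your argument, but it is the one place where a one-line appeal to well-resourcedness will not close the diamond.
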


However, the full technical details of the proofs for \thmref{thm:eval-eq-upto-permissions} and \thmref{thm:sys-eval-implies-sys-conv} are more delicate; on first reading, the reader may skip them and progress to \secref{sec:process-determinism-1}. Before though, we highlight Proposition~\ref{prop:safe-stab-vs-system-structure}, which establishes sufficient and necessary conditions on the structure of safely-stable systems; these conditions  will then act as a guiding principle when formulating our logic formulas. In essence, safely stable systems consist of mismatching asynchronous outputs and input-blocked processes composed in parallel, each owning the respective output and input permissions so as not to generate an error.

\begin{prop}[Safe-Stability and System Structure]\label{prop:safe-stab-vs-system-structure}
  \begin{align*}
    S\checkmark & \quad\text{iff}\quad  S \steq\restB{\lst{d}}{ \;\paral_{i=0}^n \con{\eV_i}{\pioutA{c_i}{\lst{e_i}}} \; \paral_{j=0}^m \con{\eVV_j}{\piin{c'_j}{\lst{x_j}}{P_j}}}
  \end{align*}
  where
  \begin{iteMize}{$\bullet$}
  \item $\sset{c_1,\ldots, c_n} \cap \sset{c'_1,\ldots,c'_m} = \emptyset$
  \item $\bigwedge_{i=0}^n \permout{c_i} \in \eV_i$
  \item $\bigwedge_{j=0}^m \permin{c_j} \in \eVV_j$
  \end{iteMize}
  and where $\;\paral_{i=0}^0 \con{\eV_i}{\pioutA{c_i}{\lst{e_i}}}$ and  $\;\paral_{j=0}^0 \con{\eVV_j}{\piin{c'_j}{\lst{x_j}}{P_j}}$ denote = $\con{\emptyset}{\inert}$.
\end{prop}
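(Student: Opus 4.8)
The plan is to prove the two directions of the biconditional separately. For necessity I would argue by structural induction on the system $S$, reading off at each node exactly the information carried by the hypotheses $S\reducNot$ and $S\errNot$. For sufficiency I would argue by induction on the number of restricted names $\lst{d}$ in the displayed decomposition, verifying in each case that no error rule and no reduction rule can fire; here it is convenient to note first that $S\checkmark$ is preserved by $\steq$ — since both $\reducNot$ and $\errNot$ are, using \rtit{cStr} and \rtit{eStr} — so that it suffices to treat terms literally of the displayed shape together with their $\steq$-variants.

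For necessity, take $S=\conCP$ first. Then $S\reducNot$ rules out $P$ being a conditional, a process call, a parallel composition, or a restriction, because in those four cases one of the rules \rtit{cThn}, \rtit{cEls} (every closed boolean expression evaluates to $\boolT$ or $\boolF$ by \defref{fig:bool-interp}), \rtit{cPrc}, \rtit{cSpl} (using the trivial partition $\eV=\eV\uplus\emptyset$), or \rtit{cLcl} would fire. So $P$ is an output prefix, an input prefix, or $\inert$; then $S\errNot$ forces $\permout{c}\in\eV$ via \rtit{eOut}, respectively $\permin{c}\in\eV$ via \rtit{eIn}, in the first two subcases, while $S\reducNot$ forces $\eV=\emptyset$ via \rtit{cDsc} in the third, so $S=\con{\emptyset}{\inert}$; each subcase has the required form with $n+m\le 1$. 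For $S=T\paral R$, rules \rtit{cPar} and \rtit{ePar} (with \rtit{cStr}/\rtit{eStr} and \rtit{scCom}) hand down $T\reducNot$, $R\reducNot$, $T\errNot$, $R\errNot$, so the induction hypothesis decomposes $T$ and $R$; after $\alpha$-renaming their restricted names to be pairwise distinct and disjoint from the ambient free names, \rtit{scExt} and \rtit{scCom} merge these into one decomposition of $T\paral R$, and disjointness of its output and input channels is forced, since a clash between a free output channel on one side and a free input channel on the other would let \rtit{cCom} fire (after rearranging with \rtit{scExt}/\rtit{scCom}), contradicting $S\reducNot$. For $S=\rest{c}{T}$, rules \rtit{cRes} and \rtit{eRes} give $T\reducNot$, $T\errNot$, and the decomposition of $T$ becomes one of $S$ by prepending $c$ to its restricted-name list (reorganising with \rtit{scSwp}).

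For sufficiency, write $S$ in the displayed form. The claim $S\errNot$ is immediate: \rtit{eOut} is inapplicable to $\con{\eV_i}{\pioutA{c_i}{\lst{e_i}}}$ because $\permout{c_i}\in\eV_i$, \rtit{eIn} is inapplicable to $\con{\eVV_j}{\piin{c'_j}{\lst{x_j}}{P_j}}$ because $\permin{c_j}\in\eVV_j$ (the error rules never inspect the continuations $P_j$), and \rtit{ePar}, \rtit{eRes} only lift errors out of subsystems. For $S\reducNot$ I would check that no reduction rule applies to any $\steq$-variant of $S$: rules \rtit{cThn}, \rtit{cEls}, \rtit{cPrc}, \rtit{cSpl}, \rtit{cLcl} each require a confined component whose body is respectively a conditional, a call, a parallel composition, or a restriction, and \rtit{cDsc} requires a confined $\inert$ carrying a non-empty permission set — none of which occur (the only $\inert$ that can appear, in the degenerate $n=m=0$ case, owns the empty set); \rtit{cCom} cannot fire because the output channels $\{c_1,\ldots,c_n\}$ and input channels $\{c'_1,\ldots,c'_m\}$ are disjoint, so no output ever meets a matching input; and \rtit{cRes}, \rtit{cPar}, \rtit{cStr} only pass the question to subsystems, respectively $\steq$-variants, which again fit the displayed form, so the induction closes.

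The main obstacle is rule \rtit{cTgh}, which fires on $\restB{d_\ell}{\conCP\paral R}$ precisely when $\{\permin{d_\ell},\permout{d_\ell}\}\cap\eV\neq\emptyset$ and $d_\ell\notin\fn{P}$. Excluding it comes down to the observation that, in the decomposition at hand, whenever a restricted channel $d_\ell$ occurs in a component's permission set it also occurs free in that component's process, so that \rtit{cTgh}'s side condition $d_\ell\notin\fn{P}$ fails; this is what prevents a scoped but unused permission from being silently discarded. For the decompositions produced by the necessity argument this holds automatically, since $S\reducNot$ already precludes \rtit{cTgh}, and more generally it is the intended reading of the displayed form — consistent with the convention that the free channels of the confined process $\conCP$ include those named in $\eV$, so that no offending $\steq$-variant arises either. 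Once this point is settled, the remaining cases are routine inspections of the reduction and error rules.
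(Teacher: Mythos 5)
Your argument is, at bottom, the same case analysis the paper performs: the paper's proof consists of the single sentence ``Immediate by case analysis of Lemma~\ref{lem:reduc-proc-struct} and then the conditions for $S\err$,'' where Lemma~\ref{lem:reduc-proc-struct} is an inversion lemma (proved by rule induction on $S\reduc T$) listing the possible shapes of a reducible system. You inline that inversion as a structural induction on $S$ for necessity and an exhaustive rule inspection for sufficiency; the content is identical, and your necessity direction is carried out correctly (the closure of $\checkmark$ under $\steq$, the forced disjointness of output and input channels via \rtit{cCom}, and the $\eV=\emptyset$ requirement on $\inert$ via \rtit{cDsc} are exactly the points that matter).

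The one place where your proof does not close is the one you yourself flag: \rtit{cTgh} in the sufficiency direction. The displayed decomposition carries no hypothesis preventing a restricted name $d_\ell$ from sitting in a component's permission set while being absent from that component's process, and in that situation \rtit{cTgh} fires (after padding with $\con{\emptyset}{\inert}$ via \rtit{scNil}), so the system is not stable. Concretely, $\restB{d}{\con{\sset{\permout{c},\permin{d}}}{\pioutA{c}{1}}}$ satisfies every listed side condition of the proposition yet reduces, so the ``if'' direction is false as literally stated. Your patch --- declaring it ``the intended reading'' that names in $\eV$ count as free in $\conCP$ --- does not actually block \rtit{cTgh}, whose side condition is $d\not\in\fn{P}$ for the bare process $P$, not for the confined system. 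To be fair, this is a defect you inherit from the paper rather than introduce: Lemma~\ref{lem:reduc-proc-struct} itself omits the \rtit{cTgh} case, so the paper's ``immediate'' proof silently skips the same obstacle. A clean fix is to add to the proposition the side condition that for every component $\con{\eV_i}{P_i}$ and every $d\in\lst{d}$, $\sset{\permin{d},\permout{d}}\cap\eV_i\neq\emptyset$ implies $d\in\fn{P_i}$; with that condition stated (and established in your necessity direction, where $S\reducNot$ already forces it), your sufficiency argument goes through as written.
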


\bigskip

The proofs for \thmref{thm:eval-eq-upto-permissions} and \thmref{thm:sys-eval-implies-sys-conv} require us to work at a tighter relation than process structural equivalence for the intermediary steps of an evaluation,  namely $\quaseq$ defined in \defref{def:quaseq}, because process structural equivalence, $\steq$, loses information \wrt the currently owned permissions of a system.  The relation   $\quaseq$ lies between system structural equivalence and the respective process structural equivalence  after confinement erasure (\cf Proposition~\ref{lem:quas-impl-unconst-steq}.)

\begin{defi}[Equivalence up-to owned permissions]\label{def:quaseq} $S\quaseq T$ is defined as the least relation satisfying the following rules:
\begin{align*}
  & \inference{}{\conCP\quaseq\conCC{P}}\quad \inference{S_1 \quaseq S_2 \quad T_1 \quaseq T_2 }{S_1\paral T_1 \quaseq S_2\paral T_2}\quad \inference{S_1 \quaseq S_2 }{\rest{c}{S_1} \quaseq \rest{c}{S_2}}\quad \inference{S_1\steq S_2\quaseq T_2\steq T_1}{S_1\quaseq T_1}
\end{align*}
  \end{defi}
\smallskip

\begin{prop}
  \label{lem:quas-impl-unconst-steq}
  \begin{math}
    S\steq T \quad\text{implies}\quad S\quaseq T \quad\text{implies}\quad |S|\steq |T|
  \end{math}
\end{prop}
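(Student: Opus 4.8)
The plan is to establish the two implications in turn. For the first, $S\steq T$ implies $S\quaseq T$, I would begin by checking that $\quaseq$ is reflexive: this is a routine structural induction on $S$, with base case $S=\conCP$ handled by the first rule of \defref{def:quaseq} and the inductive cases $S=T\paralS R$ and $S=\rest{c}{T}$ handled by the two congruence rules. Since both $\steq$ and $\quaseq$ are now known to be reflexive, the last rule of \defref{def:quaseq} — the one that sandwiches a $\quaseq$ step between two $\steq$ steps — instantiated so that the inner pair is just $T\quaseq T$ and the outer steps are $S\steq T$ and $T\steq T$, immediately turns any $S\steq T$ into $S\quaseq T$.

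For the second implication, $S\quaseq T$ implies $|S|\steq|T|$, I would proceed by rule induction on the derivation of $S\quaseq T$, following \defref{def:quaseq}. The base rule forces $S=\conCP$ and $T=\conCC{P}$, so $|S|=P=|T|$ and we conclude by reflexivity of $\steq$ on processes. The two congruence rules are discharged by the induction hypothesis together with the observation that $|{-}|$ commutes with $\paralS$ and with $\rest{c}{{-}}$ (\defref{def:perm-erasure}) and that $\steq$ is itself a congruence on processes. The only rule that needs more than this is the closure rule, whose outer steps are system structural equivalences; to handle it I need the auxiliary fact that $S\steq T$ on systems implies $|S|\steq|T|$ on processes.

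I would prove this auxiliary fact by a separate rule induction on $S\steq T$. Each system-equivalence axiom of \figref{fig:confined-lang} erases to the homonymous process-equivalence axiom of \figref{fig:structural-reduc}: \rtit{scCom}, \rtit{scAss}, \rtit{scSwp}, \rtit{scNil} to \rtit{sCom}, \rtit{sAss}, \rtit{sSwp}, \rtit{sNil}; \rtit{scNew} to \rtit{sNew}, using $|\con{\emptyset}{\inert}|=\inert$; and \rtit{scExt} to \rtit{sExt}, where the side condition survives because erasure does not change free names, i.e. $\fn{|S|}=\fn{S}$. The congruence and equivalence-closure cases are inherited directly from the corresponding properties of $\steq$ on processes, again using that $|{-}|$ is a homomorphism for $\paralS$ and $\rest{c}{{-}}$. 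The main — and it is genuinely minor — obstacle is precisely this bookkeeping: making sure the scoping side condition of \rtit{scExt} is preserved under erasure, and being explicit that congruence steps on systems descend to congruence steps on processes. With the auxiliary fact established, the closure case of the second implication closes by transitivity of $\steq$: from $S\steq S'\quaseq T'\steq T$ we get $|S|\steq|S'|$ and $|T'|\steq|T|$ by the auxiliary fact, $|S'|\steq|T'|$ by the induction hypothesis, and hence $|S|\steq|T|$.
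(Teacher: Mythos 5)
Your proposal is correct and follows essentially the route the paper takes: the paper leaves the proposition itself as routine but isolates exactly your auxiliary fact as Lemma~\ref{lem:struct-corresp-1} in Appendix~\ref{sec:conf-proc}, proved by the same rule induction with the same case analysis (including the observation that erasure preserves free names for \rtit{scExt}), and the reflexivity-of-$\quaseq$ plus rule-induction argument you give for the two implications is the intended one.
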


Note that $|S|\steq |T|$ does not imply $S\quaseq T$.  For instance, $| \con{\eV\uplus\eVV}{P \paral Q}| \steq |\conCP \paral \conCC{Q}|$ but $\con{\eV\uplus\eVV}{P \paral Q} \not\quaseq \conCP \paral \conCC{Q}$.

\begin{lem}[Properties of \quaseq\ with respect to reductions]\label{lem:quaseq-preserve-absence-err}\quad
  \begin{enumerate}[\em(1)]
  \item   
   \begin{math}
    S\quaseq T\text{ and } T\reduc T'\text{ and }\text{S\errNot}\quad\text{implies}\quad % \exists S'.
    S\reduc S'\text{ and } S'\quaseq T'
  \end{math}
 \item 
     \begin{math}
    S\quaseq T\text{ and } \text{S\checkmark} \quad\text{implies}\quad T\reducNot
  \end{math}
  \end{enumerate}
\end{lem}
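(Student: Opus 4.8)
The plan is to prove (1) by rule induction on the derivation of $S\quaseq T$, and then to read off (2) as an immediate corollary. In the base case, $S=\conCP$ and $T=\conCC{P}$ carry the same process $P$, and I argue by case analysis on the last rule of $T\reduc T'$. A single confined process reduces only by \rtit{cThn}, \rtit{cEls}, \rtit{cPrc}, \rtit{cSpl}, \rtit{cLcl} or \rtit{cDsc}, possibly wrapped by an application of \rtit{cStr}. Each such step is determined by the syntactic shape of $P$ alone, and $S=\conCP$ can always perform the corresponding step: for \rtit{cSpl} by partitioning $\eV$ as $\eV\uplus\emptyset$, for \rtit{cLcl} after renaming the bound name apart from $\eV$, and for \rtit{cThn}, \rtit{cEls}, \rtit{cPrc}, \rtit{cDsc} with no side condition at all. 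The residual $S'$ then has exactly the underlying process of $T'$, so $S'\quaseq T'$ by the base and congruence rules of $\quaseq$; the \rtit{cStr}-wrapped variants are absorbed as in the $\steq$-closure case below.

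For the congruence cases, when $S=S_1\paral T_1$ (so $T=S_2\paral T_2$ with $S_1\quaseq S_2$, $T_1\quaseq T_2$) the hypothesis $S\errNot$ forces $S_1\errNot$ and $T_1\errNot$ by the contrapositive of \rtit{ePar}, \rtit{eStr} and \rtit{scCom}, and \rtit{eRes} plays the analogous role for $S=\rest{c}{S_1}$. If $T\reduc T'$ is an instance of \rtit{cPar} (or \rtit{cRes}) on one component, the induction hypothesis on the matching subderivation of $\quaseq$ supplies a step of that component of $S$, and reapplying the same $\quaseq$-congruence rule delivers $S'$. The delicate subcase is a \rtit{cCom} step $T=\con{\eV_T}{\pioutA{c}{\lst{e}}}\paral\con{\eVV_T}{\piin{c}{\lst{x}}{P}}\reduc\con{\eV_T\cup\eVV_T}{P\subC{\lst{v}}{\lst{x}}}=T'$. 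Inverting $S\quaseq T$ --- using Proposition~\ref{lem:quas-impl-unconst-steq} to pin down the process skeleton, and the shape of the $\quaseq$-derivation to locate, up to $\steq$, a confined output $\con{\eV_S}{\pioutA{c}{\lst{e}}}$ and a confined input $\con{\eVV_S}{\piin{c}{\lst{x}}{P}}$ as parallel components of $S$ --- I observe that $\permout{c}\in\eV_S$ and $\permin{c}\in\eVV_S$ necessarily hold: otherwise \rtit{eOut} or \rtit{eIn}, propagated through \rtit{ePar} and \rtit{eStr}, would give $S\err$, contradicting the hypothesis. Hence $S$ can fire the matching \rtit{cCom} step, with \rtit{cPar} carrying along the unchanged context, to a residual $\con{\eV_S\cup\eVV_S}{P\subC{\lst{v}}{\lst{x}}}$ that is $\quaseq$-related to $T'$ by the base rule.

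The remaining case of (1) is $\steq$-closure, $S\steq S_2\quaseq T_2\steq T$: there \rtit{cStr} turns $T\reduc T'$ into $T_2\reduc T'$, \rtit{eStr} turns $S\errNot$ into $S_2\errNot$, the induction hypothesis on $S_2\quaseq T_2$ gives $S_2\reduc S_2'$ with $S_2'\quaseq T'$, and \rtit{cStr} turns this into $S\reduc S_2'$. Part (2) is now immediate: if $S\checkmark$, i.e.\ $S\reducNot$ and $S\errNot$, yet $T\reduc T'$ for some $T'$, then (1) yields $S'$ with $S\reduc S'$, contradicting $S\reducNot$; hence $T\reducNot$.

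I expect the main obstacle to be the \rtit{cCom} subcase above: performing a clean inversion of $S\quaseq T$ through the $\steq$-closure rule so as to exhibit the two communicating confined subprocesses inside $S$, and then exploiting $S\errNot$ to guarantee that they own the $\permout{c}$ and $\permin{c}$ permissions that \rtit{cCom} requires. Everything else is routine bookkeeping; the only other recurring nuisance is the interaction between the $\steq$-closure of $\quaseq$ and the structural rule \rtit{cStr}, which is dispatched uniformly using Proposition~\ref{lem:quas-impl-unconst-steq}.
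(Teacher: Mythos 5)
Your proposal is correct in substance and takes essentially the same route as the paper's (very compressed) appendix proof: the paper case-analyses $T\reduc T'$ via Lemma~\ref{lem:reduc-proc-struct} and then inverts $S\quaseq T$ to locate the matching redex in $S$, whereas you invert the $\quaseq$-derivation first and case-analyse the reduction inside each case; the two organisations are interchangeable, and both turn on the same two observations, namely that the fired rule is determined by the process skeleton, which $S$ shares with $T$ up to $\steq$ by Proposition~\ref{lem:quas-impl-unconst-steq}, and that $S\errNot$ supplies the $\permout{c},\permin{c}$ side-conditions of \rtit{cCom} via \rtit{eOut}, \rtit{eIn}, \rtit{ePar} and \rtit{eStr}. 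Your derivation of part (2) by contradiction from part (1) is exactly the paper's. One detail in your base case is stated too quickly: \rtit{cDsc} does carry a side condition, $\eV\neq\emptyset$, and it is not discharged by $S\errNot$ (the pair $\con{\emptyset}{\inert}\quaseq\con{\eVV}{\inert}$ with $\eVV\neq\emptyset$ is error-free on the left but only the right-hand side can fire \rtit{cDsc}); the permission side-condition of \rtit{cTgh}, which your restriction case does not mention, behaves the same way. These are precisely the two ``decluttering'' rules whose enabledness depends on the permission annotation rather than on the underlying process or on error-freedom, so they fall outside your general matching scheme; the paper's two-line proof glosses over them as well, so this is a shared imprecision in the treatment of the lemma rather than a defect of your approach relative to the paper's.
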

\proof See Appendix~\ref{sec:conf-proc} \qed

The system relation $\quaseq$ allows us to specify a tighter relationship which characterises more precisely Partial Confluence, \ie Lemma~\ref{lem:confluence}.  This is then used to prove Lemma~\ref{lem:quaseq-impl-quaseq-eval}, upon which Theorem~\ref{thm:eval-eq-upto-permissions} rests.  We here relegate the proofs of  Lemmas used by   Lemma~\ref{lem:quaseq-impl-quaseq-eval} to Appendix~\ref{sec:conf-proc}.  Note also that Lemma~\ref{lem:confluence2}, stated earlier to give an intuition for how linear permissions ensure confluence, follows immediately from Lemma~\ref{lem:confluence} and Proposition~\ref{lem:quas-impl-unconst-steq}.   % but the reader should still be able to follow the main argument of this proof.

\begin{lem}[Partial Confluence]\label{lem:confluence}
    \begin{math}
    S \reduc T_1 \text{ and }  S \reduc T_2 \; \text{implies  either of the following:}
   \end{math}
    \begin{enumerate}[\em(1)]
    \item  $T_1\quaseq T_2 \text{ or;}$
    \item $\exists T_3. \, T_1 \reduc T_3 \text{ and } T_2 \reduc T_3$
    \end{enumerate}
\end{lem}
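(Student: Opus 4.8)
The plan is to prove Lemma~\ref{lem:confluence} by rule induction on the derivation of $S\reduc T_1$, carrying out in each case a sub-analysis on the last rule of the derivation of $S\reduc T_2$; since the statement is symmetric in $T_1$ and $T_2$ only ordered pairs of rules need be treated. Throughout we exploit the standing well-resourcing assumption $\vdash S$, kept available for the residuals by Lemma~\ref{lem:resourcing}, and the fact that clause~(1) is phrased with $\quaseq$ rather than $\steq$: reductions continually shuffle permissions between confined processes while leaving the underlying processes in place, and $\quaseq$ is precisely the congruence that disregards the permission set decorating a confined process (Definition~\ref{def:quaseq}); Proposition~\ref{lem:quas-impl-unconst-steq} then yields the coarser Lemma~\ref{lem:confluence2}. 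Occurrences of \rtit{cStr} are absorbed by first normalising $S$, $T_1$ and $T_2$ modulo $\steq$, so that the two contracted redexes become syntactically visible, using that both $\steq$ and $\quaseq$ are congruences compatible with the reduction rules.

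The routine cases are of two kinds. When the two redexes occupy independent positions -- in the two operands of one \rtit{cPar} (closing by the induction hypothesis applied to the common operand, or by commutation when the two operands differ), under disjoint scopes via \rtit{cRes}, or one of them a sub-reduction while the other is a \rtit{cCom} whose output message and input prefix lie elsewhere -- the two steps commute and we are in clause~(2) with $T_3$ the evident diamond completion. When the two steps contract the same redex we land in clause~(1), or trivially in clause~(2) with $T_1=T_2$: \rtit{cThn}/\rtit{cThn}, \rtit{cEls}/\rtit{cEls} and \rtit{cPrc}/\rtit{cPrc} give identical residuals; \rtit{cCom} on the same output/input pair gives a residual determined by $\lst e\evaluate\lst v$, which is functional; and \rtit{cSpl}, \rtit{cLcl} or \rtit{cDsc} applied to the same confined process give residuals that differ only in the permission sets carried by confined processes, hence are $\quaseq$.

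The substance of the argument -- and the main obstacle -- is the interaction of a ``computation'' step with the permission-administration rules \rtit{cSpl}, \rtit{cLcl}, \rtit{cTgh}, \rtit{cDsc}, and of these with one another. Two observations drive it. First, well-resourcing rules out genuine races: by linearity of permissions, at most one confined process of a given parallel context owns $\permout{c}$ (resp.\ $\permin{c}$), so two distinct \rtit{cCom} steps can never compete for the same output message or the same input prefix, and a \rtit{cCom} step is therefore either literally the other reduction or acts on a disjoint part, returning us to the routine cases. Second, none of \rtit{cSpl}, \rtit{cLcl}, \rtit{cTgh}, \rtit{cDsc} can enable or disable any other redex, since they only repartition, create, discard or relocate permissions without changing which processes are present or which communications are possible; such a step always commutes with the other reduction, and closing the diamond amounts to choosing the partition in the accompanying \rtit{cSpl} so as to match the permissions that \rtit{cTgh} or \rtit{cLcl} have moved across a restriction boundary, and then invoking \rtit{cStr} with \rtit{scNew}, \rtit{scNil} and \rtit{scExt} to discard the now-redundant scoping. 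The delicate bookkeeping lies exactly in tracking which input and output permissions cross restriction boundaries: Locality (Lemma~\ref{lem:locality}) and Resourcing (Lemma~\ref{lem:resourcing}) keep the residual permission sets under control, and the properties of $\quaseq$ with respect to reductions (Lemma~\ref{lem:quaseq-preserve-absence-err}) are used to transport a reduction of one residual along $\quaseq$. The full case analysis over all rule pairs, being voluminous, is deferred to the appendix; the above indicates its shape and where the subtlety resides.
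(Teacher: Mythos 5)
Your proposal is correct and follows essentially the same route as the paper: the paper likewise reduces the statement to a case analysis over pairs of redex shapes (organised via the inversion Lemma~\ref{lem:reduc-proc-struct} rather than by rule induction), resolves the same-channel \rtit{cCom}/\rtit{cCom} overlap by linearity of permissions under well-resourcing, and resolves the same-redex \rtit{cSpl}/\rtit{cSpl} overlap by observing that the residuals differ only in how the permissions are partitioned and are therefore related by $\quaseq$. The paper's own proof presents exactly these two critical cases and leaves the remaining commuting and administrative cases implicit, so your sketch is at the same level of detail and identifies the same points of subtlety.
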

\proof See Appendix~\ref{sec:conf-proc} \qed

\begin{defi}[System Evaluation Predicates]% \quad
  \begin{math}
    S \evaluate \quad \deftxt \quad \exists T. S\evaluate T % \qquad\qquad S \not\evaluate \quad \deftxt \quad \not\exists T. S\evaluate T
  \end{math}
\end{defi}

\begin{lem}[Evaluation Preservation for \quaseq]\label{lem:eval-preserve-quaseq}\quad
  \begin{displaymath}
    S\quaseq T \text{ and } S\evaluate \text{ and } T\reduc T'\quad\text{implies}% \newline
    \quad% \exists S'\text{ such that }
    S\reduc S' \text{ where } S'\quaseq T'\text{ and } S'\evaluate
  \end{displaymath}
\end{lem}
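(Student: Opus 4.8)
The plan is to combine the two auxiliary results we already have at our disposal: partial confluence for $\quaseq$ (Lemma~\ref{lem:confluence}) and the basic properties of $\quaseq$ with respect to reductions (Lemma~\ref{lem:quaseq-preserve-absence-err}). First I would observe that $S\evaluate$ means there is a safely-stable $U$ with $S\reduc^\ast U$ and $U\checkmark$; in particular $S\errNot$, since by Violation Preservation (Lemma~\ref{lem:violation-pres}) an error along the way to $U$ would force $U\err$, contradicting $U\checkmark$. Having $S\errNot$ is exactly the side condition needed to apply part~(1) of Lemma~\ref{lem:quaseq-preserve-absence-err} to the hypotheses $S\quaseq T$ and $T\reduc T'$, yielding $S\reduc S'$ with $S'\quaseq T'$. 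So the witness $S'$ is immediate; the real content is showing $S'\evaluate$.

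To establish $S'\evaluate$ I would reason by confluence along the reduction path $S\reduc^\ast U$. We have two reductions out of $S$, namely $S\reduc S'$ and the first step $S\reduc S_1$ of the path towards $U$. By Lemma~\ref{lem:confluence}, either $S'\quaseq S_1$ or there is a common reduct $S'\reduc S_3$ and $S_1\reduc S_3$. In the first case, since $S_1$ lies on an evaluation to $U$, an inductive argument (on the length of $S_1\reduc^\ast U$) gives $S_1\evaluate$, and then $S'\quaseq S_1$ together with the same lemma applied repeatedly — using that $S'\errNot$, which follows from $S\errNot$ and Lemma~\ref{lem:locality}/Lemma~\ref{lem:violation-pres}-style preservation, or more directly from the fact that $S'$ still evaluates safely — propagates the evaluation: formally this is an instance of the statement being proved, so the cleanest organisation is a single induction on the length of the evaluation $S\evaluate$. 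In the second (genuine diamond) case, one pushes the common reduct $S_3$ forward: $S_1$ still evaluates (shorter path), so by induction $S_1\evaluate$, hence by Lemma~\ref{lem:confluence} applied to $S_1\reduc S_3$ and $S_1\reduc$(next step toward $U$) the system $S_3$ evaluates, and then $S'\reduc S_3$ with $S_3\evaluate$ gives $S'\evaluate$.

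The main obstacle is getting the induction set up so that all the pieces line up: the statement $S'\evaluate$ is itself of the same shape as the conclusion, so I expect to prove a slightly strengthened statement by induction on the number of reduction steps in the evaluation $S\reduc^\ast U$ (with $U\checkmark$), showing simultaneously that \emph{every} system $\quaseq$-related to something on an evaluation path itself evaluates. The base case is when $S\checkmark$ already: then by part~(2) of Lemma~\ref{lem:quaseq-preserve-absence-err} we would get $T\reducNot$, contradicting $T\reduc T'$, so this case is vacuous. In the inductive step the diamond-closing argument above applies; the one delicate point is ensuring the "length" that the induction decreases on is genuinely decreasing in the diamond case — here one uses that in Lemma~\ref{lem:confluence} the common reduct is reached in a single step from each side, so the residual evaluation from $S_3$ is no longer than the one from $S_1$ minus one, keeping the induction well-founded. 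Once that bookkeeping is in place, the result follows.
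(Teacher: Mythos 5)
Your opening moves match the paper's: from $S\evaluate$ and Violation Preservation you get $S\errNot$, and Lemma~\ref{lem:quaseq-preserve-absence-err}(1) then yields some $S_1$ with $S\reduc S_1$ and $S_1\quaseq T'$. The gap is in what follows. You fix that witness and set out to prove that \emph{it} evaluates, but this is false in general: the matching reduction produced by Lemma~\ref{lem:quaseq-preserve-absence-err}(1) for a \rtit{cSpl} step is free to partition the permissions badly. For instance, with $S=\con{\sset{\permout{c},\permin{c}}}{\pioutA{c}{}\paral\piin{c}{}{\inert}}$ we have $S\evaluate$, yet the split $S\reduc\con{\sset{\permout{c},\permin{c}}}{\pioutA{c}{}}\paral\con{\emptyset}{\piin{c}{}{\inert}}$ lands in a system that is stable but errs, hence does not evaluate, even though it is $\quaseq$-related to the ``good'' split. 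For the same reason the key steps of your induction break down: in your case (a) you pass from $S'\quaseq S_1$ and $S_1\evaluate$ to $S'\evaluate$, but $\quaseq$ does \emph{not} preserve evaluation (it identifies confined processes with arbitrary permission sets); and in case (b) the common reduct $S_3$ supplied by Lemma~\ref{lem:confluence} may itself fail to evaluate. The parenthetical ``or more directly from the fact that $S'$ still evaluates safely'' is circular. Note also that partial confluence cannot rescue this: the two outcomes of a \rtit{cSpl} with different partitions fall under clause (1) of Lemma~\ref{lem:confluence} ($\quaseq$-related), not under the joinability clause, so the diamond never closes between a good and a bad split.

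The missing ingredient is precisely the paper's Corrective Reductions lemma (Lemma~\ref{lem:corrective-reduc}): if $S\evaluate$, $S\reduc S_1$ and $S_1\not\evaluate$, then there is a \emph{different} one-step reduct $S_2$ with $S\reduc S_2$, $S_2\quaseq S_1$ and $S_2\evaluate$. Its proof rests on a structural analysis (Lemmas~\ref{lem:non-det-implies-struct} and~\ref{lem:eval-and-splt-implies-good-split}) showing that the only way a single reduction can destroy evaluability is a wrong permission partition in \rtit{cSpl}, which can always be re-chosen. With that lemma the proof is a two-line case split — if $S_1\evaluate$ take $S'=S_1$, otherwise take the corrected $S_2$ and conclude $S_2\quaseq S_1\quaseq T'$ by transitivity — and no induction on the length of the evaluation is needed at this level. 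You need to either invoke that lemma or reprove its content; the confluence-only route you sketch cannot be completed.
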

\proof See Appendix~\ref{sec:conf-proc} \qed

% \begin{lem}[Partial Confluence] \label{thm:confluence}
%   \begin{math}
%     S \reduc T_1 \text{ and }  S \reduc T_2 \; \text{implies  either of the following:}
%    \end{math}
%     \begin{itemize}
%     \item  $|T_1|\steq |T_2| \text{ or;}$
%     \item $\exists T_3. \, T_1 \reduc T_3 \text{ and } T_2 \reduc T_3$
%     \end{itemize}
% \end{lem}

\begin{lem}[Evaluation and \quaseq]\label{lem:quaseq-impl-quaseq-eval} 
  \begin{displaymath}
    S\quaseq T \text{ and } S\reduc^{n}S'\checkmark \text{ and } T\reduc^{m}T'\checkmark \quad\text{implies}\quad S'\quaseq T'\text{ and }n=m
  \end{displaymath}
\end{lem}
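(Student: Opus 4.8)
Write $S \reduc^n S' \checkmark$ as shorthand for ``$S$ reduces in exactly $n$ steps to a safely-stable $S'$''. I would decompose the statement into two claims. \emph{Claim A} (matching): if $S \quaseq T$, $S \reduc^n S' \checkmark$ and $T \evaluate$, then $T \reduc^n T'' \checkmark$ for some $T''$ with $S' \quaseq T''$. \emph{Claim B} (diagonal): if $R \reduc^n R_1 \checkmark$ and $R \reduc^m R_2 \checkmark$, then $n = m$ and $R_1 \quaseq R_2$. Claim B is just the present lemma in the case $S = T$, but it is the one carrying the confluence argument, so I would prove it directly by induction and then derive the full statement from A and B: given $S\quaseq T$, $S\reduc^n S'\checkmark$ and $T\reduc^m T'\checkmark$, the last hypothesis gives $T\evaluate$, so Claim A yields $T\reduc^n T''\checkmark$ with $S'\quaseq T''$; Claim B applied to the two safe evaluations $T\reduc^n T''\checkmark$ and $T\reduc^m T'\checkmark$ of $T$ gives $n=m$ and $T''\quaseq T'$; and $S'\quaseq T''\quaseq T'$ gives $S'\quaseq T'$ by transitivity of $\quaseq$.

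For Claim A I would induct on $n$. (First note that, by Violation Preservation, $S\reduc^n S'\checkmark$ forces $S\errNot$, and likewise for every system on that reduction.) If $n=0$ then $S = S'\checkmark$, so Lemma~\ref{lem:quaseq-preserve-absence-err}(2) applied to $S\quaseq T$ gives $T\reducNot$; since $T\evaluate$ and $T$ is stable, $T$ is itself safely stable, so $T\reduc^0 T\checkmark$ and $S' = S\quaseq T$. If $n\geq 1$, decompose $S\reduc S_1\reduc^{n-1}S'\checkmark$; by symmetry of $\quaseq$ we may apply Lemma~\ref{lem:eval-preserve-quaseq} to $T\quaseq S$, $T\evaluate$, $S\reduc S_1$, obtaining $T\reduc T_1$ with $S_1\quaseq T_1$ and $T_1\evaluate$; the induction hypothesis on $S_1\quaseq T_1$, $S_1\reduc^{n-1}S'\checkmark$, $T_1\evaluate$ delivers $T_1\reduc^{n-1}T''\checkmark$ with $S'\quaseq T''$, and prefixing the step $T\reduc T_1$ finishes the case. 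This part is routine and uses no confluence.

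For Claim B I would induct on $n$ (equivalently on $n+m$). The cases $n=0$ and $m=0$ are immediate from stability. Otherwise write $R\reduc R_1'\reduc^{n-1}R_1\checkmark$ and $R\reduc R_2'\reduc^{m-1}R_2\checkmark$ and apply Partial Confluence (Lemma~\ref{lem:confluence}) to $R\reduc R_1'$ and $R\reduc R_2'$. If $R_1'\quaseq R_2'$, then Claim A transports $R_1'$'s length-$(n-1)$ safe evaluation to one of $R_2'$ of the same length, and the induction hypothesis applied to $R_2'$'s two safe evaluations (lengths $n-1$ and $m-1$) gives $n-1 = m-1$ and, via transitivity, $R_1\quaseq R_2$. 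If instead $R_1'$ and $R_2'$ have a common reduct $R_3$, one must ``commute the diamond'' along each given safe evaluation: one shows that $R_3$ safely evaluates in $n-2$ steps to a $\quaseq$-variant of $R_1$ (and, symmetrically, in $m-2$ steps to a $\quaseq$-variant of $R_2$) by repeatedly applying Lemma~\ref{lem:confluence} to the chosen first step of $R_1'$ (resp.\ $R_2'$) against the step $R_1'\reduc R_3$ (resp.\ $R_2'\reduc R_3$), peeling off one reduction at a time and invoking Claim A and the induction hypothesis on strictly shorter evaluations; the induction hypothesis applied to $R_3$'s two safe evaluations then closes the case.

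The delicate point, and the step I expect to be the main obstacle, is precisely this diamond case, specifically arranging that the \emph{exact} step count $n = m$ survives. The difficulty is that violation detection in \figref{fig:confined-lang} is \emph{lazy} — rules \rtit{cThn}, \rtit{cEls}, \rtit{cPrc} and \rtit{cSpl} can expose a permission violation that was previously masked inside a confined process — and \rtit{cSpl} is genuinely non-deterministic, so a one-step reduct of a safely-evaluating system need not itself safely evaluate: a ``lossy'' permission split yields an immediate violation. Hence one cannot simply assume the common reduct $R_3$ evaluates; this has to be recovered from \emph{how} the diamond arises — the two reductions out of $R$ lie on safe evaluation paths, hence are non-lossy, and commuting non-lossy reductions keeps them non-lossy — i.e.\ ultimately from the linearity of permissions, and threading this invariant through the nested applications of Lemma~\ref{lem:confluence} is the real bookkeeping. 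All remaining cases are mechanical given Lemmas~\ref{lem:quaseq-preserve-absence-err}, \ref{lem:confluence}, \ref{lem:eval-preserve-quaseq}, Violation Preservation, and the easy facts that $\quaseq$ is symmetric and transitive.
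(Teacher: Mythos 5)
Your decomposition into Claim A (a pure simulation lemma, provable by induction with Lemma~\ref{lem:eval-preserve-quaseq} and no confluence) and Claim B (the diagonal case carrying all the confluence content) is a genuinely different organization from the paper, which instead runs a single strong induction on $n$ whose inductive hypothesis is the full two-system statement; the paper in effect interleaves your Claims A and B inside one induction. Your factoring is arguably cleaner, and Claim A, the derivation of the lemma from A and B, and the non-diamond cases of Claim B all go through as you sketch them.

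The gap is in the diamond case of Claim B, which you correctly identify as the crux but do not actually close. You cannot assume the common reduct $R_3$ evaluates (a \rtit{cSpl} completion of the diamond may choose a losing partition), and the mechanism you propose instead --- a ``non-lossy'' invariant on reductions that is ``preserved by commuting'' --- is neither defined nor proved, and proving it would amount to a new development comparable to the lemma itself. The resolution the paper uses, and which is already available to you, is to never claim anything about $R_3$: apply Lemma~\ref{lem:eval-preserve-quaseq} to the single steps $R_1'\reduc R_3$ and $R_2'\reduc R_3$ (with the trivial reflexive $\quaseq$ and the facts $R_1'\evaluate$, $R_2'\evaluate$) to obtain replacement steps $R_1'\reduc R_4$ and $R_2'\reduc R_5$ with $R_4\quaseq R_3\quaseq R_5$ and both $R_4$, $R_5$ safely evaluating; transitivity gives $R_4\quaseq R_5$, and then Claim A plus the induction hypothesis applied to the strictly shorter evaluations out of $R_1'$ and $R_2'$ pin down both the step counts and the $\quaseq$-relation of the endpoints. (This is exactly the paper's ``$T_3\not\evaluate$'' sub-case; the underlying corrective-reduction content is Lemma~\ref{lem:corrective-reduc}.) With that substitution your plan becomes a complete proof; without it, the diamond case rests on an unestablished invariant.
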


\proof
By (strong) induction on the number of reductions leading to a safely-stable system from any system  $ S\reduc^{n}S'$.
\begin{desCription}
\item\noindent{\hskip-12 pt\bf $n=0$\,:}\ \ By $S\reducNot$ and Lemma~\ref{lem:quaseq-preserve-absence-err}(2) we know $T\reducNot$ which implies $m=0$ and $T'=T\quaseq S$.\smallskip
\item\noindent{\hskip-12 pt\bf $n=k+1$\,:}\ \ We have 
  \begin{align}
    \label{eq:23}
    \exists S'' \text{ such that }&S\reduc S'' \reduc^k S'
  \end{align}
  Lemma~\ref{lem:violation-pres} and $S'\checkmark$, $T'\checkmark$ implies 
  \begin{align*}
    &S\errNot \text{ and }T\errNot
  \end{align*}
and $S\reduc S''$ and Lemma~\ref{lem:quaseq-preserve-absence-err}(1) implies that $m > 0$ \ie
  \begin{align}
    \label{eq:28}
    \exists T'' \text{ such that }&T\reduc T''\reduc^{m-1}T'
  \end{align}
  Moreover, $S\reduc^{n}S'\checkmark$ and $T\reduc^{m}T'\checkmark$ imply $S\evaluate S',\; T\evaluate T'$ respectively, and by $S\quaseq T$, $S\reduc S''$  and Lemma~\ref{lem:eval-preserve-quaseq} % (Evaluation Preservation for \quaseq) from Appendix~\ref{sec:conf-proc}
  we obtain
  \begin{align}
    \label{eq:25}
    \exists T_1,\, T'_1,\, l\text{ such that }& T\reduc T_1 \\
    \label{eq:26}
    &T_1\quaseq S'' \\
    \label{eq:27}
    &T_1\reduc^lT'_1\checkmark
  \end{align}
   By $S'' \reduc^k S'$ from \eqref{eq:23},~\eqref{eq:26},~\eqref{eq:27} and I.H. we obtain
   \begin{align}
     \label{eq:24}
     S'\quaseq T'_1\text{ and } l=k 
   \end{align} 
    \ie  $T_1\reduc^kT'_1\checkmark$. Now by Lemma~\ref{lem:confluence}, \eqref{eq:25} and $T\reduc T''$ from \eqref{eq:28} we have two sub-cases:\medskip
   \begin{description}
   \item[$T_1\quaseq T''$]  By \eqref{eq:27} and \eqref{eq:24} we know $T_1\reduc^kT'_1\checkmark$ and by, $T''\reduc^{m-1}T'$ from \eqref{eq:28} I.H. we deduce
     \begin{align*}
         & T'\quaseq T'_1 \text{ and } (m-1) = k
     \end{align*}
     and by transitivity and \eqref{eq:24} we conclude $T'\quaseq S'$ and $m=(k+1)=n$ as required.\medskip
   \item[$\exists T_3.\;T_1\reduc T_3\text{ and }T''\reduc T_3 $] We
     here have two further sub-cases:\medskip 
     \begin{desCription}
     \item[$\exists T'_3,\, h\text{ such that }T_3\reduc^h T'_3\checkmark\,$:]  This implies $T_1\reduc^{h+1} T'_3\checkmark$ and by \eqref{eq:23},~\eqref{eq:26} and I.H. we obtain
       \begin{align}\label{eq:29}
         &T'_3\quaseq S' \text{ and }(h+1) = k
       \end{align}
       We also know that $T''\reduc^{h+1} T'_3\checkmark$ and by \eqref{eq:29} we obtain $T''\reduc^k T'_3\checkmark$ and,  since $T''\quaseq T''$ (reflexivity of \quaseq), using \eqref{eq:28} and I.H. we obtain
       \begin{align*}
         & T'_3\quaseq T'\text{ and } (m-1) = k
       \end{align*}
       which, first implies $m=(k+1)=n$ and then, by \eqref{eq:29}, implies $T'\quaseq S'$ as required.\medskip
     \item[$T_3\not\evaluate\,$:]   By $T_1\reduc T_3$, $T_1\quaseq T_1$ (reflexivity of \quaseq), \eqref{eq:27}  and Lemma~\ref{lem:eval-preserve-quaseq} we know
       \begin{align}
         \label{eq:30}
         \exists T_4,\, T'_4,\,i\text{ such that }&T_1\reduc T_4 \\
         \label{eq:31}
         &T_4\quaseq T_3 \\
         \label{eq:32}
         &T_4\reduc^i T'_4\checkmark
       \end{align}
       Similarly, by $T''\reduc T_3$, $T''\quaseq T''$, \eqref{eq:28}, $T'\checkmark$ and Lemma~\ref{lem:eval-preserve-quaseq}
       \begin{align}
         \label{eq:33}
         \exists T_5,\, T'_5,\,j\text{ such that }&T''\reduc T_5 \\
         \label{eq:34}
         &T_5\quaseq T_3 \\
         \label{eq:35}
         &T_5\reduc^j T'_5\checkmark
       \end{align}
        Now \eqref{eq:30} and \eqref{eq:32} imply $T_1\reduc^{i+1} T'_4\checkmark$ and by \eqref{eq:24}, $T_1\quaseq T_1$ and I.H. we obtain
        \begin{align}
          \label{eq:36}
          & T'_4 \quaseq T'_1 \quaseq S' \text{ and } (i+1)=k \text{ \ie } T_4\reduc^{k-1}T'_4\checkmark
        \end{align}
        Moreover, \eqref{eq:31},\eqref{eq:34} and transitivity imply $T_4\quaseq T_5$, and by \eqref{eq:36},~\eqref{eq:35} and I.H. we obtain
        \begin{align}
          \label{eq:37}
          &T'_5\quaseq T'_4 \quaseq S' \text{ and } j = (k-1)
        \end{align}
        By \eqref{eq:33} and  \eqref{eq:37} we obtain $T''\reduc^{k}T'_5$ and by $T''\quaseq T''$, \eqref{eq:28} and I.H. we obtain
        \begin{align*}
          &T'\quaseq T'_5 \quaseq S' \text{ and } (m-1) = k
        \end{align*}
        which also implies $m=(k+1)=n$ as required. \qed
     \end{desCription}
   \end{description}
\end{desCription}

\smallskip

\begin{theorembis}{\ref{thm:eval-eq-upto-permissions}}{Evaluation Determinism}\quad
  \begin{math}  
    S \evaluate T_1 \text{ and } S \evaluate T_2 \text{ implies } |T_1| \steq |T_2|
  \end{math}
\end{theorembis}
% \begin{theorembis}{\ref{thm:eval-eq-upto-permissions}}{System Determinism}
%   \begin{math}
%     S\evaluate T_1\text{ and }S\evaluate T_2 \quad\text{implies}\quad |T_1| \steq |T_2|
%   \end{math}
% \end{theorembis}

\begin{proof}
  By reflexivity we know $S\quaseq S$ and by Lemma~\ref{lem:quaseq-impl-quaseq-eval} we know $T_1\quaseq T_2$ which, by Proposition~\ref{lem:quas-impl-unconst-steq}, implies $ |T_1| \steq |T_2|$. 
\end{proof}
% \proof See Appendix~\ref{sec:conf-proc} \qed

\medskip

Convergence for systems, \thmref{thm:sys-eval-implies-sys-conv},  largely follows from Lemma~\ref{lem:eval-preserve-quaseq} and Lemma~\ref{lem:quaseq-impl-quaseq-eval}.    We prove \thmref{thm:sys-eval-implies-sys-conv} by generalising the hypothesis to systems related by $\quaseq$ in Lemma~\ref{lem:sys-eval-implies-sys-conv}, so as to make the induction go through.

\begin{lem}\label{lem:sys-eval-implies-sys-conv}
  \begin{math}
    S \evaluate \text{ and } S\quaseq T \text{ implies }  T\converge.     
  \end{math}
\end{lem}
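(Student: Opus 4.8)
The plan is to prove a slightly stronger statement that exposes the length of the evaluation, so that an induction goes through. Namely, I would show: for every $n$, every system $S$ and every system $T$, if $S \reduc^{n} R \checkmark$ for some $R$ and $S \quaseq T$, then $T \converge$. Since $S \evaluate$ unfolds to $\exists R.\ S \reduc^{\ast} R$ with $R \checkmark$, taking $n$ to be the length of one such reduction sequence recovers Lemma~\ref{lem:sys-eval-implies-sys-conv} immediately. I would establish the strengthened statement by (strong) induction on $n$.

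For the base case $n = 0$, we have $S = R \checkmark$, so $S \reducNot$ and $S \errNot$. Then Lemma~\ref{lem:quaseq-preserve-absence-err}(2), applied to $S \quaseq T$, gives $T \reducNot$, whence $T \converge$ directly from Definition~\ref{def:sys-converge}.

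For the inductive case $n = k+1$, by Definition~\ref{def:sys-converge} it suffices to establish $T' \converge$ for every $T'$ with $T \reduc T'$ (if $T$ happens to be stable this quantification is vacuous and $T \converge$ still follows). So I would fix such a $T'$. From $S \reduc^{k+1} R \checkmark$ we get $S \evaluate$, so Lemma~\ref{lem:eval-preserve-quaseq}, instantiated with $S \quaseq T$, $S \evaluate$ and $T \reduc T'$, yields a system $S_1$ with $S \reduc S_1$, $S_1 \quaseq T'$ and $S_1 \evaluate$; write $S_1 \reduc^{m} R_1 \checkmark$. Prepending the step $S \reduc S_1$ gives $S \reduc^{m+1} R_1 \checkmark$, so $S$ admits safe evaluations of lengths $k+1$ and $m+1$. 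Applying Lemma~\ref{lem:quaseq-impl-quaseq-eval} with the reflexive instance $S \quaseq S$ forces $k+1 = m+1$, i.e.\ $m = k$. The induction hypothesis, applied to $S_1 \reduc^{k} R_1 \checkmark$ together with $S_1 \quaseq T'$, then delivers $T' \converge$, closing the case.

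The hard part will be the well-foundedness of the induction: one must know that the one-step reduct $S_1$ mirroring the move $T \reduc T'$ has a \emph{strictly shorter} safe evaluation than $S$. Lemma~\ref{lem:eval-preserve-quaseq} by itself only guarantees that $S_1$ evaluates safely, with no control over the step count, so it is precisely Lemma~\ref{lem:quaseq-impl-quaseq-eval} — that all safe evaluations of a given system have equal length — that pins $m$ down to $k$ and makes the measure decrease. I expect no separate appeal to Lemma~\ref{lem:violation-pres} to be needed, since the error bookkeeping is already internal to the reused lemmas, and one only relies on $\converge$ being the \emph{least} such predicate, so that discharging the universal premise over the reducts of $T$ is enough to conclude $T \converge$.
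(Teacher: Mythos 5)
Your proof is correct and follows essentially the same route as the paper's: induction on the length $n$ of a witnessing safe evaluation of $S$, with the base case discharged by Lemma~\ref{lem:quaseq-preserve-absence-err}(2) and the inductive step using Lemma~\ref{lem:eval-preserve-quaseq} to find a matching reduct $S_1 \quaseq T'$ and Lemma~\ref{lem:quaseq-impl-quaseq-eval} (via $S \quaseq S$) to pin its evaluation length down to $k$. Your closing remark correctly identifies the key point that makes the induction well-founded, which is exactly the role Lemma~\ref{lem:quaseq-impl-quaseq-eval} plays in the paper's argument as well.
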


\proof By induction on $n$ where $S\reduc^n R \checkmark$ for some witness safely-stable $R$ justifying $S \evaluate$.
\begin{desCription}
\item\noindent{\hskip-12 pt\bf $n=0\,$:}\ \ This means that $S\checkmark$ and thus by Lemma~\ref{lem:quaseq-preserve-absence-err}(2) we have $T\reducNot$ which implies $T\converge$.
\item\noindent{\hskip-12 pt\bf $n=k+1\,$:}\ \ We have
  \begin{align}\label{eq:55}
   & S\reduc S'\reduc^k R\checkmark
  \end{align}
 We have two sub-cases.  If $T\reducNot$ then this trivially implies convergence.  Otherwise, if $T\reduc T'$, by Lemma~\ref{lem:eval-preserve-quaseq} we obtain
  \begin{align}\label{eq:56}
    &S \reduc S'' \text{ such that } S''\quaseq T'  \text{ and } S''\evaluate 
  \end{align}
  $S''\evaluate$ implies that for some $m$ and $R'$, $S\reduc^m R'\checkmark$, and since $S\quaseq S$, by \eqref{eq:55} and Lemma~\ref{lem:quaseq-impl-quaseq-eval} this implies that $m=k+1$ which means that $S''\reduc^k R'$.  Thus by $S''\quaseq T'$ from \eqref{eq:56} and I.H. we obtain $T'\converge$ which implies $T\converge$. \qed
\end{desCription}

\begin{theorembis}{\ref{thm:sys-eval-implies-sys-conv}}{System Evaluation implies System Convergence} 
  \begin{math}
    S\evaluate \text{ implies } S\converge
  \end{math}
\end{theorembis}

\proof Immediate by Lemma~\ref{lem:sys-eval-implies-sys-conv} and $S\quaseq S$. \qed

\medskip

\subsection{Process Determinism}
\label{sec:process-determinism-1}

The second main batch of results relate system evaluations in our resource semantics with process determinism in the unconstrained semantics of \secref{sec:language} (\cf Corollary~\ref{cor:sys-eval-implies-proc-deter}).   In particular, \thmref{thm:eval-implies-determinism} states that  any well-resourced permission allocation $S$ that allows a process $|S|$ to evaluate down to a safely-stable system, $T$, implies that any evaluation for  process $|S|$ - in the unconstrained semantics -  corresponds, up to structural equivalence, to this system $T$ stripped of its constraining permissions \ie $|T|\steq Q$ whenever $|S| \evaluate Q$.  On the other hand, \thmref{thm:process-convergence} states that if $S$ evaluates successfully to a safely-stable process, then the corresponding process $|S|$ must be convergent.     Together, these two theorems effectively  state that finding a single allocation (narrative) $S$ of linear permissions for a process $|S|$ that allows it to evaluate to some $T$ suffices to show that $|S|$ is deterministic in the unconstrained semantics (Corollary~\ref{cor:sys-eval-implies-proc-deter}.) 

\begin{thm}[Process Evaluation Determinism] \label{thm:eval-implies-determinism}\quad
  \begin{math}
    S \evaluate T,\ |S| \evaluate Q_1,\ |S|\evaluate Q_2 \;\text{implies}\; Q_1 \steq Q_2 \steq |T| 
  \end{math}
\end{thm}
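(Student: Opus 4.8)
The plan is to bridge the confined reduction semantics of \figref{fig:confined-lang} with the unconstrained semantics of \figref{fig:structural-reduc} by two transfer principles --- one routine, one delicate --- and then fold the statement back onto \thmref{thm:eval-eq-upto-permissions}. I would first record a useful preliminary: if $S\evaluate T$ then $T\checkmark$, so $T\errNot$, hence $S\errNot$ by Violation Preservation (Lemma~\ref{lem:violation-pres}); and $S\converge$ by \thmref{thm:sys-eval-implies-sys-conv}, so every $S''$ with $S\reduc^{\ast}S''$ also converges. Thus the hypothesis $S\evaluate T$ buys us an error-free, convergent narrative.

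\textbf{Erasure is sound.} The first transfer principle is that every system reduction projects under $|\cdot|$ to a process reduction or to structural equivalence: \rtit{cThn}, \rtit{cEls}, \rtit{cCom}, \rtit{cPrc} mirror \rtit{rThn}, \rtit{rEls}, \rtit{rCom}, \rtit{rPrc}, while the bookkeeping rules \rtit{cSpl}, \rtit{cLcl}, \rtit{cTgh}, \rtit{cDsc} and the system structural rules leave $|S|$ fixed up to $\steq$ (e.g.\ \rtit{cTgh} is absorbed by \rtit{sExt}, its side condition being $c\notin\fn{P}$). Iterating, $S\reduc^{\ast}T$ gives $|S|\reduc^{\ast}|T|$; and since $T\checkmark$, Proposition~\ref{prop:safe-stab-vs-system-structure} shows $|T|$ is a parallel composition, under restrictions, of outputs and input-guarded processes on pairwise non-matching channels, so $|T|\reducNot$. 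Hence $|S|\evaluate|T|$, i.e.\ $|T|$ is itself one admissible value for $Q_1,Q_2$.

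\textbf{Erasure is complete (the lifting lemma).} The crux would be the converse simulation: \emph{if $S\evaluate$ and $|S|\reduc P'$, then $S\reduc^{+}S'$ for some well-resourced $S'$ with $|S'|\steq P'$ and $S'\evaluate$.} The proof would locate inside $S$ the redex witnessing $|S|\reduc P'$ --- it sits within a single confinement box (for an \rtit{rThn}/\rtit{rEls}/\rtit{rPrc} step) or is split across two boxes (for an \rtit{rCom} step) --- and then use \rtit{cSpl} to peel sibling sub-processes off the relevant box(es) and \rtit{cLcl}/\rtit{cTgh}/\rtit{scExt} to float out enclosing restrictions, driving $S\reduc^{\ast}\hat S$ until the redex is exposed; these steps terminate (each removes one $\paral$ or one $\rest{c}{\cdot}$ from inside a box) and keep $|\hat S|\steq|S|$. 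One can moreover choose each \rtit{cSpl} partition so that $\hat S$ stays error-free, because a box $\conCP$ with composite $P$ is never in $\err$ and, when splitting, the permission a leaf process needs can be kept with that process --- feasible since $S$ is error-free and $S\evaluate$, so the (unique, by linearity) permission governing the exposed redex is live in $S$. The matching confined rule then fires --- \rtit{cThn}/\rtit{cEls}/\rtit{cPrc} unconditionally, \rtit{cCom} because its side conditions hold on the chosen $\hat S$ --- yielding $S\reduc^{+}S'$ with $|S'|\steq P'$; well-resourcing of $S'$ comes from Lemma~\ref{lem:resourcing} (and $\perm{S'}\subseteq\perm{S}$ from Lemma~\ref{lem:locality}). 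Finally $S'\evaluate$ is recovered from $S\evaluate$ via Partial Confluence (Lemma~\ref{lem:confluence}) and the $\quaseq$-transfer machinery (Lemma~\ref{lem:eval-preserve-quaseq}): the forced step either already meets the safe evaluation $S\reduc^{\ast}T$ fixed above or can be closed up against it, and $S\converge$ prevents the closing-up from running away. \textbf{This last point --- that a good permission allocation stays good after mimicking an arbitrary process reduction, so the confined semantics never blocks on a permission check and never forfeits reachability of a \emph{safely} stable state --- is the main obstacle, and is precisely where linearity of permissions and the confluence results pull their weight.}

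\textbf{Conclusion.} Given $|S|\evaluate Q$, write $|S|=P_0\reduc\dots\reduc P_k=Q'$ with $Q'\reducNot$, $Q'\steq Q$. Applying the lifting lemma $k$ times (each intermediate system still evaluating) produces $S=S_0\reduc^{+}\dots\reduc^{+}S_k$ with $|S_i|\steq P_i$ and $S_i\evaluate$. Since $|S_k|\steq Q'\reducNot$, no \rtit{cCom}/\rtit{cThn}/\rtit{cEls}/\rtit{cPrc} can fire from $S_k$, so the evaluation of $S_k$ to a safely stable $T_Q$ (one exists, as $S_k\evaluate$) consists only of bookkeeping steps, whence $|T_Q|\steq|S_k|\steq Q'$. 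But then $S\reduc^{\ast}T_Q$ with $T_Q\checkmark$, i.e.\ $S\evaluate T_Q$; together with $S\evaluate T$, \thmref{thm:eval-eq-upto-permissions} gives $|T_Q|\steq|T|$, so $Q\steq Q'\steq|T|$. Instantiating with $Q=Q_1$ and $Q=Q_2$ yields $Q_1\steq|T|\steq Q_2$, as required.
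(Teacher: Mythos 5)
Your proposal is correct and follows essentially the same route as the paper: the ``lifting lemma'' is the paper's Reduction Correspondence (Lemma~\ref{lem:eval-correspondence}) strengthened by Corrective Reductions (Lemma~\ref{cor:corrective-reductions}), its iteration is Multi-step Reduction Correspondence (Lemma~\ref{thm:reduc-determinism}), the observation that a stable erasure forces only bookkeeping steps is Lemma~\ref{lem:eval-determ-no-reduc}, and the final fold onto \thmref{thm:eval-eq-upto-permissions} matches Lemma~\ref{thm:eval-determinism}. You also correctly identify the delicate point --- preserving safe evaluability while mimicking an arbitrary process reduction --- as the place where the $\quaseq$/partial-confluence machinery does the work.
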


\begin{thm}[Process Convergence]\label{thm:process-convergence}
  \begin{math}
    S\evaluate \text{ implies } |S|\converge
  \end{math}
\end{thm}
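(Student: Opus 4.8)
The plan is to lift an infinite reduction sequence of $|S|$ to an infinite reduction sequence of $S$, which is impossible since $S\converge$ by \thmref{thm:sys-eval-implies-sys-conv} (as $S\evaluate$). Concretely I would argue by strong induction on $n$, the length of a safe evaluation $S\reduc^{n}R\checkmark$ --- this $n$ is well defined because, by Lemma~\ref{lem:quaseq-impl-quaseq-eval} instantiated with $S\quaseq S$, all safe evaluations of $S$ have the same length --- proving ``$S\evaluate$ implies $|S|\converge$''. Since $\converge$ is closed under $\steq$ (immediate from \rtit{rStr}), it is convenient to phrase the induction hypothesis for all systems and all processes $\steq$-equal to their erasure, mirroring the way Lemma~\ref{lem:sys-eval-implies-sys-conv} generalises \thmref{thm:sys-eval-implies-sys-conv}.

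For the base case $n=0$ we have $S\checkmark$, so by Proposition~\ref{prop:safe-stab-vs-system-structure} the system $S$ is, up to $\steq$, a parallel composition under restrictions of confined outputs $\con{\eV_i}{\pioutA{c_i}{\lst{e_i}}}$ and input-prefixed processes $\con{\eVV_j}{\piin{c'_j}{\lst{x_j}}{P_j}}$ with $\{c_i\}\cap\{c'_j\}=\emptyset$; hence $|S|$ is a soup of mismatching outputs and input-prefixes under restrictions, admitting no reduction, so $|S|\converge$. For the inductive step I would use the following cross-level simulation lemma --- the analogue of Lemma~\ref{lem:eval-preserve-quaseq}, one level down: if $S\evaluate$ and $|S|\reduc P'$, then $S\reduc^{+}S'$ for some $S'$ with $|S'|\steq P'$ and $S'\evaluate$. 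Granting it, the step is routine: given $|S|\reduc P'$, pick such an $S'$; concatenating the safe evaluation of $S'$ after $S\reduc^{+}S'$ yields a safe evaluation of $S$, so by the length-uniqueness above the safe evaluation of $S'$ is strictly shorter; the induction hypothesis applied to $S'$ gives $|S'|\converge$, hence $P'\converge$ by $\steq$-closure; as this holds for every one-step reduct $P'$ of $|S|$, we conclude $|S|\converge$.

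It then remains to prove the simulation lemma, by cases on the rule justifying $|S|\reduc P'$. For the permission-agnostic reductions \rtit{rThn}, \rtit{rEls}, \rtit{rPrc} (and the congruence rules), we first drive $S$ with the bookkeeping reductions \rtit{cSpl}, \rtit{cLcl}, \rtit{cTgh} --- none of which constrains permissions --- together with system structural equivalence, so as to expose the redex of $|S|$ as its own confined subprocess of $S$, and then fire the matching \rtit{cThn}/\rtit{cEls}/\rtit{cPrc}, which again imposes no permission side-condition. The delicate case is \rtit{rCom}: $|S|$ has a top-level output $\pioutA{c}{\lst e}$ and input $\piin{c}{\lst x}{Q}$ on a common $c$, and to match the step with \rtit{cCom} we must expose, after the bookkeeping steps, a confined output owning $\permout{c}$ and a confined input owning $\permin{c}$. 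This is where $S\evaluate$ is used: if the permission set enclosing the output in $S$ (after pushing any restriction on $c$ inward with \rtit{cLcl}) missed $\permout{c}$, then every \rtit{cSpl}-refinement of $S$ would contain a confined output on $c$ lacking $\permout{c}$, so $S$ would reduce to an erring system, which by Violation Preservation (Lemma~\ref{lem:violation-pres}) and Partial Confluence (Lemma~\ref{lem:confluence}) is incompatible with $S$ reaching a safely-stable system; symmetrically for $\permin{c}$. Having matched the step, $S\reduc^{+}S'$ with $|S'|\steq P'$, and $S'\evaluate$ follows from preservation of safe evaluation under reduction, again a consequence of \thmref{thm:sys-eval-implies-sys-conv} and Partial Confluence.

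I expect the main obstacle to be precisely this \rtit{rCom} case: showing that a permission-free communication is never ``unmatchable'' in the confined semantics --- equivalently, that $S\evaluate$ forbids $S$ from ever reducing to an erring system. The subtlety is that $\quaseq$ does not preserve the error predicate (a confined process may be in error solely because of its permission annotation), so this non-reachability cannot be read off the $\quaseq$-based confluence results directly, but must be established by a careful diamond chase combining Lemma~\ref{lem:confluence}, Lemma~\ref{lem:violation-pres} and the structural characterisation of safely-stable systems in Proposition~\ref{prop:safe-stab-vs-system-structure}.
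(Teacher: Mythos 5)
Your overall strategy is viable and rests on the same correspondence machinery as the paper, which argues instead by contradiction: it takes a sufficiently long reduction sequence of $|S|$, applies Lemma~\ref{thm:reduc-determinism} (Multi-step Reduction Correspondence) and concludes via Corollary~\ref{cor:correspondence-1} (your base case is exactly that corollary). Your ``cross-level simulation lemma'' is precisely the $n=1$ instance of Lemma~\ref{thm:reduc-determinism}, and your induction on the (uniform) length of safe evaluations is a legitimate repackaging of the paper's argument.

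The genuine gap is in how you discharge that simulation lemma. You conclude ``$S'\evaluate$ follows from preservation of safe evaluation under reduction'', but this preservation property is \emph{false} in the confined semantics: \rtit{cSpl} partitions permissions non-deterministically, so a system with $S\evaluate$ can reduce in one step to a system that errs and hence does not evaluate. For instance $\con{\sset{\permout{c},\permin{c}}}{\pioutA{c}{}\paral\piin{c}{}{\inert}}$ evaluates, yet a bad split yields $\con{\emptyset}{\pioutA{c}{}}\paral\con{\sset{\permout{c},\permin{c}}}{\piin{c}{}{\inert}}$, which is stable and erring. For the same reason your reformulation ``$S\evaluate$ forbids $S$ from ever reducing to an erring system'' is also false; what holds is only that \emph{some} choice of the bookkeeping reductions avoids error. (Your own closing remark that $\quaseq$ does not preserve the error predicate is correct, and it is exactly why the preservation claim you lean on cannot hold.) The paper isolates the needed repair as Lemma~\ref{cor:corrective-reductions} (Corrective Reductions, via Lemma~\ref{lem:eval-and-splt-implies-good-split} and Lemma~\ref{lem:eval-preserve-quaseq-multi-reduc}): whenever $S\reduc^{n}T$ with $T\not\evaluate$, there is an $R$ with $S\reduc^{n}R$, $R\quaseq T$ --- hence $|R|\steq|T|$ by Proposition~\ref{lem:quas-impl-unconst-steq} --- and $R\evaluate$. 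Substituting that lemma for your preservation claim, both where you establish $S'\evaluate$ and where you argue that the \rtit{rCom} redex can be equipped with $\permout{c}$ and $\permin{c}$, closes the gap and makes your induction go through; as written, the inductive step does not.
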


\begin{cor}\label{cor:sys-eval-implies-proc-deter}
  $S \evaluate$ implies $|S|$ is deterministic.
\end{cor}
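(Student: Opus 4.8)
The plan is to read the corollary off directly from the two preceding theorems, matching each against one clause of \defref{def:proc-determinism}. First I would unfold the hypothesis: $S \evaluate$ means, by the System Evaluation Predicates, that there is a witness $T$ with $S \evaluate T$; this single witness is what will feed both theorems and, crucially, be used \emph{uniformly}.

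For clause~(1) of determinism I would apply \thmref{thm:process-convergence} to $S \evaluate$, obtaining $|S| \converge$ immediately. For clause~(2), I would take arbitrary $Q_1, Q_2$ with $|S| \evaluate Q_1$ and $|S| \evaluate Q_2$ (if no such evaluations exist the clause holds vacuously), and instantiate \thmref{thm:eval-implies-determinism} with the witness $T$ from $S \evaluate T$ together with these two process evaluations. That theorem yields $Q_1 \steq Q_2 \steq |T|$, hence in particular $Q_1 \steq Q_2$, which is exactly what clause~(2) demands. Since $|S|$ then satisfies both clauses of \defref{def:proc-determinism}, it is deterministic.

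There is essentially no obstacle at this level: the substantive work has already been carried out in \thmref{thm:eval-implies-determinism} and \thmref{thm:process-convergence}, which themselves rest on the partial-confluence analysis of the resource semantics and the relation $\quaseq$. The only point requiring a moment's care is bookkeeping — one must use the \emph{same} witness $T$ coming from $S \evaluate T$ in the instantiation of \thmref{thm:eval-implies-determinism}, so that $Q_1$ and $Q_2$ are both pinned to $|T|$ and therefore to each other; using two different evaluation witnesses for $S$ would only give the weaker conclusion that each $Q_i$ agrees with some $|T_i|$, not that $Q_1 \steq Q_2$.
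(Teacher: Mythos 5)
Your proposal is correct and follows exactly the paper's own (one-line) proof: clause~(1) of \defref{def:proc-determinism} is discharged by \thmref{thm:process-convergence} and clause~(2) by \thmref{thm:eval-implies-determinism}, both instantiated with the witness from $S\evaluate$. The remark about reusing a single witness $T$ is a sensible precaution, and since \thmref{thm:eval-implies-determinism} takes both process evaluations in one instantiation, the argument goes through as you describe.
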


\proof  By \defref{def:proc-determinism}, \thmref{thm:eval-implies-determinism} and \thmref{thm:process-convergence}. \qed

\medskip

We next discuss in detail the proofs for \thmref{thm:eval-implies-determinism} and \thmref{thm:process-convergence}; the reader may safely skip them on first reading and proceed to \secref{sec:conf-semant-appl}. 

\thmref{thm:eval-implies-determinism}  follows directly from Lemma~\ref{thm:eval-determinism}, which in turn relies heavily on Lemma~\ref{thm:reduc-determinism}.  In essence, this lemma states that a system that evaluates to a safely stable system can match any sequence of reductions (in the unconstrained semantics) of the system stripped of its constraining permission.  This lemma is based on Lemma~\ref{lem:eval-correspondence}, which proves the property for the case of a single unconstrained reduction, and  also depends on the the property of corrective reductions, Lemma~\ref{cor:corrective-reductions}.  This lemma states that any system that can evaluate safely, $S \evaluate$, is guaranteed to be able to  ``correct''  wrong partitioning of permissions (\cf \rtit{cSpl} in \figref{fig:confined-lang}) along a particular reduction path that result in systems that can not evaluate safely.  Stated otherwise, this means that there must exist a permission partition that leads to a full evaluation along  that particular execution path.

\begin{lem}[Corrective Reductions]\label{cor:corrective-reductions}
  \begin{displaymath}
    S\evaluate \text{ and }S\reduc^n T \text{ and } T\not\evaluate\quad\text{implies}\quad\exists\ R\text{ such that }S\reduc^n R \text{ and } R\quaseq T \text{ and }R\evaluate
  \end{displaymath}
\end{lem}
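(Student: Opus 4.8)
The plan is to prove Lemma~\ref{cor:corrective-reductions} by induction on the length $n$ of the reduction sequence $S \reduc^n T$, but to make the induction close I would first \emph{generalise} the statement so that the two related systems in the hypothesis need not be identical. Concretely, I would prove: for all $S, S_0, T$, if $S \quaseq S_0$, $S \evaluate$ and $S_0 \reduc^n T$, then there is $R$ with $S \reduc^n R$, $R \quaseq T$ and $R \evaluate$. The extra hypothesis $T \not\evaluate$ in the original statement can be carried along unchanged, but it is not actually used; it merely records that the content of the lemma is interesting precisely when the given reduction path fails to evaluate safely. The original statement then follows by instantiating $S_0 := S$ and using reflexivity of $\quaseq$.

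For the base case $n = 0$ we have $T = S_0$, so $R := S$ works at once: $S \reduc^0 R$, $R = S \quaseq S_0 = T$, and $R = S \evaluate$ by assumption. For the inductive step I would split the reduction sequence at its first step, $S_0 \reduc S_1 \reduc^{n-1} T$. From $S \quaseq S_0$, $S \evaluate$ and $S_0 \reduc S_1$, Lemma~\ref{lem:eval-preserve-quaseq} (Evaluation Preservation for $\quaseq$) supplies an $S'$ with $S \reduc S'$, $S' \quaseq S_1$ and $S' \evaluate$; this single step is exactly the ``correction'' of a bad permission partition performed by \rtit{cSpl} --- the step taken on the $S$-side may split permissions differently from $S_0 \reduc S_1$, yet it lands in a $\quaseq$-equivalent system that still evaluates safely. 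Now the induction hypothesis applies to $S', S_1, T$ along the shorter path $S_1 \reduc^{n-1} T$, giving $R$ with $S' \reduc^{n-1} R$, $R \quaseq T$ and $R \evaluate$; prepending the first step yields $S \reduc^n R$ with the required properties.

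The essential work is therefore delegated to Lemma~\ref{lem:eval-preserve-quaseq}, which in turn rests on partial confluence (Lemma~\ref{lem:confluence}) and on the fact that a single reduction cannot destroy the ability to evaluate safely; at the present level the only real subtlety is recognising that the induction must be set up over arbitrary $\quaseq$-equivalent starting systems rather than over the literal statement, since Lemma~\ref{lem:eval-preserve-quaseq} only matches a step up to $\quaseq$ and not up to equality, so an induction on the unary statement would fail to connect consecutive steps. Once the generalisation is in place, the remaining manipulations --- decomposing and re-assembling reduction sequences and threading the $S \evaluate$ hypothesis through --- are routine.
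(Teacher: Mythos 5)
Your proposal is correct and follows essentially the same route as the paper: your generalised statement is precisely Lemma~\ref{lem:eval-preserve-quaseq-multi-reduc} (Evaluation Preservation for $\quaseq$, multi-step), which the paper also proves by induction on $n$ using the single-step Lemma~\ref{lem:eval-preserve-quaseq}, and then derives the Corrective Reductions lemma from it via reflexivity of $\quaseq$ exactly as you do. Your observation that the hypothesis $T\not\evaluate$ is not actually used also matches the paper's treatment.
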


\begin{proof}
  Immediate from Lemma~\ref{lem:eval-preserve-quaseq-multi-reduc} from Appendix~\ref{sec:conf-proc} and the fact that $S\quaseq S$.
\end{proof}

\begin{lem}[Reduction Correspondence]\label{lem:eval-correspondence}
  \begin{displaymath}
    S \evaluate  \text{ and } |S|\reduc Q \quad \text{implies}\quad \exists R\text{ such that } S\reduc^{+} R \text{ and } |R| \steq Q
  \end{displaymath}
\end{lem}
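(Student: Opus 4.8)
The plan is to prove this single-step correspondence by analysing the redex that justifies $|S|\reduc Q$, after first using the \emph{administrative} reductions \rtit{cSpl} and \rtit{cLcl}, together with system-level structural equivalence, to bring the relevant sub-processes to the surface of $S$. By the reduction rules of \figref{fig:structural-reduc} we have, modulo structural equivalence, $|S|\steq\restB{\lst{d}}{(R\paral P_0)}$ and $Q\steq\restB{\lst{d}}{(R'\paral P_0)}$, where $R\reduc R'$ is an instance of one of the base rules \rtit{rThn}, \rtit{rEls}, \rtit{rPrc} or \rtit{rCom}; so $R$ is a conditional, a process call, or a parallel output/input pair $\pioutA{c}{\lst{e}}\paral\piin{c}{\lst{x}}{P}$. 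The first step is to show that, using only \rtit{cSpl} (to split compound confined processes), \rtit{cLcl} (to unfold confined restrictions) and the system structural rules — which mirror exactly those at the process level — one can reduce $S$ to some $S_1$ with $|S_1|\steq|S|$ and $S_1\steq\restB{\lst{d}}{(S_R\paral S_0)}$, where $|S_R|=R$, $|S_0|=P_0$, and $S_R$ is a single confined process $\con{\eV_R}{R}$ in the conditional/call cases, or a pair $\con{\eV_1}{\pioutA{c}{\lst{e}}}\paral\con{\eV_2}{\piin{c}{\lst{x}}{P}}$ in the communication case. This exposure goes by induction on the confinement structure of $S$: any top-level (modulo process $\steq$) sub-term of $|S|$ sits inside a confined sub-process that can be surfaced with \rtit{cSpl}/\rtit{cLcl}, and these administrative steps leave the erasure unchanged up to $\steq$, by their shape and by $S\steq T\implies|S|\steq|T|$ (a consequence of Proposition~\ref{lem:quas-impl-unconst-steq}).

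For the conditional and call cases this already finishes: these carry no permission side-conditions, so we apply \rtit{cThn}, \rtit{cEls} or \rtit{cPrc} to $\con{\eV_R}{R}$ and lift the step through the context with \rtit{cPar}, \rtit{cRes}, \rtit{cStr}, obtaining $S\reduc^{+}R_{\mathrm{out}}$ (the final step being genuine) with $|R_{\mathrm{out}}|\steq\restB{\lst{d}}{(R'\paral P_0)}\steq Q$. The hypothesis $S\evaluate$ is not used here.

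The communication case is the crux, because \rtit{cCom} applies to $\con{\eV_1}{\pioutA{c}{\lst{e}}}\paral\con{\eV_2}{\piin{c}{\lst{x}}{P}}$ only when $\permout{c}\in\eV_1$ and $\permin{c}\in\eV_2$, and this may fail: the \rtit{cSpl} steps used in the exposure are non-deterministic and may distribute permissions badly, so $S_1\not\evaluate$ is possible. To repair this I invoke Corrective Reductions, Lemma~\ref{cor:corrective-reductions}: if $S_1\evaluate$ take $R^{*}=S_1$, otherwise — since $S\evaluate$ and $S\reduc^{n}S_1$ — the lemma yields $R^{*}$ with $S\reduc^{n}R^{*}$, $R^{*}\quaseq S_1$ and $R^{*}\evaluate$. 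Because $\quaseq$ only rewrites permission sets and preserves the process-tree shape (its base case is $\conCP\quaseq\conCC{P}$, with $P$ unchanged), $R^{*}\steq\restB{\lst{d}}{(\con{\eV_1'}{\pioutA{c}{\lst{e}}}\paral\con{\eV_2'}{\piin{c}{\lst{x}}{P}}\paral S_0^{*})}$ with $S_0^{*}\quaseq S_0$, hence $|S_0^{*}|\steq|S_0|=P_0$ by Proposition~\ref{lem:quas-impl-unconst-steq}. Now $R^{*}\evaluate$ forces $R^{*}\errNot$ by Violation Preservation (Lemma~\ref{lem:violation-pres}); were $\permout{c}\notin\eV_1'$ we would get $\con{\eV_1'}{\pioutA{c}{\lst{e}}}\err$ by \rtit{eOut}, hence $R^{*}\err$ by \rtit{ePar}, \rtit{eRes}, \rtit{eStr} — a contradiction — and symmetrically $\permin{c}\in\eV_2'$. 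Thus \rtit{cCom} applies (with $\lst{e}\evaluate\lst{v}$), and lifting it through the context gives $R^{*}\reduc R_{\mathrm{out}}\steq\restB{\lst{d}}{(\con{\eV_1'\cup\eV_2'}{P\subC{\lst{v}}{\lst{x}}}\paral S_0^{*})}$, so $S\reduc^{n}R^{*}\reduc R_{\mathrm{out}}$, \ie $S\reduc^{+}R_{\mathrm{out}}$, with $|R_{\mathrm{out}}|\steq\restB{\lst{d}}{(P\subC{\lst{v}}{\lst{x}}\paral|S_0^{*}|)}\steq\restB{\lst{d}}{(P\subC{\lst{v}}{\lst{x}}\paral P_0)}\steq Q$.

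The main obstacle is exactly this communication case: the non-determinism of \rtit{cSpl} means the natural exposure of the communicating redex need not respect the permission discipline, and nothing local saves us — only the global hypothesis $S\evaluate$, channelled through Corrective Reductions, supplies a re-allocation of the split permissions that simultaneously keeps the redex exposed and makes the synchronisation legal. A secondary, more bureaucratic point — best isolated as an auxiliary lemma — is the exposure step itself: that every process-level structural rearrangement of $|S|$ can be matched at the system level by \rtit{cSpl}, \rtit{cLcl} and the system structural rules without disturbing the erasure up to $\steq$.
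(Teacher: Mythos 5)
Your argument is sound, but it is organised quite differently from the paper's. The paper proves Lemma~\ref{lem:eval-correspondence} by rule induction on the derivation of $|S|\reduc Q$: the contextual rules \rtit{rPar} and \rtit{rRes} are discharged by the inductive hypothesis (after a possible \rtit{cSpl}/\rtit{cLcl} to match the confinement structure of $S$ to the shape of $|S|$), the \rtit{rStr} case is handled by dedicated structural-correspondence lemmas (Lemma~\ref{lem:eval-corresp-struct} and Lemma~\ref{lem:struct-corresp-2}), and in the \rtit{rCom} case the hypothesis $S\evaluate$ is used \emph{directly} to assert that a good permission split of $\eV$ into $\eVV_1\uplus\eVV_2$ exists (essentially Lemma~\ref{lem:eval-and-splt-implies-good-split}), so that \rtit{cCom} fires immediately. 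You instead avoid the induction on the reduction derivation: you decompose $|S|\reduc Q$ once into a base redex in a restriction-and-parallel context, prove an exposure lemma surfacing the corresponding confined redex in $S$ via administrative \rtit{cSpl}/\rtit{cLcl} steps, and then repair any bad permission split with a single appeal to Corrective Reductions (Lemma~\ref{cor:corrective-reductions}), using Violation Preservation to extract $\permout{c}\in\eV_1'$ and $\permin{c}\in\eV_2'$ from $R^{*}\evaluate$. There is no circularity, since Lemma~\ref{cor:corrective-reductions} rests only on Lemma~\ref{lem:eval-preserve-quaseq-multi-reduc} and its predecessors. What each approach buys: the paper's induction never needs an explicit redex-decomposition or exposure lemma for processes (that work is absorbed into the \rtit{rStr} and \rtit{rPar} cases), but it must thread the evaluation hypothesis through the induction (e.g.\ deriving $\con{\eVV_1}{P_1}\evaluate$ from $\con{\eVV_1}{P_1}\paral\con{\eVV_2}{P_2}\evaluate$ in the \rtit{rPar} case); your version localises the use of $S\evaluate$ to one corrective step and makes the role of linearity very explicit, at the price of the two auxiliary lemmas you rightly flag as needing separate proofs --- the process-level redex decomposition and the system-level exposure, the latter being close in content to the paper's Lemma~\ref{lem:struct-corresp-2}.
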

\proof By rule induction on $|S|\reduc Q$; see Appendix~\ref{sec:conf-proc} \qed

\begin{lem}[Multi-step Reduction Correspondence]  \label{thm:reduc-determinism}
  \begin{displaymath}
  |S| \reduc^n Q \text{ and } S \evaluate  \quad\text{implies} \quad  S\reduc^{n+m} R \text{ such that } R\evaluate \text{ and } |R|\steq Q.
\end{displaymath}
\end{lem}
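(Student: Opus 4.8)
The plan is to induct on the number $n$ of unconstrained reductions in $|S|\reduc^n Q$, keeping $S\evaluate$ as a standing hypothesis that travels with us down the confined reduction tree. For the base case $n=0$ we have $|S|\steq Q$, so we take $m=0$ and $R=S$: then $S\reduc^0 R$ holds trivially, $R\evaluate$ is exactly the hypothesis, and $|R|=|S|\steq Q$. The work is all in the inductive step, and the engine is Lemma~\ref{lem:eval-correspondence}, which simulates a single unconstrained step by a non-empty confined reduction sequence.

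For $n=k+1$, decompose $|S|\reduc^{k+1}Q$ as $|S|\reduc Q'\reduc^{k}Q$. Applying Lemma~\ref{lem:eval-correspondence} to $S\evaluate$ and $|S|\reduc Q'$ yields a system $R_1$ with $S\reduc^{j}R_1$ for some $j\ge 1$ and $|R_1|\steq Q'$. The catch is that $R_1$ need not satisfy $R_1\evaluate$ — see the last paragraph — so before recursing I repair it: if $R_1\evaluate$ already, set $R_2=R_1$; otherwise, since $S\evaluate$, $S\reduc^{j}R_1$ and $R_1\not\evaluate$, Lemma~\ref{cor:corrective-reductions} gives $R_2$ with $S\reduc^{j}R_2$, $R_2\quaseq R_1$ and $R_2\evaluate$. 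In both cases $R_2\quaseq R_1$ (reflexivity in the first case), so Proposition~\ref{lem:quas-impl-unconst-steq} gives $|R_2|\steq|R_1|\steq Q'$, and since $\reduc^{k}$ absorbs structural equivalence via \rtit{rStr}, we get $|R_2|\reduc^{k}Q$. Now the induction hypothesis applies to $R_2$: from $R_2\evaluate$ and $|R_2|\reduc^{k}Q$ we obtain $R$ and $m'\ge 0$ with $R_2\reduc^{k+m'}R$, $R\evaluate$, and $|R|\steq Q$. Concatenating, $S\reduc^{j}R_2\reduc^{k+m'}R$, i.e.\ $S\reduc^{(k+1)+(j-1+m')}R$; taking $m=j-1+m'\ge 0$ (using $j\ge1$) finishes the step.

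The only genuinely delicate point — and the reason the statement allows the extra $m$ unconstrained-matching steps to blow up into $n+m$ confined steps — is that Lemma~\ref{lem:eval-correspondence} produces \emph{some} matching confined sequence, but because the permission split \rtit{cSpl} is non-deterministic that sequence may commit to an allocation of permissions that can never be completed to a safely-stable system, so $R_1\evaluate$ can fail even though $S\evaluate$ does. Corrective Reductions (Lemma~\ref{cor:corrective-reductions}) is precisely the tool that resolves this: it keeps us on the same unconstrained trajectory (same number $j$ of confined steps, a $\quaseq$-equivalent target, hence the same erasure up to $\steq$) while restoring the ability to evaluate safely, which is what lets the induction hypothesis fire. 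Everything else — that $\reduc^{n}$ for $n\ge1$ silently absorbs $\steq$, and that $\quaseq$ refines equality of erasures up to $\steq$ — is routine bookkeeping handled by \rtit{rStr} and Proposition~\ref{lem:quas-impl-unconst-steq}.
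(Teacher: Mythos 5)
Your proof is correct and follows essentially the same route as the paper's: induction on $n$, a single-step simulation via Lemma~\ref{lem:eval-correspondence}, a case split on whether the resulting system still evaluates, repair via Corrective Reductions (Lemma~\ref{cor:corrective-reductions}) together with Proposition~\ref{lem:quas-impl-unconst-steq} in the bad case, and then the induction hypothesis. The only difference is cosmetic: you make the step-count bookkeeping ($m = j-1+m'$) explicit where the paper leaves it implicit.
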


\proof
  Proof by induction on the number of reduction steps that lead to a stable process $|S|\reduc^n Q$:
  \begin{desCription}
  \item\noindent{\hskip-12 pt\bf $n=0\,$:}\ \ Immediate since $Q=|S|$ and $S\reduc^0 S$ where $S\evaluate$.
  \item\noindent{\hskip-12 pt\bf $n=k+1\,$:}\ \ This means that $\exists P$ such that $|S|\reduc P\reduc^k Q$.  By $S\evaluate $ and Lemma~\ref{lem:eval-correspondence} we know:
    \begin{align}
      \label{eq:22}
      & \exists T\text{ such that } S \reduc^{l} T, l>0 \text{ and } |T| \steq P
    \end{align}
   Thus by $P\reduc^k Q$, $|T| \steq P$ from \eqref{eq:22} and \rtit{rStr} we have
   \begin{align}
     \label{eq:57}
     & |T|\reduc^k Q
   \end{align}
    At this point we have two cases:
    \begin{description}
    \item[$T\evaluate$] By  I.H. implies we deduce that $T\reduc^{k+m} R \text{ such that } R\evaluate \text{ and } |R|\steq Q$, and by $ S \reduc^{l} T$ from \eqref{eq:22} we obtain
      \begin{align*}
        &S\reduc^{k+m+l} R \text{ such that } R\evaluate \text{ and } |R|\steq Q.
      \end{align*}
    \item[$T\not\evaluate$]  By $S \reduc^{l} T$ from \eqref{eq:22} and Lemma~\ref{cor:corrective-reductions}, we know
      \begin{align}
        \label{eq:38}
        & \exists T'\text{ such that } S\reduc^{l} T' \text{ and } T'\quaseq T \text{ and } T'\evaluate 
      \end{align}
       Now, by Proposition~\ref{lem:quas-impl-unconst-steq},  $T'\quaseq T$ and  implies $|T'| \steq |T|$.  Thus by \eqref{eq:57} and \rtit{rStr} we deduce $|T'|\reduc^k Q$.  Thus by $T'\evaluate$ and I.H. we obtain $T'\reduc^{k+m} R \text{ such that } R\evaluate \text{ and } |R|\steq Q$, and by $S \reduc^{l} T'$ from \eqref{eq:38} we obtain
      \begin{math}
        S\reduc^{k+m+l} R \text{ such that } R\evaluate \text{ and } |R|\steq Q. 
      \end{math} \qed\smallskip
    \end{description}
  \end{desCription}

\noindent Lemma~\ref{thm:eval-determinism} uses also Lemma~\ref{lem:eval-determ-no-reduc}, which maps stable processes to safely stable systems.

\begin{lem}[Correspondence]\quad\label{thm:correspondence}
%   \begin{enumerate}
%   \item 
    \begin{math}
   S \reduc T \quad\text{implies}\quad |S| \reduc |T|    \;\text{or}\; |S| \steq |T|  
  \end{math}
%   \item
%     \begin{math}
%       |S| \reduc Q \quad \text{implies}\quad \exists T.\;  (S \reduc^{+} T \text{ where } |T| =Q)    \text{ or } (S \reduc^\ast T \text{ where } T\err) 
%     \end{math}
%   \end{enumerate}
\end{lem}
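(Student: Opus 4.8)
The plan is to prove Lemma~\ref{thm:correspondence} by rule induction on the derivation of $S \reduc T$ in the confined semantics of \figref{fig:confined-lang}, showing in each case that the permission-erasure function $|-|$ either tracks the reduction step in the unconstrained semantics of \figref{fig:structural-reduc} or collapses it to structural equivalence. First I would observe that the confined reduction rules fall into two groups: those that have a direct counterpart in the unconstrained semantics (\rtit{cThn}, \rtit{cEls}, \rtit{cCom}, \rtit{cPrc}) and those that merely rearrange or adjust permissions (\rtit{cSpl}, \rtit{cLcl}, \rtit{cTgh}, \rtit{cDsc}). For the first group, erasure yields exactly the matching unconstrained rule (\rtit{rThn}, \rtit{rEls}, \rtit{rCom}, \rtit{rPrc}): for instance, in \rtit{cCom} we have $|S| = \pioutA{c}{\lst{e}} \paralS \piin{c}{\lst{x}}{P}$ and $|T| = P\subC{\lst{v}}{\lst{x}}$, so $|S| \reduc |T|$ by \rtit{rCom}, the side conditions on permissions being irrelevant after erasure. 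For the second group, a direct computation shows $|S| \steq |T|$: in \rtit{cSpl} both sides erase to $P \paral Q$ (up to associativity/commutativity, handled by \rtit{sAss}/\rtit{sCom}); in \rtit{cLcl} both erase to $\rest{c}{P}$; in \rtit{cTgh} both erase to $\rest{c}{(P \paral |S|)} \steq P \paral \rest{c}{|S|}$ using \rtit{sExt} with the side condition $c \notin \fn{P}$; and in \rtit{cDsc} both erase to $\inert$.

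The inductive cases are \rtit{cRes}, \rtit{cPar} and \rtit{cStr}. For \rtit{cRes}, from $S' \reduc T'$ the induction hypothesis gives $|S'| \reduc |T'|$ or $|S'| \steq |T'|$; applying \rtit{rRes} (resp.\ the congruence of $\steq$ under restriction) and noting $|\rest{c}{S'}| = \rest{c}{|S'|}$ closes the case. The case \rtit{cPar} is analogous, using \rtit{rPar} (resp.\ \rtit{sCom}/\rtit{sAss}-closed congruence of $\steq$ under $\paralS$) together with $|S' \paral T| = |S'| \paralS |T|$. For \rtit{cStr}, where $S \steq S' \reduc T' \steq T$, I would first invoke Proposition~\ref{lem:quas-impl-unconst-steq} (or rather its first implication, $S \steq S'$ hence $|S| \steq |S'|$, which follows since system structural equivalence is generated by rules that all preserve the erased process up to $\steq$) to get $|S| \steq |S'|$ and $|T'| \steq |T|$; combining with the induction hypothesis $|S'| \reduc |T'|$ or $|S'| \steq |T'|$ and using \rtit{rStr} (resp.\ transitivity of $\steq$) yields the result.

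The main obstacle I anticipate is the bookkeeping in the permission-manipulating rules, particularly \rtit{cTgh}, where the erased left-hand side is $\rest{c}{(|\conCP| \paralS |S|)} = \rest{c}{(P \paralS |S|)}$ and the right-hand side is $|\con{\eV\setminus\sset{\permin{c},\permout{c}}}{P}| \paralS |\rest{c}{S}| = P \paralS \rest{c}{|S|}$; these are equated precisely by \rtit{sExt} under the hypothesis $c \notin \fn{P}$, so the proof must carry that side condition through. A secondary subtlety is that the structural equivalence on systems (\figref{fig:confined-lang}) includes \rtit{scNew} and \rtit{scNil} involving $\con{\emptyset}{\inert}$, which must be checked to erase correctly to the process-level \rtit{sNew} and \rtit{sNil}; this is routine since $|\con{\emptyset}{\inert}| = \inert$. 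Otherwise the proof is a straightforward, if case-heavy, structural induction, and no creative argument is needed beyond matching each confined rule to its unconstrained shadow or to a short chain of structural-equivalence rules.
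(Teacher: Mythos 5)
Your proposal is correct and follows essentially the same route as the paper's own proof: a rule induction on $S \reduc T$, mapping \rtit{cThn}/\rtit{cEls}/\rtit{cCom}/\rtit{cPrc} to their unconstrained counterparts, collapsing the permission-only rules to $\steq$, handling \rtit{cPar}/\rtit{cRes} by the induction hypothesis, and treating \rtit{cStr} via the fact that system structural equivalence is preserved under erasure (the paper isolates this as Lemma~\ref{lem:struct-corresp-1}, which is exactly the observation you make inline). Your treatment is simply a more detailed write-up of the same argument.
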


\proof
  The proof
  is by rule induction on $S \reduc T$ and we relegate this  to Appendix \ref{sec:conf-proc}.  \qed

\begin{lem}[Correspondence and Termination] \label{lem:eval-determ-no-reduc}
  \begin{math}
      |S|\reducNot \text{ and } S\evaluate T\quad\text{implies}\quad |T|\steq |S| 
  \end{math} 
\end{lem}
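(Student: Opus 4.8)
The plan is to unfold the definition of $S \evaluate T$ (\defref{def:safe-stability-eval}), which gives a witness $T'$ with $S \reduc^\ast T'$, $T' \checkmark$, and $T \steq T'$; since the goal $|T| \steq |S|$ is stable under $\steq$ on $T$ (as $\steq$ is a congruence and $|\cdot|$ maps $\steq$ to $\steq$ by Proposition~\ref{lem:quas-impl-unconst-steq}, via $S\steq T$ implies $S\quaseq T$ implies $|S|\steq|T|$), it suffices to prove $|T'| \steq |S|$. So I reduce to showing: if $|S| \reducNot$ and $S \reduc^\ast T'$ with $T' \checkmark$, then $|T'| \steq |S|$.

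The key step is an induction on the length $n$ of the reduction sequence $S \reduc^n T'$. The base case $n=0$ is immediate, since $T' = S$. For the inductive step, write $S \reduc S_1 \reduc^{n-1} T'$. I invoke Lemma~\ref{thm:correspondence} (Correspondence) on the step $S \reduc S_1$: this yields either $|S| \reduc |S_1|$ or $|S| \steq |S_1|$. The first alternative contradicts the hypothesis $|S| \reducNot$, so we must have $|S| \steq |S_1|$. It remains to apply the induction hypothesis to the shorter sequence $S_1 \reduc^{n-1} T'$ — but for this I need $|S_1| \reducNot$, which follows from $|S| \reducNot$ together with $|S| \steq |S_1|$ (stability is preserved by structural equivalence, via rule \rtit{rStr}: if $|S_1| \reduc Q$ then $|S| \steq |S_1| \reduc Q$ gives $|S| \reduc Q$, contradiction). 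The induction hypothesis then gives $|T'| \steq |S_1|$, and composing with $|S| \steq |S_1|$ and transitivity of $\steq$ yields $|T'| \steq |S|$, hence $|T| \steq |S|$ as required.

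The main obstacle is really just making sure Lemma~\ref{thm:correspondence} is applied correctly and that the ``$|S| \reduc |S_1|$ is impossible'' reasoning is airtight: confinement erasure can collapse a genuine system reduction (e.g.\ \rtit{cSpl}, \rtit{cLcl}, \rtit{cTgh}, \rtit{cDsc}) into a mere structural equivalence at the process level, which is exactly the ``or'' branch of Correspondence — so the step where I discard the first disjunct crucially uses that $|S|$ is already stable. No delicate confluence reasoning is needed here; this lemma is purely a bookkeeping consequence of Correspondence plus the fact that $|\cdot|$ and $\steq$ interact well with $\reducNot$.
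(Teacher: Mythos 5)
Your proof is correct and follows essentially the same route as the paper's: induction on the length of the reduction sequence to the safely-stable witness, using the Correspondence lemma (Lemma~\ref{thm:correspondence}) together with $|S|\reducNot$ to rule out the $|S|\reduc|S_1|$ branch, then closing by transitivity of $\steq$. The paper packages the contrapositive step as Corollary~\ref{cor:correspondence-2}, but the argument is the same.
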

\proof By induction on $n$ where $S \reduc^n T$.  The inductive case uses the contrapositive of Lemma~\ref{thm:correspondence}.  See Appendix~\ref{sec:conf-proc} \qed

\begin{lem}[Evaluation Determinism]  \label{thm:eval-determinism}
  $|S| \evaluate Q \text{ and } S \evaluate T \quad\text{implies} \quad  Q \steq |T| $.
\end{lem}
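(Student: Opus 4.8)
The plan is to chain together three facts already established: the multi-step reduction correspondence (Lemma~\ref{thm:reduc-determinism}), which lifts an unconstrained evaluation of $|S|$ into the resource semantics; the correspondence-and-termination lemma (Lemma~\ref{lem:eval-determ-no-reduc}), which says that a safe evaluation is inert at the process level once the underlying process is already stable; and system evaluation determinism (\thmref{thm:eval-eq-upto-permissions}), which pins down the safely-stable target of $S$ up to owned permissions. Everything is carried out modulo $\steq$.

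First I would unfold $|S| \evaluate Q$ to a genuinely stable $Q'$ with $|S| \reduc^\ast Q'$, $Q' \reducNot$ and $Q' \steq Q$, say via $n$ reductions $|S| \reduc^n Q'$. Since $S \evaluate T$ gives $S \evaluate$, Lemma~\ref{thm:reduc-determinism} applies to $|S| \reduc^n Q'$ and produces a system $R$ with $S \reduc^{n+m} R$, $R \evaluate$, and $|R| \steq Q'$. I would then note that $|R| \steq Q'$ together with $Q' \reducNot$ forces $|R| \reducNot$: otherwise \rtit{rStr} would turn a reduction of $|R|$ into one of $Q'$.

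Next I would exploit $R \evaluate$ by fixing a witness $R'$ with $R \reduc^\ast R'$ and $R' \checkmark$, so that $R \evaluate R'$ and, prepending $S \reduc^{n+m} R$, also $S \evaluate R'$. From $|R| \reducNot$ and $R \evaluate R'$, Lemma~\ref{lem:eval-determ-no-reduc} yields $|R'| \steq |R|$. From $S \evaluate T$ and $S \evaluate R'$, \thmref{thm:eval-eq-upto-permissions} yields $|T| \steq |R'|$. Transitivity of $\steq$ along $Q \steq Q' \steq |R| \steq |R'| \steq |T|$ then gives $Q \steq |T|$, as required.

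Since the statement is essentially a gluing of the supporting lemmas, the real difficulty lies in those; the only delicate point here is the mismatch between the two notions of stability, namely that the unconstrained $Q$ is merely $\reducNot$ whereas \thmref{thm:eval-eq-upto-permissions} speaks about safely-stable ($\checkmark$) targets. This is precisely why one cannot invoke system determinism directly but must first lift $|S| \reduc^n Q'$ to a still-safely-evaluating system $R$, observe that $|R|$ is already stable so that its safe evaluation changes nothing at the process level, and only afterwards compare $\checkmark$-targets of $S$.
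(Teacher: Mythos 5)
Your proposal is correct and follows essentially the same route as the paper's own proof: it chains Lemma~\ref{thm:reduc-determinism}, Corollary~\ref{cor:steq-reduction-unconstrained} (to get $|R|\reducNot$ from $Q'\reducNot$), Lemma~\ref{lem:eval-determ-no-reduc}, and \thmref{thm:eval-eq-upto-permissions} in exactly the same order, closing with transitivity of $\steq$. The only difference is cosmetic — you are slightly more explicit about the $Q$ versus $Q'$ distinction in the definition of evaluation and about why the comparison of $\checkmark$-targets must be deferred.
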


\proof    $|S| \evaluate Q$  implies that
\begin{align}
  &|S| \reduc^n Q\reducNot \text{ for some $n$}\label{eq:74}
\end{align}
By $|S| \reduc^n Q$, $S \evaluate T$ and Lemma~\ref{thm:reduc-determinism}  we know that $S \reduc^{n+m} R$ such that $R\evaluate$ and $|R|\steq Q$.  Since $Q\reducNot$, \eqref{eq:74}, then by Corollary~\ref{cor:steq-reduction-unconstrained} we obtain $|R|\reducNot$ and thus, by $R\evaluate$ and Lemma~\ref{lem:eval-determ-no-reduc}  we know that
\begin{align}
  & \text{ $R\evaluate T'$ for some $T'$ where $|T'|\steq|R|$}\label{eq:75}
\end{align}
By $S \reduc^{n+m} R$ and $R\evaluate T'$ of~\eqref{eq:75} we deduce that $S\evaluate T'$ and by $S\evaluate T$ and \thmref{thm:eval-eq-upto-permissions} from Section~\ref{sec:system-determinism} we know $|T|\steq|T'|$.  Thus by transitivity we obtain $|T|\steq|T'|\steq|R|\steq Q$ as required. \qed

\begin{theorembis}{\ref{thm:eval-implies-determinism}}{Process Evaluation Determinism} \quad
  \begin{math}
    S \evaluate T,\ |S| \evaluate Q_1,\ |S|\evaluate Q_2 \;\text{implies}\; Q_1 \steq Q_2 \steq |T| 
  \end{math}
\end{theorembis}

\begin{proof}
  By Lemma~\ref{thm:eval-determinism} we know $Q_1\steq |T|$ and $Q_2\steq |T|$ and the required result follows by transitivity of \steq.
\end{proof}

\medskip

The theorem relating system evaluation and process convergence uses the following corollary, obtained from Proposition~\ref{prop:safe-stab-vs-system-structure} of Section~\ref{sec:system-determinism}.

\begin{cor}\label{cor:correspondence-1} \quad
  \begin{math}
     S \errNot \text{ and } S\reducNot \quad \text{implies} \quad |S|\reducNot
  \end{math}
\end{cor}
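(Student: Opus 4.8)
The goal is to prove Corollary~\ref{cor:correspondence-1}: if $S \errNot$ and $S \reducNot$ then $|S| \reducNot$. The plan is to argue by contraposition combined with the structural characterisation of safely-stable systems provided by Proposition~\ref{prop:safe-stab-vs-system-structure}. First I would observe that the hypotheses $S \errNot$ and $S \reducNot$ together say exactly that $S \checkmark$, so Proposition~\ref{prop:safe-stab-vs-system-structure} applies and gives us
\begin{displaymath}
  S \steq\restB{\lst{d}}{ \;\paral_{i=0}^n \con{\eV_i}{\pioutA{c_i}{\lst{e_i}}} \; \paral_{j=0}^m \con{\eVV_j}{\piin{c'_j}{\lst{x_j}}{P_j}}}
\end{displaymath}
with $\sset{c_1,\ldots, c_n} \cap \sset{c'_1,\ldots,c'_m} = \emptyset$.

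Next I would apply the permission-erasure function $|-|$ to this structural equivalence. Since erasure commutes with parallel composition and restriction (Definition~\ref{def:perm-erasure}) and since $S \steq T$ implies $|S| \steq |T|$ (this is the easy direction of Proposition~\ref{lem:quas-impl-unconst-steq}, via $S \steq T$ implies $S \quaseq T$ implies $|S| \steq |T|$), we obtain
\begin{displaymath}
  |S| \steq \restB{\lst{d}}{ \;\paral_{i=0}^n \pioutA{c_i}{\lst{e_i}} \; \paral_{j=0}^m \piin{c'_j}{\lst{x_j}}{P_j}}.
\end{displaymath}
So $|S|$ is structurally equivalent to a parallel composition, under some restrictions, of bare asynchronous outputs and input-prefixed processes, where the channels used by the outputs are disjoint from those of the inputs.

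Finally I would check that such a process cannot reduce. Inspecting the reduction rules of \figref{fig:structural-reduc}: rules \rtit{rThn}, \rtit{rEls}, \rtit{rPrc} do not apply since there are no conditionals or process calls at top level; \rtit{rCom} cannot fire because there is no matching output/input pair on a common channel — that is precisely what the disjointness condition $\sset{c_1,\ldots,c_n} \cap \sset{c'_1,\ldots,c'_m} = \emptyset$ rules out; the congruence rules \rtit{rRes}, \rtit{rPar} would need a reduction of a subcomponent, and the subcomponents are either single outputs or single input-prefixed processes, neither of which reduces (an input prefix guards its body, so \piin{c'_j}{\lst{x_j}}{P_j} is itself stable regardless of $P_j$). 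Hence no reduction of the right-hand side is derivable, and by \rtit{rStr} $|S| \reducNot$. The one point requiring a little care is the treatment of \rtit{rStr} in showing the right-hand side is stable: I would argue that if $|S| \reduc Q$ then by \rtit{rStr} the canonical form above also reduces, and then do the case analysis on which rule produced that reduction, using that $\steq$ preserves the multiset of top-level outputs and guarded inputs and the disjointness of their channel sets; this is the main (though still routine) obstacle, essentially a normal-form/no-redex argument for the structural congruence.
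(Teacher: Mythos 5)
Your proposal is correct and follows essentially the same route as the paper, whose proof of this corollary is simply the observation that it follows from Proposition~\ref{prop:safe-stab-vs-system-structure}: you spell out the details the paper leaves implicit, namely that $S\errNot$ and $S\reducNot$ give $S\checkmark$, that permission erasure preserves the structural normal form, and that the resulting process of mismatching outputs and guarded inputs has no redex. The only remark worth adding is that the erasure-preserves-$\steq$ step you route through Proposition~\ref{lem:quas-impl-unconst-steq} is also available directly as Lemma~\ref{lem:struct-corresp-1} in the appendix.
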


\proof Follows from Proposition~\ref{prop:safe-stab-vs-system-structure}. \qed

\begin{theorembis}{\ref{thm:process-convergence}}{Process Convergence}
  \begin{math}
    S\evaluate \text{ implies } |S|\converge
  \end{math}
\end{theorembis}

\proof By contradiction.  Assume that $|S|\diverge$. Since, by $S\evaluate$ and \thmref{thm:sys-eval-implies-sys-conv}, any reduction sequence starting from $S$ is finite, by $|S|\diverge$ there must exists a long enough  reduction sequence
\begin{align*}
  &|S|\reduc^n Q \reduc\ldots
\end{align*}
where, by Lemma~\ref{thm:reduc-determinism}, $S\evaluate T$  and $|T|\steq Q$.  Now since $T\checkmark$, then by Corollary~\ref{cor:correspondence-1} we must have $Q\reducNot$ which contradicts our assumption. Thus $|S|\converge$.\qed

\subsection{Confined Semantics Application}
\label{sec:conf-semant-appl}

% Corollary~\ref{cor:sys-eval-implies-proc-deter} explains how our permission-constrained semantics helps us reason about deterministic code. 
The following examples expound on the use of  linear  permission allocations for reasoning about deterministic code.  % In essence,
% it act as the \emph{narrative}, in terms of linear permissions, as to why a process is deterministic, as we discuss in the following examples.

\begin{exa}\label{ex:running-confined} $\ptit{Prg}\;\paralS  \pioutA{c_1}{v_1}\paralS\pioutA{c_2}{v_2}$ can be shown to be deterministic by finding a permission assignment for every process below that permits a safe evaluation.
  \begin{align*}
    & \con{\eV_1}{\ptit{Prg}}\paralS % \con{\eV_2}{\ptit{Dbl}}\;\paralS
    \con{\eV_2}{\pioutA{c_1}{2}}\paralS\con{\eV_3}{\pioutA{c_2}{5}} \quad \evaluate\quad  \con{\eVV_1}{\pioutA{c_1}{(2,4)}}\paralS \con{\eVV_2}{\pioutA{c_4}{}}
  \end{align*}
  Two possible assignments for $\eV_1$, $\eV_2$ and $\eV_3$ that permit the above evaluation are: 
  \begin{align}\label{eq:83}
     \eV_1  &= \sset{\permin{c_1},\permin{c_2},\permout{c_4}}, & \eV_2 &=\sset{\permout{c_1}},&  \eV_3&=\sset{\permout{c_2}}  && \text{or};\\
     \label{eq:84}
    \eV_1  &= \sset{\permin{c_1},\permin{c_2}}, & \eV_2 &=\sset{\permout{c_1},\permout{c_4}}, & \eV_3 &=\sset{\permout{c_2}}
  \end{align}
Stated otherwise, we have  at least two possible linear-permission based narratives explaining why $\ptit{Prg}\;\paralS  \pioutA{c_1}{v_1}\paralS\pioutA{c_2}{v_2}$ is deterministic.
For both assignments  $\permout{c_1}\in\eVV_1$ and $\permout{c_4}\in\eVV_2$ must hold for the resulting safely-stable system $\con{\eVV_1}{\pioutA{c_1}{(2,4)}}\paralS \con{\eVV_2}{\pioutA{c_4}{}}$, but the remaining permissions $\permin{c_1},\,\permin{c_2}$ and $\,\permout{c_2}$, which are redundant at that point, can arbitrarily be split amongst $\eVV_1$ and $\eVV_2$.  More specifically, recall from \exref{ex:running} that 
\begin{align*}
      \ptit{Prg} &\deftri \restB{c_3}{ \ptit{Fltr} \paral \ptit{Dbl}}     \qquad \qquad \qquad \qquad
      \ptit{Dbl}   \deftri \piin{c_2}{x_2}{\piin{c_3}{x_4}{\pioutA{c_1}{(x_4\!+\!x_4)}}}\\
     \ptit{Fltr} & \deftri \piin{c_1}{x_1}{\cmp{x_1}{9}{\,\pioutA{c_3}{x_1}\paralS\piinBB{c_1}{x_3}{\pioutA{c_1}{(x_1,x_3)}\paralS \pioutA{c_4}{}} \,}{\,\pioutA{c_4}{x_1}}}
\end{align*}
Using the permission assignment in \eqref{eq:83} we can have the reduction sequence below.  Reduction \eqref{eq:85} can be derived using the rules \rtit{cLcl}, \rtit{cStr} and \rtit{cPar} from (\cf \figref{fig:confined-lang})  whereas reduction~\eqref{eq:86} is derived using \rtit{cSpl}, \rtit{cPar} and \rtit{cRes}; other reductions can be derived in similar fashion.  For the most part, we have abstract away from structural manipulation of terms, with the exception of reduction \eqref{eq:110}  which employs \rtit{cTgh} and \rtit{cStr} to discard the redundant scoped channel name $c_3$ and the permissions associated with it. 

\begin{align}
  \label{eq:85}
  & \con{\sset{\permin{c_1},\permin{c_2},\permout{c_4}}}{\ptit{Prg}}\;\paralS 
    \con{\sset{\permout{c_1}}}{\pioutA{c_1}{2}}\paralS\con{\sset{\permout{c_2}}}{\pioutA{c_2}{5}} \reduc \\
    \label{eq:86}
    & \restB{c_3}{\con{\sset{\permin{c_1},\permin{c_2},\permout{c_4},\permout{c_3},\permin{c_3}}}{\ptit{Fltr} \paral \ptit{Dbl}}\;\paralS 
    \con{\sset{\permout{c_1}}}{\pioutA{c_1}{2}}\paralS\con{\sset{\permout{c_2}}}{\pioutA{c_2}{5}}} \reduc \\
  \label{eq:104}
  & \restB{c_3}{\con{\sset{\permin{c_1},\permout{c_4},\permout{c_3}}}{\ptit{Fltr}} \paralS \con{\sset{\permin{c_2},\permin{c_3}}}{\ptit{Dbl}}\;\paralS 
    \con{\sset{\permout{c_1}}}{\pioutA{c_1}{2}}\paralS\con{\sset{\permout{c_2}}}{\pioutA{c_2}{5}}} \reduc \\
  \label{eq:105}
  & \restB{c_3}{\con{\sset{\permin{c_1},\permout{c_4},\permout{c_3},\permout{c_1}}}{
     \!\! \begin{array}{l}
\cmpA{2}{9}{}\\
{\;\pioutA{c_3}{2}\paralS\piinBB{c_1}{x_3}{\pioutA{c_1}{(2,x_3)}\paralS \pioutA{c_4}{}} }\\
\;\elseA{\,\pioutA{c_4}{2}}
\end{array}\!\!
} \paralS \con{\sset{\permin{c_2},\permin{c_3}}}{\ptit{Dbl}}\paralS\con{\sset{\permout{c_2}}}{\pioutA{c_2}{5}}} \reduc \\
\label{eq:106}
 & \restB{c_3}{\con{\sset{\permin{c_1},\permout{c_4},\permout{c_3},\permout{c_1}}}{
\pioutA{c_3}{2}\paralS\piinBB{c_1}{x_3}{\pioutA{c_1}{(2,x_3)}\paralS \pioutA{c_4}{}} } \paralS \con{\sset{\permin{c_2},\permin{c_3}}}{\ptit{Dbl}}\paralS\con{\sset{\permout{c_2}}}{\pioutA{c_2}{5}}} \reduc \\
\label{eq:107}
 & \restB{c_3}{\con{\sset{\permout{c_3},\permout{c_1}}}{
\pioutA{c_3}{2}} \paralS \con{\sset{\permin{c_1},\permout{c_4}}}{\piinBB{c_1}{x_3}{\pioutA{c_1}{(2,x_3)}\paralS \pioutA{c_4}{}} } \paralS \con{\sset{\permin{c_2},\permin{c_3}}}{\ptit{Dbl}}\paralS\con{\sset{\permout{c_2}}}{\pioutA{c_2}{5}}} \reduc \\
\label{eq:108}
& \restB{c_3}{
  \!\!\begin{array}{l}
\con{\sset{\permout{c_3},\permout{c_1}}}{
\pioutA{c_3}{2}} \paralS \con{\sset{\permin{c_1},\permout{c_4}}}{\piinBB{c_1}{x_3}{\pioutA{c_1}{(2,x_3)}\paralS \pioutA{c_4}{}} } \\
\quad \paralS \con{\sset{\permin{c_2},\permin{c_3},\permout{c_2}}}{\piin{c_3}{x_4}{\pioutA{c_1}{(x_4\!+\!x_4)}}}
\end{array}\!\!
} \reduc \\
&\label{eq:109}
\restB{c_3}{
\con{\sset{\permin{c_1},\permout{c_4}}}{\piinBB{c_1}{x_3}{\pioutA{c_1}{(2,x_3)}\paralS \pioutA{c_4}{}} } 
 \paralS \con{\sset{\permin{c_2},\permin{c_3},\permout{c_2},\permout{c_3},\permout{c_1}}}{\pioutA{c_1}{(2\!+\!2)}}
} \reduc \\
&
\restB{c_3}{
\con{\sset{\permin{c_1},\permout{c_4},\permin{c_2},\permin{c_3},\permout{c_2},\permout{c_3},\permout{c_1}}}{\pioutA{c_1}{(2,4)}\paralS \pioutA{c_4}{}} 
} \steq  \nonumber\\
\label{eq:110}
&\qquad\restB{c_3}{
\con{\sset{\permin{c_1},\permout{c_4},\permin{c_2},\permin{c_3},\permout{c_2},\permout{c_3},\permout{c_1}}}{\pioutA{c_1}{(2,4)}\paralS \pioutA{c_4}{}}\paralS \con{\emptyset}{\inert} 
}\reduc \\
&
\con{\sset{\permin{c_1},\permout{c_4},\permin{c_2},\permout{c_2},\permout{c_1}}}{\pioutA{c_1}{(2,4)}\paralS \pioutA{c_4}{}} 
 \paralS \restB{c_3}{\con{\emptyset}{\inert} }   \steq \nonumber\\
 \label{eq:111}
 & \qquad\con{\sset{\permin{c_1},\permout{c_4},\permin{c_2},\permout{c_2},\permout{c_1}}}{\pioutA{c_1}{(2,4)}\paralS \pioutA{c_4}{}}\reduc \\
 &\label{eq:112}
 \con{\sset{\permout{c_1},\permin{c_2},\permout{c_2}}}{\pioutA{c_1}{(2,4)}}\paralS \con{\sset{\permin{c_1},\permout{c_4}}}{\pioutA{c_4}{}} \reducNot \errNot
\end{align}\vspace{0 pt}

\noindent We highlight two important aspects of this reduction sequence.  First, reduction \eqref{eq:107} could have been interleaved with any of the reductions \eqref{eq:104},~\eqref{eq:105} and \eqref{eq:106} while still yielding the same safely-stable system; this holds because these reductions are confluent, as the separate permissions held by each subsystem attest. Second, we could have opted for a different permission partitioning in the reductions  \eqref{eq:86}, \eqref{eq:106} and   \eqref{eq:111},  and still attained a safely-stable system.  For instance, in  \eqref{eq:86} we could have allocated permission \permout{c_4} to the process \ptit{Dbl} and, similarly, in the case of \eqref{eq:106}  permission \permout{c_4} could have been allocated to the process \pioutA{c_3}{2}, without altering the eventual safely-stable system reached.

From the fact that \eqref{eq:112} is safely-stable and the contrapositive of Lemma~\ref{lem:violation-pres} we know that permissions were never violated throughout the reduction sequence.
\thmref{thm:eval-eq-upto-permissions} guarantees that the process part of any system evaluation will be structurally equivalent to $\pioutA{c_1}{(2,4)} \paralS \pioutA{c_4}{}$ and, by \thmref{thm:eval-implies-determinism} and  \thmref{thm:process-convergence}, this implies that $\ptit{Prg}\paralS  \pioutA{c_1}{v_1}\paralS\pioutA{c_2}{v_2}$ \emph{deterministically} evaluates to  $\pioutA{c_1}{(2,4)} \paralS \pioutA{c_4}{}$ \ie it always converges.

From a compositional perspective, permission-sets also delineate the footprint of every process and, indirectly, the requirement for well-resourcing of \defref{def:wf-systems} defines an interface for detecting race conditions.  Consider for example the system:
\begin{align*}
  & \con{\sset{\permin{c_1},\permin{c_2},\permout{c_4}}}{\ptit{Prg}}
\end{align*}
 In order for this system to be safe, it needs the permission \permin{c_1} (otherwise it would yield a permission violation through rule \rtit{eIn}).  Recall the context $\pioutA{c_1}{1}\paral \pioutA{c_2}{v_2} \paral\piin{c_1}{x}{\inert} $ from \exref{ex:running} which had introduced a race condition on inputs on channel $c_1$.   In order for this system not to violate permissions itself, it must own a permission set $\eVV$ \ie $\con{\eVV}{\pioutA{c_1}{1}\paral \pioutA{c_2}{v_2} \paral\piin{c_1}{x}{\inert}}$, where $\permin{c_1} \in \eVV$ as well.  However, the separation condition  for well-resourcing  prohibits us from composing these two systems together because  their respective permissions are not disjoint \ie $\sset{\permin{c_1},\permin{c_2},\permout{c_4}} \not\perp \eVV$.  

% \begin{align*}
%   &\eV_1  = \sset{\permin{c_1},\permin{c_2},\permout{c_4}},\; \eV_2=\sset{\permout{c_1}},\;  \eV_3=\sset{\permout{c_2}}   \quad\text{or;} \quad 
%   \eV_1 = \sset{\permin{c_1},\permin{c_2}},\; \eV_2 =\sset{\permout{c_1},\permout{c_4}},\; \eV_3=\sset{\permout{c_2}}
% \end{align*}
\end{exa}

\begin{exa} \label{ex:quicksort-determ} If, in the  array ${\pioutA{a_1}{v_1}\paral\ldots\paral\pioutA{a_n}{v_n}}$ to be sorted, we assign the permission set $\eVV_i = \sset{\permout{a_i}}$ to every element $\pioutA{a_i}{v_i}$  and assign the permission set $\eV = \sset{\permin{a_1},\ldots,\permin{a_n},\permout{r}}$ to $\quick{1,n}$  then it turns out that we can show that
  \begin{align*}
    \conC{\quick{1,n}} \paral \con{\eVV_1}{\pioutA{a_1}{v_1}} \paral\ldots\paral\con{\eVV_n}{\pioutA{a_n}{v_n}} & \quad \evaluate \quad T
  \end{align*}
  for some safely stable system $T$ where 
  \begin{align*}
    T & \steq  \con{\eVV_1}{\pioutA{a_1}{u_1}} \paral\ldots\paral\con{\eVV_n}{\pioutA{a_n}{u_n}} \paral \conC{\pioutA{r}{}}
  \end{align*}
Note how, as in \exref{ex:running-confined}, $\eV$ in \conC{\quick{1,n}} defines an interface that parallel processes to be composed with it to respect, in order for it to evaluate deterministically.  
\end{exa}

\subsection{Discussion}
\label{sec:discussion}

Process spawning, \rtit{cSpl}, is  intentionally
non-deterministic: apart from alleviating permission annotation,\footnote{ The current formulation leads to a more lightweight form of annotation for confined processes. The other alternative would have been to extend the definition of parallel composition at the process level and have systems of the form \conC{P \paral_{(\eVV_1, \eVV_2)} Q}. whereby $\eVV_1$ and $\eVV_2$ specify \emph{deterministically} how $\eV$ is to be apportioned amongst $P$ and $Q$.} its non-deterministic nature is in line with the unspecified way that permissions % are allowed to
can be allocated  in a confined system.  Correspondingly, through \thmref{thm:eval-eq-upto-permissions} and Corollary~\ref{cor:sys-eval-implies-proc-deter}, we have seen how there may be more than one way how to validly distribute permissions across processes so as to prove determinacy.  

Since we eventually plan to use confined processes as part of the model for our logic (\cf \secref{sec:logic}), % and permission sets are used to represent invariants about permission transfer in the definition of sequents
 we here opt for the most flexible solution \ie non-deterministic splits for parallel composition, which permits more narratives explaining process determinism while still restricting the permission allocations that can be used.  This setup gives better separation of concerns between confined process reduction and the model used for our logic.   In particular, this model  
 incorporates environments describing permission-transfer invariants, apart from confined processes. These environments are however orthogonal to the properties of confined processes derived in this section.  In fact, their purpose is that of  allowing for better compositional analysis when determining assertion satisfactions, as we shall see in \secref{sec:logic} and \secref{sec:proof-system}.

% of \secref{sec:logic}. 
% In what follows, the satisfaction of our logical sequents in \secref{sec:proof-system}      

\section{Logic}
\label{sec:logic}

We define a separation-based logic that enables us to reason about programs that deterministically evaluate to stable systems satisfying assertions describing their state.  Our logic concentrates more on describing data held at asynchronous outputs in stable systems, and abstracts away from issues dealing with control for deterministic evaluation. 
 For this reason, the logic semantics is not defined directly on bare processes.  Instead, the confined processes  of \secref{sec:conf-value-pass}  together with the definitions for safe-stability and evaluations, \defref{def:safe-stability-eval},    provide the basis for a model to our separation logic whereby the permissions owned constitute our units of separation (\cf \defref{def:separation}).   % Through this logic, we are able to compositionally deduce  assertion satisfaction from assertion satisfaction of sub-programs.
 Together with the associated proof system of \secref{sec:proof-system}, this amounts to our proposal for a  logical framework for reasoning over non-interfering concurrent programs.

\subsection{Permission Environments}
\label{sec:perm-envir}

In our logic, channels have a dual role.  Apart from acting as a mechanism for communicating data, they also act  
as delimiters of \emph{mutual-exclusion groups of resources},  modeling condition-critical regions\cite{Ohearn04}.   Each input process \piin{c}{\lst{x}}{P}  abides % \emph{guarantees}
to use certain permissions in $P$ only \emph{after}  it synchronises on channel $c$  whereas each output-process $\pioutA{c}{\lst{e}}$    obliges % \emph{obliges}
to own the permissions \emph{guarded} by $c$; these guarded permissions are  transferred \emph{dynamically} upon communication on $c$ using rule \rtit{cCom} of \figref{fig:confined-lang} and enable us to reason about channel reuse in deterministic systems.

The invariants relating to  permission mutual-exclusion are characterised 
as \emph{permission environments}, $\env \in \Chans\pmap \pset{\Perm}$,  partial maps associating 
channels $c$ to permission-sets $\eV$.  They require abiding processes to own all the permissions in $\eV$ when  outputting on $c$ and, dually, allow processes to assume the acquisition of all permissions in $\eV$ when inputting on $c$. 
The constraints in \defref{def:permission-environment} ensure that 
\begin{inparaenum}[\itshape \upshape(1\upshape)]
 \item permission transfer always includes  the permission
   $\permout{c}$ to output over the communicating channel, but never 
  the capability
 \permin{c} to input over it, as this must % clearly
  already belong to the receiving process;
 \item environments are suitably closed.
\end{inparaenum}

\begin{defi} [\rm\textit{Permission Environment}] 
\label{def:permission-environment} \rm
 $\env$ is a finite map  from names to permission sets
 such that: 
\begin{enumerate}[(1)]
\item   % &
  \begin{math}
    \text{forall } c\in\dom{\env}\; \permin{c}\not\in\env(c)\text{ and
    }\permout{c}\in\env(c),
  \end{math}
% \\[4pt]
\item  % &
  \begin{math}
    \eV\in\cod{\env}\;\text{ implies
    }\;\names{\eV}\subseteq\dom{\env},
  \end{math}
\end{enumerate}
 where 
 $\names{\eV}\deftxt\sset{c\mid\permin{c}\in\eV\text{ or }\permout{c}\in\eV}$.
\end{defi}

\subsection{Logical Formulas}
\label{sec:logical-formulas}

Our logic formulas, ranged over by the meta-variables \fV,\fVV,  characterise a
 `spatial' notion of \emph{state} for deterministic processes in terms of the data held on asynchronous channels at stable processes.  In order to simplify our conceptual process interpretations, we limit ourselves to describing only the states of stable processes, abstracting away from the intermediary reductions that lead to stability. For this we require
 asynchronous output data assertions, \fstate{c}{\lst{e}},  the `separated conjunction', \fcons{\fV}{\fVV}, and its unit , \femp;  formulas constructed using just these constructs are denoted by the metavariable \fVVVV and are called \emph{state} formulas. % ,  as they describe the`state' of a process in terms of the values held on asynchronous channels.
  Guided by Proposition~\ref{prop:safe-stab-vs-system-structure}, stability  requires our formulas to  describe (input) blocked processes,  \fblk{c}.  Finally, we  also describe unrestricted terminating process by \fany\ whenever we want to abstract away completely from the structure of a terminating process.

\begin{defi}[Formulas]\label{def:formula} \rm
\begin{align*} 
  \fVVVV,\fVVV \in \AFrm & \bnfdef \;\femp \bnfsepp \fstate{c}{\lst{e}} \bnfsepp  \fcons{\fVVVV}{\fVVVV} \\
   \fV,\fVV  \in\Frm & \bnfdef \;\femp \bnfsepp \fany \bnfsepp \fstate{c}{\lst{e}} \bnfsepp \fblk{c} \bnfsepp  \fcons{\fV}{\fV} %  \bnfsepp \fadj{\fVVVV}{\fV} 
   % \bnfsepp  \fcons{\fV}{\fV} \bnfsepp \fallB{x}{\fV} \bnfsepp \fexistsB{x}{\fV} 
\end{align*}
 \end{defi}

Our formulas are interpreted over permission environments and well-formed
systems,  \ie $\env,S\,\sat\,\fV$.  They are defined in \figref{fig:assertion-satisfaction}, inductively  on the 
 structure of \emph{closed} formulas  \ie  formulas with no free variables in the expressions \lst{e} of \fstate{c}{\lst{e}}.   
Our definition of formula satisfaction relies heavily on the evaluation judgement, $S \evaluate T$, which is only defined for closed systems (\defref{def:safe-stability-eval}); recall that system evaluation \emph{existentialises} over a reduction path leading to a stable system .  % Satisfaction also uses the following operation defined below:

% \begin{defi}[Names in Environments% and Formulas
%   ]\label{defi:env-form-names}
%   \begin{align*}
%     \nm{\env} & \deftxt
%     \begin{cases}
%       \emptyset & \text{if }\env=\emptyset\\
%       \sset{c}\cup\nm{\eV}\cup\nm{\env'} & \text{if }\env=\env',\emap{c}{\eV}
%     \end{cases}% \\ 
% %     \nm{\fV} & \deftxt
% %     \begin{cases}
% %       \emptyset & \text{if } \fV=\femp \text{ or }\fV=\fany\\
% %       \sset{c}     & \text{if } \fV=\fstate{c}{\lst{e}} \text{ or }\fV=\fblk{c}\\
% %       \nm{\fV_1}\cup\nm{\fV_2} & \text{if }\fV=\fcons{\fV_1}{\fV_2}
% %     \end{cases}
%   \end{align*}
% \end{defi}

\begin{small_display}{Formula Satisfaction}{fig:assertion-satisfaction} 
 \begin{align*}
   \\
    \env,S  &\sat \femp && \text{iff }  S\evaluate\con{\emptyset}{\inert};\\[6pt]
    \env,S  &\sat \fany && \text{iff } S\evaluate T  ;\\[6pt]
    \env,\,S  &\sat \fstate{c}{\lst{e}} && \text{iff }
       S\evaluate \conC{\pioutA{c}{\lst{e'}}}
       \;\text{with}\;\lst{e}\evaluate \lst{v},\;\lst{e'}\evaluate \lst{v}\,\text{and}\, \env(c)\subseteq\eV;
    \\[5pt]
    \env,\,S &\sat  {\fblk{c}} &&  \text{iff }  S\evaluate\,\rest{\lst{d}}{\conC{\piin{c}{\lst{x}}{P}}} \;\text{with} \;c\not\in\lst{d} \;\text{and} \; c \in \dom{\env}
    ;\\[6pt]
    \env,\,S  &\sat  \fcons{\fV_1}{\fV_2} &&\text{iff }
      % \text{exists } S_1,S_2 \text{ where }
      S\evaluate\,%  T\;\text{ and }\,S\reduc^\ast\steq
      \restB{\lst{d}}{S_1\paral S_2}\text{ with }
      \lst{d}\not\in %(\nm{\fcons{\fV_1}{\fV_1}}\cup
      \dom{\env}% )
      \text{ and }\env,S_1\sat\fV_1\text{ and } \env,S_2\sat\fV_2;\\% [6pt]
    % \env,\,S  &\sat \fadj{\fVVVV}{\fV} &&\text{iff }   \sepprocE{S}{\fVVVV}\quad
    %   \text{ and }\quad \env,\,T\sat\fVVVV, \;S \perp T  \;\text{ implies }\;  \env, S\paral T\sat \fV;\\[6pt] % S\paral T\,\evaluate\, R \text{ and }  \env, R\sat \fV;\\[6pt]
    % \env,\,S & \sat  {\fallB{\lst{x}}{\fV}} &&  {\text{iff for all } \lst{v} \text{ where } \bV\subC{\lst{v}}{\lst{x}}\evaluate\boolT\; \text{ implies } \env,\,S\sat\fV\subC{\lst{v}}{\lst{x}};} \\[6pt]
    % \env,\,S & \sat  {\fexistsB{\lst{x}}{\fV}} &&  {\text{iff exists } \lst{v} \text{ where } \bV\subC{\lst{v}}{\lst{x}}\evaluate\boolT\; \text{ and } \env,\,S\sat\fV\subC{\lst{v}}{\lst{x}};} 
  \end{align*}
\end{small_display}

The satisfaction relation in \figref{fig:assertion-satisfaction} describes the state of a system once it stabilises.
The main assertion satisfaction is that for  data assertions, \fstate{c}{\lst{e}}, as it relates the data held on asynchronous outputs of a stable system with the data stated in the assertion.  To do this, the definition  relies on the assumption that  $S$ is closed to establish the equality between the two expressions \lst{e} and \lst{e'}.  Moreover, it uses the environment, \env, to ensure that the (stable) asynchronous output % stable system
owns the permissions imposed by the permission guarding invariants.  Its use has already been discussed in \secref{sec:perm-envir} and will be elaborated further when we consider compositional analysis of satisfaction in \secref{sec:proof-system}.   Data assertions are typically composed together using the separating conjunction assertion,  $\fcons{\fV_1}{\fV_2}$, and the empty assertion, \femp.  For the satisfaction for \femp, 
the system $\con{\emptyset}{\inert}$ is chosen to be the identity interpretation for our model \wrt separation, thereby making the interpretation  for just these constructs a commutative monoid (\cf Lemma~\ref{lem:formula-equiv}).

 The satisfaction definition of the separating conjunction, 
  $\fcons{\fV_1}{\fV_2}$, is however more complicated than one would have expected, as it needs to handle  conjunctions with \fblk{c} and \fany\ formulas as well; the interpretation for the latter two formulas is rather straightforward. 
Thus, apart from relying on the system well-resourcing assumption to guarantee that the partitioned sub-systems are separate, $S_1 \perp S_2$ (\cf \defref{def:separation}), satisfaction for the separating conjunction also enforces that a system is stable before it is split, \ie $S\evaluate S_1\paral S_2$.  This condition rules out systems whose subcomponents satisfy the sub-formulas of a conjunction  $\fcons{\fV_1}{\fV_2}$, but then violate stability once composed together; we return to this later in \exref{ex:unsatisfiability}.  The fact that separating conjunction ranges over input-blocked processes also requires  a satisfaction definition that ignores scoping of channel names across separation \ie  $S\evaluate \restB{\lst{d}}{S_1\paral S_2}$; these scoped names \lst{d} refer to channels used  in the continuations of blocked processes, as explained later in \exref{ex:satisfiability}, and cannot be abstracted away using structural equivalence rules such as \rtit{scExt} and \rtit{scNew} from \figref{fig:confined-lang}.

 \begin{exa}[Satisfiability]\label{ex:satisfiability} Recall the process definitions 
\begin{align*}
      \ptit{Prg} &\deftri \restB{c_3}{ \ptit{Fltr} \paral \ptit{Dbl}}  \\  % \qquad \qquad \qquad \qquad
      \ptit{Dbl}  & \deftri \piin{c_2}{x_2}{\piin{c_3}{x_4}{\pioutA{c_1}{(x_4\!+\!x_4)}}}\\
     \ptit{Fltr} & \deftri \piin{c_1}{x_1}{\cmp{x_1}{9}{\,\pioutA{c_3}{x_1}\paralS\piinBB{c_1}{x_3}{\pioutA{c_1}{(x_1,x_3)}\paralS \pioutA{c_4}{}} \,}{\,\pioutA{c_4}{x_1}}}
   \end{align*}
from \exref{ex:running}. Assuming the environment 
 \begin{align*}
\env &= \emap{c_1}{\sset{\permout{c_1}}},\,\emap{c_2}{\sset{\permout{c_2}}},\,% \emap{c_3}{\sset{\permout{c_3},\permout{c_1}}},\,
   \emap{c_4}{\sset{\permout{c_4},\permin{c_1}}}      
   \end{align*}
we have the following satisfactions:
\begin{align}\label{eq:63}
  \env, \con{\sset{\permin{c_1},\permin{c_2},\permout{c_4}}}{\ptit{Prg}}\paralS 
    \con{\sset{\permout{c_1}}}{\pioutA{c_1}{2}}\paralS\con{\sset{\permout{c_2}}}{\pioutA{c_2}{5}} & \models \fcons{\fstate{c_1}{2,4}}{\fstate{c_4}{}} \\
    \label{eq:64}
\env, \con{\sset{\permin{c_1},\permin{c_2},\permout{c_4},\permout{c_1},\permout{c_2}}}{\ptit{Prg}\paralS\pioutA{c_1}{2}\paralS\pioutA{c_2}{5}} & \models \fcons{\fstate{c_1}{2,4}}{\fstate{c_4}{}} \\
\label{eq:65}
  \env, \con{\sset{\permout{c_1}}}{\pioutA{c_1}{(5\!-\!3,3\!+\!1)}} \paralS \con{\sset{\permout{c_4},\permin{c_1}}}{\pioutA{c_4}} & \models \fcons{\fstate{c_1}{2,4}}{\fstate{c_4}{}} 
\end{align}
whereby, according to the definition in \figref{fig:assertion-satisfaction}, satisfaction is only concerned with the existence of a reduction path to a stable system, where the outputs corresponding to data assertions are required to own the permissions expected by permission environment \env;  the reduction path \eqref{eq:63} and \eqref{eq:64} has already been discussed in \exref{ex:running-confined}.  Satisfaction for \eqref{eq:65} is more straightforward to determine as the system is stable.  On the other hand, for \env defined above, the following do not satisfy their respective assertions:
\begin{align}
  \label{eq:66}
  \env, \con{\sset{\permin{c_1},\permin{c_2},\permout{c_4}}}{\ptit{Prg}}\paralS 
    \con{\emptyset}{\pioutA{c_1}{2}}\paralS\con{\sset{\permout{c_2}}}{\pioutA{c_2}{5}} & \not\models \fcons{\fstate{c_1}{2,4}}{\fstate{c_4}{}} \\
    \label{eq:67}
   \env, \con{\sset{\permin{c_2},\permout{c_4}}}{\ptit{Prg}}\paralS 
    \con{\sset{\permout{c_1}}}{\pioutA{c_1}{2}}\paralS\con{\sset{\permout{c_2}}}{\pioutA{c_2}{5}} & \not\models \fcons{\fstate{c_1}{2,4}}{\fstate{c_4}{}} \\
    \label{eq:68}
    \env, \con{\sset{\permout{c_1}}}{\sset{\pioutA{c_1}}{(5\!-\!3,3\!+\!1)}} \paralS \con{\sset{\permout{c_4}}}{\pioutA{c_4}} & \not\models \fcons{\fstate{c_1}{2,4}}{\fstate{c_4}{}}\\
    \label{eq:69}
    \env, \con{\sset{\permout{c_1}}}{\pioutA{c_1}{(2,3)}} \paralS \con{\sset{\permout{c_4},\permin{c_1}}}{\pioutA{c_4}} & \not\models \fcons{\fstate{c_1}{2,4}}{\fstate{c_4}{}}  
\end{align}
The first two systems fail to satisfy the assertion because they cannot evaluate to safely-stable systems due to lack of permission. In particular, in \eqref{eq:66} process \pioutA{c_1}{2} does not own permission \permout{c_1} required for communication (\cf \rtit{cCom} in \figref{fig:confined-lang})  whereas in \eqref{eq:67} \ptit{Prg} is missing permission \permin{c_1} .  The third system, \eqref{eq:68},  fails to satisfy the assertion although it is already a safely-stable system, as it violates the permission obligations for outputs imposed by \env \ie output \pioutA{c_4} does not own permission \permin{c_1}.  Finally, the fourth system  \eqref{eq:69} fails to satisfy the assertion due to a mismatch between the data expected by the assertions and the data communicated by the outputs.  We also have the following satisfactions involving the other assertion forms of the logic:
\begin{align}\label{eq:70}
     (\env, \emap{c_3}{\sset{\permout{c_3}}} ), \con{\sset{\permin{c_1},\permin{c_2},\permout{c_4}, \permin{c_3}}}{\ptit{Flt}\paral\ptit{Dbl}}\paralS 
    \con{\sset{\permout{c_1}}}{\pioutA{c_1}{10}}\paralS\con{\sset{\permout{c_2}}}{\pioutA{c_2}{5}} & \models \fcons{\fstate{c_4}{10}}{\fblk{c_3}} \\
    \label{eq:71}
 \env, \con{\sset{\permin{c_1},\permin{c_2},\permout{c_4}}}{\ptit{Prg}}\paralS 
    \con{\sset{\permout{c_1}}}{\pioutA{c_1}{10}}\paralS\con{\sset{\permout{c_2}}}{\pioutA{c_2}{5}} & \models \fcons{\fstate{c_4}{10}}{\fany} \\  
    \label{eq:72}
  \env, \con{\sset{\permin{c_1},\permin{c_2},\permout{c_4}}}{\ptit{Prg}}\paralS 
    \con{\sset{\permout{c_1}}}{\pioutA{c_1}{10}}\paralS\con{\sset{\permout{c_2}}}{\pioutA{c_2}{5}} & \models \fany \\
    \label{eq:73}
    \env, \restB{c_3}{\con{\sset{\permin{c_1}}}{\piin{c_1}{}{\pioutA{c_3}{}}} \paral \con{\sset{\permin{c_2}}}{\piin{c_2}{}{\piin{c_3}{}{\inert}}} } & \models \fcons{\fblk{c_1}}{\fblk{c_2}} 
\end{align}
Satisfaction \eqref{eq:70} requires us to extend \env to account for the permission invariants of channel $c_3$, which is not scoped. We also need the input permission \permin{c_3} as dictated by the satisfaction of the sub-assertion  $\fblk{c_3}$  in \figref{fig:assertion-satisfaction}.  In  the subsequent satisfaction, \eqref{eq:71}, \fany\ is used to describe the input-blocked  process on a scoped channel $c_3$  that is scoped in \ptit{Prg} (recall that $\ptit{Prg} \deftri \restB{c_3}{\ptit{Fltr}\paral\ptit{Dbl}}$).   Note also how, in \eqref{eq:73}, since $c_3 \not\in \dom{\env} % \cup \nm{\fcons{\fblk{c_1}}{\fblk{c_2}}}
$ (\cf satisfaction for $\fcons{\fV_1}{\fV_2}$ in \figref{fig:assertion-satisfaction}), the scoping of $c_3$  does not prohibit us from splitting the system to determine the satisfaction of the subcomponents of the formula \ie $\fblk{c_1}$ and $\fblk{c_2}$.
\end{exa}

The requirement that satisfaction is limited to \emph{safe} evaluations in \figref{fig:assertion-satisfaction} intentionally makes certain formulas unsatisfiable.   Alternative definitions could have been possible whereby  we allow systems to temporarily satisfy a formula but then fail to satisfy it as computation progresses, meaning that the eventual stable system would not necessarily satisfy the formula.  However, as discussed briefly in the Introduction, in our eventual framework of \secref{sec:proof-system}, systems will have the dual role of acting both  as state as well as state-transformers.  We therefore opted for the simpler interpretation that is conceptually easier to work with  and chose a satisfaction interpretation  that can be easily reasoned about in terms of the eventual stable systems reached.

\begin{exa}[Unsatisfiability]\label{ex:unsatisfiability} 
  Formulas such as the ones below are unsatisfiable under the interpretation given in \figref{fig:assertion-satisfaction}.  
  \begin{align*} 
    &\rm\fcons{\fstate{c}{5}}{\fstate{c}{6}}
   && \rm \fcons{\fstate{c}{1}}{\fblk{c}}% \quad\text{and}\quad 
   % && \rm \fconsBr{\fstate{c}{1}}{\fadj{\fstate{c}{1}}{\fstate{d}{2}}}
  \end{align*}
In the first case, \ie $\fcons{\fstate{c}{5}}{\fstate{c}{6}}$,  sub-systems respectively satisfying  \fstate{c}{5} and \fstate{c}{6} 
can never be merged into a well-resourced system as they must conflict on the permission $\permout{c}$ irrespective of the narrative chosen, due to the environment constraints set out in  \defref{def:permission-environment}.    This is desirable because any system satisfying the first formula will create  a race condition for any inputs on the channel $c$.

In later case, \ie \fcons{\fstate{c}{1}}{\fblk{c}}, sub-systems satisfying the sub-formulas of the separating conjunction  become \emph{unstable} once they are composed  in parallel  violating their respective sub-formula satisfaction.  Hence any such satisfying system would violate the evaluation condition imposed on the satisfaction of the conjunct formula $\fcons{\fV_1}{\fV_2}$ in \figref{fig:assertion-satisfaction}.  In fact, any sub-system $S_1$ satisfying $\fstate{c}{1}$ must evaluate to a stable system of the form \conC{\pioutA{c}{e}} where $e \evaluate 1$.  Similarly any sub-system $S_2$ satisfying $\fblk{c}$ must evaluate to a stable system that is structurally equivalent to $\rest{\lst{d}}{\conCC{\piin{c}{x}{P}}}$ (where $c\not\in\lst{d}$).  This means that, by the semantics of \secref{sec:conf-value-pass}, $\conC{\pioutA{c}{e}} \paral \rest{\lst{d}}{\conCC{\piin{c}{x}{P}}}$ is not stable, even if it is well-resourced (\ie $\eV \cap \eVV = \emptyset$).  Our satisfaction definition $\fcons{\fV_1}{\fV_2}$ rules out this possibility by first requiring the composite system evaluates to a stable system before splitting.  There are two reasons for this stricter interpretation.  First, once the reduction happens leading to an evaluation to some other stable state  $S_3$
\begin{displaymath}
\conC{\pioutA{c}{e}} \paral \rest{\lst{d}}{\conCC{\piin{c}{x}{P}}} \quad \reduc\quad \rest{\lst{d}}{\con{\eVV\cup\eV}{P\subC{1}{x}}}\;\evaluate S_3
\end{displaymath}
it may be the case that $S_3$ does not satisfy \fcons{\fstate{c}{1}}{\fblk{c}} anymore.  Second, and perhaps more importantly, the above reduction can potentially trigger permission-violating or non-terminating behaviour in \rest{\lst{d}}{\con{\eVV\cup\eV}{P\subC{1}{x}}}.  For instance, process $P$ may be of the form $\pioutA{d}{1} \paral \pioutA{d}{2} \paral \piin{d}{y}{\pioutA{c}{(x+y)}}$ \ie it has two competing outputs on channel $d$.  This implies that, whereas \rest{\lst{d}}{\conCC{\piin{c}{x}{P}}} is safely-stable, its continuation is permission-violating, irrespective of the permissions held at that point, because it can hold at most one permission to output on channel $d$.    
\end{exa}

\medskip

 Since structural equivalence is central to \defref{fig:assertion-satisfaction} ($\evaluate$ in \defref{def:safe-stability-eval} incorporates it), satisfaction abstracts over structurally equivalent systems, which allows us to work up-to structural equivalence when reasoning about systems.   Moreover, we can also reason about formula satisfaction from existing system-formula satisfaction and systems that reduce (converge) to them in zero or more steps. 

\begin{prop}[Satisfaction and Evaluation]\label{cor:satisfaction-evaluation}\rm
 $\env, S \models \fV$  implies $\exists T. \ S \evaluate T$ and $\env,T\models \fV$  
\end{prop}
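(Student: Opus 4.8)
The plan is to prove this by unfolding the definitions of satisfaction (Figure~\ref{fig:assertion-satisfaction}) and of evaluation (Definition~\ref{def:safe-stability-eval}), and then proceeding by a case analysis on the shape of the formula $\fV$ --- equivalently, on which clause of Figure~\ref{fig:assertion-satisfaction} witnesses $\env,S\models\fV$. The first observation I would make is that every clause defining $\env,S\models\fV$ has the same uniform form: \emph{there is a system $U$ of a prescribed shape with $S\evaluate U$, together with side conditions that mention only $U$ and $\env$, never $S$} --- for instance $\env(c)\subseteq\eV$ in the $\fstate{c}{\lst e}$ clause, the conditions $c\not\in\lst d$ and $c\in\dom{\env}$ in the $\fblk{c}$ clause, and $\lst d\not\in\dom{\env}$ together with the recursive obligations $\env,S_1\models\fV_1$ and $\env,S_2\models\fV_2$ in the $\fcons{\fV_1}{\fV_2}$ clause (where $S_1$, $S_2$, $\lst d$ are exactly the components of the witnessing shape $U=\restB{\lst d}{S_1\paral S_2}$).

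The key step is then to unfold $S\evaluate U$ via Definition~\ref{def:safe-stability-eval}: this furnishes a safely-stable system $T$, i.e.\ $T\checkmark$, with $S\reduc^{\ast}T$ and $T\steq U$. I claim this $T$ is the witness demanded by the statement. On the one hand $S\evaluate T$ holds, taking $T$ itself as the intermediate stable system. On the other hand, since $T\checkmark$ we have $T\reduc^{\ast}T$ in zero steps, so $T\evaluate U$ (because $T\steq U$ and $\evaluate$ is already closed under $\steq$), and the side conditions recorded above still hold since they do not refer to $S$; reading off the relevant clause of Figure~\ref{fig:assertion-satisfaction} for the system $T$ then yields $\env,T\models\fV$. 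To make this concrete I would go clause by clause: for $\femp$ take $U=\con{\emptyset}{\inert}$; for $\fany$ take $U$ the stable system supplied by the premise; for $\fstate{c}{\lst e}$ take $U$ the stable output $\conC{\pioutA{c}{\lst{e'}}}$ and keep $\env(c)\subseteq\eV$; for $\fblk{c}$ take $U=\rest{\lst d}{\conC{\piin{c}{\lst x}{P}}}$ and keep $c\not\in\lst d$, $c\in\dom{\env}$; and for $\fcons{\fV_1}{\fV_2}$ take $U=\restB{\lst d}{S_1\paral S_2}$ and keep $\lst d\not\in\dom{\env}$ together with the two sub-satisfactions verbatim. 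In every case the safely-stable $T$ with $T\steq U$ discharges both conjuncts of the claim.

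I do not expect any genuine obstacle. The only points needing a little care are that in the $\fcons$ case the obligations $\env,S_i\models\fV_i$ are stated about the components $S_i$ of the \emph{witness} shape $U$, not about $S$, so they transfer unchanged and no induction hypothesis is in fact required; and that the discrepancy between the reached system $T$ and the literally-shaped $U$ is harmless precisely because $\evaluate$ is defined up to $\steq$. In particular none of the heavier machinery of Sections~\ref{sec:system-determinism}--\ref{sec:process-determinism-1} (determinism, partial confluence) is needed --- the proposition is essentially a bookkeeping consequence of the common shape of the satisfaction clauses.
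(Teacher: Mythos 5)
Your argument is correct: each satisfaction clause indeed packages its content as ``$S\evaluate U$ for some $U$ of a fixed shape plus side conditions not mentioning $S$'', and the safely-stable witness $T$ of that evaluation satisfies both $S\evaluate T$ and $T\evaluate U$ (since $T\checkmark$, $T\reduc^{0}T$ and $T\steq U$), so the clause re-applies to $T$ verbatim. The paper states this proposition without proof, treating it as an immediate bookkeeping consequence of \defref{def:safe-stability-eval} and \figref{fig:assertion-satisfaction}, and your clause-by-clause unfolding is precisely that intended routine argument.
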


\begin{prop}[Structural Eq. and Satisfaction]\rm\label{lemma:proc-struct-eq}
  \begin{math}
    \env,S\sat\fV \text{ and } S\steq T\text{ implies } \env,T\sat\fV
  \end{math}
\end{prop}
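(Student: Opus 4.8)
The plan is to reduce the whole statement to a single fact about evaluation, namely that the judgement $\evaluate$ of \defref{def:safe-stability-eval} is insensitive to structural equivalence of its source:
\begin{displaymath}
  S\evaluate U \ \text{ and }\ S\steq T \quad\Longrightarrow\quad T\evaluate U ;
\end{displaymath}
call this implication $(\dagger)$. Granting $(\dagger)$, the proposition follows by a short case analysis on the head constructor of $\fV$, because every clause of \figref{fig:assertion-satisfaction} has the shape ``$\env,S\sat\fV$ iff $S\evaluate W$ and (side conditions not mentioning the system)'' for a pattern $W$ independent of $S$ (in the conjunction case, $W=\restB{\lst d}{S_1\paral S_2}$ together with a choice of splitting systems $S_1,S_2$). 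Note that no induction hypothesis on the subformulas is needed: in the clause for $\fcons{\fV_1}{\fV_2}$ the sub-satisfactions $\env,S_1\sat\fV_1$ and $\env,S_2\sat\fV_2$ are statements about the witness systems $S_1,S_2$, which are carried over unchanged.

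First I would establish $(\dagger)$ by unfolding $S\evaluate U \ \deftxt\ \exists U'.\ S\reduc^\ast U'$, $U'\checkmark$, $U\steq U'$, using two auxiliary observations. (i) $\reduc^\ast$ is closed under $\steq$ on its source: if $T\steq S\reduc V$ then $T\reduc V$ by \rtit{cStr}, and this propagates along any positive-length sequence $S\reduc V\reduc^\ast U'$, giving $T\reduc^\ast U'$. (ii) Safe-stability is $\steq$-invariant: $\reducNot$ is preserved because a reduction of $T$ would, through \rtit{cStr}, induce a reduction of $S$; and $\errNot$ is preserved because $T\err$ together with $T\steq S$ yields $S\err$ by \rtit{eStr}. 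Now fix a witness $U'$ as above. If the witnessing sequence $S\reduc^\ast U'$ has positive length, (i) gives $T\reduc^\ast U'$, so $(\dagger)$ holds with the same $U'$. The delicate point is the length-zero case, where $S=U'$ and $T$ need not be syntactically equal to $U'$: here one observes $T\steq S=U'$, so $T\checkmark$ by (ii) and $U\steq U'\steq T$, whence the empty sequence $T\reduc^\ast T$ already witnesses $T\evaluate U$ --- the slack of the ``$U\steq U'$'' component in \defref{def:safe-stability-eval} absorbs the structural equivalence.

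Then I would finish the proposition itself: assume $\env,S\sat\fV$ and $S\steq T$, and read off the matching clause of \figref{fig:assertion-satisfaction}. For $\fV\in\{\femp,\fany,\fstate{c}{\lst e},\fblk{c}\}$ the hypothesis provides some $U$ of the prescribed shape with $S\evaluate U$ and with side conditions not mentioning the system, and $(\dagger)$ gives $T\evaluate U$, hence $\env,T\sat\fV$. For $\fV=\fcons{\fV_1}{\fV_2}$ the hypothesis provides names $\lst d\notin\dom{\env}$ and systems $S_1,S_2$ with $S\evaluate\restB{\lst d}{S_1\paral S_2}$, $\env,S_1\sat\fV_1$ and $\env,S_2\sat\fV_2$; $(\dagger)$ gives $T\evaluate\restB{\lst d}{S_1\paral S_2}$ with the very same $S_1,S_2$, so $\env,T\sat\fV$.

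The main obstacle, such as it is, is precisely the length-zero case of $(\dagger)$: the naive claim ``$S\steq T$ and $S\reduc^\ast U'$ imply $T\reduc^\ast U'$'' is \emph{false} when the reduction sequence is empty (and $\steq$ is nontrivial), so one must exploit that $\evaluate$ already quotients its \emph{target} by $\steq$ rather than trying to push the equivalence along the reduction path. Everything else is a direct unfolding of the definitions together with the structural rules \rtit{cStr} and \rtit{eStr}; the same reasoning also underlies the companion Proposition~\ref{cor:satisfaction-evaluation}.
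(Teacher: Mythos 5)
Your proof is correct. The paper actually states this proposition without proof (it is treated as an immediate consequence of the fact that $\evaluate$ is built on $\reduc^\ast$, \rtit{cStr} and \rtit{eStr}, and that every satisfaction clause is phrased as ``$S\evaluate W$ plus system-independent side conditions''), and your argument is precisely the routine unfolding the authors are implicitly relying on; your explicit treatment of the length-zero case of $(\dagger)$ --- using that $\checkmark$ is $\steq$-invariant via \rtit{cStr} and \rtit{eStr}, and that the target of $\evaluate$ is already quotiented by $\steq$ --- is the one point where care is genuinely needed, and you handle it correctly.
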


\begin{prop}[Satisfaction and Convergence]\label{cor:satisfaction-convergence}\rm
 $\env, S \models \fV$ and $T \reduc^\ast S$ implies $\env,T\models \fV$  
\end{prop}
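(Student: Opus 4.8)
The plan is to reduce the whole statement to one elementary observation about the evaluation judgement: if $T\reduc^\ast S$ and $S\evaluate R$ for some system $R$, then $T\evaluate R$. This is immediate from \defref{def:safe-stability-eval}. Unfolding $S\evaluate R$ yields an $R'$ with $S\reduc^\ast R'$, $R'\checkmark$ and $R'\steq R$; composing the reduction sequences $T\reduc^\ast S$ and $S\reduc^\ast R'$ by transitivity of $\reduc^\ast$ gives $T\reduc^\ast R'$, so the very same witness $R'$ certifies $T\evaluate R$. As throughout \secref{sec:logic}, the systems involved are assumed well-resourced, so that $\env,T\models\fV$ is meaningful; this plays no further role in the argument.

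Granted this observation, the proposition follows by case analysis on the shape of $\fV$ (equivalently, by induction on $\fV$), since every clause of \figref{fig:assertion-satisfaction} has the form ``$S\evaluate R$ and $\Phi$'', where the side condition $\Phi$ refers only to the witness $R$ --- and, for $\fcons{\fV_1}{\fV_2}$, to the satisfactions $\env,S_1\models\fV_1$ and $\env,S_2\models\fV_2$ of the sub-systems arising from decomposing $R$ as $\restB{\lst{d}}{S_1\paral S_2}$ --- but never to $S$ itself. Thus from $\env,S\models\fV$ I would extract the witness $R$ together with its accompanying data (e.g.\ $\conC{\pioutA{c}{\lst{e'}}}$ with $\lst{e}\evaluate\lst{v}$, $\lst{e'}\evaluate\lst{v}$ and $\env(c)\subseteq\eV$ for $\fstate{c}{\lst{e}}$; $\rest{\lst{d}}{\conC{\piin{c}{\lst{x}}{P}}}$ with $c\notin\lst{d}$ and $c\in\dom{\env}$ for $\fblk{c}$; the decomposition and the two sub-satisfactions for $\fcons{\fV_1}{\fV_2}$; $\con{\emptyset}{\inert}$ for $\femp$; any $R$ for $\fany$), replace $S\evaluate R$ by $T\evaluate R$ via the observation above, and leave $\Phi$ untouched. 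This delivers $\env,T\models\fV$ with the same witness.

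I do not expect a genuine obstacle here: the content of the proposition is essentially transitivity of $\reduc^\ast$ routed through the definition of $\evaluate$, and it stands alongside Proposition~\ref{lemma:proc-struct-eq} and Proposition~\ref{cor:satisfaction-evaluation} as a basic closure property of satisfaction. The only point requiring a moment's care is the $\fcons{\fV_1}{\fV_2}$ case: the sub-satisfactions $\env,S_i\models\fV_i$ should be reused verbatim rather than re-derived, which is legitimate precisely because the split is carried out on the shared witness $R$ and not on $S$ or $T$.
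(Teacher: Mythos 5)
Your proof is correct and is exactly the argument the paper leaves implicit (Proposition~\ref{cor:satisfaction-convergence} is stated without proof): prefixing $T\reduc^\ast S$ onto the witness reduction sequence in \defref{def:safe-stability-eval} turns $S\evaluate R$ into $T\evaluate R$, and every clause of \figref{fig:assertion-satisfaction} constrains only the witness, so the satisfaction transfers verbatim. Your remark about reusing the sub-satisfactions in the $\fcons{\fV_1}{\fV_2}$ case rather than re-deriving them is the right way to handle the only clause that could cause hesitation.
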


% \begin{proof}
%   Note that $T \reduc^\ast S$  and $S \evaluate S'$ implies $T \evaluate S'$. 
% \end{proof}

\medskip
We overload $\models$ to denote semantic implication amongst formulas in standard fashion.  We then are able to prove certain properties about our logic, stated in Lemma~\ref{lem:formula-equiv}.    % The first three statements in Lemma~\ref{lem:formula-equiv} are sanity checks whereas the fourth  describes message asynchrony for adjuncts.

\begin{defi}[Semantic Implication]\label{def:logic-impl}\rm 
  \begin{math}
    \fV \models \fVV \;\deftxt\quad   \env, S \models \fV\text{ implies } \env, S \models\fVV
  \end{math}
\end{defi}

\begin{lem}[Formula equivalence] \rm\label{lem:formula-equiv} The following bidirectional implications hold:
  \begin{enumerate}[(1)]
  \item $\fcons{\femp}{\fV}   \quad  \modeledby \models\quad  \fV  $ 
  \item $  \fconsBr{\fV_1}{\fcons{\fV_2}{\fV_3}}  \quad  \modeledby \models\quad  \fconsBl{\fcons{\fV_1}{\fV_2}}{\fV_3}$ 
  \item $ \fcons{\fV}{\fVV}  \quad  \modeledby \models\quad \fcons{\fVV}{\fV}\qquad$ 
  % \item $\fadjBr{\fVVVV_1}{\fadj{\fVVVV_2}{\fV}} \quad \modeledby\models \quad \fadjBl{\fcons{\fVVVV_1}{\fVVVV_2}}{\fV}$
  \end{enumerate}
\end{lem}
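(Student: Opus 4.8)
The three equivalences are precisely the commutative--monoid laws for the interpretation of $\paral$ with unit $\con{\emptyset}{\inert}$, so for each bidirectional implication the plan is to unfold the clause for $\fcons{\cdot}{\cdot}$ from \figref{fig:assertion-satisfaction}, rearrange the witnessing safely--stable system using only system structural equivalence ($\rtit{scCom}$, $\rtit{scAss}$, $\rtit{scNil}$, $\rtit{scNew}$, $\rtit{scSwp}$, $\rtit{scExt}$, plus $\alpha$-renaming of scoped names), and re-fold into the target clause. Three ``closure'' facts will be used throughout: Proposition~\ref{lemma:proc-struct-eq} (satisfaction is invariant under $\steq$), Proposition~\ref{cor:satisfaction-convergence} (satisfaction is preserved by backward reduction), and Proposition~\ref{cor:satisfaction-evaluation} (a satisfying system may be replaced by a safely--stable one satisfying the same formula). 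I also isolate two micro-facts. $(\star)$: if $R\checkmark$ and $\env,R\models\femp$ then $R\steq\con{\emptyset}{\inert}$ --- because $R\checkmark$ gives $R\evaluate R$, so Theorem~\ref{thm:eval-eq-upto-permissions} forces $|R|\steq\inert$, and then Proposition~\ref{prop:safe-stab-vs-system-structure} applied to $R\checkmark$, together with a routine syntactic invariant ($\steq$ cannot erase an output- or input-prefix), forces the output/input index sets to be empty, whence $R\steq\restB{\lst{d}}{\con{\emptyset}{\inert}}\steq\con{\emptyset}{\inert}$. $(\dagger)$: if $\restB{\lst{d}}{S'\paral S''}$ is safely stable then so are $S'$ and $S''$ (a reduction or error of a component lifts through $\rtit{cPar}$/$\rtit{cRes}$, resp.\ $\rtit{ePar}$/$\rtit{eRes}$), and, being well-formed, $S'\perp S''$; moreover, when $S'\checkmark$, the only witness to $S'\evaluate T$ is $T\steq S'$, so $S'$ itself exhibits the shape dictated by whichever satisfaction clause it validates.

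Commutativity (3) is immediate: from $\env,S\models\fcons{\fV_1}{\fV_2}$, i.e.\ $S\evaluate\restB{\lst{d}}{S_1\paral S_2}$ with $\env,S_i\models\fV_i$, rule $\rtit{scCom}$ under the scope gives $\restB{\lst{d}}{S_1\paral S_2}\steq\restB{\lst{d}}{S_2\paral S_1}$, so the same data witnesses $\env,S\models\fcons{\fV_2}{\fV_1}$; the converse is symmetric. For the unit law (1), ``$\Leftarrow$'' is easy: given $\env,S\models\fV$, Proposition~\ref{cor:satisfaction-evaluation} yields $S\evaluate T$ with $\env,T\models\fV$; take the decomposition $T\steq\con{\emptyset}{\inert}\paral T$ with empty scope list (well-formed since $\perm{\con{\emptyset}{\inert}}=\emptyset$), $\env,\con{\emptyset}{\inert}\models\femp$ and $\env,T\models\fV$, giving $\env,S\models\fcons{\femp}{\fV}$. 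For ``$\Rightarrow$'', take the witness $S\evaluate\restB{\lst{d}}{S_1\paral S_2}$ with $\env,S_1\models\femp$, $\env,S_2\models\fV$; by $(\dagger)$ the components are $\checkmark$, and $(\star)$ gives $S_1\steq\con{\emptyset}{\inert}$, so $\restB{\lst{d}}{S_1\paral S_2}\steq\restB{\lst{d}}{S_2}$ via $\rtit{scNil}$. It remains to show $\env,\restB{\lst{d}}{S_2}\models\fV$: since $S_2\checkmark$ it has the shape prescribed by its clause, and because $\lst{d}\cap\dom{\env}=\emptyset$ while every clause only constrains channels of $\dom{\env}$, the names $\lst{d}$ are absorbed into any existing scope (or discharged via $\restB{c}{S}\steq S$ for $c\notin\fn S$, using that $\restB{\lst{d}}{S_2}$ is again $\checkmark$ so no $\rtit{cTgh}$ redex on a $\lst{d}$-name survives), leaving a system of the right shape; then $S\reduc^\ast R'\steq\restB{\lst{d}}{S_2}$ and Propositions~\ref{lemma:proc-struct-eq} and~\ref{cor:satisfaction-convergence} give $\env,S\models\fV$.

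For associativity (2) I prove $\fconsBr{\fV_1}{\fcons{\fV_2}{\fV_3}}\models\fconsBl{\fcons{\fV_1}{\fV_2}}{\fV_3}$; the reverse is entirely symmetric, or follows from this together with (3) and the congruence of $\models$ under $\fcons{\cdot}{\cdot}$ (trivial from the clause). Unfolding, $S\evaluate\restB{\lst{d}}{S_1\paral S_{23}}$ with $\env,S_1\models\fV_1$ and $\env,S_{23}\models\fcons{\fV_2}{\fV_3}$; by $(\dagger)$ the components are $\checkmark$, and $S_{23}$ witnesses its own formula, so $S_{23}\steq\restB{\lst{d'}}{S_2\paral S_3}$ ($\alpha$-rename $\lst{d'}$ fresh for $S_1$, $\lst{d}$, $\dom{\env}$), $\env,S_2\models\fV_2$, $\env,S_3\models\fV_3$. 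Using $\rtit{scExt}$ (as $\lst{d'}\cap\fn{S_1}=\emptyset$), $\rtit{scSwp}$, $\rtit{scAss}$ I obtain $\restB{\lst{d}}{S_1\paral S_{23}}\steq\restB{\lst{d}\lst{d'}}{(S_1\paral S_2)\paral S_3}$, still $\checkmark$. Hence $S_1\paral S_2$ and $(S_1\paral S_2)\paral S_3$ are each $\checkmark$ (reductions/errors lift as in $(\dagger)$), and well-resourcing of the whole --- which holds by Lemma~\ref{lem:resourcing} since the system is reached from the well-formed $S$ by $\reduc^\ast$ --- yields $S_1\perp S_2$ and $(S_1\paral S_2)\perp S_3$. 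Therefore $\env,S_1\paral S_2\models\fcons{\fV_1}{\fV_2}$ (witnessed by itself), then $\env,(S_1\paral S_2)\paral S_3\models\fconsBl{\fcons{\fV_1}{\fV_2}}{\fV_3}$ (witnessed by itself), then $\env,\restB{\lst{d}\lst{d'}}{(S_1\paral S_2)\paral S_3}\models\fconsBl{\fcons{\fV_1}{\fV_2}}{\fV_3}$ (scope $\lst{d}\lst{d'}\cap\dom{\env}=\emptyset$), and finally, via $\steq$ back to $\restB{\lst{d}}{S_1\paral S_{23}}$ and Propositions~\ref{lemma:proc-struct-eq} and~\ref{cor:satisfaction-convergence} (as $S\reduc^\ast$ that system), $\env,S\models\fconsBl{\fcons{\fV_1}{\fV_2}}{\fV_3}$.

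The main obstacle is the bookkeeping around scoped names under $\fcons{\cdot}{\cdot}$: one must (i) $\alpha$-rename the existentially-quantified bound names of a sub-decomposition so that $\rtit{scExt}$ applies; (ii) argue that names fresh for $\dom{\env}$ never clash with the channels or permissions actually named by the subformula clauses --- in the $\fstate{c}{\lst{e}}$ case this additionally needs that a surviving $\rtit{cTgh}$ redex would contradict safe stability, so the scoped names carry no permission of the single output and may simply be dropped; and (iii) keep straight the direction of the ``lifting'' arguments, in particular the non-obvious point in $(\dagger)$ that the parallel composition of two separate safely--stable subsystems cut out of a safely--stable system is again safely stable --- this fails for arbitrary separate stable systems, where a $\rtit{cCom}$ could fire across the cut, but not here, since the composite sits up to $\steq$ inside the safely--stable witness. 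None of this is deep, but it is where the routine-looking equivalences actually do work.
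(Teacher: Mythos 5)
The paper states Lemma~\ref{lem:formula-equiv} without giving a proof, so there is no official argument to compare against; your proposal supplies the missing details and, as far as I can check against the definitions in \figref{fig:assertion-satisfaction}, it is correct. The two points where these ``routine'' equivalences actually need care are exactly the ones you isolate: first, that the components $S_1,S_2$ of a safely-stable witness $\restB{\lst{d}}{S_1\paral S_2}$ are themselves safely stable (so each is its own evaluation witness), and second, that the regrouped composite $(S_1\paral S_2)\paral S_3$ is again safely stable \emph{not} because parallel composition of safely-stable systems is stable in general --- it is not, a \rtit{cCom} could fire across the cut --- but because it sits, up to $\steq$, inside the original safely-stable witness. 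Your fact $(\star)$ can be obtained more cheaply than via \thmref{thm:eval-eq-upto-permissions}: since $R\reducNot$, the only $T'$ with $R\reduc^\ast T'$ is $R$ itself, so $R\evaluate\con{\emptyset}{\inert}$ forces $R\steq\con{\emptyset}{\inert}$ directly. Likewise, the step in the $\Rightarrow$ direction of (1) where you argue that the residual scope $\restB{\lst{d}}{S_2}$ still satisfies $\fV$ when $\lst{d}\cap\dom{\env}=\emptyset$ is precisely Lemma~\ref{lem:sat-rest} from the appendix; citing it would replace your somewhat informal ``absorb or discharge the scoped names'' discussion (which has to contend with the fact that $\fn{S}$ may mention $\lst{d}$ through permission sets, which is why the paper discharges such scopes with \rtit{cTgh} rather than pure structural equivalence). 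With those two substitutions the proof is tight; the remaining implicit appeals ($\alpha$-renaming of bound names, $\steq$ being a congruence, preservation of well-resourcing under $\steq$) are standard and consistent with the paper's conventions.
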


\subsection{Composing satisfactions}
\label{sec:comp-satisf}
Recall, from \exref{ex:unsatisfiability}, that the satisfaction of the sub-assertions  $\fV_1$ and $\fV_2$ does not necessarily imply the satisfaction of the composite assertion,  \fcons{\fV_1}{\fV_2}.  Nevertheless  it is possible to determine when it is safe to infer this by analysing the structure of the sub-formulas.  This analysis is formalised as the formula separation judgement, denoted as $\sepproc{}{\fV_1}{\fV_2}$ and defined in \defref{def:fromula-sep}.
This judgement relies on the functions $\edg{}{}$ and $\trg{}{}$ to  conservatively approximate matching outputs and inputs across sub-systems satisfying the formulas $\fV_1$, $\fV_2$ and, by prohibiting such matching channel operations, it ensures that no new reductions are introduced when sub-systems are composed in parallel. As a result, sub-systems that  satisfy sub-formulas in a separating conjunction formulas must still satisfy the conjunction formula once composed, as stated in Lemma~\ref{lem:merge-assert}.   This formula separation judgement is used later on by the proof system in Section~\ref{sec:proof-system} to circumvent the construction of problematic formulas such as those discussed  in \exref{ex:unsatisfiability}.

\begin{defi}[Formula  Edges, Triggers and Separation]\label{def:fromula-sep}\rm
  \begin{align*}
    \edg{}{\fV} & \deftxt
    \begin{cases}
      \emptyset &\text{if } \fV = \femp \text{ or } \fV=\fblk{c} \\
      \eset{\permout{c}} &\text{if } \fV = \fstate{c}{\lst{e}}\\
      \edg{}{\fV_1} \cup \edg{}{\fV_2} &\text{if } \fV=\fcons{\fV_1}{\fV_2}\\
      \text{undefined} & \text{otherwise}
    \end{cases}\\
    \trg{}{\fV} & \deftxt
    \begin{cases}
      \emptyset &\text{if }  \fV = \femp \text{ or } \fV=\fstate{c}{\lst{e}} \\
      \eset{\permout{c}} &\text{if } \fV = \fblk{c}\\
      \trg{}{\fV_1} \cup \trg{}{\fV_2} &\text{if } \fV=\fcons{\fV_1}{\fV_2}\\
      \text{undefined} & \text{otherwise}
    \end{cases}
  \end{align*}
  \begin{align*}
    \sepproc{}{\fV}{\fVV}  & \deftxt \quad\edg{}{\fV}\cap\trg{}{\fVV} = \emptyset \quad\wedge\quad \edg{}{\fVV}\cap\trg{}{\fV} = \emptyset
  \end{align*}
\end{defi}

\begin{lem}[Merging Assertions]\label{lem:merge-assert}
  \begin{displaymath}
    \env,S \sat \fV\text{ and }\env,T \sat \fVV  \text{ and } S\perp T \text{ and } \sepproc{}{\fV}{\fVV} \quad\text{implies}\quad \env,\, S\paral T \satS \fcons{\fV}{\fVV}
  \end{displaymath}
\end{lem}

\proof See Appendix~\ref{sec:logic-app}.\qed

Note that, for a number of conjunctions, the sub-formulas are trivially separate making formula separation checks superfluous.  For instance, \femp is separate from any formula; also state formulas \fcons{\fVVVV_1}{\fVVVV_2} are 
trivially separate, \sepproc{}{\fVVVV_1}{\fVVVV_2} as stated in Proposition \ref{cor:sep-formulas}.  % This gives scope for defining lighter versions of \rtit{lPar} that require less side conditions as we shall discuss later.

\begin{prop}  \label{cor:sep-formulas} For any environment, \env, state formulas, $\fVVVV, \fVVV$ and formula $\fV$ we have:
\begin{enumerate}[\em(1)]
 \item \begin{math}
    \sepproc{}{\fVVVV}{\fVVV} 
  \end{math}
 \item \begin{math}
    \sepproc{}{\fV}{\femp}
  \end{math}
\end{enumerate}
\end{prop}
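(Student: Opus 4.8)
The plan is to read both parts directly off \defref{def:fromula-sep}, the only real content being a one-line structural induction showing that state formulas carry no triggers.

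First I would establish the auxiliary fact that for every state formula $\fVVVV\in\AFrm$ the set $\trg{}{\fVVVV}$ is defined and equal to $\emptyset$, and $\edg{}{\fVVVV}$ is defined. This is by structural induction on $\fVVVV$. The base cases $\fVVVV=\femp$ and $\fVVVV=\fstate{c}{\lst{e}}$ are immediate from \defref{def:fromula-sep} — crucially, a state formula never takes the shape $\fblk{c}$, which is the only clause giving $\trg{}{}$ a non-empty value. For $\fVVVV=\fcons{\fVVVV_1}{\fVVVV_2}$ we have $\trg{}{\fVVVV}=\trg{}{\fVVVV_1}\cup\trg{}{\fVVVV_2}$, which by the induction hypothesis equals $\emptyset\cup\emptyset=\emptyset$; likewise $\edg{}{\fVVVV}=\edg{}{\fVVVV_1}\cup\edg{}{\fVVVV_2}$ is defined.

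For part (1), given state formulas $\fVVVV,\fVVV$, the auxiliary fact yields $\trg{}{\fVVVV}=\trg{}{\fVVV}=\emptyset$, so both conjuncts in the definition of $\sepproc{}{\fVVVV}{\fVVV}$, namely $\edg{}{\fVVVV}\cap\trg{}{\fVVV}=\emptyset$ and $\edg{}{\fVVV}\cap\trg{}{\fVVVV}=\emptyset$, hold vacuously. For part (2), I would simply note that $\edg{}{\femp}=\trg{}{\femp}=\emptyset$ by the first clause of each definition, whence $\edg{}{\fV}\cap\trg{}{\femp}=\emptyset$ and $\edg{}{\femp}\cap\trg{}{\fV}=\emptyset$ for any $\fV$ on which $\edg{}{}$ and $\trg{}{}$ are defined, i.e.\ $\sepproc{}{\fV}{\femp}$.

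There is no genuine obstacle: the statement is essentially an unfolding of definitions. The one subtlety worth flagging is that $\edg{}{}$ and $\trg{}{}$ are partial — undefined on $\fany$, and hence on any formula mentioning it — so part (2) is to be read as holding for every $\fV$ in the common domain of $\edg{}{}$ and $\trg{}{}$; this costs nothing, since state formulas and $\femp$ itself always lie in that domain, which is precisely the setting in which \defref{def:fromula-sep} and Lemma~\ref{lem:merge-assert} are later applied.
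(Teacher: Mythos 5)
Your proof is correct and matches the paper's, which simply states the result is immediate from Definition~\ref{def:fromula-sep}; your structural induction showing state formulas have empty trigger sets is exactly the unfolding that one-liner leaves implicit. The remark about partiality of $\edg{}{}$ and $\trg{}{}$ on formulas containing $\fany$ is a fair caveat the paper glosses over, but it does not change the argument.
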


\proof Immediate from \ref{def:fromula-sep} \qed

\section{Proof System}
\label{sec:proof-system}

\renewcommand{\edgE}[1]{\ensuremath{\textbf{edg}(#1)}}
\renewcommand{\trgE}[1]{\ensuremath{\textbf{trg}(#1)}}
\renewcommand{\sepproc}[3]{{\ensuremath{ #2 \esep #3}}}

We complete our framework by developing a compositional 
 proof-system for the logic of \S\ref{sec:logic}, interpreted according 
 to the satisfaction of \figref{fig:assertion-satisfaction}.   Our sequents, inspired by Hoare triples, 
  have the format
 \begin{align*}
  \seqEB{\fV}{S}{\fVV}\,,
 \end{align*}
 where $S$ is a well-resourced system, $\fV$ and $\fVV$ are respectively the 
 pre-condition and post-condition, \env is a permission environment, and $\bV$ is a boolean expression defined  in \figref{fig:structural-reduc}, now serving as a  boolean formula 
 over our value domain.   The system, formulas and boolean condition in a sequent  are potentially open \ie that may have free variables.  Thus, the meaning of our sequents quantifies over all substitutions, $\sigma\in \Subs$ that make the boolean condition evaluate to true, and also over all systems $T\in\Sys$ which are separate from $S$ and which satisfy the precondition in the following way.
\begin{defi}[\rm\textit{Sequent satisfaction}]%
\label{defn:sequent-satisfaction}\rm  
  \begin{align*}
    \seqsemEB{\fV}{S}{\fVV} &\quad\deftxt\quad
     % \begin{cases}
       \forall \sigma, T. \ \bV\sigma\evaluate\boolT ,\;\; % \text{and}\;\;
       \env,T\sigma\models \fV\sigma,\;\; %\text{and}\;\; 
       T\sigma\perp S\sigma% \\
       \quad\text{implies}\quad % \text{Exists } 
       % \text{ where }
\env,\, (T \paral S)\sigma \models \fVV\sigma 
     % \end{cases}
  \end{align*}
\end{defi}\vspace{5 pt}

\noindent As in \cite{HennessyLin96}, our sequents % highlight the factoring out of
tease apart auxiliary reasoning about our value domain, since  determining the truth (or otherwise) of these  boolean
formulas is \emph{process-independent}.   Such disentangling also allows us to  make refined claims about derivations in our system.  For instance, if we limit value expressions to Presburger arithmetic, we know that our  boolean formula derivations exists and are decidable \cite{presburger29}.

We note that our sequents deal with \emph{total-correctness}.  Formula satisfaction, defined in \figref{fig:assertion-satisfaction}, centers around system evaluation, $S \evaluate T$, which \emph{existentially} quantifies over one sequence of system reductions.
% Through the use
%  of the system evaluation predicate in our definition  of \figref{fig:assertion-satisfaction} (Formula Satisfaction) , $S \evaluate T$, sequent satisfaction in \defref{defn:sequent-satisfaction}  \emph{existentially} quantifies over system behaviour.
 The strength of what may, at first, seem a rather weak behaviour assertion comes from the determinism properties  afforded by our model of confined processes. In fact, \thmref{thm:eval-eq-upto-permissions} (Evaluation Determinism) allows us to extend such behaviour assertions to universal system behaviour, up-to redundant permissions.  What we are ultimately interested in however is universal \emph{processes} behaviour.  This can then be retrieved  in immediate fashion through \defref{def:proc-sat} (Process Satisfaction), defined later in \secref{sec:process-sequents}, \thmref{thm:process-convergence} (Process Convergence), and ultimately, \thmref{thm:eval-implies-determinism} (Process Evaluation Determinism).

%  which may, at first, seem a rather weak behaviour assertion.  However we recall the 
%  confluence properties afforded by model of confined-process systems, \ie \thmref{thm:confluence}, which allows us to extend this to \emph{universal}  system behaviour, and ultimately, to universal \emph{process} behaviour, through  \thmref{thm:correspondence}.

\begin{small_display}{Sequent Rules} {fig:proof-system-1}
\begin{align*}
& \lefteqn{\textbf{Logical Rules} }\\[12pt]
& \inference[\rtit{lNil}]{}
    {\seqEB{\fV}{\conC{\inert}}{\fV}}\qquad 
 \inference[\rtit{lFls}]{  }
     {\seqE{\bfalse}{\fV}{\!S\!}{\fVV}}
    \qquad% \qquad
 \inference[\rtit{lBlk}]
    {\permin{c}\in\eV}
    {\seqEB{\femp}{\conC{\piin{c}{\lst{x}}{P}}}{\fblk{c}}}
    \\[12pt]
& \inference[\rtit{lOut}]
    {% \bV \models \beq{\lst{e_1}}{\lst{e_2}} \quad 
     \env(c)\subseteq\eV}
    {\seqEB{\femp}{\conC{\pioutA{c}{\lst{e}}}}{\fstate{c}{\lst{e}}}}
 % \inference[\rtit{lOutA}]
 %    {% \bV \models \beq{\lst{e_1}}{\lst{e_2}} \quad
 %      \env(c)\in\eV}
 %    {\seqEB{ \fadjj{c} {\lst{e}} {\fV}} {\conC{\pioutA{c}{\lst{e}}}}{\fV}}
 % &
\qquad \qquad
\inference[\rtit{lIn}] 
     {\permin{c}\in\eV & \seqEB{\fV} {\conC{P\subC{\lst{e}}{\lst{x}}} \paral S} {\fVV} }
     {\seqEB{\fcons{\fV}{\fstate{c}{\lst{e}}}} {\con{\eV\setminus\env(c)} {\piin{c}{\lst{x}}{P}} \paral S } {\fVV}}
\\[12pt]  
&\inference[\rtit{lIf}]
    {\seqE{\band{\bV_1}{\bV_2}}{\fV}{\conC{P}\paral S}{\fVV} 
      \\\seqE{\band{\bV_1}{\bnot{\bV_2}}}{\fV}{\conC{Q}\paral S }{\fVV}}
    {\seqE{\bV_1}{\fV}{\conC{\cmpC{\bV_2}{P}{Q}}\paral S}{\fVV}} 
\qquad\quad \inference[\rtit{lDef}]
    {\pCall{\pV}{\lst{x}}\deftri P \quad \seqEB{\fV}{\conC{\,P\subC{\lst{e}}{\lst{x}}\subC{\lst{c}}{\lst{d}}\,}\paral S}{\fVV}}
   {\seqEB{\fV}{\conC{\,\pCall{\pV}{\lst{e}}\subS{\lst{c}}{\lst{d}}\,}\paral S}{\fVV}}
\\[12pt]
& \inference[\rtit{lPar}]
    {\seqEB{\fV_1}{S}{\fcons{\fVV_1}{\fV_3}} & \sepprocE{\fV_2}{\fV_3}\\
    \seqEB{\fcons{\fV_2}{\fV_3}}{T}{\fVV_2}  &
    \sepprocE{\fVV_1}{\fVV_2} }
    {\seqEB{\fcons{\fV_1}{\fV_2}}
    	{S\paralS T}{\fcons{\fVV_1}{\fVV_2}}}
 \qquad \qquad 
  \inference[\rtit{lSpl}]
    {\seqEB{\fV}{\con{\eV}{P}\paralS\con{\eVV}{Q}\paral S}{\fVV} }
    {\seqEB{\fV}
    	{\con{\eV\uplus\eVV}{P\paralS Q}\paral S}{{\fVV}}}
\\[12pt]
&\;  
  \inference[\rtit{lRes}]
    {\seqEB{\fV}{S}{\fVV}}
    {\seqB{\env\!\setminus\!\lst{c}}{\fV}{\rest{\lst{c}}{S}}{\fVV\!\setminus\!\lst{c}}}
  \qquad\qquad
  \inference[\rtit{lLcl}]
    {\seqEB{\fV}{\rest{c}{\con{\eV\uplus\eset{\permin{c},\permout{c}}}{P}}\paral S}{\fVV}}
    {\seqEB{\fV}{\conC{\rest{c}{P}}\paral S}{\fVV}}
   \\[10pt]
&\lefteqn{\textbf{Structural Rules} }\\[12pt]
&\inference[\rtit{lInst}]{\seqEB{\fV}{S}{\fVV}}{\seqE{\bV\subC{e}{x}}{\fV\subC{e}{x}}{S\subC{e}{x}}{\fVV\subC{e}{x}}}
\quad\;
   \inference[\rtit{lSub}]{\bV \models x = e \quad \seqEB{\fV\subC{e}{x}}{S\subC{e}{x}}{\fVV\subC{e}{x}}}{ \seqEB{\fV}{S}{\fVV} }\\[12pt]
& 
 \inference[\rtit{lImp}]
     {\seqE{\bVV}{\fV_1}{T}{\fVV_1}\\ 
\bV\models\bVV\quad \fV\models\fV_1\quad
     S\steq T \quad \fVV_1\models\fVV }
     {\seqEB{\fV}{S}{\fVV}} 
\qquad \qquad
\inference[\rtit{lRen}]
{d\not\in\fn{\env,\fV,\fVV,S} \\ \seqEB{\fV}{S}{\fVV}}
{\seqB{\env\subC{d}{c}}{\fV\subC{d}{c}}{S\subC{d}{c}}{\fVV\subC{d}{c}}}
\\% [20pt]
 \end{align*}
\end{small_display}

\medskip

The proof system, defined by the rules in \figref{fig:proof-system-1},
 assumes the  derivation
 judgement $\bV_1\models\bV_2$ between two (possibly open) boolean formulas, with the expected property that
   \begin{align*}
     \forall \sigma:\Subs.\;  \bV_1\models\bV_2 \;\text{ and }\; \bV_1\sigma\evaluate\boolT\quad\text{implies}\quad \bV_2\sigma\evaluate\boolT
   \end{align*}
 Most of the logical rules are rather intuitive and their `naturality' is, in part, due to the strong substratum provided by process confinement, in terms of absence of races. 
We have four logical axioms where \rtit{lNil}, \rtit{lBlk} and \rtit{lOut} deal with stable systems.  More precisely,  \rtit{lNil} acts as a wire between the precondition and the postcondition,  \rtit{lFls}  trivialises proofs with an unsatisfiable boolean condition, \rtit{lBlk} generates input-blocked process assertions, and \rtit{lOut} generates data assertions.     

The rule \rtit{lIn}   is central to the proof system as it is the only rule that consumes part of the precondition.     Together with \rtit{lOut} and \rtit{lPar} they capture process communication in our proof system.  In particular, they observe the permission mutual-exclusion invariants dictated by the environment, whereby the side-condition  in \rtit{lOut}, \ie  $\env(c)\subseteq\eV$,  forces outputs to own the permissions guarded by the mutual exclusion through the side-condition $\env(c)\in\eV$, whereas the premise in \rtit{lIn} permit inputs to assume ownership of  these guarded permissions after communication, through the masking of these permissions in the conclusion,  \ie  $\eV\setminus\env(c)$.  The permission checking side-conditions in the axioms \rtit{lOut} and \rtit{lBlk} ensure that stable systems are safe; similarly, the permission checking side-condition in \rtit{lIn} ensures that evaluations are also safe - recall that any permission violation is propagated down to the eventual stable system by Lemma~\ref{lem:violation-pres}.

The system  parallel composition rule (\rtit{lPar}) is central to our proof system. It is the only rule that allows us to introduce a cut-middle formula in the hypotheses, $\fV_3$. The asymmetry in the hypotheses of this rule guarantees the existence of a reduction sequence across two independently verified sub-systems since the unidirectional cut disallows mutual dependencies across the premise sequents; this prevents deadlocks and ensures \emph{total} correctness. 
\rtit{lPar}  also carries two side-conditions, $\sepprocE{\fVV_1}{\fVV_2}$ and $\sepprocE{\fV_2}{\fV_3}$, denoting  formula separation, defined in \defref{def:fromula-sep}.

The proof system also has a  rule for process parallel composition, (\rtit{lSpl}), which forces a partitioning of permission-resources, analogously to \rtit{cSpl} from \figref{fig:confined-lang}; similarly, the  process scoping rule (\rtit{lLcl}) follows rule \rtit{cLcl} from \figref{fig:confined-lang}. The
system scoping rule  (\rtit{lRes}) restricts the permission-guarding invariants relating to the scoped channels and  filters assertions blocked by the scoping  using the function $\fVV \setminus\lst{c}$, as defined in \defref{def:formula-rest} ; in particular this function over-approximates to \fany\ any message state assertions and  input-blocked assertions affected by the name scoping of the restriction.   \rtit{lRes} also uses an environment restriction operation $\env\setminus c$ defined in \defref{def:env-rest}.

\begin{defi}[Formula Restriction]\label{def:formula-rest}\rm
  \begin{align*}
   \fV\setminus\lst{c} & \quad\deftxt\quad
   \begin{cases}
     \fstate{d}{\lst{e}} &\text{if } \fV = \fstate{d}{\lst{e}} \text{ and } d\not\in\lst{c}\\
     \fblk{d} &\text{if } \fV = \fblk{d} \text{ and } d\not\in\lst{c}\\
     \femp   &\text{if } \fV = \femp % \text{ or } ((\fV=\fstate{d}{\lst{e}} \text{ or } \fV=\fblk{d}) \text{ and } d\in\lst{c})
     \\
     \fconsBB{\fV_1\setminus \lst{c} }{\fV_2\setminus \lst{c} } &\text{if } \fV = \fcons{\fV_1}{\fV_2}\\
     \fany% \text{undefined}
     & \text{otherwise}
   \end{cases}
 \end{align*}
\end{defi}

\begin{defi}[Environment Restriction]\label{def:env-rest}\rm
  \begin{align*}
    \env \setminus c & \deftxt
    \begin{cases}
      \emptyset & \text{if } \env = \emptyset\\
      \env'\setminus c & \text{if } \env=\env',\emap{c}{\eV}\\
       (\env'\setminus c),\emap{d}{(\eV\setminus \sset{\permin{c},\permout{c}}} & \text{if } \env=\env',\emap{d}{\eV} \text{ and } c\neq d
    \end{cases}
  \end{align*}
\end{defi}

\begin{prop} \label{prop:env-well-formed}
  If $\env$ is a permission environment then $\env\setminus c$ is as well.
\end{prop}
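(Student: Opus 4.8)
The plan is to reduce everything to a single explicit description of the restriction operation and then read off the two defining clauses of \defref{def:permission-environment}. Concretely, I would first prove, by induction on the number of bindings in $\env$ following the three clauses of \defref{def:env-rest}, that $\env\setminus c$ is the finite map with $\dom{\env\setminus c}=\dom{\env}\setminus\sset{c}$ and $(\env\setminus c)(d)=\env(d)\setminus\sset{\permin{c},\permout{c}}$ for every $d\in\dom{\env}$ with $d\neq c$. The empty case is trivial; in the case $\env=\env',\emap{c}{\eV}$ the binding being characterised lies in $\env'$, so the claim follows from the induction hypothesis applied to $\env'\setminus c$; and in the case $\env=\env',\emap{d}{\eV}$ with $d\neq c$ the clause keeps $d$ with codomain $\eV\setminus\sset{\permin{c},\permout{c}}$ (matching the claim for $d$) and leaves the rest to the induction hypothesis. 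Note this step does \emph{not} need $\env'$ to be a permission environment (it typically is not, since dropping a binding can leave dangling references in other codomain sets), which is why working with the explicit description is cleaner than trying to recurse directly on ``$\env'$ is a permission environment''.

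Finiteness of $\env\setminus c$ is then immediate from $\dom{\env\setminus c}\subseteq\dom{\env}$. For condition (1) of \defref{def:permission-environment}, fix $d\in\dom{\env\setminus c}$; the description gives $d\neq c$ and $(\env\setminus c)(d)=\env(d)\setminus\sset{\permin{c},\permout{c}}$. From $\permin{d}\notin\env(d)$ (condition (1) for $\env$) we get $\permin{d}\notin(\env\setminus c)(d)$. From $\permout{d}\in\env(d)$, together with $d\neq c$ which makes $\permout{d}$ distinct from both $\permin{c}$ and $\permout{c}$, we get that $\permout{d}$ is not removed by the set difference, so $\permout{d}\in(\env\setminus c)(d)$. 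This is the only place where $d\neq c$ is used, and it holds precisely because $d$ ranges over the restricted domain.

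For condition (2), take $\eVV\in\cod{\env\setminus c}$, say $\eVV=\env(d)\setminus\sset{\permin{c},\permout{c}}$ with $d\in\dom{\env}\setminus\sset{c}$. A one-line unfolding of $\names{-}$ shows $\names{\eV\setminus\sset{\permin{c},\permout{c}}}=\names{\eV}\setminus\sset{c}$ for any $\eV$, hence $\names{\eVV}=\names{\env(d)}\setminus\sset{c}$. Condition (2) for $\env$ gives $\names{\env(d)}\subseteq\dom{\env}$, so $\names{\eVV}\subseteq\dom{\env}\setminus\sset{c}=\dom{\env\setminus c}$, as required. There is no real obstacle in this proof; the only point that warrants attention is this closure condition (2), and it goes through precisely because $\setminus c$ removes $c$ uniformly, both from the domain and from every codomain set.
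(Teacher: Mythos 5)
Your proof is correct and follows the same route as the paper, which simply records the verification as ``immediate to check'' with an emphasis on the closure condition (\defref{def:permission-environment}.2) — exactly the point you identify as the only one warranting attention, via the identity $\names{\eV\setminus\sset{\permin{c},\permout{c}}}=\names{\eV}\setminus\sset{c}$. Your explicit characterisation of $\dom{\env\setminus c}$ and $(\env\setminus c)(d)$ is a sound way of making that check precise; nothing is missing.
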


\proof It is immediate to check that  \defref{def:permission-environment} is still observed by $\env\setminus c$, in particular that it is suitably closed (\defref{def:permission-environment}.2).

The remaining logical rules are fairly straightforward. In the conditional proof rule \rtit{lIf}, the hypotheses on each
 branch are augmented with the corresponding assertion, as usual in Hoare logics; this mechanism works in pairs with the structural rule \rtit{lFls} 
 which trivialises the proof obligations on unreachable branches. \rtit{lDef} completes the treatment of the logical rules in the obvious way. Note that rules \rtit{lIn} and \rtit{lDef} abuse the substitution notation, extending it from values to (possibly open) expressions.

The proof system  also
has  a number of structural rules.    The rule  (\rtit{lInst}) permits instantiations of generic sequents whereas (\rtit{lSub}) permits substitutions of expressions to variables that can be inferred to be equivalent from the sequent boolean expression. The rule \rtit{lRen} renames channel names in sequents; the rule side-condition guarantees that the name $d$ is fresh which make renaming injective.    Finally, (\rtit{lImp}) endows proofs with a basic understanding  of structural  equivalence, $\steq$, and of logical   implication, $\models$.

% Analogous to outputs, a  symmetric input rule can also be recovered
% \[
%   \inference[\rtit{lInA}]
%     {\seqEB{\fV}{\conC{P\subC{e}{x}}}{\fVV}} 
%     {\seqEB{\fV}{\con{\eV\setminus \env(c)}{\piin{c}{x}{P}}}{ \fadjj{c}{e}{\fVV}
%       }}   % \,.
% \]
%  from  (\rtit{lIn}) and (\rtit{lAdj}).  This allows confined input-processes to either contribute to the post-condition or interact with the pre-condition (\cf \rtit{lIn}). 

\subsection{Derived Rules}
\label{sec:derived-rules}

 Although \rtit{lPar} is used extensively when proving properties of parallel communicating processes, it turns out that we often do not require its full power which makes it somewhat cumbersome to use.
 We therefore derive lightweight versions  of \rtit{lPar}, enabling parallel code to be either logically sequenced thereby focussing on cutting intermediary 
  formulas (\rtit{lCut}),
  or else considered totally separate, where composite  pre-conditions are assumed to produce composite post-conditions (\rtit{lSep}).  
 These derived rules require fewer side-conditions relating to formula separation. For instance, \rtit{lCut} disposes of the side-conditions entirely, and \rtit{lSep} limits them to one check. 
\begin{align*}
&\rule{2pc}{0pt}
 \inference[\rtit{lCut}]
    {\seqEB{\fV_1}{S}{\fVV} \\ 
     \seqEB{\fVV}{T}{\fV_2}}
  {\seqEB{\fV_1}{S\paralS T}{\fV_2}}
   \quad\;
  \inference[\rtit{lSep}]
     {\seqEB{\fV_1}{S}{\fVV_1} & \phantom{\sepprocE{\fVV_1}{\fVV_2}}\\
    \seqEB{\fV_2}{T}{\fVV_2}  & \sepprocE{\fVV_1}{\fVV_2}}
    {\seqEB{\fcons{\fV_1}{\fV_2}}
    	{S\paralS T}{\fcons{\fVV_1}{\fVV_2}}}
 \end{align*}
For state formula pre and postconditions, an even simpler version \rtit{lSep} is obtained by Corollary. \ref{cor:sep-formulas}, \ie \rtit{lSepSt}, which requires no side-conditions at all. 
\begin{align*}
& \inference[\rtit{lSepSt}]
     {\seqEB{\fVVV_1}{S}{\fVVVV_1} &\;\;
    \seqEB{\fVVV_2}{T}{\fVVVV_2}  }
    {\seqEB{\fcons{\fVVV_1}{\fVVV_2}}
    	{S\paralS T}{\fcons{\fVVVV_1}{\fVVVV_2}}}
 \end{align*}

\noindent The derivations of these lightweight parallel rules are straightforward and use formula semantic implications from Lemma \ref{lem:formula-equiv} together with properties for formula separation from Proposition \ref{cor:sep-formulas}; See Appendix~\ref{sec:proof-system-app}.

\medskip
The output axiom rule \rtit{lOut} appears frequently in most derivations using our proof system.  We find it convenient to formulate another derived rule that facilitates comparisons between the expression outputted by the process and that specified by the state formula, even when these expressions do not syntactically match.
\begin{align*}
  &   \inference[\rtit{lOutD}]
    { \bV \models \beq{\lst{e_1}}{\lst{e_2}} \quad 
     \env(c)\subseteq\eV}
    {\seqEB{\femp}{\conC{\pioutA{c}{\lst{e_1}}}}{\fstate{c}{\lst{e_2}}}}
 \end{align*}
Dually, the rule \rtit{lIn} is used  frequently to dispose of cut-formulas. However the direct use of this rule can become unwieldy due to necessary system structural manipulations required to get the system in form required by the rule.    A more convenient version can be derived that abstracts away from structural equivalence manipulations.
\begin{align*}
  & \inference[\rtit{lInD}] 
     {T \steq \con{\eV\setminus\env(c)} {\piin{c}{\lst{x}}{P}} \paral S & \permin{c}\in\eV & \seqEB{\fV} {\conC{P\subC{\lst{e}}{\lst{x}}}\paral S} {\fVV} }
     {\seqEB{\fcons{\fV}{\fstate{c}{\lst{e}}}} { T } {\fVV}}
\end{align*}
The proofs for these derived rules are straightforward and relegated to Appendix~\ref{sec:proof-system-app}.

Derived rules similar to \rtit{lIn} can be obtained for \rtit{lDef}, \rtit{lIf} \rtit{lSpl} and \rtit{lLcl} using an analogous derivation.  In \secref{sec:application} we shall often abuse this fact and use the derived rule named as the respective proof rule while at the same time abstracting away from structural manipulations.

% In addition, the following derived rules allow more streamlined handling of asynchronous outputs in proofs 
%    \begin{align*}
%     &\inference[\rtit{lR}]{\bV\models \lst{e_1}=\lst{e_2}& \env(c)\in\eV \\
%     \seqEB{\fV}{S}{\fVV}}{\seqEB{\fV}{\conC{\pioutA{c}{\lst{e_1}}}\paralS S}
%         {\fcons{\fstate{c}{\lst{e_2}}}{\fVV}}}
%   \quad\;
%     \inference[\rtit{lL}]{ \env(c)\in\eV &
%       \seqEB{\fcons{\fV}{\fstate{c}{\lst{e}}}}{S}{\fVV}}
%       	{\seqEB{\fV}{\conC{\pioutA{c}{e'}}\paralS S}{\fVV}}
%   \end{align*}

\subsection{Frame Rule} \label{rem:frame}\rm  The frame rule embodies local reasoning in separation-based logics~\cite{Reynolds02}. 
 For satisfiable post-conditions, a variant of the frame rule  
can be derived in our proof system. 
  \begin{align*}
       &  \inference[\rtit{lFrm}]
  {\seqEB{\fV_1}{S}{\fV_2} 
   & \sepprocE{\fV_2}{\fVV}}
  {\seqEB{\fcons{\fV_1}{\fVV}}{S}{\fcons{\fV_2}{\fVV}}}
% \quad 
% \inference[\rtit{lFrmA}]{\seqEB{\fV}{S}{\fVV} & \sepprocE{S}{\fVVVV}  & \sepprocE{\fV}{\fVVVV} }{\seqEB{\fadj{\fVVVV}{\fV}}{S}{\fadj{\fVVVV}{\fVV}}} 
 \end{align*} 
Moreover, for the special case when the pre and post conditions are state formulas, the frame rule eliminates the need for the side condition as stated below.
  \begin{align*}
       &  \inference[\rtit{lFrmSt}]
  {\seqEB{\fVVVV_1}{S}{\fVVVV_2} 
   }
  {\seqEB{\fcons{\fVVVV_1}{\fVVV}}{S}{\fcons{\fVVVV_2}{\fVVV}}}
 \end{align*} 
We here show the derivation for the more general version of frame rule, \ie \rtit{lFrm}, using the proof rules (\rtit{lNil}), 
  (\rtit{lPar}) and (\rtit{lImpl}) and the structural rule $S\paralS\con{\emptyset}{\inert}\steq S$.
%  We also prove an analogous frame rule for boxed-adjuncts.
\begin{align*}
  & \infer[\rtit{lImp}]{\seqEB{\fcons{\fV_1}{\fVV}}{S}{\fcons{\fV_2}{\fVV}}}{
 %      \infer[\rtit{lDeriv}]{\seqEB{\fcons{\fV_1}{\fV_2}}{S\paral \inert}{\fcons{\fVV}{\fV_2}}}{
        \infer[\rtit{lSep}]{\seqEB{\fcons{\fV_1}{\fVV}}{S\paral \con{\emptyset}{\inert}}{\fcons{\fV_2}{\fVV}} \qquad S \steq S\paral \con{\emptyset}{\inert}}{
          \seqEB{\fV_1}{S}{\fV_2}  &\qquad
          \infer[\rtit{lNil}]{\seqEB{\fVV}{\con{\emptyset}{\inert}}{\fVV}}{}  &\qquad \sepprocE{\fV_2}{\fVV}
        }
      }
\end{align*}

\bigskip

\noindent Our proof-system is sound with respect to \defref{defn:sequent-satisfaction}.

\begin{thm}[\rm\textit{Soundness}]\label{thm:logic-soundness}
  \begin{math}
    \quad\seqEB{\fV}{S}{\fVV}\quad
    	\text{implies}\quad\seqsemEB{\fV}{S}{\fVV}\,.
  \end{math}
\end{thm}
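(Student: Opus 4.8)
The plan is to argue by rule induction on the derivation of $\seqEB{\fV}{S}{\fVV}$ in \figref{fig:proof-system-1}, establishing in each case the statement obtained by unfolding \defref{defn:sequent-satisfaction}: for every substitution $\sigma$ and every system $T$ with $\bV\sigma\evaluate\boolT$, $\env,T\sigma\models\fV\sigma$ and $T\sigma\perp S\sigma$, one must show $\env,(T\paral S)\sigma\models\fVV\sigma$. Since formula satisfaction is only defined for closed systems, we may take $T\sigma$, and anything it evaluates to, to be closed, and we $\alpha$-rename the scoped names occurring in those evaluates (the bound names of the separating-conjunction and blocked-process clauses of \figref{fig:assertion-satisfaction}) to be fresh for $\env$, $S$ and $\sigma$. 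Two families of results do most of the work: \figref{fig:assertion-satisfaction} itself, which reduces every satisfaction claim to the existence of a \emph{safe} evaluation to a stable system whose shape is pinned down by Proposition~\ref{prop:safe-stab-vs-system-structure}; and Propositions~\ref{cor:satisfaction-evaluation}, \ref{lemma:proc-struct-eq} and \ref{cor:satisfaction-convergence}, which transport satisfaction forward along evaluation, across structural equivalence, and backwards along reductions. Lemmas~\ref{lem:locality} and \ref{lem:resourcing} keep the intermediate systems well-resourced and pairwise separate, and Lemma~\ref{lem:violation-pres} lets us certify safety of an evaluation by inspecting only its stable endpoint.

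The axioms are immediate from the satisfaction clauses. \rtit{lFls} is vacuous since $\bfalse\sigma\not\evaluate\boolT$; \rtit{lNil} holds because $\conC{\inert}\sigma$ contributes nothing to any evaluation of $T\sigma$; \rtit{lOut} and \rtit{lOutD} follow because $\env,T\sigma\models\femp$ forces $T\sigma\evaluate\con{\emptyset}{\inert}$, the side condition $\env(c)\subseteq\eV$ is exactly the permission ownership required of the resulting output, and in \rtit{lOutD} the entailment $\bV\models\beq{\lst{e_1}}{\lst{e_2}}$ supplies the needed expression equality via Assumption~\ref{asm:expression-eval}; \rtit{lBlk} is analogous, using the clause for $\fblk{c}$ and $\permin{c}\in\eV$. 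The reduction-driven rules \rtit{lIf}, \rtit{lDef}, \rtit{lSpl} and \rtit{lLcl} each let $S\sigma$ take a single reduction inside the context $T\sigma\paral[\,]$ — by \rtit{cThn}/\rtit{cEls}, \rtit{cPrc}, \rtit{cSpl} and \rtit{cLcl} respectively — then invoke the induction hypothesis of the appropriate premise and conclude by Proposition~\ref{cor:satisfaction-convergence}; for \rtit{lIf} one uses whichever of the two premises matches the value of the closed condition $\bV_2\sigma$, the augmented conditions $\band{\bV_1}{\bV_2}$ and $\band{\bV_1}{\bnot{\bV_2}}$ being arranged precisely so that that premise's boolean holds. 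The crucial case is \rtit{lIn}, the only rule consuming precondition: from $\env,T\sigma\models(\fcons{\fV}{\fstate{c}{\lst{e}}})\sigma$ the separating-conjunction clause yields an evaluate of $T\sigma$ splitting into a fragment satisfying $\fV\sigma$ and a fragment $\con{\eVV}{\pioutA{c}{\lst{e'}}}$ with $\lst{e'}\evaluate\lst{e}\sigma$ and $\env(c)\subseteq\eVV$; composing the latter with $\con{\eV\setminus\env(c)}{\piin{c}{\lst{x}}{P}}\sigma$ and firing \rtit{cCom} transfers exactly the guarded set $\env(c)$, reconstituting the confinement expected by the premise's system $\conC{P\subC{\lst{e}}{\lst{x}}}\paral S$ (up to harmless surplus permissions, satisfaction being monotone in the ambient permission set, and up to replacing transmitted values by equally-evaluating expressions as in Assumption~\ref{asm:expression-eval}); the side condition $\permin{c}\in\eV$ keeps this step safe, and Lemma~\ref{lem:violation-pres} shows no violation arose earlier in the evaluation of $T\sigma$.

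The structural rules are routine. \rtit{lImp} uses Proposition~\ref{lemma:proc-struct-eq} for the $S\steq T$ clause, together with the defining properties of $\models$ on formulas and on boolean expressions for the remaining clauses; \rtit{lInst} and \rtit{lSub} follow by composing substitutions, using Assumption~\ref{asm:expression-eval} and the fact that \defref{defn:sequent-satisfaction} already quantifies universally over $\sigma$; \rtit{lRen} uses that the freshness side condition makes the renaming injective and that all judgements are invariant under name bijections. Rule \rtit{lRes} needs slightly more: one checks that $\env\setminus\lst{c}$ is again a permission environment (Proposition~\ref{prop:env-well-formed}), applies the restriction $\rest{\lst{c}}{-}$ to the witnesses supplied by the premise, and verifies that \defref{def:formula-rest} — which over-approximates to \fany\ precisely those state- and blocked-assertions whose channel escapes under the restriction — matches what \rtit{cTgh}, \rtit{cLcl} and the structural rules of \figref{fig:confined-lang} allow one to do with the residual scoping.

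The main obstacle is \rtit{lPar}; once it is settled, the derived rules \rtit{lCut}, \rtit{lSep}, \rtit{lSepSt}, \rtit{lFrm} and \rtit{lFrmSt} follow as indicated in \secref{sec:derived-rules}. Write the conclusion system as $S\paralS T$ and take a frame $R$ with $\env,R\sigma\models(\fcons{\fV_1}{\fV_2})\sigma$ and $R\sigma\perp(S\paralS T)\sigma$. The separating-conjunction clause of \figref{fig:assertion-satisfaction} gives an evaluate $R\sigma\evaluate\restB{\lst{d}}{R_1\paral R_2}$ with $\lst{d}\notin\dom{\env}$, $R_1\perp R_2$ and $\env,R_i\models\fV_i\sigma$; by Lemmas~\ref{lem:locality} and \ref{lem:resourcing} and the freshness of $\lst{d}$, $R_1$ is well-resourced and separate from $S\sigma$, so the first premise's induction hypothesis yields $\env,(R_1\paral S)\sigma\models\fcons{\fVV_1}{\fV_3}$, hence an evaluate splitting into $U_1\models\fVV_1$ and $U_3\models\fV_3$. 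The asymmetry of the cut is what makes this usable: $U_3$ is an already-stable, self-contained witness of $\fV_3$, so the side condition $\sepprocE{\fV_2}{\fV_3}$, the separation $R_2\perp U_3$, and Lemma~\ref{lem:merge-assert} give $\env,R_2\paral U_3\models\fcons{\fV_2}{\fV_3}$ (up to Lemma~\ref{lem:formula-equiv}); feeding this to the second premise's induction hypothesis with frame $T$ gives $\env,(R_2\paral U_3\paral T)\sigma\models\fVV_2$, hence an evaluate to some $U_2\models\fVV_2$; finally $\sepprocE{\fVV_1}{\fVV_2}$, $U_1\perp U_2$ and Lemma~\ref{lem:merge-assert} give $\env,U_1\paral U_2\models\fcons{\fVV_1}{\fVV_2}$. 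Concatenating the reductions produced above inside the fresh restrictions and parallel contexts (via \rtit{cRes}, \rtit{cPar} and \rtit{cStr}) shows that $(R\paral S\paral T)\sigma$ reduces to a system structurally equivalent to $U_1\paral U_2$ wrapped in scopes that, being fresh, remain outside $\dom{\env}$; Proposition~\ref{cor:satisfaction-convergence} then gives $\env,(R\paral S\paral T)\sigma\models\fcons{\fVV_1}{\fVV_2}$ as required. The delicate part is the bookkeeping of (i) $\alpha$-renamings, so that scope extrusion is legal and no scoped name ever enters $\dom{\env}$, and (ii) well-resourcing and pairwise separation of $R_1,R_2,U_1,U_2,U_3$ and their composites across all the rearrangements, for which Lemmas~\ref{lem:locality} and \ref{lem:resourcing} supply the invariants; the genuine total-correctness strength of the conclusion — that the existential $\evaluate$ of \figref{fig:assertion-satisfaction} in fact pins down all behaviour — rests separately on \thmref{thm:eval-eq-upto-permissions} and \thmref{thm:sys-eval-implies-sys-conv}.
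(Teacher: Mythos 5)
Your proposal follows essentially the same route as the paper's own proof: rule induction on the derivation, with the axioms discharged directly from the satisfaction clauses, \rtit{lIn} handled by splitting the frame via the separating-conjunction clause and firing \rtit{cCom} (with the surplus-permission and expression-substitution issues covered by the paper's Lemmas~\ref{lem:rest-satisfaction} and~\ref{lem:subs}), and \rtit{lPar} handled by the same asymmetric two-stage application of the induction hypotheses glued together with Lemma~\ref{lem:merge-assert}, Lemmas~\ref{lem:locality} and~\ref{lem:resourcing} for separation bookkeeping, and Propositions~\ref{cor:satisfaction-evaluation}--\ref{cor:satisfaction-convergence} to transport satisfaction along the concatenated reductions. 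The argument is correct and matches the paper's structure case for case.
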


\proof
  By rule induction on \seqEB{\fV}{S}{\fVV}.  We here show the main rules:
  \begin{desCription}
  \item\noindent{\hskip-12 pt\bf\rtit{lOut}:}\ \ For arbitrary $\sigma,\, T$ we have:
    \begin{align}
      \label{eq:39}
      &\bV\sigma\evaluate\boolT \\
      \label{eq:40}
      & \env, T\sigma \;\sat\; \femp\sigma \\
      \label{eq:41}
      & T\sigma \perp \conC{\pioutA{c}{\lst{e}}}\sigma
    \end{align}
    and the side-condition
    \begin{align}
      \label{eq:87}
      \env(c)\subseteq\eV
    \end{align}
    By \figref{fig:assertion-satisfaction}  and \eqref{eq:40} we know
    \begin{align}
     & T\sigma\evaluate \con{\emptyset}{\inert}\label{eq:60a}
    \end{align}
    By \eqref{eq:41} we know that $T\sigma\paral\conC{\pioutA{c}{\lst{e}}}\sigma$ is well-resourced.  Moreover, by \eqref{eq:60a}  and \rtit{cPar} and \rtit{scNil} of \figref{fig:confined-lang} we deduce
    \begin{align}\label{eq:88}
      &T\sigma\paral\conC{\pioutA{c}{\lst{e}}}\sigma \reduc^\ast
      \con{\emptyset}{\inert}\paral\conC{\pioutA{c}{\lst{e}}}\sigma
      \steq \conC{\pioutA{c}{\lst{e}}}\sigma
    \end{align}
    Clearly,  $\conC{\pioutA{c}{\lst{e}}}\sigma \reducNot$.  Moreover by the conditions imposed on environment mappings in \defref{def:permission-environment}, we know $\permout{c}\in \env(c)$ and thus by \eqref{eq:87} we deduce that $\permout{c}\in \eV$ and hence that $\conC{\pioutA{c}{\lst{e}}}\sigma \errNot$.   As a result, from \eqref{eq:88} we obtain $T\sigma\paral\conC{\pioutA{c}{\lst{e}}}\sigma \,\evaluate\, \conC{\pioutA{c}{\lst{e}}}\sigma$ and for some $\lst{v}$ where $\lst{e}\sigma\evaluate \lst{v}$ and by \eqref{eq:87} and \figref{fig:assertion-satisfaction} we obtain  ${\env,\;(T\paral\conC{\pioutA{c}{\lst{e}}})\sigma\;\sat\;(\fstate{c}{\lst{e}}\sigma)}$.

  \item\noindent{\hskip-12 pt\bf \rtit{lIn}:}\ \ For arbitrary $\sigma,\, T$ we have:
    \begin{align}
      \label{eq:45}
      &\bV\sigma\evaluate\boolT \\
      \label{eq:46}
      &\env, T\sigma \satS (\fcons{\fV}{\fstate{c}{\lst{e}}})\sigma \\
      \label{eq:47}
      &  T\sigma \perp (\con{\eV\setminus\env(c)}{\piin{c}{\lst{x}}{P}}\paral S)\sigma
    \end{align}
    and the side-condition
    \begin{align}
      \label{eq:89}
      \permin{c}\in\eV
    \end{align}
    By \eqref{eq:46} and \figref{fig:assertion-satisfaction} we know
    \begin{align}
       \label{eq:48}
      & T\evaluate \restB{\lst{d}}{T_1 \paralS T_2} \\
      \label{eq:90}
      \text{ where }& \lst{d}\not\in \dom{\env} \\
      \label{eq:60}
      &\env,T_1\sat\fV\sigma \\
      \label{eq:91}
      \text{ and } &\env,T_2\sat \fstate{c}{\lst{e}}\sigma
    \end{align}
   By $\env,T_2\sat \fstate{c}{\lst{e}}\sigma$ and \figref{fig:assertion-satisfaction} we know
   \begin{align}\label{eq:49}
    & T_2\evaluate\conCC{\pioutA{c}{\lst{e'}}}\text{ where }\lst{e}\sigma \evaluate \lst{v}, \lst{e'}\evaluate \lst{v} \text{ and } \env(c)\subseteq\eVV 
   \end{align}
   By \eqref{eq:47} we know $T\sigma \perp (\con{\eV\setminus\env(c)}{\piin{c}{\lst{x}}{P}}\paral S)\sigma$ is well-resourced and by \eqref{eq:90} and $\env(c)\subseteq\eVV$ of \eqref{eq:49} we know that $c\not\in\lst{d}$ and that $\lst{d}\not\in\nm{\eVV}$.  Thus by \eqref{eq:48},~\eqref{eq:49} and \rtit{cPar}, \rtit{cCom}, (\ref{eq:90}) and \rtit{scExt} we obtain
   \begin{align}
     \label{eq:50}
     &T\sigma \paralS (\con{\eV\setminus\env(c)}{\piin{c}{\lst{x}}{P}}\paral S)\sigma \;\reduc^\ast\steq\; \restB{\lst{d}}{T_1}\paralS \conCC{\pioutA{c}{\lst{e'}}} \paralS (\con{\eV\setminus\env(c)}{\piin{c}{\lst{x}}{P}}\paral S)\sigma \\
     \label{eq:1}
     &\restB{\lst{d}}{T_1}\paralS \conCC{\pioutA{c}{\lst{e'}}} \paralS (\con{\eV\setminus\env(c)}{\piin{c}{\lst{x}}{P}}\paral S)\sigma \;\reduc\; \rest{\lst{d}}{T_1}\paralS  (\con{\eV\setminus\env(c)}{\piin{c}{\lst{x}}{P}}\paral S)\sigma
   \end{align}
   By \eqref{eq:50} and Lemma~\ref{lem:resourcing} we know that $\restB{\lst{d}}{T_1}\paralS (\con{\eV\setminus\env(c)}{\piin{c}{\lst{x}}{P}}\paral S)\sigma $ is well-resourced, and by  $\env(c)\subseteq\eVV$ of\eqref{eq:49} we deduce that 
    \begin{align}
      &\rest{\lst{d}}{T_1} \perp (\conC{P\subC{\lst{e}}{\lst{x}}}\paral S)\sigma \label{eq:92}
    \end{align}
By \eqref{eq:60},  (\ref{eq:90}) and Lemma~\ref{lem:sat-rest} we obtain
\begin{align*}
  \env, \rest{\lst{d}}{T_1} \sat \fV\sigma 
\end{align*}
and thus by\eqref{eq:45} ,~\eqref{eq:92}, the premise  $\seqEB{\fV} {\conC{P\subC{\lst{e}}{\lst{x}}}\paral S} {\fVV}$ and I.H. we obtain
  \begin{align}\label{eq:93}
    &\env,\;\rest{\lst{d}}{T_1}\paralS(\conC{P\subC{\lst{e}}{\lst{x}}}\paral S)\sigma\satS \fVV\sigma
  \end{align}
By $\lst{e}\sigma\evaluate \lst{v}$ of \eqref{eq:49} and Lemma~\ref{lem:subs} we get   
  \begin{math}
    \env,\;\rest{\lst{d}}{T_1}\paralS (\conC{P\subC{\lst{v}}{\lst{x}}}\paral S)\sigma\satS \fVV\sigma
  \end{math}.    Moreover by Lemma~\ref{lem:rest-satisfaction} we also obtain 
  \begin{displaymath}
    \env,\;\rest{\lst{d}}{T_1}\paralS (\con{\eV\cup\eVV}{P\subC{\lst{v}}{\lst{x}}}\paral S)\sigma\satS \fVV\sigma
  \end{displaymath}
Thus by \eqref{eq:50},~\eqref{eq:1} and Proposition~\ref{cor:satisfaction-convergence} we obtain $\env,\;T\sigma \paralS (\con{\eV\setminus\env(c)}{\piin{c}{\lst{x}}{P}}\paral S)\sigma\satS \fVV\sigma$ as required.

\item\noindent{\hskip-12 pt\bf \rtit{lPar}:}\ \ For arbitrary $\sigma,\, R$ we have:
  \begin{align}
    \label{eq:76}
    & \bV\sigma\evaluate\boolT \\
    \label{eq:77}
    &  \env,\, R\sigma \satS (\fcons{\fV_1}{\fV_2})\sigma \\
    \label{eq:78}
    &R\sigma \,\perp\, S\sigma\paral T\sigma
  \end{align}
and side-conditions
\begin{align}
  \label{eq:100}
  & \sepprocE{\fV_2}{\fV_3} \\
  \label{eq:101}
  & \sepprocE{\fVV_1}{\fVV_2}
\end{align}
By \eqref{eq:77} we know
\begin{align}
  \label{eq:79}
  & R\sigma\,\evaluate\, \restB{\lst{c}}{R_1\paral R_2} \\
  \label{eq:7}
  \text{ where } & \lst{c}\not\in\dom{\env}\\
  \label{eq:43}
  &\env,\, R_1\satS \fV_1\sigma \\
  \label{eq:44}
  \text{ and } &\env,\, R_2\satS \fV_2\sigma
\end{align}
By  \eqref{eq:79},~\eqref{eq:78} and Lemma~\ref{lem:resourcing}  we know
\begin{align}
  \label{eq:80}
  & R_1\perp R_2 \\
  \label{eq:94}
  \quad\text{ and }& R_1 \perp  S\sigma\paral T\sigma \\
  \label{eq:95}
  \quad \text{ and }&  R_2 \perp  S\sigma\paral T\sigma
\end{align}  
By \eqref{eq:76}, \eqref{eq:43}, $R_1 \perp  S\sigma$ from \eqref{eq:94} and I.H. we have $\env,\, R_1\paral S\sigma\satS (\fcons{\fVV_1}{\fV_3})\sigma$ and from the satisfaction definition of  \figref{fig:assertion-satisfaction} we obtain
\begin{align}
  \label{eq:81}
  &R_1\paral S\sigma\, \evaluate\, \restB{\lst{d}}{S_1\paral S_2} \\
  \label{eq:54}
  \text{ where }& \lst{d}\not\in\dom{\env}\\
  \label{eq:96}
  &\env,\, S_1\satS \fVV_1\sigma \\
  \label{eq:97}
  \text{ and } &\env,\, S_2\satS \fV_3\sigma
\end{align}
By \eqref{eq:80} and \eqref{eq:95} we know $R_1\paral S\sigma \perp R_2$.  Thus, by  \eqref{eq:81}  and Lemma~\ref{lem:locality}  we derive $S_1 \perp R_2$, and by \eqref{eq:44},  \eqref{eq:97}, the rule side-condition  \eqref{eq:100} and Lemma~\ref{lem:merge-assert} we obtain
\begin{align}\label{eq:98}
  &\env,\,R_2\paral S_2 \satS (\fcons{\fV_2}{\fV_3})\sigma
\end{align}
Using \eqref{eq:80} and \eqref{eq:81} we can also derive $R_2\paral S_2 \perp T\sigma$ and by \eqref{eq:76},~\eqref{eq:98} and I.H. we derive
\begin{align}\label{eq:99}
  &\env,\,R_2\paral S_2 \paral T\sigma \satS \fVV_2\sigma
\end{align}
By  \eqref{eq:80} and \eqref{eq:81} we also derive $S_1 \perp (R_2\paral S_2 \paral T)$ and by the rule side-condition \eqref{eq:101} and Lemma~\ref{lem:merge-assert} we obtain
\begin{align*}
  &\env,\,S_1\paral R_2\paral S_2 \paral T\sigma \satS (\fcons{\fVV_1}{\fVV_2})\sigma
\end{align*}
Thus by \eqref{eq:7},~\eqref{eq:54} and Lemma~\ref{lem:sat-rest} we deduce
\begin{align}\label{eq:82}
  &\env,\,\restB{\lst{c},\lst{d}}{S_1\paral R_2\paral S_2 \paral T}\sigma \satS (\fcons{\fVV_1}{\fVV_2})\sigma
\end{align}
From  \eqref{eq:79}, \eqref{eq:81}, \rtit{cPar}, \rtit{cRes}, \rtit{cStr} and \rtit{scExt} we derive $$R\sigma\paral S\sigma\paral T\sigma \reduc^\ast\steq \restB{\lst{c}}{R_1\paral R_2\paral S\sigma\paral T\sigma} \reduc^\ast\steq  \restB{\lst{c},\lst{d}}{S_1\paral R_2\paral  S_2 \paral T\sigma} $$ and by \eqref{eq:82}, Proposition \ref{cor:satisfaction-convergence} and Proposition~\ref{lemma:proc-struct-eq} we obtain
\begin{math}
  \env,\,R\sigma\paral S\sigma \paral T\sigma \satS (\fcons{\fVV_1}{\fVV_2})\sigma
\end{math}
 as required.
 \qed
\end{desCription}

\bigskip

\subsection{Process Sequent Satisfaction}
\label{sec:process-sequents}

 We conclude this section with \defref{def:proc-sat}, which extends sequent satisfaction to processes by assuming the \emph{existence} of a permission
environment  and the respective permission-set, required by the satisfaction definition
of  \figref{fig:assertion-satisfaction}.  This allows for the possibility of having multiple narratives explaining determinism,  and is
in line with the \emph{``ownership is in the eye of the asserter''} principle \cite{Ohearn04}.

\begin{defi}[Process Sequent Satisfaction]\label{def:proc-sat} \rm
  \begin{align*}
    \seqsemNE{\bV}{\fV}{P}{\fVV}  &\; \deftxt\; \text{exists } \env,\,\eV \text{ such that } \;\;\seqsemEB{\fV}{\conCP}{\fVV}
  \end{align*}
\end{defi}

\begin{exa}\label{ex:proc-sat} According to \ref{def:proc-sat}, we can now state that \ptit{Prg}, from \exref{ex:running} satisfies the property
  \begin{align} \label{eq:3}
     &\seqsemNES{x\leq 9}{\fcons{\fstate{c_1}{x}}{\fstate{c_2}{y}}}{\ptit{Prg}}{\fcons{\fstate{c_1}{x,2x}}{\fstate{c_4}{}}},
  \end{align}
  while abstracting over the narrative  as to why $\ptit{Prg}$ is deterministic.  It can be read as saying that, given two values $x$ and $y$ on channels $c_1$ and $c_2$ respectively, \ptit{Prg} returns the value of $x$ together with its double on $c_1$ and a signal on $c_4$, provided that the value of $x$ is less than $10$.  Mirroring the previous discussion in  \exref{ex:running}, \ptit{Prg} also satisfies the property
  \begin{align} \label{eq:4}
     &\seqsemNES{x > 9}{\fcons{\fstate{c_1}{x}}{\fstate{c_2}{y}}}{\ptit{Prg}}{\fcons{\fstate{c_4}{x}}{\fany}},
  \end{align}
  where \fany\ abstracts over the blocked code \restB{c_3}{\piin{c_3}{x_4}{\pioutA{c_1}{(x_4\!+\!x_4)}}}, as described earlier in \exref{ex:satisfiability}.
\end{exa}

\newcommand{\vord}[1]{\textbf{ord}(#1)}
 \newcommand{\veq}[2]{#1 \doteq #2}

We are also in a position to specify the correctness of our quicksort algorithm  through some macro definitions for compactness.

\begin{exa}[Specifying Correctness for Parallel Quicksort]\label{ex:quicksort-spec}    The expected behaviour of  \quick{i,j} from \exref{ex:quicksort} can be expressed through the sequent satisfaction
   \begin{align}\label{eq:58}
     \seqsemNESS{\left(\vord{\vlst{y}{i}{j}} \wedge \veq{\vlst{x}{i}{j}}{\vlst{y}{i}{j}}\right)}{\arrX{i}{j}{i}{j}} {\quick{i,j}} {\fcons{\arr{i}{j}{y}{i}{j}}{\fstate{r}{}}}
   \end{align}
  using the following macro definitions, whereby \vlst{x}{i}{j} denotes lists of variables $x_i\ldots x_j$ when $i \leq j$ and  the empty-list $\epsilon$ otherwise
   \begin{align*}
    \arrX{i}{j}{i}{j}  &\deftxt
  \begin{cases}
    \femp &\text{if }i > j\\
    \fcons{\fstate{a_i}{x_i}}{\arrX{i+1}{j}{i+1}{j}} &\text{if }i\leq j
  \end{cases}\\
     \vord{\vlst{x}{i}{j}}
     & \deftxt
     \begin{cases}
       \btrue
       &\text{if } i = j\\
       x_i \leq x_{i+1} \wedge   \vord{\vlst{x}{i+1}{j}}
       &\text{if } i < j
     \end{cases} \\% \qquad\quad  
      \veq{\vlst{x}{i}{j}}{\vlst{y}{i}{j}} & \deftxt
     \begin{cases}
        \btrue &\text{if } i=j\\[5pt]
        {\displaystyle \bigvee_{i\leq k\leq j}}   \bigl(\vlst{y}{i}{j}= \vlst{y}{i}{k-1} x_i\, \vlst{y}{k+1}{j}\bigr) \wedge \bigl(\veq{\vlst{x}{i+1}{j}}{\vlst{y}{i}{k-1}\vlst{y}{k+1}{j}}\bigr)  &\text{if }  i < j
       \end{cases}  
 %    \end{align*}
%     \begin{align*}
   \end{align*}
    The specification of \eqref{eq:58} above  states that when \quick{i,j} is composed with an array of arbitrary values on channels $a_i\ldots a_j$, denoted by the assertion macro  \arr{i}{j}{x}{i}{j}, it returns another array of values on the same channel list, \arr{i}{j}{y}{i}{j}, together with a signal on channel $r$ denoting completion. Moreover,  the values returned are 
   % \begin{inparaenum} [\itshape \upshape(1\upshape)]
   \begin{enumerate}[(1)]
   \item ordered, expressed as the predicate $\vord{\vlst{y}{i}{j}}$
   \item equal, up to reordering, to the original values, expressed as the predicate  $\veq{\vlst{x}{i}{j}}{\vlst{y}{i}{j}}$.
     \end{enumerate}
\end{exa}

\section{Application}
\label{sec:application}

We conclude by revisiting the properties stated in  \secref{sec:process-sequents}  and show how our proof-system can be used to prove properties about them.   In \exref{ex:running-proof} we see how proofs about concurrent code are performed by running through only one possible reduction trace, even when other interleavings are possible.  The main appeal of these proofs is however their amenability to \emph{compositionality} as shown in \exref{ex:-quicksort-correct}.  In this example proof, the behaviour of sub-programs is verified in terms of their pre and post conditions only, without any concern towards external interference from other concurrent code.  Independently verified sub-programs are then merged together using \rtit{lPar} (and its variants \rtit{lCut}, \rtit{lSep} and \rtit{lSepSt}), as long as the sub-programs are separate \wrt the permissions that they own.

\begin{exa}[Proving Satisfiability]\label{ex:running-proof}  \rm We  prove the specifications \eqref{eq:3} and \eqref{eq:4} stated earlier in \exref{ex:proc-sat} by first augmenting the satisfaction specification with an appropriate narrative for determinism as stated in \defref{def:proc-sat}.  One possible narrative is the permission-set $\sset{\permin{c_1},\permin{c_2},\permout{c_4}}$ together with the permission-transfer invariants 
   \begin{align*}
\env &= \emap{c_1}{\sset{\permout{c_1}}},\,\emap{c_2}{\sset{\permout{c_2}}},\,% \emap{c_3}{\sset{\permout{c_3},\permout{c_1}}},\,
   \emap{c_4}{\sset{\permout{c_4},\permin{c_1}}}      
   \end{align*}
yielding the system specification 
\begin{align} \label{eq:51}
     &\seqsemES{x\leq 9}{\fcons{\fstate{c_1}{x}}{\fstate{c_2}{y}}}{\con{\sset{\permin{c_1},\permin{c_2},\permout{c_4}}}{\ptit{Prg}}}{\fcons{\fstate{c_1}{x,2x}}{\fstate{c_4}{}}}
\end{align}
Another possible narrative is the permission-set $\sset{\permin{c_1},\permin{c_2}}$ and the environment
\begin{align*}
  \env' &=\emap{c_1}{\sset{\permout{c_1},\permout{c_4}}},\,\emap{c_2}{\sset{\permout{c_2}}},\,\emap{c_4}{\sset{\permout{c_4},\permin{c_1}}} 
\end{align*}
yielding a different intensional specification explaining the process determinism below:
\begin{align*} 
     &\seqsemS{\env'}{x\leq 9}{\fcons{\fstate{c_1}{x}}{\fstate{c_2}{y}}}{\con{\sset{\permin{c_1},\permin{c_2}}}{\ptit{Prg}}}{\fcons{\fstate{c_1}{x,2x}}{\fstate{c_4}{}}}
\end{align*}
We here focus on the specification with the first narrative,~\eqref{eq:51},  which by \thmref{thm:logic-soundness}, 
follows from the proof of the sequent
\begin{align} \label{eq:5}
     &\seqES{x\leq 9}{\fcons{\fstate{c_1}{x}}{\fstate{c_2}{y}}}{\con{\sset{\permin{c_1},\permin{c_2},\permout{c_4}}}{\ptit{Prg}}}{\fcons{\fstate{c_1}{x,2x}}{\fstate{c_4}{}}}
\end{align}
 Since $\ptit{Prg} \deftri \restB{c_3}{ \ptit{Fltr} \paral \ptit{Dbl}}$, we prove \eqref{eq:5} by applying the proof rules \rtit{lDef} followed by \rtit{lLcl} and \rtit{lRes}, which leaves us with the following sequent to prove 
\begin{align}\label{eq:6}
  &\seqSS{\env''}{x\leq 9}{\fcons{\fstate{c_1}{x}}{\fstate{c_2}{y}}}{\con{\sset{\permin{c_1},\permin{c_2},\permin{c_3},\permout{c_3}
        ,\permout{c_4}}}{\ptit{Fltr} \paral \ptit{Dbl}}}{\fcons{\fstate{c_1}{x,2x}}{\fstate{c_4}{}}}
\end{align}
where $\env''$ is the extended environment \begin{math}
\env''= \env,\emap{c_3}{\sset{\permout{c_3},\permout{c_1}% ,\permout{c_4}
  }}
\end{math}.  Note  that, through \rtit{lRes}, in \eqref{eq:6} we have also increased the permissions owned by the system with $\permin{c_3}$ and $\permout{c_3}$, the permissions relevant to the scope of $c_3$, opened by \rtit{lRes}.  Moreover for \rtit{lRes}, the post-condition is unaffected in this case, \ie  according to \defref{def:formula-rest} $(\fcons{\fstate{c_1}{x,2x}}{\fstate{c_4}{}}) \setminus c_3 = \fcons{\fstate{c_1}{x,2x}}{\fstate{c_4}{}}$.  After applying the logical rule \rtit{lSpl}, followed by two applications of \rtit{lDef} for \ptit{Fltr} and \ptit{Dbl} we are left with  
\begin{align*}
  &\seqS{\env''}{x\leq 9}{\fcons{\fstate{c_1}{x}}{\fstate{c_2}{y}}}{
    \left(\begin{array}{l}
      \con{\sset{\permin{c_1},\permout{c_3},\permout{c_4}}}{
        \begin{array}{l}
              \piin{c_1}{x_1}{\cmpA{x_1}{9}{}\\  
                \quad \pioutA{c_3}{x_1}\paralS\piinBB{c_1}{x_3}{\pioutA{c_1}{(x_1,x_3)}\paral \pioutA{c_4}{}} \\
                \qquad\elseA{\pioutA{c_4}{x_1}}}
        \end{array}
}\\
      \;\paralS \quad \con{\sset{\permin{c_2},\permin{c_3}}}{\piin{c_2}{x_2}{\piin{c_3}{x_4}{\pioutA{c_1}{(x_4\!+\!x_4)}}}}
    \end{array}\right)
}{\fcons{\fstate{c_1}{x,2x}}{\fstate{c_4}{}}}
\end{align*}
We proceed by applying \rtit{lIn} twice for $c_1$ and $c_2$ (in any order) and then by applying \rtit{lIf}, which gives us one unreachable branch since $x\leq 9 \wedge \neg(x\leq 9) \Rightarrow \bfalse$; this can be discharged by \rtit{lImpl} and the axiom \rtit{lFls}.  The reachable premise can be proved as follows; we elide the environment and boolean condition from the sequents below as they remain unchanged throughout:
\begin{align*}
  \footnotesize
\infer[\rtit{lSpl}]{
\seqNEBSS{\femp}{
    \con{\sset{\permin{c_1},\permout{c_1},\permout{c_3},\permout{c_4}}}{\pioutA{c_3}{x}\paralS\piinBB{c_1}{x_3}{\pioutA{c_1}{(x,x_3)}\paral \pioutA{c_4}{}} }\paralS
    \con{\sset{\permin{c_2},\permin{c_3},\permout{c_2}}}{\piin{c_3}{x_4}{\pioutA{c_1}{(x_4\!+\!x_4)}}}
}{\fcons{\fstate{c_1}{x,2x}}{\fstate{c_4}{}}}
}{\infer[\rtit{lCut}]{
\seqNEBSS{\femp}{\con{\sset{\permout{c_1},\permout{c_3}}}{\pioutA{c_3}{x}}\paralS \con{\sset{\permin{c_1},\permout{c_4}}}{\piinBB{c_1}{x_3}{\pioutA{c_1}{(x,x_3)}\paral \pioutA{c_4}{}} } \paralS
    \con{\sset{\permin{c_2},\permin{c_3},\permout{c_2}}}{\piin{c_3}{x_4}{\pioutA{c_1}{(x_4\!+\!x_4)}}} }{\fcons{\fstate{c_1}{x,2x}}{\fstate{c_4}{}}}
}{
\infer[\rtit{lOut}]{\seqNEBSS{\femp}{\con{\sset{\permout{c_1},\permout{c_3}}}{\pioutA{c_3}{x}}}{\fstate{c_3}{x}} }{\env''(c_3) % = \sset{\permout{c_3},\permout{c_1}}
  \subseteq \sset{\permout{c_1},\permout{c_3}} }
\infer[\rtit{lIn}]{
\seqNEBSS{\fstate{c_3}{x}}{
  \begin{array}{l}
     \con{\sset{\permin{c_1},\permout{c_4}}}{\piinBB{c_1}{x_3}{\pioutA{c_1}{(x,x_3)}\paral \pioutA{c_4}{}} }\\
     \paral\; \con{\sset{\permin{c_2},\permin{c_3},\permout{c_2}}}{\piin{c_3}{x_4}{\pioutA{c_1}{(x_4\!+\!x_4)}}} 
  \end{array}
}{\fcons{\fstate{c_1}{x,2x}}{\fstate{c_4}{}}}
}{
\infer[\rtit{lCut}]{
\seqNEBSS{\femp}{ \begin{array}{l}
     \con{\sset{\permin{c_1},\permout{c_4}}}{\piinBB{c_1}{x_3}{\pioutA{c_1}{(x,x_3)}\paral \pioutA{c_4}{}} }\\
     \paral\; \con{\sset{\permin{c_2},\permin{c_3},\permout{c_2},\permout{c_1},\permout{c_3}}}{\pioutA{c_1}{(x\!+\!x)}} 
  \end{array} 
}{\fcons{\fstate{c_1}{x,2x}}{\fstate{c_4}{}}}
}{
  \begin{array}{l}
 \qquad\qquad\qquad\infer[\rtit{lIn}]{\seqNEBSS{\fstate{c_1}{2x}}{ \con{\sset{\permin{c_1},\permout{c_4}}}{\piinBB{c_1}{x_3}{\pioutA{c_1}{(x,x_3)}\paral \pioutA{c_4}{}} } }{\fcons{\fstate{c_1}{x,2x}}{\fstate{c_4}{}}}}{
\infer[\rtit{lSpl}]{
\seqNEBSS{\femp}{ \con{\sset{\permin{c_1},\permout{c_4},\permout{c_1}}}{\pioutA{c_1}{(x,2x)}\paral \pioutA{c_4}{}}  }{\fcons{\fstate{c_1}{x,2x}}{\fstate{c_4}{}}}
}{
\infer[\rtit{lSepSt}]{
\seqNEBSS{\femp}{ \con{\sset{\permout{c_1}}}{\pioutA{c_1}{(x,2x)}}\paral\con{\sset{\permin{c_1},\permout{c_4}}}{\pioutA{c_4}{}}  }{\fcons{\fstate{c_1}{x,2x}}{\fstate{c_4}{}}}
}{
  \begin{array}{l}
\infer[\rtit{lOut}]{\seqNEBSS{\femp}{ \con{\sset{\permout{c_1}}}{\pioutA{c_1}{x,2x}} }{\fstate{c_1}{x,2x}}}{\env''(c_1) \subseteq \sset{\permout{c_1}}}\\\\
\qquad \vdots   \qquad  \infer[\rtit{lOut}]{\seqNEBSS{\femp}{ \con{\sset{\permin{c_1},\permout{c_4}}}{\pioutA{c_4}{}}  }{\fstate{c_4}{}}}{\env''(c_4) \subseteq \sset{\permin{c_1},\permout{c_4}}}
  \end{array}
}}}\\\\
    \infer[\rtit{lOutD}]{\seqNEBSS{\femp}{\con{\sset{\permin{c_2},\permin{c_3},\permout{c_2},\permout{c_1},\permout{c_3}}}{\pioutA{c_1}{(x\!+\!x)}} }{\fstate{c_1}{2x}}}{x+x = 2x &\env''(c_1) \subseteq \sset{\permin{c_2},\permin{c_3},\permout{c_2},\permout{c_1},\permout{c_3}}} \qquad \quad \vdots\quad\qquad
  \end{array}
}}
}}
\end{align*}
\medskip
Similarly, the proof for the second specification \eqref{eq:4} in \exref{ex:proc-sat} can also be proved by the sequent:
\begin{align} \label{eq:52}
     &\seqES{x > 9}{\fcons{\fstate{c_1}{x}}{\fstate{c_2}{y}}}{\con{\sset{\permin{c_1},\permin{c_2},\permout{c_4}}}{\ptit{Prg}}}{\fcons{\fstate{c_4}{x}}{\fany}},
  \end{align}
The proof is similar to that of \eqref{eq:5}, where we first apply \rtit{lDef}, \rtit{lLcl} and \rtit{lRes}, which leaves us with the following sequent:
\begin{align}\label{eq:53}
  &\seqSS{\env''}{x > 9}{\fcons{\fstate{c_1}{x}}{\fstate{c_2}{y}}}{\con{\sset{\permin{c_1},\permin{c_2},\permin{c_3},\permout{c_3}
        ,\permout{c_4}}}{\ptit{Fltr} \paral \ptit{Dbl}}}{\fcons{\fstate{c_4}{x}}{\fblk{c_3}}}
\end{align}
where, this time, we have the premise postcondition obtained as $\fcons{\fstate{c_4}{x}}{\fblk{c_3}} \setminus c_3 = \fcons{\fstate{c_4}{x}}{\fany}$ according to \defref{def:formula-rest}.  Again, similar to the proof for \eqref{eq:5},  we apply \rtit{lSpl} to \eqref{eq:53}  followed by two applications of \rtit{lDef} for \ptit{Fltr} and \ptit{Dbl}.   Then we apply \rtit{lIn} twice for $c_1$ and $c_2$ to consume the state formula in the precondition, and then by applying \rtit{lIf}.  This time, the rule for conditional gives us a different unreachable branch since $x > 9 \wedge x\leq 9 \Rightarrow \bfalse$.  The reachable premise can be proved as follows:
\begin{align*}
  \footnotesize
\infer[\rtit{lSep}]{
\seqNEBSS{\femp}{
    \con{\sset{\permin{c_1},\permout{c_1},\permout{c_3},\permout{c_4}}}{\pioutA{c_4}{x} }\paralS
    \con{\sset{\permin{c_2},\permin{c_3},\permout{c_2}}}{\piin{c_3}{x_4}{\pioutA{c_1}{(x_4\!+\!x_4)}}}
}{\fcons{\fstate{c_4}{x}}{\fblk{c_3}}}}
{
  \begin{array}{l}
\infer[\rtit{lOut}]{\seqNEBSS{\femp}{
    \con{\sset{\permin{c_1},\permout{c_1},\permout{c_3},\permout{c_4}}}{\pioutA{c_4}{x} }}{\fstate{c_4}{x}}}
{\env''(c_4)\subseteq \sset{\permin{c_1},\permout{c_1},\permout{c_3},\permout{c_4}}} \\\\
\qquad\qquad\vdots\qquad\qquad  
\infer[\rtit{lBlk}]{
\seqNEBSS{\femp}{
    \con{\sset{\permin{c_2},\permin{c_3},\permout{c_2}}}{\piin{c_3}{x_4}{\pioutA{c_1}{(x_4\!+\!x_4)}}}}{\fblk{c_3}}
}{\permin{c_3} \in \sset{\permin{c_2},\permin{c_3},\permout{c_2}}}     
  \end{array}
& 
\begin{array}{l}\\\\\\\\
\sepprocE{\fstate{c_4}{x}}{\fblk{c_3}}  
\end{array}
 }
\end{align*}

\medskip

\end{exa}

 \begin{exa}[Proving Correctness for Parallel Quicksort] \label{ex:-quicksort-correct} \rm 
To prove the correctness property \eqref{eq:58} for \quick{i,j}, as stated in \exref{ex:quicksort-spec}, we choose a narrative where the environment is 
\begin{align*}
  \env & = \envmap{a_i}{\eset{\permout{a_i}}},\;\ldots,\;\envmap{a_j}{\eset{\permout{a_j}}},\;\envmap{r}{\eV(r,i,j)}
\end{align*}
and \quick{i,j} owns the permission set $\eV(r,i,j)$ defined  as 
\begin{align*}
       \eV(x,i,j) & \deftxt \eset{\permout{x},\permin{a_i},\ldots\permin{a_j}}\,.
\end{align*}
The permissions associated with $r$ express the fact that the array can only be read \emph{after} the signal denoting completion is consumed.

\smallskip
We argue, by induction on $n=j-i$  (where $i\leq j$), that if we show that the following sequent holds for arbitrary $i$ and $j$, 
 \begin{align}
   & \seqES{\;\left(\vord{\vlst{y}{i}{j}} \wedge \veq{\vlst{x}{i}{j}}{\vlst{y}{i}{j}}\right)\;}{\arrX{i}{j}{i}{j}} {\;\con{\eV(r,i,j)}{\quick{i,j}}\;} {\fcons{\arr{i}{j}{y}{i}{j}}{\fstate{r}{}}} \label{eq:12}
  \end{align}
this would imply correctness for \quick{i,j} with the above narrative \ie
\begin{align*}
   & \seqsemES{\;\left(\vord{\vlst{y}{i}{j}} \wedge \veq{\vlst{x}{i}{j}}{\vlst{y}{i}{j}}\right)\;}{\arrX{i}{j}{i}{j}} {\;\con{\eV(r,i,j)}{\quick{i,j}}\;} {\fcons{\arr{i}{j}{y}{i}{j}}{\fstate{r}{}}}
\end{align*}
which, by \defref{def:proc-sat}, would prove the satisfaction \eqref{eq:58}.  

For the \emph{base case} of \eqref{eq:12},  \ie $n=0$  assuming $i=j$ as part of the sequent boolean expression, we trivially prove the sequent using \rtit{lIf}, the state frame rule, \rtit{lFrmSt}, and \rtit{lOut} as shown below.  In what follows, we often elide the sequent environment and boolean condition from our proofs.

\begin{align*} \small
  \infer[\rtit{lDef}]{\seqNEB{\arrX{i}{j}{i}{j}} {\;\con{\eV(r,i,j)}{\quick{i,j}}\;} {\fcons{\arr{i}{j}{y}{i}{j}}{\fstate{r}{}}}}{  
   \infer[\rtit{lIf}]{\seqNEB{\arrX{i}{j}{i}{j}} {\;\con{\eV(r,i,j)}{ \cmpC{i= j}{\pioutA{r}{}}{\ldots} \;}} {\fcons{\arr{i}{j}{y}{i}{j}}{\fstate{r}{}}}}{ 
     \infer[\rtit{lSub}]{\seqNEB{\arrX{i}{j}{i}{j}} {\;\con{\eV(r,i,j)}{ \pioutA{r}{}\;}} {\fcons{\arr{i}{j}{y}{i}{j}}{\fstate{r}{}}}}{
        \infer[\rtit{lFrmSt}]{\seqNEB{\arrX{i}{j}{i}{j}} {\;\con{\eV(r,i,j)}{ \pioutA{r}{}\;}} {\fcons{\arr{i}{j}{x}{i}{j}}{\fstate{r}{}}}}{
          \infer[\rtit{lOut}]{\seqNEB{\femp} {\;\con{\eV(r,i,j)}{ \pioutA{r}{}\;}} {\fstate{r}{}}}{} 
        }
     }
     &
     \infer[\rtit{lFls}]{ \seqNEB{\arrX{i}{j}{i}{j}} {\;\con{\eV(r,i,j)}{ \ldots \;}} {\fcons{\arr{i}{j}{y}{i}{j}}{\fstate{r}{}}} }{}
      }
   }
\end{align*}

\smallskip

The \emph{inductive case}, $n\!+\!1=j\!-\!i$,  \ie adding $i < j$ to the sequent boolean expression, assumes that the property holds for all $ m\leq n$, \ie   all $m < j-i$ (the inductive hypothesis), and follows from proving the following two sequents
\begin{align}
 \seqB{\env_1}{\arrX{i}{j}{i}{j}}{\quad\con{\eV(r_3,i,j)}{\prtn{i,j}}\quad}{ 
\begin{array}{l}
\fcons{\arr{i}{p-1}{z}{i}{p-1}}{\fstate{a_p}{y_p}}\\
\quad\fcons{}{\fcons{\arr{p+1}{j}{z}{p+1}{j}}{\fstate{r_3}{p}}}
\end{array}} &\label{eq:2}\\
 \seqB{\env_1}{
  \begin{array}{l}
\fcons{\arr{i}{p-1}{z}{i}{p-1}}{\fstate{a_p}{y_p}}\\
\quad\fcons{}{\fcons{\arr{p+1}{j}{z}{p+1}{j}}{\fstate{r_3}{p}}}
\end{array}
} {\;\con{\eset{\permin{r_3}\permout{r}}}{
    \begin{array}{l}
      \piin{r_3}{x}{\rest{r_1,r_2}{}\\
       \quad          \begin{array}{l}
                    \quick{i,x-1}\subS{r_1}{r} \\
                    \paralS\; \quick{x\!+\!1,j}\subS{r_2}{r}  \\
                    \paralS\; \piin{r_1}{}{\piin{r_2}{}{\pioutA{r}{}}}
                 \end{array}}
    \end{array}
}} {\fcons{\arr{i}{j}{y}{i}{j}}{\fstate{r}{}}}  \quad&\label{eq:59}
\end{align}
where $\env_1$ extends \env with the mapping for $r_3$ \ie $\env_1 =  \env,\envmap{r_3}{\eV(r_3,i,j)} $ and \bV\ is a stronger boolean condition defined as:
  % \qquad\env_2 = \env_1, \envmap{r_1}{\eV(r_1,i,p-1)}, \envmap{r_2}{\eV(r_2,p+1,j)}
\begin{align*}
   \bV & =  \vord{\vlst{y}{i}{j}} \,\wedge\,  
   {\veq{\vlst{x}{i}{j}\!}{\vlst{y}{i}{j}}} \wedge\, \underbrace{\bigl(\veq{\vlst{y}{i}{p-1}\!}{\vlst{z}{i}{p-1}} \wedge\, \veq{\vlst{y}{p+1}{j}\!}{\vlst{z}{p+1}{j}} \bigr)}_{(i)}
   \wedge \underbrace{\bigl(\textstyle \bigwedge_{k=i}^ {p-1} z_k < y_p \bigr)}_{(ii)} 
    \wedge 
 \underbrace{\bigl( \textstyle\bigwedge_{k=p+1}^{j} y_p \leq z_k \bigr)}_{(iii)}
\end{align*}
 It requires intermediary lists of values  $ \small \vlst{z}{i}{p-1}$ and $\small \vlst{z}{p+1}{j}$, returned by partitioning \prtn{i,j}, to be reorderings of the final values $\small \vlst{y}{i}{p-1}$ and $\small  \vlst{y}{p+1}{j}$, $(i)$, that the values in $\small  \vlst{z}{i}{p-1}$ are less than the pivot, $(ii)$,  and also that the values $\small \vlst{z}{p+1}{j}$ are greater than or equal to the pivot, $(iii)$.  

 The proof for sequent \eqref{eq:12} is  derived from \eqref{eq:2} and \eqref{eq:59} by applying the derived rule \rtit{lCut} which logically sequentialises the two systems; then we apply \rtit{lInst} to substitute $\small \vlst{y}{i}{p-1}\vlst{y}{p+1}{j}$ for $\small \vlst{z}{i}{p-1}\vlst{z}{p+1}{j}$  in $\bV$ (notice that the substitution leaves  the pre/post-conditions and  the system unchanged as $\small \vlst{z}{i}{p-1}\vlst{z}{p+1}{j}$ are not free in them), then \rtit{lImpl} to recover the boolean condition $\left(\vord{\vlst{y}{i}{j}} \wedge \veq{\vlst{x}{i}{j}}{\vlst{y}{i}{j}}\right)$, then \rtit{lRes} to recover $\env$ from $\env_1$, and finally \rtit{lLcl} and \rtit{lDef} to recover $\con{\eV(r,i,j)}{\quick{i,j}}$.

The proof of sequent \eqref{eq:59} follows from the following three sequents \eqref{eq:9}, \eqref{eq:10} and \eqref{eq:11} below, where \rtit{lRes} is used to  extend % of \figref{fig:proof-system-1}
 $\env_1$  as
\begin{align*}
   \env_2 = \env_1, \envmap{r_1}{\eV(r_1,i,p-1)}, \envmap{r_2}{\eV(r_2,p+1,j)} 
\end{align*}
 to account for the mappings associated with the channels  $r_1$ and $r_2$.  Notice how this rule allows us to choose the permission association relating to $r_1$ and $r_2$ \emph{dynamically}, depending on the index $p$ returned by the partitioning phase of sequent \eqref{eq:2}.  Such data dependencies normally complicate similar dependency analyses based on type systems such as \cite{TerauchiAiken08, BergerHondaYoshida08}.

\begin{align}
& \seqB{\env_2}{\arr{i}{p-1}{z}{i}{p-1}}{\;\con{\eV(r_1,i,p-1)}{\quick{i,p-1}}\;}{\fcons{\arr{i}{p-1}{y}{i}{p-1}}{\fstate{r_1}{}}} \label{eq:9}\\
& \seqB{\env_2}{\arr{p+1}{j}{z}{p+1}{j}}{\;\con{\eV(r_2,p+1,j)}{\quick{p+1,j}}\;}{\fcons{\arr{p+1}{j}{y}{p+1}{j}}{\fstate{r_2}{}}} \label{eq:10}\\
& \seqB{\env_2}{\fcons{\arr{i}{j}{y}{i}{j}}{\fcons{\fstate{r_1}{}}{\fstate{r_2}{}}}} {\;\con{\eset{\permin{a_p},\permin{r_1},\permin{r_2},\permout{r}}}{\piin{r_1}{}{\piin{r_2}{}{\pioutA{r}{}}}}\;} {\fcons{\arr{i}{j}{y}{i}{j}}{\fstate{r}{}}} \label{eq:11}
\end{align}\vspace{3 pt}

\noindent Sequents \eqref{eq:9} and \eqref{eq:10} follow from the inductive hypotheses.   Sequent \eqref{eq:11} can be easily derived using \rtit{lFrmSt}, which eliminates \arr{i}{j}{y}{i}{j} from the pre and post conditions, and then applying \rtit{lIn} twice for $r_1$ and $r_2$ respectively, followed by applying \rtit{lOut} once for $r$; the two inputs on $r_1$ and $r_2$ would hand over the permissions $\permin{a_i},\ldots,\permin{a_{p-1}}$ and $\permin{a_{p+1}},\ldots,\permin{a_{j}}$ respectively; these are necessary for the output on $r$ to be derived.  

We recover the proof of sequent \eqref{eq:59} as follows. Sequents \eqref{eq:9} and \eqref{eq:10} can be  composed together  as separate parallel code using \rtit{lSep}, and then extended to include \fstate{a_p}{y_p} in the pre and post-conditions using \rtit{lFrmSt}.  This allows us to logically sequence these two systems \emph{before} the system \con{\eset{\permin{a_p},\permin{r_1},\permin{r_2},\permout{r}}}{\piin{r_1}{}{\piin{r_2}{}{\pioutA{r}{}}}} of sequent~\eqref{eq:11}, thereby cutting the pre-condition of this sequent, using \rtit{lCut}.   Then we  scope the two channels $r_1$ and $r_2$ using a combination of \rtit{lRes}, \rtit{lLcl} and \rtit{lSpl}  (which  leaves the pre and post conditions intact since they do not contain any mention of the channels $r_1$ and $r_2$), and finally precede this whole system by an input on $r_3$ using \rtit{lIn}, which adds \fstate{r_3}{p} to the precondition.

This leaves us with only sequent \eqref{eq:2} to prove to complete the main proof. 
This sequent proof follows immediately from  a proof for the  following sequent 
\begin{align}
  & \small \seqB{\env_1}{\small \arrX{i+1}{j}{i+1}{j}}{\con{\eV(r_3,i,j)\cup\eset{\permout{a_i}}}{\prtnlp{\small i,j,x_i,i,i+1}}}{
\begin{array}{l}
\fcons{\arr{i}{p-1}{z}{i}{p-1}}{\fstate{a_p}{y_p}}\\
\quad\fcons{}{\fcons{\arr{p+1}{j}{z}{p+1}{j}}{\fstate{r_3}{p}}}
\end{array}}\label{eq:13}
\end{align}
through one application of \rtit{lIn}, which  reinstates \fstate{a_i}{x_i} in the pre-condition,  then an application of \rtit{lDef} to recover  $\con{\eV(r_3,i,j)}{\prtn{i,j}}$.

 We prove (\ref{eq:13}) by proving the more general sequent % \eqref{eq:14}
\begin{align}
  & \small \seq{\env_1}{\bV'}{
    \begin{array}{l}
      \overbrace{\arr{i+1}{q}{w}{i+1}{q}}^{(i)}\\
      \quad\fcons{}{\fcons{\underbrace{\arr{q+1}{c-1}{w}{q+1}{c-1}}_{(ii)}}{\underbrace{\arrX{c}{j}{c}{j}}_{(iii)}}}
    \end{array}
}{\con{\eV(r_3,i,j)\cup\eset{\permout{a_i}}}{\prtnlp{\small i,j,x_i,q,c}}}{
\begin{array}{l}
\fcons{\arr{i}{p-1}{z}{i}{p-1}}{\fstate{a_p}{y_p}}\\
\quad\fcons{}{\fcons{\arr{p+1}{j}{z}{p+1}{j}}{\fstate{r_3}{p}}}
\end{array}
} \label{eq:14}
\end{align}
 where $\bV' = \bandS{\bV}{\bandS{(i\leq q < c \leq j+1)} {\underbrace{\bigl(\veq{\vlst{x}{i+1}{c-1}}{\vlst{w}{i+1}{c-1}}\bigr)}_{(iii)} \wedge \underbrace{\bigl( \bigwedge_{k=i+1}^{q} w_k < x_i \bigr)}_{(i)} \wedge \underbrace{\bigl( \bigwedge_{k=q+1}^{c-1} x_i  \leq w_k \bigr)}_{(ii)}}}$.   

Sequent \eqref{eq:14}  allows us to  stratify every iteration of the traversal, thereby proving the sequent by  induction on $n=(j+1)-c$. At each iteration, $c$, with pivot index $q$ and pivot value $x_i$, \eqref{eq:14} expects a precondition split into 3 parts:  $\small\arr{i+1}{q}{w}{i+1}{q}$ holds processed values that are less than  the pivot 
$x_i$, $(i)$, $\small\arr{q+1}{c-1}{w}{q+1}{c-1}$ holds processed values that are greater than or equal to  the pivot
$x_i$, $(ii)$, and $\small\arrX{c}{j}{c}{j}$ is the part of the array that still needs to be traversed.   Note also that the values preceding the current counter, $\vlst{w}{i+1}{c-1}$, must be equal, up to reordering, of the values already processed $\vlst{x}{i+1}{c-1}$, $(iii)$.      The base case, \ie when $c = j+1$ (and thus $\small \arrX{c}{j}{c}{j}=\arrX{j+1}{j}{j+1}{j}=\femp$), establishes the post-condition in \eqref{eq:14}  whereas the inductive case works up towards the base case, whereby the value comparison at every iteration adds to the ordering information expressed by $\bV'$. Both proof cases use a mixture of rules \rtit{lIn}, \rtit{lOut}, \rtit{lIf} and, \rtit{lSepSt} and \rtit{lCut} in a manner similar to that discussed already above; the details are left for the interested reader.

To obtain \eqref{eq:13} from \eqref{eq:14}, we take $q$ and $c$ to be $i$ and $i+1$ respectively. This case makes the array assertions $\small \arr{i+1}{q}{w}{i+1}{q}$ and $\small \arr{q+1}{c-1}{w}{q+1}{c-1}$ in the precondition of \eqref{eq:14} empty, \ie $\small \arr{i+1}{q}{w}{i+1}{q} =  \arr{q+1}{c-1}{w}{q+1}{c-1}= \arr{i+1}{i}{w}{i+1}{i} = \femp$, which  by Lemma \ref{lem:formula-equiv} and \rtit{lImp},
 leaves us with $\small \arrX{i+1}{j}{i+1}{j}$ \ie the precondition of \eqref{eq:13}.  Moreover, for this case  the boolean expression  $\bV'$ is of the form
 \begin{math}
   \bandS{\bV}{(i\leq i < i+1 \leq j+1)}
 \end{math} which is implied by \bV,  \ie $\bV \models \bV'$.  This means that we can recover $\bV$ for our sequent simply by applying \rtit{lImp} as well. 
\end{exa}

\section{Conclusion}
\label{sec:conclusion}

We have developed a logic for deterministic processes,   interpreted over systems whose behaviour is confined by sets of linear permissions.    We also developed a sound proof system through which we can determine, in compositional fashion,  the satisfaction of formulas in this logic. % by deterministic processes.
We applied this logic and proof system to specify and prove the correctness of an in-place parallel quicksort. 

\subsection{Related Work}
\label{sec:related-work}

Modal logics have traditionally been used in process calculi for the specification
of behavioural properties. Proof systems for these logics have been
developed in a variety of settings
(\eg~\cite{Hliu93,AnStWi94,AmDam96, Dam94,
  Dam02,BergerHondaYoshida08})
and some of these have focused on compositional reasoning as a means
of dealing with the scalability problem (\eg \cite{AnStWi94,Dam02,BergerHondaYoshida08}). However, there has been
little focus on locality of reasoning in these efforts. Approaching
compositionality without necessarily modelling locality does seem to
have been at the expense of general, but long-winded proof rules for
parallel composition (e.g.~\cite{Dam02}).    In addition, termination is often not a major focus in these logics; in fact, the bisimulation proof technique, often associated with these logics, is insensitive to divergence.  Termination is central to the logical characterisations that we give in this work. 

Despite the apparent resemblance, spatial logics for process calculi  such as
\cite{Caires01,Caires04} differ from our interpretation of the separating conjunction: we separate on \emph{permissions},  logical embellishments on processes, whereas their logical separation is more intensional and operates on  the structure of processes, describing parallel composition directly.  Moreover, their aims appear to differ from ours since they model mobility and channel privacy; we focus on data, non-interference and locality, and deal with implicit transfer of permissions.

Following \cite{Ohearn04}, the use of separation logic to support local reasoning for concurrent
programs has been studied intensively  over the past few years
for the shared-variable model of concurrency.  The
 initial
main idea of ownership transfer of resources between threads
impacting upon local reasoning already appears in \cite{Ohearn04}.
This was then extended to co-exist with Rely/Guarantee
reasoning  \cite{Vafeiadis07amarriage,Feng} and recently refined through fractional permissions as Deny/Guarantee reasoning \cite{dodds09:denyguarantee}.  The latter is interesting to us as a means of widening
our class of programs under analysis.  For instance, \cite{Gotsman} uses  this approach  for dealing with dynamically allocated resource  locks.

Separation Logic has been applied to process calculi on at least two occasions.  In \cite{HoareOhearn}, they give a separation semantics for a variant of the \pic, based on traces.  Their work differs from ours in a number of respects in that  they only deal with \emph{explicit} ownership transfer of resources and are not concerned with developing a proof system.   In \cite{PymTofts}, they also use a process calculus as a model for a separation logic.  They are quite general \wrt the form of resources and how these are transferred across processes and, as a result, our model of confined processes seems related to theirs.  However, aspects such as the use of SCCS on their part, where processes evolve in synchrony, and the focus on value passing and stability on ours, lead to a substantially different satisfaction relation of the logics.  The aim of their work is also different from ours; they establish a correspondence between strong bisimulation and logic satisfaction whereas we focus on developing a compositional proof system.    Separation logic has also been applied to an imperative concurrent language with message passing in \cite{Villard09} where the main focus is the implementability of message-passing communication as a copy-less communication over a shared memory model.  Although their technical development is considerably different from ours, this work can be seen as complementary to ours if implementation aspects of our language are considered.

\subsection{Future Work}
\label{sec:future-work}

There is much further work to be done in the area of local reasoning
for message-passing concurrency.  

With respect to the work presented here, there are a number of design decisions % in the present setup
that are worth exploring.  For instance, at the level of the proof system, a partial correctness interpretation of our sequents (as opposed to total correctness) would probably allow us to design a version of the parallel proof rule, \rtit{lPar}, that is more symmetric.  Another avenue worth exploring is that of relaxing the interpretation of our logical assertions so as to not  limit them to safely-stabilising systems.  This would simplify the verification of certain formulas, such as \fany, and would also allow us to have models where formulas such as $\fcons{\fstate{c}{v}}{\fblk{c}}$ are satisfiable.  At the same time, this satisfaction weakening would also entail that our existing assertion interpretation changes to one where systems satisfy a formula at some point during their evaluation but may then fail to satisfy it as computation progresses.  Although it is not yet clear whether this is a desirable property to have from the point of view of the application of the logic, it has appealing benefits in terms of the assertion satisfaction definition,  as it streamlines  the satisfaction of core formulas like the separating conjunction with existing interpretations.  Moreover, we also conjecture that this altered interpretation would eliminate the need for  the side conditions present in the existing parallel rule, \rtit{lPar}.

At a more general level, we also seek to widen the
class of programs we can treat by introducing non-confluent behaviour
in a controlled way. We intend to extend our setting to allow for more
interesting forms of data to be communicated, including  say
channel names.  We also need to develop algorithms for inferring the
permission-set maps, develop tools to support the proof-system reasoning.
 Finally, and perhaps most importantly, we need to
 expand our suite of case studies and
consider larger example proofs.

\section*{Acknowledgment}
  The authors wish to acknowledge numerous referees for their incisive comments on a preliminary version of this paper.

%% in general the use of bibtex is encouraged

% %% ---- Bibliography -----------------------------------------------
\small
\bibliographystyle{plain}
\bibliography{logic}

% \begin{thebibliography}{Kos97}
% \bibitem[Kos97]{koslowski:mib}
% J{\"u}rgen Koslowski.
% \newblock Monads and interpolads in bicategories.
% \newblock {\em Theory Appl. Categ.}, 3(8):182--212, 1997.
% \end{thebibliography}

\newpage

\appendix

% \section{Auxilliary Definitions}\label{sec:auxill-defin}

\section{Proofs}
\label{sec:auxilliary-lemmas}

\subsection{Processes}\label{sec:processes}

\begin{lem}[Structural Equivalence and Reductions]\label{lem:steq-reduction-unconstrained}
  \begin{displaymath}
    P\steq Q\text{ and } P \reduc P' \text{ implies } \exists Q'. \ Q \reduc Q' \text{ and }P'\steq Q'
  \end{displaymath}
\end{lem}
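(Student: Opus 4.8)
The statement to prove is the standard ``structural equivalence respects reduction'' lemma: if $P \steq Q$ and $P \reduc P'$, then there exists $Q'$ with $Q \reduc Q'$ and $P' \steq Q'$. The plan is to proceed by rule induction on the derivation of $P \steq Q$. Since $\steq$ is the least congruence generated by the axioms \rtit{sCom}, \rtit{sAss}, \rtit{sNew}, \rtit{sSwp}, \rtit{sNil}, \rtit{sExt}, together with reflexivity, symmetry, and transitivity, I would first observe that the closure under reflexivity is trivial, the closure under transitivity follows by composing the two witnesses obtained from the induction hypotheses, and the closure under symmetry requires the statement to be proved simultaneously in both directions — so really I want to establish the symmetric form ``$P \steq Q$ and $P \reduc P'$ implies $\exists Q'.\ Q \reduc Q'$ and $P' \steq Q'$'' where $\steq$ is already symmetric, so symmetry is handled for free once I check each generating axiom in both orientations.

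For the congruence cases ($P_1 \paralS R \steq P_2 \paralS R$ from $P_1 \steq P_2$, and $\rest{c}{P_1} \steq \rest{c}{P_2}$ from $P_1 \steq P_2$), I would analyse how $P \reduc P'$ could have been derived. In the parallel case, either the reduction happens inside $P_1$ (use the induction hypothesis, then \rtit{rPar}), or inside $R$ (immediate, since $R$ is unchanged — reduce $P_2 \paralS R$ by \rtit{rPar} on $R$), or it is a communication \rtit{rCom} straddling $P_1$ and $R$; the last subcase needs a little care because $P_1 \steq P_2$ may have moved the relevant output/input prefix around, but since $\steq$ preserves the multiset of top-level guarded subprocesses up to structural rearrangement, the matching communication is still available in $P_2 \paralS R$, possibly after applying \rtit{sCom}/\rtit{sAss}. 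The restriction case is simpler: the reduction is either \rtit{rRes} wrapping a reduction of $P_1$ (induction hypothesis plus \rtit{rRes}), and there is no communication case to worry about.

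The bulk of the work is the six structural axioms. For \rtit{sCom}, \rtit{sAss}, \rtit{sNil} and their reverses, the reduction on one side is matched by essentially the same reduction on the other side, with the witness $Q'$ being the structurally-rearranged version of $P'$; here I would appeal to \rtit{rStr} itself to close the gap, noting this is not circular because $\steq$ on the residuals is strictly simpler or the same. For \rtit{sNew}, \rtit{sSwp}, the reductions commute with the (vacuous or swapped) restrictions straightforwardly via \rtit{rRes}. The one genuinely fiddly axiom is \rtit{sExt}, scope extrusion: $P \paralS \rest{c}{Q} \steq \rest{c}{(P \paralS Q)}$ when $c \notin \fn(P)$. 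If the reduction is internal to $P$, to $Q$, or a communication within $Q$, it transfers across the equivalence routinely. The interesting case is a communication \rtit{rCom} between $P$ and $Q$ in $\rest{c}{(P \paralS Q)}$: since $c \notin \fn(P)$, the channel carrying this communication is not $c$, and the substitution of the transmitted values into the continuation does not capture $c$ (names are not communicated in this calculus — see the footnote in the excerpt), so the residual is again of the extruded shape and the witness is obtained by \rtit{sExt} again, possibly re-checking the side condition $c \notin \fn(\cdot)$ on the reduced process, which holds since reduction does not introduce free names.

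\textbf{Main obstacle.} The delicate point I expect to spend the most care on is the communication subcase in the parallel-congruence step and in the \rtit{sExt} step: verifying that whenever $P_1 \steq P_2$ and a \rtit{rCom} redex spans $P_1$ and the context, the corresponding redex is still present after replacing $P_1$ by $P_2$. This is intuitively clear because $\steq$ only rearranges parallel components and pushes restrictions around, never guarding or consuming a top-level output or input prefix, but making it precise requires either a small auxiliary observation (that $\steq$ preserves the ``interface'' of exposed input/output prefixes) or a careful case split following the shape of the $\steq$-derivation. I would state and use such an auxiliary fact to keep the main induction clean. Everything else is routine bookkeeping with \rtit{rStr}.
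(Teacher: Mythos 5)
Your proposal is correct and follows the same strategy the paper declares for this lemma---its entire proof is the single line ``By rule induction on $P\steq Q$''---so your case analysis fills in details the authors chose to omit, and none of your cases would fail. That said, you (and arguably the paper) are working much harder than necessary: because \rtit{rStr} is itself one of the rules defining $\reduc$, the lemma is immediate without any induction. Given $P\steq Q$ and $P\reduc P'$, symmetry of $\steq$ gives $Q\steq P$, reflexivity gives $P'\steq P'$, and \rtit{rStr} applied to $Q\steq P\reduc P'\steq P'$ yields $Q\reduc P'$; taking $Q'=P'$ closes the proof. Your elaborate treatment of the communication subcases and of \rtit{sExt} would be the genuinely hard content if one wanted the result for the reduction relation \emph{generated without} \rtit{rStr} (i.e.\ to show \rtit{rStr} is admissible), but that is not what this lemma asserts, so the one-line argument suffices and subsumes your ``main obstacle'' entirely.
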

\proof  By rule induction on $P\steq Q$.  \qed

\begin{cor}[Structural Equivalence and Reductions]\label{cor:steq-reduction-unconstrained}
  $P\steq Q \text{ and }P\reducNot\text{ implies }Q\reducNot$
\end{cor}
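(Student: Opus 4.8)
The plan is to derive the corollary as an immediate consequence of Lemma~\ref{lem:steq-reduction-unconstrained} by a contrapositive argument. First I would record the (standard) fact that structural equivalence $\steq$ on processes is symmetric: the relation is the least congruence generated by the equations \rtit{sCom}, \rtit{sAss}, \rtit{sNew}, \rtit{sSwp}, \rtit{sNil}, \rtit{sExt} of \figref{fig:structural-reduc}, hence an equivalence relation, so from $P\steq Q$ we also have $Q\steq P$.

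The argument then proceeds by contradiction. Suppose $P\steq Q$, $P\reducNot$, but $Q$ is not stable, i.e.\ $Q\reduc Q'$ for some $Q'$. Using symmetry we have $Q\steq P$, and applying Lemma~\ref{lem:steq-reduction-unconstrained} to the hypotheses $Q\steq P$ and $Q\reduc Q'$ yields a process $P'$ with $P\reduc P'$ (and incidentally $Q'\steq P'$, which we do not need). This contradicts $P\reducNot$. Hence no such $Q'$ exists, i.e.\ $Q\reducNot$, as required.

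There is no real obstacle here: the only thing to be careful about is that Lemma~\ref{lem:steq-reduction-unconstrained} is stated in one direction only, so the symmetry of $\steq$ must be invoked explicitly before the lemma can be used on $Q$'s reduction. Everything else is a one-line deduction.
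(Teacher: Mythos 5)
Your proof is correct and matches the paper's intent: the corollary is stated without proof precisely because it is the immediate contrapositive of Lemma~\ref{lem:steq-reduction-unconstrained}, modulo the symmetry of $\steq$, which you rightly make explicit before applying the lemma in the $Q$-to-$P$ direction. Nothing further is needed.
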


\subsection{Confined Processes}
\label{sec:conf-proc}

\begin{lem}\label{lem:struct-corresp-1}
  \begin{math}
    S\steq T\quad\text{implies}\quad|S|\steq |T|
  \end{math}
\end{lem}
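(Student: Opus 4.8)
\textbf{Proof plan for Lemma~\ref{lem:struct-corresp-1}.}

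The plan is to proceed by rule induction on the derivation of $S \steq T$, using the structural equivalence rules for systems given in \figref{fig:confined-lang} (the rules \rtit{scCom}, \rtit{scAss}, \rtit{scNew}, \rtit{scSwp}, \rtit{scNil}, \rtit{scExt}, together with the congruence and equivalence closure rules implicit in the definition). The key observation driving the proof is that the permission-erasure function $|-|$ from \defref{def:perm-erasure} is a homomorphism with respect to parallel composition and restriction: $|S \paral T| = |S| \paral |T|$ and $|\rest{c}{S}| = \rest{c}{|S|}$, and it sends $\con{\emptyset}{\inert}$ to $\inert$. So erasure translates each system-level structural axiom into the corresponding process-level structural axiom of \figref{fig:structural-reduc}.

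Concretely, for each axiom I would check that applying $|-|$ to both sides yields an instance of a process structural equivalence rule (or a reflexive instance). For \rtit{scCom}: $|S \paral T| = |S| \paral |T| \steq |T| \paral |S| = |T \paral S|$ by \rtit{sCom}; similarly \rtit{scAss} maps to \rtit{sAss}, \rtit{scSwp} to \rtit{sSwp}, \rtit{scNil} to \rtit{sNil} (using $|\con{\emptyset}{\inert}| = \inert$), and \rtit{scNew} to \rtit{sNew}. For \rtit{scExt}, the side condition $c \notin \fn{S}$ must be transported to $c \notin \fn{|S|}$; this holds because confinement does not affect the free channel names of the underlying process, i.e. $\fn{|S|} = \fn{S}$ when $\fn{-}$ on systems is defined in the obvious way, so the side condition carries over and we get an instance of \rtit{sExt}. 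For the congruence cases (parallel composition and restriction), I would invoke the induction hypothesis on the subderivations and close under the process-level congruence rules — which are available since $\steq$ on processes is a congruence (built into its definition via the structural rules and \rtit{sExt}). For reflexivity, symmetry, and transitivity of $\steq$ on systems, the corresponding closure properties of $\steq$ on processes give the result directly from the induction hypotheses.

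This lemma is routine and I do not expect a serious obstacle; the only point requiring mild care is the \rtit{scExt} case, where one must confirm that the free-names side condition is preserved under erasure — but this is immediate since erasure only strips permission annotations and leaves the term structure (hence its free channel names) untouched. No other rule involves side conditions. Thus the induction goes through cleanly and establishes $|S| \steq |T|$.
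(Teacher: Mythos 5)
Your proof is correct and follows essentially the same route as the paper: a rule induction on $S \steq T$ that maps each system-level structural axiom to its process-level counterpart via the homomorphic erasure $|-|$, with the only point of care being the transport of the free-names side condition in \rtit{scExt}. The paper's proof is just a terser version of exactly this argument.
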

\proof
By Rule induction on $S\steq T$
\begin{desCription}
\item\noindent{\hskip-12 pt\bf \rtit{scNil}:}\  $|S \paral \con{\emptyset}{\inert}| = |S| \paralS |\con{\emptyset}{\inert}|  = |S| \paralS \inert$  and $|S| \paralS \inert \steq |S|$ by \rtit{sNil}.
\item\noindent{\hskip-12 pt\bf \rtit{scCom}, \rtit{scAss}, \rtit{scNew}, \rtit{scSwp}:}\  By the corresponding structural rules \rtit{sCom}, \rtit{sAss}, \rtit{sNew}, \rtit{sSwp}.
\item\noindent{\hskip-12 pt\bf \rtit{scExt}:}\ By \rtit{sExt} and the fact that $c\not\in\fn{S}$ implies $c\not\in\fn{|S|}$.\qed
\end{desCription}

\begin{lemmabis}{\ref{thm:correspondence}}{Correspondence}\quad
%   \begin{enumerate}
%   \item
    \begin{math}
   S \reduc T \quad\text{implies}\quad |S| \reduc |T|    \;\text{or}\; |S| \steq |T|  
  \end{math}
%   \item
%     \begin{math}
%       |S| \reduc Q \quad \text{implies}\quad \exists T.\;  (S \reduc^{+} T \text{ where } |T| =Q)    \text{ or } (S \reduc^\ast T \text{ where } T\err) 
%     \end{math}
%   \end{enumerate}
\end{lemmabis}

\proof
  The proof % for the first clause
  is by rule induction on $S \reduc T$.
  \begin{desCription}
  \item\noindent{\hskip-12 pt\bf\rtit{cThn}, \rtit{cEls}, \rtit{cCom}, \rtit{cPrc}:}\   There is a corresponding reduction rule in the semantics of \figref{fig:structural-reduc}.
  \item\noindent{\hskip-12 pt\bf\rtit{cSpl}, \rtit{cRst}, \rtit{cDsc}:}\ Satisfies $|S| \steq |T|$.
  \item\noindent{\hskip-12 pt\bf\rtit{cPar}, \rtit{cRes}:}\ Follows by I.H.
  \item\noindent{\hskip-12 pt\bf\rtit{cStr}:}\ By \rtit{rStr} and Lemma~\ref{lem:struct-corresp-1} \qed
  \end{desCription}

\begin{cor}
  \label{cor:correspondence-2}
  \begin{math}
    |S|\reducNot \quad\text{implies}\quad S\reducNot \text{ or } (\exists T. \ S\reduc T \text{ and }  |S|\steq|T|)
  \end{math}
\end{cor}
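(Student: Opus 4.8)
The plan is to derive this immediately from Lemma~\ref{thm:correspondence} (Correspondence) by a simple case split on whether the system $S$ itself can reduce. There are no calculations involved; the only ingredient is the contrapositive use of the hypothesis $|S|\reducNot$ against the first disjunct of Lemma~\ref{thm:correspondence}.

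Concretely, first I would consider the case $S\reducNot$. Then the left disjunct of the conclusion holds directly and we are done. Otherwise, $S$ is not stable, so there exists some $T$ with $S\reduc T$. Applying Lemma~\ref{thm:correspondence} to this reduction gives $|S|\reduc|T|$ or $|S|\steq|T|$. Since we are assuming $|S|\reducNot$, the first alternative $|S|\reduc|T|$ is impossible, so $|S|\steq|T|$. Thus we have exhibited a $T$ with $S\reduc T$ and $|S|\steq|T|$, i.e.\ the right disjunct of the conclusion.

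I do not anticipate any obstacle here: the statement is a routine repackaging of Lemma~\ref{thm:correspondence}, using only that $\reduc$ is, by definition, the negation of $\reducNot$. A short write-up suffices:

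\proof
If $S\reducNot$ the first disjunct holds. Otherwise $S\reduc T$ for some $T$, and by Lemma~\ref{thm:correspondence} either $|S|\reduc|T|$ or $|S|\steq|T|$. The former contradicts the hypothesis $|S|\reducNot$, so $|S|\steq|T|$, giving the second disjunct. \qed
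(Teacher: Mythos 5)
Your proof is correct and matches the paper's intent exactly: the corollary is stated immediately after Lemma~\ref{thm:correspondence} with no written proof precisely because this routine case split (on whether $S$ reduces) plus the contrapositive of the first disjunct is the intended argument. Nothing to add.
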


\begin{lemmabis}{\ref{lem:quaseq-preserve-absence-err}}{Properties of \quaseq\ with respect to reductions}\quad
  \begin{enumerate}[\em(1)]
  \item   
   \begin{math}
    S\quaseq T\text{ and } T\reduc T'\text{ and }\text{S\errNot}\quad\text{implies}\quad \exists S'. S\reduc S'\text{ and } S'\quaseq T'
  \end{math}
 \item 
     \begin{math}
    S\quaseq T\text{ and } \text{S\checkmark} \quad\text{implies}\quad T\reducNot
  \end{math}
  \end{enumerate}
\end{lemmabis}

\begin{proof}
  The first clause is proved by case analysis of $T\reduc T'$ using Lemma~\ref{lem:reduc-proc-struct} to infer the structure of $T$, then use the definition $S\quaseq T$ to determine the structure of $S$.  The second clause is proved by assuming that $\exists T'$ such that $T\reduc T'$  and then use the first clause to show that this leads to a contradiction. 
\end{proof}

\begin{lem}[Reduction and System Structure] \label{lem:reduc-proc-struct} $S \reduc T$ implies
  \begin{enumerate}[\em(1)]
  \item  $S \steq \restB{\lst{c}}{\conC{\pioutA{c}{\lst{e}}}\paral \conCC{\piin{c}{\lst{x}}{P}}\paral R}$, $T \steq \restB{\lst{c}}{\con{\eV\cup\eVV}{P\subC{\lst{v}}{\lst{x}}}\paral R}$, $\permout{c}\in\eV$, $\permin{c}\in\eVV$, $\lst{e}\evaluate\lst{v}$  or;
     \item $S \steq \restB{\lst{c}}{\conC{\cmpC{\bV}{P}{Q}}\paral R}$,  $T \steq \restB{\lst{c}}{\conC{P}\paral R}$ or $T \steq \restB{\lst{c}}{\conC{Q}\paral R}$  or; 
     \item $S \steq \restB{\lst{c}}{\conC{\pCall{\pV}{\lst{e}}\subS{\lst{d_1}}{\lst{d_2}}}\paral R}$, $T \steq \restB{\lst{c}}{\conC{P\subC{\lst{v}}{\lst{x}}\subC{\lst{d_1}}{\lst{d_2}}}\paral R}$,  $\lst{e}\evaluate\lst{v}$  or;
   %   \item  $S \steq \restB{\lst{c}}{\restB{c}{\conC{\pioutA{c}{\lst{e}}}}\paral R}$, $T \steq \restB{\lst{c}}{\restB{c}{\conC{\inert}}\paral R}$, $\permout{c}\in\eV$ or; 
%      \item  $S \steq \restB{\lst{c}}{\restB{c}{\conC{\piin{c}{\lst{c}}{P}}}\paral R}$, $T \steq \restB{\lst{c}}{\restB{c}{\conC{\inert}}\paral R}$, $\permin{c}\in\eV$ or;
     \item $S \steq \restB{\lst{c}}{\con{\eV\uplus\eVV}{P\paral Q} \paral R}$, $T \steq \restB{\lst{c}}{\conC{P}\paral \conCC{Q} \paral R}$ or;
     \item $S \steq \restB{\lst{c}}{\conC{\rest{c}{P}} \paral R}$, $T \steq \restB{\lst{c}}{\restB{c}{\con{\eV\cup\sset{\permin{c},\permout{c}}}{P}} \paral R}$ or;
      \item $S \steq \restB{\lst{c}}{\conC{\inert} \paral R}$, $S \steq \restB{\lst{c}}{\con{\emptyset}{\inert} \paral R}$, $\eV\neq \emptyset$ 
  \end{enumerate}
\end{lem}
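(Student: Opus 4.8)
The plan is to prove the statement by rule induction on the derivation of $S \reduc T$, using the rules of \figref{fig:confined-lang}. Six of the reduction rules map directly onto the six clauses, taking $\lst c$ empty and $R=\con{\emptyset}{\inert}$ (appealing to \rtit{scNil}): \rtit{cCom} gives clause~(1), \rtit{cThn} and \rtit{cEls} give clause~(2), \rtit{cPrc} gives clause~(3), \rtit{cSpl} gives clause~(4), \rtit{cLcl} gives clause~(5), and \rtit{cDsc} gives clause~(6). In each of these cases $S$ and $T$ already have (an instance of) the stated shape, so only reflexivity of $\steq$ is needed.

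For the context rules \rtit{cPar} and \rtit{cRes} I would apply the induction hypothesis to the sub-reduction and then absorb the extra context into the slots that each clause already provides. If $S=S_1\paral S_2$, $S_1\reduc S_1'$ and $T=S_1'\paral S_2$, the I.H. gives, for one of the clauses, $S_1\steq\restB{\lst c}{B\paral R_1}$ and $S_1'\steq\restB{\lst c}{B'\paral R_1}$, where $B,B'$ are the left- and right-hand bodies of that clause; after $\alpha$-renaming the bound names $\lst c$ apart from $\fn{S_2}$, rule \rtit{scExt} (together with \rtit{scCom} and \rtit{scAss}) rewrites $S_1\paral S_2\steq\restB{\lst c}{B\paral(R_1\paral S_2)}$ and likewise for $T$, so the same clause holds with $R:=R_1\paral S_2$. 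The \rtit{cRes} case is analogous: the new outer $\rest a{\cdot}$ is pushed onto the front of $\lst c$ (after $\alpha$-renaming $a$ away from collisions). Since the conclusion of every clause is stated \emph{up to} $\steq$, the closure rule \rtit{cStr} is then immediate --- from $S\steq S'\reduc T'\steq T$ and a clause holding for $S',T'$, transitivity of $\steq$ carries it to $S$ and $T$.

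The one rule that does not fit the pattern is \rtit{cTgh}: an instance $\restB{c}{\conCP\paral S_0}\reduc\con{\eV\setminus\sset{\permin c,\permout c}}{P}\paral\rest c{S_0}$ (with $\sset{\permin c,\permout c}\cap\eV\neq\emptyset$ and $c\not\in\fn P$) matches none of the six shapes, although it is administrative in the sense of Lemma~\ref{thm:correspondence}, i.e. $|S|\steq|T|$ via \rtit{sExt}. The cleanest remedy is to add a seventh clause recording exactly this shape; alternatively one checks that the downstream use in Lemma~\ref{lem:quaseq-preserve-absence-err} needs only the weaker disjunct ``$T$ has one of these shapes, or $|S|\steq|T|$ with the owned-permission footprint unchanged up to pruning of dead permissions''. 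I expect this \rtit{cTgh} case --- and more generally the bookkeeping of how $\fn{\cdot}$ interacts with permission annotations in the \rtit{cTgh}/\rtit{scExt} steps --- to be the only delicate point; the remainder is routine congruence-chasing.
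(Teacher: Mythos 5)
Your approach coincides with the paper's: the paper's entire proof is ``by rule induction on $S \reduc T$'', and your case analysis (redex rules give the six clauses directly with $\lst{c}$ empty and $R=\con{\emptyset}{\inert}$; \rtit{cPar} and \rtit{cRes} absorb context via the I.H.\ together with \rtit{scExt}/$\alpha$-renaming; \rtit{cStr} by transitivity of $\steq$) is exactly the intended argument, correctly carried out. Your observation about \rtit{cTgh} is a genuine catch rather than a defect of your proof: an instance $\restB{c}{\conCP\paral S_0}\reduc\con{\eV\setminus\sset{\permin{c},\permout{c}}}{P}\paral\rest{c}{S_0}$ matches none of the six shapes (the confined process $P$ is arbitrary, and \rtit{scExt} cannot extrude $\conCP$ precisely because $c$ occurs in its permission set), and the paper's one-line proof silently omits it. Since the lemma is used downstream as an exhaustive enumeration of redexes (in Proposition~\ref{prop:safe-stab-vs-system-structure}, Lemma~\ref{lem:confluence}, Lemma~\ref{lem:non-det-implies-struct} and Lemma~\ref{lem:merge-assert-stable}), your proposed fix --- either a seventh clause for the \rtit{cTgh} shape, or the weaker disjunct ``$|S|\steq|T|$ with $\perm{S}=\perm{T}$'' (which does hold for \rtit{cTgh}) --- is the right repair, and the second form is what those downstream case analyses actually need.
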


\begin{proof}
  By rule induction on $S \reduc T$.
\end{proof}

\begin{propbis}{\ref{prop:safe-stab-vs-system-structure}}{Safe-Stability and System Structure}
  \begin{align*}
    S\checkmark & \quad\text{iff}\quad  S \steq\restB{\lst{d}}{ \;\paral_{i=0}^n \con{\eV_i}{\pioutA{c_i}{\lst{e_i}}} \; \paral_{j=0}^m \con{\eVV_j}{\piin{c'_j}{\lst{x_j}}{P_j}}}
  \end{align*}
  where
  \begin{iteMize}{$\bullet$}
  \item $\sset{c_1,\ldots, c_n} \cap \sset{c'_1,\ldots,c'_m} = \emptyset$.
  \item $\bigwedge_{i=0}^n \permout{c_i} \in \eV_i$  and $\bigwedge_{j=0}^m \permin{c_j} \in \eVV_j$
  \end{iteMize}
  and where $\;\paral_{i=0}^0 \con{\eV_i}{\pioutA{c_i}{\lst{e_i}}}$ and  $\;\paral_{j=0}^0 \con{\eVV_j}{\piin{c'_j}{\lst{x_j}}{P_j}}$ denote = $\con{\emptyset}{\inert}$.
\end{propbis}

\begin{proof}
  Immediate by case analysis of Lemma~\ref{lem:reduc-proc-struct} and then the conditions for $S\err$ from \figref{fig:confined-lang}.  
\end{proof}

\begin{lemmabis}{\ref{lem:confluence}}{Partial Confluence}
    \begin{math}
    S \reduc T_1 \text{ and }  S \reduc T_2 \; \text{implies  either of the following:}
   \end{math}
    \begin{enumerate}[\em(1)]
    \item  $T_1\quaseq T_2 \text{ or;}$
    \item $\exists T_3. \, T_1 \reduc T_3 \text{ and } T_2 \reduc T_3$
    \end{enumerate}
\end{lemmabis}

\proof
 By case analysis of the possible forms of $S$ using Lemma~\ref{lem:reduc-proc-struct}, then restricting the possibilities using properties of well-formed systems.  We here overview the two main cases.
 \begin{iteMize}{$\bullet$}
 \item For $S\reduc T_1$ we have 
\[S \steq \restB{\lst{c}}{\con{\eV_1}{\pioutA{c_1}{\lst{e_1}}}\paral
  \con{\eVV_1}{\piin{c_1}{\lst{x}}{P_1}}\paral R_1}
  \enbox{,}
  T_1 \steq\restB{\lst{c}}{\con{\eV_1\cup\eVV_1}{P\subC{\lst{v}}{\lst{x}}}\paral
  R_1}
  \enbox{,}
  \permout{c_1}\in\eV_1
  \enbox{,}\permin{c_1}\in\eVV_1\ .
\]
  Also for $S\reduc T_2$ we have 
\[S \steq \restB{\lst{c}}{\con{\eV_2}{\pioutA{c_2}{\lst{e_2}}}\paral
  \con{\eVV_2}{\piin{c_2}{\lst{x}}{P_2}}\paral R_2}
  \enbox{,}
  T_1 \steq
  \restB{\lst{c}}{\con{\eV_2\cup\eVV_2}{P\subC{\lst{v}}{\lst{x}}}\paral
    R_2}
  \enbox{,}
\permout{c_2}\in\eV_2
  \enbox{,}
\permin{c_2}\in\eVV_2\ .
\]
  We have two sub-cases
   \begin{description}
   \item[$c_1\neq c_2$] The two redexes in $S$ are distinct and, for
     some system $R$, we have 
\[R_1 \steq \restB{\lst{d_2}}{\con{\eV_2}{\pioutA{c_2}{\lst{e_2}}}\paral \con{\eVV_2}{\piin{c_2}{\lst{x}}{P_2}}\paral R}
   \quad\mbox{and}\quad
 R_2 \steq
 \restB{\lst{d_1}}{\con{\eV_1}{\pioutA{c_1}{\lst{e_1}}}\paral
   \con{\eVV_1}{\piin{c_1}{\lst{x}}{P_1}}\paral R}
\]
  from which we can then find a common $T_3$  that both $T_1$ and $T_2$ reduce to.
 \item[$c_1= c_2$]  The conditions that $\permout{c_1}\in\eV_1$, $\permin{c_1}\in\eVV_1$, $\permout{c_2}\in\eV_2$ and $\permin{c_2}\in\eVV_2$ and the fact that $S$ is well-formed ensure that $S\reduc T_1$ and $S\reduc T_2$ refer to the same reduction (modulo structural equivalence) \ie $\eV_1=\eV_2$, $\eVV_1=\eVV_2$, $\lst{e_1}=\lst{e_2}$, $P_1=P_2$ and $R_1=R_2$ which implies $T_1\steq T_2$, thus $T_1\quaseq T_2$ by Proposition~\ref{lem:quas-impl-unconst-steq}.
   \end{description}
   \item For $S\reduc T_1$ we have $S \steq \restB{\lst{c}}{\con{\eV_1\uplus\eVV_1}{P_1\paral Q_1} \paral R_1}$, $T_1 \steq \restB{\lst{c}}{\con{\eV_1}{P_1}\paral \con{\eVV_1}{Q_1} \paral R_1}$ and for $S\reduc T_2$ we have $S \steq \restB{\lst{c}}{\con{\eV_2\uplus\eVV_2}{P_2\paral Q_2} \paral R_2}$, $T_2 \steq \restB{\lst{c}}{\con{\eV_2}{P_2}\paral \con{\eVV_2}{Q_2} \paral R_2}$.  By the assumption that $S$ is well-formed, we have the following sub-cases:
     \begin{description}
       \item[$(\eV_1\uplus\eVV_1) \neq (\eV_2\uplus\eVV_2)$]  Then we
         have different redexes meaning that for some $R$ we have 
\[R_1\steq \restB{\lst{d_2}}{\con{\eV_2\uplus\eVV_2}{P_2\paral Q_2}
  \paral R}
  \quad\mbox{and}\quad
 R_2\steq \restB{\lst{d_1}}{\con{\eV_1\uplus\eVV_1}{P_1\paral Q_1}
   \paral R}\ ,
\]
  which guarantees the existence of a common system $T_3$ that $T_1$ and $T_2$ can reduce to.
         \item[$(\eV_1\uplus\eVV_1) = (\eV_2\uplus\eVV_2)$] Then we must have the same redexes, \ie $P_1=P_2$, $Q_1=Q_2$ and $R_1=R_2$.  This implies $T_1\quaseq T_2$.  \qed
     \end{description}
 \end{iteMize}

\noindent The following technical Lemmas deal with the restricted
non-determinism of confined processes and how it can be characterised
using the relation \quaseq.  In particular,
Lemma~\ref{lem:corrective-reduc} is useful because it allows us to
correct reductions that lead to systems that do not evaluate by
instead reducing to systems that are related to them by \quaseq, which
in turn means, by Proposition~\ref{lem:quas-impl-unconst-steq}, that
they contain the same process structure.

\begin{lem} \label{lem:non-det-implies-struct}
  \begin{math}
    S \evaluate \text{ and } S \reduc T \text{ and } T \not\evaluate \quad\text{implies}\quad \exists P, Q, R. \quad S \steq \restB{\lst{c}}{\con{\eV\uplus\eVV}{P\paral Q}\paral R}\text{ and } \newline T \steq \restB{\lst{c}}{\conC{P}\paral \conCC{Q}\paral R}
  \end{math}
\end{lem}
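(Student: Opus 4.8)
The plan is to prove the statement by a case analysis on the reduction $S \reduc T$, driven by Lemma~\ref{lem:reduc-proc-struct}, which lists the six possible shapes of a one-step reduction. One of those shapes --- the one coming from rule \rtit{cSpl} --- is exactly the conclusion we want: $S \steq \restB{\lst{c}}{\con{\eV\uplus\eVV}{P\paral Q}\paral R}$ and $T \steq \restB{\lst{c}}{\conCP\paral\conCC{Q}\paral R}$. So the whole content of the lemma is that, in each of the other five shapes (those arising from \rtit{cCom}, from the branching rules \rtit{cThn}/\rtit{cEls}, from \rtit{cPrc}, from \rtit{cLcl} and from \rtit{cDsc}), the hypotheses $S\evaluate$ and $S\reduc T$ already force $T\evaluate$, contradicting $T\not\evaluate$. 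Morally this says that \rtit{cSpl} is the unique rule introducing harmful non-determinism: every other reduction has a reduct whose owned permissions are completely pinned down by the redex (the union $\eV\cup\eVV$ for \rtit{cCom}, the unchanged set for branching and \rtit{cPrc}, the set enlarged by the fresh pair $\{\permin{c},\permout{c}\}$ for \rtit{cLcl}, and $\emptyset$ for \rtit{cDsc}), so there is no ``under-allocation'' and hence no way to strand a subprocess.

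Concretely, fix an evaluation witness $S\reduc^{n} U$ with $U\checkmark$ (it exists since $S\evaluate$, and $n\ge 1$ since $S\reduc T$). In each of the five non-\rtit{cSpl} cases I would observe that the fired redex is \emph{isolated and deterministic}: by well-resourcing and linearity of permissions (\defref{def:wf-systems}) there is at most one enabled instance of the relevant rule on that channel (e.g. at most one $\permout{c}$ and one $\permin{c}$ in $S$ for \rtit{cCom}; the classical value of the guard for branching; the syntactic definition for \rtit{cPrc}), and the participating confined subprocesses take part in no other enabled reduction. Hence, confronting $S\reduc T$ with the first step $S\reduc S^{1}$ of the witness and using the diamond of Lemma~\ref{lem:confluence}, the two steps either coincide (modulo $\steq$) or lie in disjoint redexes and complete to a common reduct $T_{3}$ with $T\reduc T_{3}$ still the same isolated step; iterating, the step $S\reduc T$ can be permuted to the front of the witness. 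Moreover $U\checkmark$ forces that step to actually occur along the witness --- a communicating pair, a conditional, a process call, a term $\conC{\rest{c}{P}}$, or a $\conC{\inert}$ with nonempty permission set are all reducible, so by Proposition~\ref{prop:safe-stab-vs-system-structure} none of them can survive into the safely-stable $U$. The result of the permutation is $T\reduc^{\,n-1} U$ with $U\checkmark$, i.e.\ $T\evaluate$, the desired contradiction. (For the three ``genuine process step'' cases \rtit{cCom}, branching, \rtit{cPrc} an alternative route is to apply Lemma~\ref{lem:eval-preserve-quaseq} to $S\quaseq S$ and $S\reduc T$, obtaining $S\reduc R$ with $R\quaseq T$ and $R\evaluate$, and then use $|R|\steq|T|$ from Proposition~\ref{lem:quas-impl-unconst-steq} together with permission linearity to pin $R$ down to $T$; this does not work directly for \rtit{cLcl} and \rtit{cDsc}, which are $|{\cdot}|$-preserving.)

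The main obstacle is exactly this bubbling/commutation bookkeeping, made uniform across the five cases: one has to verify carefully that the non-\rtit{cSpl} redex really is isolated (so that the $\quaseq$-branch of Lemma~\ref{lem:confluence} degenerates to $\steq$, i.e.\ to ``the same step''), and one has to treat the two $|{\cdot}|$-preserving bookkeeping rules \rtit{cLcl} and \rtit{cDsc} --- which superficially look as dangerous as \rtit{cSpl} --- by noting that their redex ($\conC{\rest{c}{P}}$, respectively $\conC{\inert}$) can only be reduced in one way and acts on a subsystem that never communicates, so the step commutes with everything. Once the permutation argument is in place for each of the five cases, reassembling the six cases of Lemma~\ref{lem:reduc-proc-struct} yields the statement immediately, since the \rtit{cSpl} case is the only survivor and it is precisely the claimed shape.
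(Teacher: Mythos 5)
Your proposal is correct in substance but organizes the argument differently from the paper. You case-split on the rule deriving $S\reduc T$ via Lemma~\ref{lem:reduc-proc-struct} and, for each of the five non-\rtit{cSpl} shapes, permute that step to the front of a fixed witness $S\reduc^{n}U\checkmark$, concluding $T\evaluate$ and hence a contradiction. The paper instead inducts on $n$ and applies Lemma~\ref{lem:confluence} at each step, locating the offending split either immediately (base case: $T$ is forced stable, so $T\not\evaluate$ means $T\err$, which only a bad split can produce from an error-free $S$) or one step down the witness via the inductive hypothesis, pulling it back up through the common reduct using uniqueness of linear permissions. The two proofs are mirror images: the paper traces the failure of evaluability forward until it finds the split, you trace the non-split step forward until it fires. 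Both rest on the same two facts --- distinct redexes in a well-resourced system are disjoint and commute, and a non-\rtit{cSpl} redex has a unique reduct --- and, like the paper, you cannot use Lemma~\ref{lem:confluence} as a black box: you must reopen its case analysis to see that its $\quaseq$-branch degenerates to ``the same step'' when the redex is not a split, which you correctly flag as the main bookkeeping burden. What your route buys is a cleaner articulation of the conceptual content (\rtit{cSpl} is the sole source of harmful non-determinism) and it replaces the paper's base-case error analysis with the observation, via Proposition~\ref{prop:safe-stab-vs-system-structure}, that none of the five non-split redexes can survive into a safely-stable $U$, so the step must occur along the witness; what it costs is five separate commutation checks, including the two permission-erasure-preserving rules \rtit{cLcl} and \rtit{cDsc} that the paper's induction handles without special attention.
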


\proof
  By induction on the number of reductions in $S\evaluate$ leading to a safely-stable system \ie $S\reduc^n S'\checkmark$ for some $S'$.
  \begin{desCription}
  \item\noindent{\hskip-12 pt\bf $n=1\,$:}\ \ By Lemma~\ref{lem:confluence} and $S'\checkmark$ (\ie $S'\reducNot$) it must be the case that $T\quaseq S'$.  By Lemma~\ref{lem:quaseq-preserve-absence-err}.2 this also implies $T\reducNot$ and since $T\not\evaluate$ it must be the case that $T\err$.  Now by case analysis of Lemma~\ref{lem:reduc-proc-struct}, the only system structure that allows this is when $S\steq\restB{\lst{c}}{\con{\eV\uplus\eVV}{P\paral Q}\paral S''}$ and $T\steq \restB{\lst{c}}{\conC{P}\paral \conCC{Q}\paral S''}$.
  \item\noindent{\hskip-12 pt\bf $n=k+1\,$:}\ \ We have $S\evaluate$ $S\reduc S'' \reduc^k S'\checkmark$.  By Lemma~\ref{lem:confluence} we have two sub-cases.  The first case subsumes the second in some cases, so we here consider the mutually exclusive    variants:
    \begin{description}
    \item[$\exists T'. T\reduc T'\text{ and }S''\reduc T'$]  $T\not\evaluate$ implies $T'\not\evaluate$, and by $S''\reduc T'$, $S'' \reduc^k S'\checkmark$ and I.H. we obtain $S''\steq\restB{\lst{c}}{\con{\eV\uplus\eVV}{P\paral Q}\paral S'''}$ and $T'\steq \restB{\lst{c}}{\conC{P}\paral \conCC{Q}\paral S'''}$. Now $T \not\evaluate$ and $S'' \reduc^k S'\checkmark$ implies $T\neq S''$.   Thus by that fact that $S\reduc T\reduc T'$ and the uniqueness of linear permissions, it must be the case that $S\steq\restB{\lst{c}}{\con{\eV\uplus\eVV}{P\paral Q}\paral S''''}$ and $T\steq \restB{\lst{c}}{\conC{P}\paral \conCC{Q}\paral S''''}$ for some $S''''$ such that $S''''\reduc S'''$.
    \item[$T\quaseq S'' \text{ where } \not\exists T'. T\reduc T'\text{ and }S''\reduc T'$] Clearly, since  $T \not\evaluate$ we have $T\neq S''$.  Also the fact that there is no common system $T$ and $S''$ can reduce to means that the reductions from $S$ where not from separate redexes. By case analysis of Lemma~\ref{lem:reduc-proc-struct}, the only possible option for having non-deterministic reductions from the same redex is the case where  $S\steq\restB{\lst{c}}{\con{\eV\uplus\eVV}{P\paral Q}\paral S'''}$ and $T\steq \restB{\lst{c}}{\conC{P}\paral \conCC{Q}\paral S'''}$. \qed
    \end{description}
  \end{desCription}

\begin{lem}\label{lem:eval-and-splt-implies-good-split}
  \begin{math}
    S\evaluate \text{ and } S\steq\restB{\lst{c}}{\conC{P\paral Q}\paral R}\quad \text{implies}\quad \exists \eVV_1,\eVV_2\text{ such that } \eVV_1\uplus\eVV_2=\eV\text{ and } \newline\restB{\lst{c}}{\con{\eVV_1}{P}\paral \con{\eVV_2}{Q}\paral R}\evaluate
  \end{math}
\end{lem}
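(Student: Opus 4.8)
The plan is to start from an \emph{arbitrary} partition $\eVV_1\uplus\eVV_2=\eV$ and argue that, even if this particular choice fails to yield an evaluating system, a suitable \emph{correction} of it does. First I would use rule \rtit{cSpl} --- together with \rtit{cPar}, \rtit{cRes} and \rtit{cStr} --- to obtain a single reduction $S\reduc^1 T$ with $T\steq\restB{\lst{c}}{\con{\eVV_1}{P}\paral\con{\eVV_2}{Q}\paral R}$ (here I use that $S\steq\restB{\lst{c}}{\conC{P\paral Q}\paral R}$ and that $\reduc$ is closed under $\steq$). If $T\evaluate$ we are already done, since $\evaluate$ is closed under $\steq$ by \defref{def:safe-stability-eval} and \rtit{cStr}; so from now on assume $T\not\evaluate$.

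Since $S\evaluate$, $S\reduc^1 T$ and $T\not\evaluate$, Lemma~\ref{cor:corrective-reductions} (Corrective Reductions) supplies a system $R'$ with $S\reduc^1 R'$, $R'\quaseq T$ and $R'\evaluate$. It then remains to show that $R'$ has the shape $\restB{\lst{c}}{\con{\eVV'_1}{P}\paral\con{\eVV'_2}{Q}\paral R}$ for some partition $\eVV'_1\uplus\eVV'_2=\eV$: once this is established, $R'\evaluate$ immediately gives $\restB{\lst{c}}{\con{\eVV'_1}{P}\paral\con{\eVV'_2}{Q}\paral R}\evaluate$, which is precisely the required conclusion. To identify the shape of $R'$ I would do a case analysis of the reduction $S\reduc^1 R'$ via Lemma~\ref{lem:reduc-proc-struct} (Reduction and System Structure), using $S\steq\restB{\lst{c}}{\conC{P\paral Q}\paral R}$: the redex of $S$ is either (i) a \rtit{cSpl} step splitting the confined process $\conC{P\paral Q}$, in which case $R'\steq\restB{\lst{c}}{\con{\eVV'_1}{P}\paral\con{\eVV'_2}{Q}\paral R}$ with $\eVV'_1\uplus\eVV'_2=\eV$, as wanted; or (ii) a step internal to $R$ (or a \rtit{cLcl}/\rtit{cTgh}/\rtit{cDsc} step acting on $\conC{P\paral Q}$), in which case $R'$ still carries, at the position where $T$ has the two separate confined components $\con{\cdot}{P}\paral\con{\cdot}{Q}$, a \emph{single} confined process with body (structurally equivalent to) $P\paral Q$. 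Case (ii) contradicts $R'\quaseq T$: by the congruence clauses of \defref{def:quaseq} and the remark following Proposition~\ref{lem:quas-impl-unconst-steq} --- that $\con{\eV\uplus\eVV}{P\paral Q}\not\quaseq\conCP\paral\conCC{Q}$ --- a lone confined $\conC{P\paral Q}$ cannot be matched, even up to $\steq$, against two adjacent confined processes. Hence only case (i) survives. The few genuinely degenerate situations (e.g.\ $P$ or $Q$ structurally equivalent to $\inert$, or $P\paral Q\steq\inert$) are treated directly: there the trivial partition $\eVV_1=\eV,\eVV_2=\emptyset$ already produces a system $\steq$-equivalent to the one $S$ reaches after one harmless \rtit{cSpl} step, which therefore evaluates.

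The step I expect to be the main obstacle is exactly this shape-rigidity argument: making precise that $\quaseq$, despite freely absorbing $\steq$ on both sides, still preserves the ``confinement skeleton'' tightly enough to exclude case (ii) and to exclude re-bracketed splits such as $\con{\cdot}{P\paral Q_a}\paral\con{\cdot}{Q_b}$. I would isolate this as an auxiliary observation --- if $R'\quaseq T$ with $T\steq\restB{\lst{c}}{\con{\eVV_1}{P}\paral\con{\eVV_2}{Q}\paral R}$ and $P,Q$ non-trivial, then $R'\steq\restB{\lst{c}}{\con{\eVV'_1}{P}\paral\con{\eVV'_2}{Q}\paral R''}$ for some $R''\quaseq R$ and some $\eVV'_1\uplus\eVV'_2=\eV$ --- proved by induction on the derivation of $R'\quaseq T$, unwinding the $\steq$-closure clauses with the aid of Lemma~\ref{lem:struct-corresp-1}. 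Everything else --- the single \rtit{cSpl} reduction, the appeal to Corrective Reductions, and closure of $\evaluate$ under $\steq$ --- is routine.
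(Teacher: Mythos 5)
Your proposal has a fatal circularity. You invoke Corrective Reductions (Lemma~\ref{cor:corrective-reductions}) to repair the possibly-bad split $S\reduc T$, but in this development that lemma is \emph{downstream} of the statement you are proving: Lemma~\ref{cor:corrective-reductions} is derived from Lemma~\ref{lem:eval-preserve-quaseq-multi-reduc}, which rests on Lemma~\ref{lem:eval-preserve-quaseq}, which in turn appeals to the single-step corrective lemma (Lemma~\ref{lem:corrective-reduc}), whose proof explicitly cites Lemma~\ref{lem:eval-and-splt-implies-good-split} (via Lemma~\ref{lem:non-det-implies-struct}, which identifies \rtit{cSpl} as the sole source of non-evaluating reducts). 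In other words, the present lemma \emph{is} the technical core that makes corrective reductions possible --- the only way an evaluating system can reduce to a non-evaluating one is a badly apportioned \rtit{cSpl}, and the claim that some apportionment works is exactly what must be established first. Using corrective reductions here begs the question.

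The paper instead proves the lemma directly, by induction on the length $n$ of a witnessing sequence $S\reduc^n S'\checkmark$. In the base case $n=1$, one performs \emph{some} split $S\reduc\restB{\lst{c}}{\con{\eV_1}{P}\paral\con{\eV_2}{Q}\paral R}$ and uses Partial Confluence (Lemma~\ref{lem:confluence}) together with $S'\reducNot$ to force $S'\quaseq\restB{\lst{c}}{\con{\eV_1}{P}\paral\con{\eV_2}{Q}\paral R}$; since $S'\errNot$, the permissions recorded in $S'$ exhibit a good partition $\eVV_1\uplus\eVV_2=\eV$. In the inductive step, Lemma~\ref{lem:reduc-proc-struct} shows the first reduction is either internal to $R$ (apply the induction hypothesis to $\restB{\lst{c}}{\conC{P\paral Q}\paral R'}$) or is itself the split (immediate). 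If you want to keep your one-step-ahead strategy, you would have to re-prove the corrective step from scratch for this specific redex --- which collapses into essentially the paper's confluence-based induction. Your secondary worry (the $\quaseq$ shape-rigidity argument) is real but moot until the circularity is removed.
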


\proof
    By induction on the number of reductions in $S\evaluate$ leading to a safely-stable system \ie $S\reduc^n S'\checkmark$ for some $S'$.
    \begin{desCription}
    \item\noindent{\hskip-12 pt\bf $n=1\,$:}\ \  By \rtit{cSpl}, $S\steq\restB{\lst{c}}{\conC{P\paral Q}\paral R}$ can reduce to  $\restB{\lst{c}}{\con{\eV_1}{P}\paral \con{\eV_2}{Q}\paral R}$,    for some $\eV_1,\eV_2$, and by  Lemma~\ref{lem:confluence} and  $S'\reducNot$ we must have  $S'\quaseq \restB{\lst{c}}{\con{\eV_1}{P}\paral \con{\eV_2}{Q}\paral R}$, and since $S'\errNot$, this implies $\exists \eVV_1,\eVV_2\text{ such that } \eVV_1\uplus\eVV_2=\eV\text{ and } \restB{\lst{c}}{\con{\eVV_1}{P}\paral \con{\eVV_2}{Q}\paral R}\evaluate$.
    \item\noindent{\hskip-12 pt\bf $n=k+1\,$:}\ \ We have $S\reduc S'\reduc^k S''\checkmark$ for some $S',S''$.  Lemma~\ref{lem:reduc-proc-struct} gives us two sub-cases:
      \begin{description}
      \item[$S'\steq \restB{\lst{c}}{\conC{P\paral Q}\paral R'}$ where $R\reduc R'$]  By  $S'\reduc^k S''\checkmark$ and  I.H. we obtain  $\exists \eVV_1,\eVV_2$ such that $\eVV_1\uplus\eVV_2=\eV\text{ and } \restB{\lst{c}}{\con{\eVV_1}{P}\paral \con{\eVV_2}{Q}\paral R'}\evaluate$ which implies that $\exists \eVV_1,\eVV_2$  such that ${\eVV_1\uplus\eVV_2=\eV}$  and  $\restB{\lst{c}}{\con{\eVV_1}{P}\paral \con{\eVV_2}{Q}\paral R}\evaluate$.
      \item[$\restB{\lst{c}}{\con{\eV_1}{P}\paral \con{\eV_2}{Q}\paral R}$] Immediate. \qed
      \end{description}
    \end{desCription}

\begin{lem}[Corrective Reductions]\label{lem:corrective-reduc}
  \begin{math}
    S\evaluate \text{ and } S\reduc T\text{ and }T\not\evaluate \quad \text{implies}\quad \exists R\text{ such that } S\reduc R\text{ and } R\quaseq T \text{ and } R\evaluate
  \end{math}
\end{lem}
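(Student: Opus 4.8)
The plan is to derive this lemma almost directly from the two preceding structural lemmas, Lemma~\ref{lem:non-det-implies-struct} and Lemma~\ref{lem:eval-and-splt-implies-good-split}. The underlying intuition is that a reduction $S\reduc T$ for which $S\evaluate$ but $T\not\evaluate$ can only come from a \emph{bad partition} of permissions performed by \rtit{cSpl}: all the other reduction rules are (partially) confluent, so they cannot create a dead end once a safely-stable state is reachable. Since $S$ itself evaluates safely, there must be an alternative, \emph{good}, partition of the same parent permission set that leads all the way down to a safely-stable system, and that good partition is exactly the witness $R$ we need.

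Concretely, first I would apply Lemma~\ref{lem:non-det-implies-struct} to the hypotheses $S\evaluate$, $S\reduc T$ and $T\not\evaluate$. This yields processes $P,Q$, a context system $R_0$, a name list $\lst{c}$ and permission sets $\eV,\eVV$ with
\[
  S \steq \restB{\lst{c}}{\con{\eV\uplus\eVV}{P\paral Q}\paral R_0}
  \qquad\text{and}\qquad
  T \steq \restB{\lst{c}}{\con{\eV}{P}\paral \con{\eVV}{Q}\paral R_0}.
\]
Next I would feed the first display, together with $S\evaluate$, into Lemma~\ref{lem:eval-and-splt-implies-good-split}, instantiating its ``$\eV$'' with $\eV\uplus\eVV$. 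This produces permission sets $\eVV_1,\eVV_2$ with $\eVV_1\uplus\eVV_2=\eV\uplus\eVV$ such that, taking $R$ to be $\restB{\lst{c}}{\con{\eVV_1}{P}\paral \con{\eVV_2}{Q}\paral R_0}$, we have $R\evaluate$. I claim $R$ is the required witness.

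It remains to check $S\reduc R$ and $R\quaseq T$, both of which are bookkeeping. For the reduction: since $\eVV_1\uplus\eVV_2=\eV\uplus\eVV$, the first display gives $S\steq\restB{\lst{c}}{\con{\eVV_1\uplus\eVV_2}{P\paral Q}\paral R_0}$, and a single application of \rtit{cSpl}, propagated through the context by \rtit{cPar}, \rtit{cRes} and \rtit{cStr}, gives $S\reduc R$. For $R\quaseq T$: the base rule of \defref{def:quaseq} gives $\con{\eVV_1}{P}\quaseq\con{\eV}{P}$ and $\con{\eVV_2}{Q}\quaseq\con{\eVV}{Q}$, reflexivity of $\quaseq$ gives $R_0\quaseq R_0$, and the parallel- and restriction-congruence rules of \defref{def:quaseq} then yield $\restB{\lst{c}}{\con{\eVV_1}{P}\paral\con{\eVV_2}{Q}\paral R_0}\quaseq\restB{\lst{c}}{\con{\eV}{P}\paral\con{\eVV}{Q}\paral R_0}$; closing under $\steq$ with the characterisation of $T$ in the second display gives $R\quaseq T$. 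The only point that needs any care is keeping the permission-set identities and the (list-)restrictions aligned when passing between the various structural forms, so that the same $P,Q,\lst{c},R_0$ are reused throughout; there is no real obstacle specific to this lemma, since the genuine content — that the offending reduction is a split and that a correcting split exists — has already been discharged by the two cited lemmas, which themselves rest on Partial Confluence (Lemma~\ref{lem:confluence}).
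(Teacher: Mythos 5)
Your proposal is correct and follows essentially the same route as the paper: Lemma~\ref{lem:non-det-implies-struct} to expose the offending split, then Lemma~\ref{lem:eval-and-splt-implies-good-split} to obtain a good partition $\eVV_1\uplus\eVV_2$ of the parent permission set, with the corrected split serving as the witness $R$. You merely spell out the bookkeeping ($S\reduc R$ via \rtit{cSpl} and $R\quaseq T$ via the congruence rules of $\quaseq$) that the paper leaves implicit.
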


\begin{proof}
  By Lemma~\ref{lem:non-det-implies-struct} we know $S \steq
  \restB{\lst{c}}{\con{\eV\uplus\eVV}{P\paral Q}\paral S'}$ as well as
  $T \steq \restB{\lst{c}}{\conC{P}\paral \conCC{Q}\paral S'}$.  By
  Lemma~\ref{lem:eval-and-splt-implies-good-split} we know $\exists\eVV_1,\eVV_2$ such that $\eVV_1\uplus\eVV_2=\eV$  and $\restB{\lst{c}}{\con{\eVV_1}{P}\paral \con{\eVV_2}{Q}\paral S'}\evaluate$.   Since  $T\quaseq \restB{\lst{c}}{\con{\eVV_1}{P}\paral \con{\eVV_2}{Q}\paral S'}$ this implies that we can correct the permission split and be able to reduce to a safely-stable state.  
\end{proof}

In order to apply corrective actions to multiple reduction steps, we need to extend Lemma~\ref{lem:corrective-reduc} to systems that are related by \quaseq, due to reductions of type $(1)$ of Lemma~\ref{lem:confluence}.  The next Lemmas deal with this.    Lemma~\ref{lem:eval-preserve-quaseq} states that there exist matching reductions for systems related by $\quaseq$ preserving the evaluation property and Lemma~\ref{lem:eval-preserve-quaseq-multi-reduc} extends this to multiple reductions.  This allows us to prove the existence of corrective reductions over multiple reductions.

\begin{lemmabis}{\ref{lem:eval-preserve-quaseq}}{Evaluation Preservation for \quaseq}\quad
  \begin{math}
    S\quaseq T \text{ and } S\evaluate \text{ and } T\reduc T'\quad\text{implies}\newline \quad\exists S'\text{ such that } S\reduc S' \text{ where } S'\quaseq T'\text{ and } S'\evaluate
  \end{math}
\end{lemmabis}
\proof
By   $S\evaluate$ and Lemma~\ref{lem:violation-pres} we have $S\errNot$ and by Lemma~\ref{lem:quaseq-preserve-absence-err}.1 we know $\exists\ S_1$ such that $S\reduc S_1\text{ and } S_1\quaseq T'$.  At this point we have two sub-cases:  if $S_1\evaluate$ then the result follows immediately.  Otherwise, if $S_1\not\evaluate$,  then Lemma~\ref{lem:corrective-reduc} states that $\exists S_2\text{ such that } S\reduc S_2\text{ and } S_2\quaseq S_1 \text{ and } S_2\evaluate$.  By transitivity we have $S_2\quaseq S_1\quaseq T'$. \qed

\begin{lem}[Evaluation Preservation for \quaseq]\label{lem:eval-preserve-quaseq-multi-reduc}\quad
  \begin{math}
    S\quaseq T \text{ and } S\evaluate \text{ and } T\reduc^n T'\quad\text{implies}\newline \quad\exists S'\text{ such that } S\reduc^n S' \text{ where } S'\quaseq T'\text{ and } S'\evaluate
  \end{math}
\end{lem}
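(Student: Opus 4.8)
The plan is to prove this by a straightforward induction on $n$, using the single-step version, Lemma~\ref{lem:eval-preserve-quaseq}, as the workhorse. Since the statement is a direct lifting of that lemma from one reduction to $n$ reductions, there is no new conceptual ingredient; the only task is to thread the three conclusions ($S\reduc^{n}S'$, $S'\quaseq T'$, and $S'\evaluate$) through the inductive step so that all of them are available as hypotheses at the next stage.

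For the base case $n=0$, we have $T'=T$, so taking $S'\deftxt S$ works immediately: $S\reduc^{0}S$ trivially, $S'\quaseq T'$ is exactly the hypothesis $S\quaseq T$, and $S'\evaluate$ is the hypothesis $S\evaluate$. For the inductive step, suppose $T\reduc^{k+1}T'$, i.e.\ $T\reduc T''\reduc^{k}T'$ for some intermediate $T''$. First I would apply Lemma~\ref{lem:eval-preserve-quaseq} to $S\quaseq T$, $S\evaluate$ and $T\reduc T''$, obtaining $S''$ with $S\reduc S''$, $S''\quaseq T''$ and $S''\evaluate$. Crucially, this step not only produces a matching reduction $S\reduc S''$ but also re-establishes both $S''\quaseq T''$ and $S''\evaluate$, which are precisely the two premises required to invoke the induction hypothesis. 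Applying the I.H.\ to $S''\quaseq T''$, $S''\evaluate$ and $T''\reduc^{k}T'$ yields $S'$ with $S''\reduc^{k}S'$, $S'\quaseq T'$ and $S'\evaluate$. Concatenating, $S\reduc S''\reduc^{k}S'$ gives $S\reduc^{k+1}S'$, completing the case.

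There is no genuine obstacle here: the difficulty of this family of lemmas is entirely concentrated in the single-step Lemma~\ref{lem:eval-preserve-quaseq}, which itself relies on the corrective-reductions lemma (Lemma~\ref{lem:corrective-reduc}), violation preservation (Lemma~\ref{lem:violation-pres}) and the $\quaseq$-reduction properties (Lemma~\ref{lem:quaseq-preserve-absence-err}). The only point worth stating explicitly in the write-up is that the single-step lemma returns \emph{all three} conditions on $S''$, so that the induction is self-supporting; omitting the re-derivation of $S''\evaluate$ would break the argument. Beyond that, the proof is a two-line induction and can be stated as such.

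\proof By induction on $n$. The case $n=0$ is immediate with $S'=S$. For $n=k+1$, write $T\reduc T''\reduc^{k}T'$. By Lemma~\ref{lem:eval-preserve-quaseq} applied to $S\quaseq T$, $S\evaluate$ and $T\reduc T''$, there is $S''$ with $S\reduc S''$, $S''\quaseq T''$ and $S''\evaluate$. By the induction hypothesis applied to $S''\quaseq T''$, $S''\evaluate$ and $T''\reduc^{k}T'$, there is $S'$ with $S''\reduc^{k}S'$, $S'\quaseq T'$ and $S'\evaluate$. Hence $S\reduc S''\reduc^{k}S'$, i.e.\ $S\reduc^{k+1}S'$, as required. \qed
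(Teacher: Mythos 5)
Your proof is correct and follows exactly the same route as the paper: induction on $n$, with the single-step Lemma~\ref{lem:eval-preserve-quaseq} re-establishing both $S''\quaseq T''$ and $S''\evaluate$ so the induction hypothesis applies. Your write-up is in fact slightly cleaner than the paper's, which contains a small notational slip ($S'\quaseq T'$ where $S'\quaseq T''$ is meant) in the inductive step.
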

\proof
  By induction on $n$, the number of reductions in $T\reduc^n T'$.
  \begin{desCription}
  \item[$n=0$:] Immediate.
  \item[$n=k+1$:] We have $T\reduc T''\reduc^k T'$.  From $T\reduc T''$ and Lemma~\ref{lem:eval-preserve-quaseq} we obtain $\exists S'\text{ such that } S\reduc S' \text{ where } S'\quaseq T'\text{ and } S'\evaluate$. By I.H. we know $S'\reduc^k S''$ for some $S''$ such that $S''\quaseq T'$ and $S''\evaluate$ and $S\reduc S'\reduc^k S''$ gives us the required reduction sequence. \qed
  \end{desCription}

\begin{lem}\label{lem:eval-corresp-struct}
  $|S| \steq Q \text{ and } S\evaluate \quad \text{implies}\quad \exists T\text{ such that } S\reduc^\ast \steq T \text{ and } T\evaluate \text{ and }|T| = Q $. 
\end{lem}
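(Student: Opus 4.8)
The plan is to reduce the statement to a \emph{lifting} principle: every rearrangement of $|S|$ allowed by process structural equivalence can be mirrored, on the side of systems, by a sequence of ``bureaucratic'' reductions (\rtit{cSpl}, \rtit{cLcl}, \rtit{cDsc}, \rtit{cTgh}) together with system structural equivalence, none of which changes the underlying process up to $\steq$ --- indeed, by Lemma~\ref{thm:correspondence}, each such reduction $S'\reduc S''$ has $|S'|\steq|S''|$, and for \rtit{cSpl}, \rtit{cLcl}, \rtit{cDsc} in fact $|S''|=|S'|$. I would fix the convention that process $\steq$ is closed only under static contexts (parallel composition and restriction); this is the setting in which lifting goes through, since the system rules never rewrite inside the body of a confined process.

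The core is a \emph{one-step lifting} sub-claim: if $S\evaluate$ and $|S|\steq Q$ by a single application of one of the six process axioms at some static position, then there is $S_1$ with $S\reduc^\ast\steq S_1$ (using only bureaucratic reductions and system $\steq$), $|S_1|=Q$, and $S_1\evaluate$. Its proof is a case analysis on the axiom used and its position. The rewritten static subterm of $|S|$ corresponds to a static position of $S$ that may lie inside the body of a confined process $\conCP$; if so, I would first peel $S$ open by iterating \rtit{cSpl} (to expose top-level parallel components of process bodies) and \rtit{cLcl} (to float out restrictions) until the relevant operator becomes visible at the system level. Neither reduction alters the erasure, and each \rtit{cSpl} step can be chosen so as to preserve $\evaluate$, using the good split supplied by Lemma~\ref{lem:eval-and-splt-implies-good-split}; the \rtit{cLcl} and \rtit{cDsc} steps preserve $\evaluate$ too, since their redexes never conflict with any other redex and hence, by partial confluence (Lemma~\ref{lem:confluence}), a terminating safe reduction sequence from $S$ remains reachable. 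Once the operator is exposed, the process axiom is matched by the corresponding system rule: \rtit{sCom}/\rtit{sAss}/\rtit{sSwp} by \rtit{scCom}/\rtit{scAss}/\rtit{scSwp}, \rtit{sExt} by \rtit{scExt} (the free-name side conditions coincide), and \rtit{sNil}, \rtit{sNew} by \rtit{scNil}, \rtit{scNew} after first discarding superfluous permissions on the inert process with \rtit{cDsc} and, if a redundant scope remains, tightening it with \rtit{cTgh}. Absorbing $\steq$-steps into the neighbouring reductions via \rtit{cStr}, the whole sequence collapses to the shape $\reduc^\ast\steq$.

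Granting the sub-claim, the lemma follows by induction on the length $k$ of a derivation $|S|=R_0\leftrightarrow\cdots\leftrightarrow R_k=Q$, each $\leftrightarrow$ being one static axiom rewrite. If $k=0$ then $|S|=Q$ and $T=S$ works, as $S\evaluate$ is assumed. If $k>0$, the sub-claim gives $S\reduc^\ast\steq S_1$ with $|S_1|=R_1$ and $S_1\evaluate$; applying the induction hypothesis to $S_1$ (whose erasure reaches $Q$ in $k-1$ steps) yields $T$ with $S_1\reduc^\ast\steq T$, $|T|=Q$, $T\evaluate$; composing the two segments and re-absorbing the intervening $\steq$ via \rtit{cStr} gives $S\reduc^\ast\steq T$, as required.

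I expect the main obstacle to be the one-step lifting sub-claim: the bookkeeping needed to expose a deeply nested static position by iterated \rtit{cSpl}/\rtit{cLcl} while keeping the invariant $S\evaluate$ alive at every intermediate system, and checking that the concluding structural rearrangement introduces no permission violation. Lemma~\ref{lem:eval-and-splt-implies-good-split} handles each split in isolation; the delicate point is that these good splits chain (which holds precisely because $\evaluate$ is maintained step by step) and that \rtit{cLcl}, \rtit{cDsc}, \rtit{cTgh} likewise preserve $\evaluate$ --- a fact not isolated in the paper but a short consequence of Lemma~\ref{lem:confluence}. A minor secondary point is the static-context reading of process $\steq$, without which the statement would already fail for a rearrangement buried under an input prefix.
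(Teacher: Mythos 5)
Your proposal is correct and follows essentially the same route as the paper, which proves this lemma by rule induction on $|S|\steq Q$, matching each structural axiom with the bureaucratic reductions \rtit{cSpl}, \rtit{cLcl}, \rtit{cDsc}, \rtit{cTgh} and system structural equivalence exactly as spelled out for the permission-free sibling Lemma~\ref{lem:struct-corresp-2}. The additional bookkeeping you supply --- preserving $S\evaluate$ across each split via Lemma~\ref{lem:eval-and-splt-implies-good-split} and across the deterministic rules via partial confluence, and reading process $\steq$ as closed only under static contexts --- is precisely the detail the paper's one-line proof elides, and it is handled correctly.
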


\begin{proof}
  By rule induction on $|S| \steq Q$. % similar to that for Lemma~\ref{lem:struct-corresp-2}.
\end{proof}

\begin{lem}\label{lem:struct-corresp-2}
  \begin{math}
    |S| \steq Q\quad \text{implies}\quad \exists T.\;  S \reduc^{\ast} T  \text{ or }  S \steq T \;S\text{ where }\;   |T| =Q   % \text{ or }  T\err
  \end{math}  
\end{lem}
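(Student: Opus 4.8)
The plan is to prove the statement by rule induction on the derivation of $|S|\steq Q$, reading the structural-equivalence axioms of \figref{fig:structural-reduc} in both directions so that the only remaining closure rules are reflexivity, transitivity, and the congruence rules for $\paral$ and $\rest{c}{-}$. (Folding symmetry into the axioms this way sidesteps the explicit symmetry case, which interacts badly with an induction whose conclusion is parameterised over the choice of system.) The induction hypothesis reads: for every sub-derivation $P'\steq Q'$ and \emph{every} system $S'$ with $|S'|=P'$, there is $T'$ with ($S'\reduc^\ast T'$ or $S'\steq T'$) and $|T'|=Q'$. It is convenient to abbreviate $S\mathrel R T$ for ``$S\reduc^\ast T$ or $S\steq T$'', so the goal at each node is to exhibit $T$ with $S\mathrel R T$ and $|T|=Q$.

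First I would record the elementary closure properties of $R$. It is reflexive, contains both $\reduc$ and $\steq$, and is a congruence for $\paral$ and $\rest{c}{-}$ (using \rtit{cPar}/\rtit{cRes} in the reduction case and congruence of system $\steq$ otherwise). Transitivity is the only non-trivial point: it follows from the fact that \rtit{cStr} lets a $\steq$ step be absorbed into an adjacent non-empty reduction sequence --- so e.g.\ $S_1\steq S_2\reduc^\ast S_3$ with at least one step gives $S_1\reduc^\ast S_3$, and symmetrically on the right --- together with transitivity of $\reduc^\ast$ and of $\steq$ and the fact that a zero-step $\reduc^\ast$ is just equality. I would also note the three ``unpacking'' moves, all instances of $R$: $\conC{P\paral Q}\mathrel R\conC{P}\paral\con{\emptyset}{Q}$ by \rtit{cSpl} (splitting off the empty permission set), $\conC{\rest{c}{P}}\mathrel R\rest{c}{\con{\eV\uplus\eset{\permin{c},\permout{c}}}{P}}$ by \rtit{cLcl} (with $c$ taken fresh so the disjoint union is defined), and $\conC{\inert}\mathrel R\con{\emptyset}{\inert}$ by \rtit{cDsc} (or reflexivity if $\eV=\emptyset$).

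With this scaffolding the induction is routine. The reflexivity case takes $T=S$. For transitivity, from the sub-derivations $|S|\steq Q'$ and $Q'\steq Q$ the hypothesis gives $T'$ with $S\mathrel R T'$ and $|T'|=Q'$, and then, applying the hypothesis to the system $T'$ (legitimate since $|T'|=Q'$), a $T$ with $T'\mathrel R T$ and $|T|=Q$; transitivity of $R$ concludes. For each congruence rule, say $|S|=|S_1|\paral R_0$ and $Q=Q_1\paral R_0$ from $|S_1|\steq Q_1$, one uses that erasure commutes with $\paral$ and $\rest{c}{-}$ and sends $\conCP$ to $P$: a top-level parallel in $|S|$ forces $S$ to be either $S_a\paral S_b$ with $|S_a|=|S_1|$ and $|S_b|=R_0$ (unique syntactic decomposition), or a single box $\conC{|S_1|\paral R_0}$, in which case one first makes the unpacking move to $\conC{|S_1|}\paral\con{\emptyset}{R_0}$; either way one applies the hypothesis to the left component and reassembles by congruence of $R$, getting $|T|=Q_1\paral R_0=Q$ syntactically. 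The axiom cases follow the same template: for each of \rtit{sCom}, \rtit{sAss}, \rtit{sNil}, \rtit{sNew}, \rtit{sSwp}, \rtit{sExt} in either orientation, one checks whether the relevant top-level structure of $S$ already matches --- in which case the corresponding \emph{system} axiom (\rtit{scCom}, \rtit{scAss}, \rtit{scNil}, \rtit{scNew}, \rtit{scSwp}, \rtit{scExt}) applied to $S$ gives $T$ with $|T|=Q$ directly --- or $S$ is a single confined box, in which case the unpacking moves \rtit{cSpl}/\rtit{cLcl}/\rtit{cDsc} expose the structure before the system axiom finishes the job. The side-conditions $c\notin\fn{-}$ of \rtit{sExt}/\rtit{scExt} carry over between $S$ and $|S|$ because erasure does not affect free names, and the orientation that \emph{creates} a restriction, $P\steq\rest{c}{P}$ for $c\notin\fn{P}$, is itself derivable in system $\steq$ from \rtit{scExt}, \rtit{scNew} and \rtit{scNil}.

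The main obstacle is purely bookkeeping: making the relation $R={\reduc^\ast}\cup{\steq}$ transitive and able to absorb a leading/trailing $\steq$ into a non-empty reduction sequence (the role of \rtit{cStr}), and then pushing the many-case analysis through. The only genuine content is the observation that \rtit{cSpl}, \rtit{cLcl} and \rtit{cDsc} together drive every confinement box down past $\paral$, $\rest{c}{-}$ and $\inert$, so that any rearrangement available to the erased process $|S|$ can be mirrored at the system level; everything else is a mechanical case split, closely parallel to the proofs of Lemma~\ref{thm:correspondence} and Lemma~\ref{lem:struct-corresp-1}.
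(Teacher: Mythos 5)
Your proposal is correct and follows essentially the same route as the paper: rule induction on $|S|\steq Q$, a case split on which decompositions of $S$ can erase to $|S|$, and the use of \rtit{cSpl}, \rtit{cLcl} and \rtit{cDsc} to unpack confinement boxes so that the system-level rules \rtit{scCom}--\rtit{scExt} can mirror each process-level axiom. You are in fact more explicit than the paper about the closure-rule cases (symmetry folded into the axioms, transitivity of ${\reduc^\ast}\cup{\steq}$ via \rtit{cStr}, and the induction hypothesis quantified over all systems erasing to the given process), which the paper's proof leaves implicit.
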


\proof
  By rule induction on $|S|\steq Q$ and then a tedious consideration of all the possible permutations of $S$ that may lead to $|S|$.  % We omit the cases whereby $S\reduc^\ast T\err$ which can happen whenever 
  \begin{desCription}
  \item[\bf\rtit{sAss}:] If $|S| =P_1\paral (P_2 \paral P_3)$ then $Q = (P_1\paral P_2) \paral P_3$ and $S$ can be either of the following:
    \begin{description}
    \item[$S=\conC{P_1\paral (P_2 \paral P_3)}$] By $2$ applications of \rtit{cSpl} and then an application of \rtit{cStr} using \rtit{scAss} we obtain $S \reduc^{+} (\con{\eV_1}{P_1} \paral \con{\eV_2}{P_2}) \paral \con{\eV_3}{P_3}$ where $\eV_1\uplus\eV_2\uplus\eV_3 = \eV$ and $|(\con{\eV_1}{P_1} \paral \con{\eV_2}{P_2}) \paral \con{\eV_3}{P_3}| = Q$. 
     \item[$S=\con{\eV_1}{P_1}\paral \conC{(P_2 \paral P_3)}$] By one application of \rtit{cSpl} and one application of \rtit{cStr} using \rtit{scAss} we obtain $S \reduc^{+} (\con{\eV_1}{P_1} \paral \con{\eV_2}{P_2}) \paral \con{\eV_3}{P_3}$ where $\eV_2\uplus\eV_3 = \eV$ and $|(\con{\eV_1}{P_1} \paral \con{\eV_2}{P_2}) \paral \con{\eV_3}{P_3}| = Q$. 
       \item[$S=\con{\eV_1}{P_1}\paral \con{\eV_2}{(P_2} \paral \con{\eV_3}{P_3)}$] By  \rtit{scAss} we obtain $S \steq (\con{\eV_1}{P_1} \paral \con{\eV_2}{P_2}) \paral \con{\eV_3}{P_3}$ where  $|(\con{\eV_1}{P_1} \paral \con{\eV_2}{P_2}) \paral \con{\eV_3}{P_3}| = Q$. 
    \end{description}
    The symmetric case where $|S| =(P_1\paral P_2) \paral P_3$ and $Q = P_1\paral (P_2 \paral P_3)$ is similar.
  \item[\bf\rtit{sCom}:]  Similar to \rtit{sAss} case.
  \item[\bf\rtit{sNil}:] If $|S| =P\paral \inert$ and $Q=P$ then we have two cases:
    \begin{description}
    \item[$S=\conC{P \paral \inert}$] By \rtit{cSpl}, \rtit{cStr} and \rtit{scNil} we obtain $S \reduc^{+} \conC{P}$ and $|\conC{P}| = P = Q$.
    \item[$S=\con{\eV_1}{P}\paral\con{\eV_2}{\inert}$] By \rtit{cDsc}, \rtit{cStr} and \rtit{scNil} we obtain $S \reduc^{+} \conC{P}$ and $|\conC{P}| = P = Q$.
    \end{description}
     If $|S| =P$ and $Q=P\paral \inert$, then by \rtit{scNil} we have $S\steq S\paral \con{\emptyset}{\inert}$ and $|S\paral \con{\emptyset}{\inert}| =Q$.
   \item[\bf\rtit{sNew}:] The most difficult case is when $S=\conC{\rest{c}{\inert}}$ and $Q=\inert$.  By \rtit{cLcl}, \rtit{cDsc}, \rtit{cStr} with \rtit{scNew} we obtain $S\reduc^{+}\con{\emptyset}{\inert}$ and $|\con{\emptyset}{\inert}|    =Q$.  The other cases are similar.
   \item[\bf\rtit{sSwp}:] There are three cases; $S=\conC{\rest{c}{\rest{d}{P}}}$, $S=\rest{c}{\conC{\rest{d}{P}}}$ and $S=\rest{c}{\rest{d}{\conC{P}}}$ and proved similar to the cases above using \rtit{cLcl}, \rtit{cStr} and \rtit{scSwp}.
   \item[\bf\rtit{sExp}:]  When $|S| =P\paral\rest{c}{Q}$ we have three cases: $S=\conC{P\paral\rest{c}{Q}}$, $S=\con{\eV_1}{P}\paral\con{\eV_2}{\rest{c}{Q}}$ and $S=\con{\eV_1}{P}\paral\rest{c}{\con{\eV_2}{Q}}$ and the proof follows using the rules \rtit{cSpl}, \rtit{cLcl}, \rtit{cStr} and \rtit{scExt}. The symmetric case when $|S| =\rest{c}{P\paral Q}$ is similar. \qed
  \end{desCription}

\begin{lemmabis}{\ref{lem:eval-correspondence}}{Reduction Correspondence}
  \begin{displaymath}
    S \evaluate  \text{ and } |S|\reduc Q \quad \text{implies}\quad \exists R\text{ such that } S\reduc^{+} R \text{ and } |R| \steq Q
  \end{displaymath}
\end{lemmabis}

\proof
  By rule induction on $|S|\reduc Q$.  We here consider the main cases:
  \begin{desCription}
  \item[\bf\rtit{rCom}] We have $|S| =\pioutA{c}{\lst{e}}\paral\piin{c}{\lst{x}}{P}$ and $Q=P\subC{\lst{v}}{\lst{x}}$ where $\lst{e}\evaluate\lst{v}$.  We have two sub-cases for $S$:
    \begin{description}
      \item[$S=\conC{\pioutA{c}{\lst{e}}\paral\piin{c}{\lst{x}}{P}}$]  By $S\evaluate$ , $\exists \eVV_1,\eVV_2$ such that $\eVV_1\uplus\eVV_2=\eV$ and
        \begin{math}
           S \reduc \con{\eVV_1}{\pioutA{c}{\lst{e}}}\paral\con{\eVV_2}{\piin{c}{\lst{x}}{P}} \reduc \conC{P\subC{\lst{v}}{\lst{x}}}
        \end{math}
        and $|\conC{P\subC{\lst{v}}{\lst{x}}}| = Q$.
      \item[$S = \con{\eVV_1}{\pioutA{c}{\lst{e}}}\paral\con{\eVV_2}{\piin{c}{\lst{x}}{P}}$]  Similar
    \end{description}
  \item[\bf\rtit{rPar}] We have $|S| = P_1\paral P_2$ and $Q = P'_1 \paral P_2$ because $P_1\reduc P'_1$. We have two sub-cases for $S$:
    \begin{description}
    \item[$S=\conC{P_1\paral P_2}$] By $S\evaluate $ $\exists \eVV_1,\eVV_2$ such that $\eVV_1\uplus\eVV_2=\eV$ and
      \begin{math}
        \conC{P_1\paral P_2} \reduc \con{\eVV_1}{P_1}\paral\con{\eVV_2}{P_2} \evaluate
      \end{math}.  
      Now $\con{\eVV_1}{P_1}\paral\con{\eVV_2}{P_2} \evaluate$ implies $\con{\eVV_1}{P_1} \evaluate$ and by $P_1\reduc P'_1$ and I.H. we know $\exists R\text{ such that } \con{\eVV_1}{P_1}\reduc^{+} R \text{ and } |R| \steq P'_1$. Thus, by \rtit{cPar},
      \begin{math}
        \con{\eVV_1}{P_1}\paral\con{\eVV_2}{P_2} \reduc R \paral\con{\eVV_2}{P_2}
      \end{math}
      and $| R\paral\con{\eVV_2}{P_2}| = Q$.
    \item[$S=S_1\paral S_2$ where $|S_1| = P_1$ and $|S_2| = P_2$] Similar       
    \end{description}

  \item[\bf\rtit{rStr}]  We have $|S| = P_1$ and $Q = P_2$  because $\ P_1\steq P'_1,\ P'_1\reduc P'_2,\ P'_2\steq P_2$.  By $|S| = P_1$ and Lemma~\ref{lem:eval-corresp-struct} we know
    \begin{math}
      \exists R_1\text{ such that } S\reduc^\ast\steq R_1\text{ and } R_1\evaluate\text{ and } |R_1| = P'_1
    \end{math}.  By $\ P'_1\reduc P'_2$ and I.H. we know
    \begin{math}
      \exists R_2\text{ such that } R_1\reduc^{+}\steq R_2\text{ and } |R_2| = P'_2
    \end{math},
    and by $P'_2\steq P_2$ and Lemma~\ref{lem:struct-corresp-2} we know
    \begin{math}
      \exists R_3\text{ such that } R_2\reduc^\ast\steq R_3\text{ and }  |R_3| = P_2
    \end{math}.
   This implies $S\reduc^\ast\steq R_1 \reduc^{+}\steq R_2 2\reduc^\ast\steq R_3$, \ie $S\reduc^{+} R_3$  where $|R_3| = Q$. \qed
  \end{desCription}

\begin{lemmabis}{\ref{lem:eval-determ-no-reduc}}{Correspondence and Termination}
  \begin{math}
      |S|\reducNot \text{ and } S\evaluate T\quad\text{implies}\quad |T|\steq |S| 
  \end{math} 
\end{lemmabis}
\proof
  By induction on the number of reductions that lead to a safely-stable system $S\reduc^{n}T$
  \begin{desCription}
  \item[$n=0$:] We have $S = T$ which implies $|S| = |T|$ 
  \item[$n=k+1$:] We have $S\reduc R$ and $R\evaluate T$.  By $S\reduc R$  and Cor.~\ref{cor:correspondence-2} we get $|S| \steq |R|$ and thus $|R|\reducNot$.  Hence by I.H. and $R\evaluate T$ we get $|T|\steq |R|$ and by transitivity we obtain $|T|\steq |S|.$ \qed
  \end{desCription}

\subsection{The Logic}
\label{sec:logic-app}

% The following lemmas relate to the side conditions and \rtit{lPar} and how they allow us to construct composite formula satisfactions from separate satisfactions.  

\begin{lem}\label{lem:in-trg-out-edg}
  When $S\reducNot$ and $\env,S\sat\fV$
  \begin{iteMize}{$\bullet$}
   % \item $\permout{c}\in\edgE{\fV}$ implies $S\steq \conC{\pioutA{c}{\lst{e}}}\paral R$; 
   \item $S\steq \conC{\pioutA{c}{\lst{e}}}\paral R$  implies $\permout{c}\in\edgE{\fV}$ or $\edgE{\fV}$ is undefined; 
   % \item $\permout{c}\in\trgE{\fV}$ implies $S\steq \rest{\lst{d}}{\conC{\piin{c}{\lst{x}}{P}}\paral R}$ where $c\in\lst{d}$; 
   \item  $S\steq \rest{\lst{d}}{\conC{\piin{c}{\lst{x}}{P}}\paral R}$ and $c\in\lst{d}$ implies   $\permout{c}\in\trgE{\fV}$ or $\trgE{\fV}$ is undefined.
  \end{iteMize}
\end{lem}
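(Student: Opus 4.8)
The plan is to prove Lemma~\ref{lem:in-trg-out-edg} by structural induction on the formula $\fV$, simultaneously handling both bullet points, using Proposition~\ref{prop:safe-stab-vs-system-structure} to control the shape of $S$ whenever $S\reducNot$ and the absence of permission violations is known. First I would observe that since $\env,S\sat\fV$ and $S\reducNot$, the definition of satisfaction in \figref{fig:assertion-satisfaction} together with Proposition~\ref{cor:satisfaction-evaluation} (satisfaction is witnessed by an evaluation) and the fact that $S$ is already stable give us $\env,S\sat\fV$ directly on the stable $S$; moreover $S\checkmark$ (it reduces not and, since it satisfies a formula, it must be safe). Hence Proposition~\ref{prop:safe-stab-vs-system-structure} applies and tells us $S$ decomposes, up to $\steq$, into a parallel soup of outputs and input-blocked processes under a restriction $\restB{\lst{d}}{\cdot}$.

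Next I would do the case analysis on $\fV$. For $\fV=\femp$: satisfaction forces $S\evaluate\con{\emptyset}{\inert}$, and since $S\reducNot$ we get $S\steq\con{\emptyset}{\inert}$, which has no outputs and no inputs at all, so both implications hold vacuously. For $\fV=\fstate{c}{\lst{e}}$: satisfaction forces $S\steq\conCC{\pioutA{c}{\lst{e'}}}$ for the \emph{one} output channel $c$ with $\env(c)\subseteq\eVV$; then $\edgE{\fstate{c}{\lst{e}}}=\eset{\permout{c}}$ by \defref{def:fromula-sep}, so the first bullet holds, and since there is no input-blocked subprocess under a binder the second bullet holds vacuously. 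For $\fV=\fblk{c}$: satisfaction forces $S\steq\rest{\lst{d}}{\conCC{\piin{c}{\lst{x}}{P}}}$ with $c\not\in\lst{d}$; so there is no output at all (first bullet vacuous) and the only input-blocked process is blocked on $c$ which is \emph{not} in $\lst{d}$ (second bullet vacuous, because the hypothesis ``$c\in\lst{d}$'' is never met). For $\fV=\fany$: $\edgE{\fany}$ and $\trgE{\fany}$ are both undefined by \defref{def:fromula-sep}, so both implications hold trivially (the ``or $\ldots$ is undefined'' disjunct is satisfied). The interesting case is $\fV=\fcons{\fV_1}{\fV_2}$: here satisfaction gives $S\evaluate\restB{\lst{d}}{S_1\paral S_2}$ with $\lst{d}\not\in\dom{\env}$, $\env,S_i\sat\fV_i$, and since $S\reducNot$ we can take $S\steq\restB{\lst{d}}{S_1\paral S_2}$ with each $S_i$ stable. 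If $S$ contains an output on $c$ after stripping the top restriction, that output lies in (a structural rearrangement of) $S_1$ or $S_2$; by the inductive hypothesis applied to the relevant $\fV_i$ we get $\permout{c}\in\edgE{\fV_i}$ or $\edgE{\fV_i}$ undefined, and since $\edgE{\fcons{\fV_1}{\fV_2}}=\edgE{\fV_1}\cup\edgE{\fV_2}$ (with the union undefined if either operand is), the conclusion transfers. The symmetric argument with $\trgE{\cdot}$ handles the second bullet for conjunctions.

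The main obstacle I anticipate is the bookkeeping in the conjunction case: the system structure guaranteed by satisfaction of $\fcons{\fV_1}{\fV_2}$ has its \emph{own} outermost binder $\lst{d}$, and the structure guaranteed by Proposition~\ref{prop:safe-stab-vs-system-structure} for $S\checkmark$ also presents a binder list, and I must reconcile these — in particular, an input-blocked process ``blocked on $c$ with $c$ in the binder list'' of $S$ might have that binder contributed either by the outermost conjunction split or by a nested restriction inside $S_1$ or $S_2$. The care needed is to push the analysis through $\steq$ and the scope-extrusion rules (\rtit{scExt}, \rtit{scSwp}) so that any output/blocked-input in $S$ is correctly attributed to a summand $S_i$ together with the part of the binder list that genuinely scopes it; then the induction hypothesis for $S_i$, which already accounts for nested restrictions, does the rest. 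A secondary (minor) subtlety is that $\edgE{\cdot}$ and $\trgE{\cdot}$ are partial: whenever a subformula is $\fany$ or $\fblk{c}$ in an \texttt{edg}-context (or $\fany$/$\fstate{c}{\lst{e}}$ in a \texttt{trg}-context), the function is undefined on the whole conjunction, and I need to make sure the ``or \ldots undefined'' escape clause is threaded consistently through the statement so these cases are discharged without further work.
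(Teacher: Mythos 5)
Your proposal is correct and matches the paper's own argument: the paper proves Lemma~\ref{lem:in-trg-out-edg} exactly by induction on the structure of $\fV$, using stability to collapse each satisfaction clause of \figref{fig:assertion-satisfaction} to a concrete shape for $S$ and then reading off \defref{def:fromula-sep}. Your case analysis (including the vacuous cases, the ``undefined'' escape for \fany, and the scope-extrusion bookkeeping in the separating-conjunction case) fills in the details the paper leaves implicit, and does so soundly.
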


\begin{proof}
  By induction on the structure of \fV.  
\end{proof}

\begin{lem}  \label{lem:merge-assert-stable}
  \begin{math}
    \env,\,S \sat \fV, S\reducNot \text{ and }\env,\,T \sat \fVV,  T \reducNot \text{ and }  \sepprocE{\fV}{\fVV} \quad\text{implies}\quad \env,\, S\paral T \reducNot
  \end{math}
\end{lem}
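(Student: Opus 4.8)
The plan is to prove Lemma~\ref{lem:merge-assert-stable} by contradiction: assume that $S\paral T$ is \emph{not} stable, i.e.\ $S\paral T\reduc U$ for some $U$, and derive a violation of the separation hypothesis $\sepprocE{\fV}{\fVV}$. First I would invoke Lemma~\ref{lem:reduc-proc-struct} on $S\paral T\reduc U$ to extract the redex responsible for the reduction. Since $S$ and $T$ are each individually stable ($S\reducNot$ and $T\reducNot$), the redex cannot lie wholly inside $S$ or wholly inside $T$; moreover, the structural rules \rtit{cSpl}, \rtit{cLcl}, \rtit{cDsc}, \rtit{cTgh} always apply to a single confined process, hence would already fire inside one of the components, contradicting its stability. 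Therefore the only possible new redex created by the composition is a \emph{communication} redex in the sense of Lemma~\ref{lem:reduc-proc-struct}(1), with the output process coming from one side and the matching input process from the other.

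So, up to structural equivalence and symmetry, we have $S\steq\restB{\lst{d}}{\conC{\pioutA{c}{\lst{e}}}\paral R_S}$ and $T\steq\restB{\lst{d'}}{\conCC{\piin{c}{\lst{x}}{P}}\paral R_T}$ on the same channel $c$ (the case where the input comes from $S$ and the output from $T$ being symmetric). Here I would be careful about scope extrusion: $c$ must be free in both $S$ and $T$ for the communication to take place across the composition, so $c\notin\lst{d}$ and $c\notin\lst{d'}$. Next I apply Lemma~\ref{lem:in-trg-out-edg}: from $S\reducNot$, $\env,S\sat\fV$ and the exposed output on $c$, we get $\permout{c}\in\edgE{\fV}$ (or $\edgE{\fV}$ undefined); from $T\reducNot$, $\env,T\sat\fVV$ and the exposed \emph{free} input on $c$ — here the second bullet of Lemma~\ref{lem:in-trg-out-edg} is stated for $c\in\lst{d}$, so I will need the dual/restriction-free reading, namely that a top-level unguarded input on $c$ forces $\permout{c}\in\trgE{\fVV}$ (or $\trgE{\fVV}$ undefined), which follows by the same induction on the structure of $\fVV$ since $\fblk{c}$ is precisely the formula contributing $\permout{c}$ to $\trgE{-}$. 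Then $\permout{c}\in\edgE{\fV}\cap\trgE{\fVV}$ (and both sets are defined, since $\sepprocE{\fV}{\fVV}$ presupposes they are), contradicting $\edgE{\fV}\cap\trgE{\fVV}=\emptyset$; the symmetric case contradicts $\edgE{\fVV}\cap\trgE{\fV}=\emptyset$.

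The main obstacle I anticipate is the bookkeeping around restriction/scope in the structural-equivalence normal forms: when Lemma~\ref{lem:reduc-proc-struct} presents the redex under a block of restrictions $\restB{\lst{c}}{\cdots}$, I must argue that the communicating channel $c$ is genuinely shared across $S$ and $T$ (not a bound name local to one side), and that after pushing restrictions around via \rtit{scExt} the output and input each sit at the top level of their respective component so that Lemma~\ref{lem:in-trg-out-edg} applies. A secondary, smaller point is ensuring that Lemma~\ref{lem:in-trg-out-edg}'s hypothesis $\env,S\sat\fV$ is genuinely available — it is, by assumption — and that the ``undefined'' alternatives in that lemma do not occur, which is exactly what $\sepprocE{\fV}{\fVV}$ rules out by requiring the intersections (hence the sets) to be well-defined. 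Modulo these routine manipulations, the proof is a short contradiction argument resting entirely on Lemma~\ref{lem:reduc-proc-struct} and Lemma~\ref{lem:in-trg-out-edg}.
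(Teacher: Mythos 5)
Your proposal is correct and follows essentially the same route as the paper's proof: rule out internal redexes by the stability of $S$ and $T$, reduce via Lemma~\ref{lem:reduc-proc-struct} to a cross-component communication on a shared free channel $c$, then use Lemma~\ref{lem:in-trg-out-edg} to place $\permout{c}$ in both $\edgE{\cdot}$ of one formula and $\trgE{\cdot}$ of the other, contradicting $\sepprocE{\fV}{\fVV}$ (whose definedness also discharges the ``undefined'' alternatives, exactly as the paper argues). Your observation about the side condition $c\in\lst{d}$ in the second bullet of Lemma~\ref{lem:in-trg-out-edg} is well taken --- the paper's own proof applies that bullet with $c\not\in\lst{d}$, so the reading you supply is the intended one.
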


\begin{proof}
   Since $S\reducNot \text{ and } T \reducNot$, by Lemma~\ref{lem:reduc-proc-struct} we know that $S\paral T \reduc R$ for some $R$  can only happen if:
  \begin{align}
    \label{eq:102}
    & S \steq \restB{\lst{d}}{\conCC{\piin{c}{\lst{x}}{P}} \paral S'} \text{ where } c\not\in\lst{d} \text{ and } \permin{c}\in\eVV\\
    \label{eq:103}
    & T \steq \conC{\pioutA{c}{\lst{e}}}\paral T' \text{ where } \permout{c}\in\eV 
  \end{align}
or vice-versa.  We here focus on the case where \eqref{eq:102} and \eqref{eq:103} have to hold; the dual case is identical.  By \sepprocE{\fV}{\fVV} we know that \trgE{\fV}, \edgE{\fV}, \trgE{\fVV} and  \edgE{\fVV} must all be defined.  Thus by \eqref{eq:102}, $\env,\,S \sat \fV$ and Lemma~\ref{lem:in-trg-out-edg} we must have $\permout{c}\in\trgE{\fV}$. Similarly by \eqref{eq:103}, $\env,\,T \sat \fVV$ and  Lemma~\ref{lem:in-trg-out-edg} we must have $\permout{c}\in\edgE{\fVV}$.  This however would contradict \sepprocE{\fV}{\fVV} which requires that $\trgE{\fV}\cap \edgE{\fVV} = \emptyset$.
Thus $S\paral T \reducNot$.
\end{proof}

\begin{lemmabis}{\ref{lem:merge-assert}}{Merging Assertions}
  \begin{math}
    \env,\,S \sat \fV\text{ and }\env,\,T \sat \fVV  \text{ and } S\perp T \text{ and } \sepprocE{\fV}{\fVV} \quad\text{implies}\quad \env,\, S\paral T \satS \fcons{\fV}{\fVV}
  \end{math}
\end{lemmabis}

\begin{proof}
  $S\perp T$ implies $S\paral T $ is well-resourced.  From  $\env,\,S \sat \fV$, $\env,\,T \sat \fVV$ and Proposition~\ref{cor:satisfaction-evaluation} we know that $S\evaluate S'$ and $T\evaluate T'$ where $\env,\,S' \sat \fV$ and $\env,\,T' \sat \fVV$.   Lemma~\ref{lem:merge-assert-stable} we know also that   $S\paral T \evaluate S' \paral T'$ and the result follows by satisfaction on \figref{fig:assertion-satisfaction}.
\end{proof}

\subsection{The Proof System}
\label{sec:proof-system-app}

Proofs for the derived rules from \secref{sec:derived-rules}.

\begin{description}
 \item[The proof for \rtit{lCut}] \begin{align*}
  &
  \infer[\rtit{lImp}]{\seqEB{\fV_1}{S\paral T}{\fV_2}}{
    \infer[\rtit{lPar}]{\seqEB{\fcons{\fV_1}{\femp}}{S\paral T}{\fcons{\femp}{\fV_2}}}{
      \infer[\rtit{lImp}]{\seqEB{\fV_1}{S}{\fcons{\femp}{\fVV}}}{\seqEB{\fV_1}{S}{\fVV}}
      & \infer[\rtit{lImp}]{\seqEB{\fcons{\femp}{\fVV}}{T}{\fV_2}}{\seqEB{\fVV}{T}{\fV_1}}
      & \sepprocE{\femp}{\fVV} & \sepprocE{\femp}{\fV_2}
     }
   }
\end{align*}

\item[The proof for  \rtit{lSep}] \begin{align*}
  & \infer[\rtit{lPar}]{\seqEB{\fcons{\fV_1}{\fV_2}}{S\paral T}{\fcons{\fVV_1}{\fVV_2}}}{
      \infer[\rtit{lImp}]{\seqEB{\fV_1}{S}{\fcons{\fVV_1}{\femp}}}{\seqEB{\fV_1}{S}{\fVV_1}}
      & \infer[\rtit{lImp}]{\seqEB{\fcons{\fV_2}{\femp}}{T}{\fVV_2}}{\seqEB{\fV_2}{T}{\fVV_2}}
      & \sepprocE{\fVV_1}{\fVV_2} & \sepprocE{\fV_2}{\femp}
   } 
\end{align*}
\item[The proof for \rtit{lOutD}] \begin{align*}
  & \infer[\rtit{lImp}]
      {\seqEB{\femp}{\conC{\pioutA{c}{\lst{e_1}}}}{\fstate{c}{\lst{e_2}}}}
      {\bV \models \bV\wedge \lst{e_1}\! =\! \lst{e_1} \wedge \lst{e_1}\! =\!\lst{e_2}   &
        \infer[\rtit{lInst}]
        {\seqE{\bV\wedge \lst{e_1} = \lst{e_1} \wedge \lst{e_1} = \lst{e_2}}{\femp}{\conC{\pioutA{c}{\lst{e_1}}}}{\fstate{c}{\lst{e_2}}}}
        {\lst{x}\not\in\fn{\bV} \cup \fn{\lst{e_2}, \lst{e_1}} \quad  \infer[\rtit{lSub}]
          {\seqE{\bV\wedge \lst{x} = \lst{e_1} \wedge  \lst{x} = \lst{e_2}}{\femp}{\conC{\pioutA{c}{\lst{x}}}}{\fstate{c}{\lst{e_2}}}}
          {\infer[\rtit{lOut}]
            {\seqE{\bV\wedge \lst{x} = \lst{e_1} \wedge  \lst{x} = \lst{e_2}}{\femp}{\conC{\pioutA{c}{\lst{e_2}}}}{\fstate{c}{\lst{e_2}}}}
            {\env(c)\subseteq\eV}
           } 
        }
      }
\end{align*}
\item[The proof for \rtit{lInD}] \begin{align*}
  &  \infer[\rtit{lImp}]{\seqEB{\fcons{\fV}{\fstate{c}{\lst{e}}}} { T  } {\fVV}}{
        T \steq \con{\eV\setminus\env(c)} {\piin{c}{\lst{x}}{P}} \paral S  &
        \infer[\rtit{lIn}] {\seqEB{\fcons{\fV}{\fstate{c}{\lst{e}}}} {\con{\eV\setminus\env(c)} {\piin{c}{\lst{x}}{P}} \paral S } {\fVV}}
         {\permin{c}\in\eV & \seqEB{\fV} {\conC{P\subC{\lst{e}}{\lst{x}}} \paral S } {\fVV}
           }
       }
\end{align*}

 \end{description}

% We start be proving an important lemma stating how satisfactions are unaffected when a substitutions involving expressions is swapped for a substitution involving the respective values the expressions from the first substitution evaluate to, \ie Lemma~\ref{lem:subs}.

\begin{lem}\label{lem:sub-reduc}  Assume that $\subC{e|v}{x}$ is a substitution that non-deterministically substitutes either $e$ or $v$ for $x$.  Then we have 
  \begin{displaymath}
    S\subC{v}{x} \reduc T\subC{v}{x} \text{ and } e\evaluate v \text{ implies } S\subC{e}{x} \reduc R \text{ where } R=T\subC{e|v}{x}    
  \end{displaymath}
 for some non-deterministic substitution  $T\subC{e|v}{x}$
\end{lem}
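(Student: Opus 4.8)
The plan is to prove Lemma~\ref{lem:sub-reduc} by rule induction on the derivation of the reduction $S\subC{v}{x} \reduc T\subC{v}{x}$, exploiting Lemma~\ref{lem:reduc-proc-struct} to pin down the shape of $S$. The key observation is that substitution of a closed value commutes with the structure of systems and processes, so that $S\subC{v}{x}$ being a redex of a particular form forces $S$ itself to be a system with a matching skeleton in which $x$ may occur free; the only subtlety is that a reduction might have been \emph{enabled} by the substitution (e.g.\ a boolean guard that evaluates to $\boolT$ only after $v$ is plugged in, or an expression list that evaluates to the communicated values only after substitution), which is precisely why the conclusion must allow the weaker non-deterministic substitution $\subC{e|v}{x}$ in $R$ rather than $\subC{e}{x}$.

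First I would set up the induction and handle the communication-type rules. For \rtit{cCom}, Lemma~\ref{lem:reduc-proc-struct}(1) gives $S\subC{v}{x} \steq \restB{\lst{c}}{\conC{\pioutA{c}{\lst{e'}}}\paral\conCC{\piin{c}{\lst{y}}{P}}\paral R'}$ with $\lst{e'}\evaluate\lst{u}$; undoing the substitution shows $S$ has the corresponding shape with some $\pioutA{c}{\lst{e''}}$ where $\lst{e''}\subC{v}{x} = \lst{e'}$. Using Assumption~\ref{asm:expression-eval} (and the stated evaluation property) together with $e\evaluate v$, we get $\lst{e''}\subC{e}{x}\evaluate\lst{u}$ as well, so $S\subC{e}{x}$ fires the same communication, producing $\con{\eV\cup\eVV}{P\subC{\lst{u}}{\lst{y}}}$ inside the same context — here the continuation $P$ may still mention $x$, and the substituted values $\lst{u}$ are the same in both cases, so $R = T\subC{e|v}{x}$ works (in fact $T\subC{e}{x}$ here, since the communicated data is closed). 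For \rtit{cThn}/\rtit{cEls} and \rtit{cPrc} the argument is the same flavour: a boolean or expression that evaluates to a given result under $\subC{v}{x}$ also does so under $\subC{e}{x}$, by the evaluation assumptions; the non-determinism marker in the conclusion absorbs the fact that the branch continuation, or the process body after instantiation, may retain occurrences of $x$.

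Next I would dispatch the purely structural reduction rules. For \rtit{cSpl}, \rtit{cLcl}, \rtit{cTgh}, \rtit{cDsc}, \rtit{cRes}, and \rtit{cPar}, substitution commutes straightforwardly with the term-rewriting, and the induction hypothesis handles the recursive cases \rtit{cRes} and \rtit{cPar} directly (feeding through the non-deterministic $\subC{e|v}{x}$ obtained from the premise). The rule \rtit{cStr} requires a small auxiliary observation — that $S\subC{v}{x}\steq S'\subC{v}{x}$ can be reflected back to $S\steq S'$ up to choice of representative, essentially because structural equivalence (Figure~\ref{fig:confined-lang}) is insensitive to value substitution — after which \rtit{cStr} is applied to the reduct obtained by the induction hypothesis.

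The main obstacle I expect is bookkeeping the non-deterministic substitution $\subC{e|v}{x}$ coherently through the structural and congruence rules: when the induction hypothesis hands back $R = T'\subC{e|v}{x}$ for a subterm, one must check that reassembling it into the full context (and possibly applying \rtit{cStr}) still yields a term expressible as $T\subC{e|v}{x}$ for the \emph{whole} reduct $T$ — i.e.\ that the ``either $e$ or $v$'' choices made at scattered occurrences of $x$ can be packaged uniformly. This is really a notational matter (one should read $\subC{e|v}{x}$ as: each free occurrence of $x$ is independently replaced by $e$ or by $v$), but it needs to be stated carefully so the congruence cases go through without circularity. No genuinely hard mathematics is involved; the lemma is a commutation property whose only real content is that expression evaluation is stable under replacing a subexpression by an equal-evaluating one, which is exactly Assumption~\ref{asm:expression-eval}.
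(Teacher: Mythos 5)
Your proposal is correct and takes essentially the same approach as the paper: the paper's entire proof of this lemma is the single line ``By rule induction on $S\subC{v}{x} \reduc T\subC{v}{x}$'', which is exactly the induction you set up and then elaborate case by case. Your identification of Assumption~\ref{asm:expression-eval} as the ingredient that carries expression evaluation across the substitution, and of the communicated-value positions as the reason the non-deterministic substitution $\subC{e|v}{x}$ is needed, are sound fillings-in of details the paper omits.
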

\proof By rule induction on $S\subC{v}{x} \reduc T\subC{v}{x}$ \qed 

\begin{lem}
  \label{lem:subs-stable}
  \begin{math}
    \env,\,T\subC{\lst{v}}{\lst{x}}\satS \fV\text{ and }\ \lst{e}\evaluate\lst{v}\text{ and } T\reducNot\quad \text{implies}\quad\env,\,T\subC{\lst{e}}{\lst{x}}\satS \fV
  \end{math}
\end{lem}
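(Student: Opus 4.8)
The strategy is to reduce the statement to the structure of \emph{safely-stable witnesses} and then transport that structure across the substitution $\subC{\lst{v}}{\lst{x}}\rightsquigarrow\subC{\lst{e}}{\lst{x}}$, using that $\lst{e}$ and $\lst{v}$ evaluate identically. First, from $\env, T\subC{\lst{v}}{\lst{x}}\models\fV$ and Proposition~\ref{cor:satisfaction-evaluation} I obtain a witness $T''$ with $T\subC{\lst{v}}{\lst{x}}\reduc^\ast T''$, $T''\checkmark$ and $\env, T''\models\fV$; say the reduction uses $n$ steps. In the typical way the lemma is applied $T$ is already an elementary safely-stable system such as $\conC{\pioutA{c}{\lst{e'}}}$, so $n=0$ and the result is immediate from Assumption~\ref{asm:expression-eval}; the content is the general case.

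The core step is a \emph{multi-step substitution-correspondence} claim, obtained by iterating Lemma~\ref{lem:sub-reduc}. Since $\lst{e}\evaluate\lst{v}$, the boolean guards of conditionals and the arguments of process calls evaluate identically under $\subC{\lst{e}}{\lst{x}}$ and $\subC{\lst{v}}{\lst{x}}$, so the two reduction traces proceed in lockstep: from $T\subC{\lst{v}}{\lst{x}}\reduc^n T''$ I get $T\subC{\lst{e}}{\lst{x}}\reduc^n R$ where $T''$ and $R$ differ only in that some expression occurrences which are $\lst{v}$ in $T''$ are $\lst{e}$ in $R$ — formally $T'' = U\subC{\lst{v}}{\lst{x}}$ and $R = U\subC{\lst{e}|\lst{v}}{\lst{x}}$ for a common residual $U$, where $\subC{\lst{e}|\lst{v}}{\lst{x}}$ is the non-deterministic substitution from Lemma~\ref{lem:sub-reduc}. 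The hypothesis $T\reducNot$ is used here to keep the correspondence step-for-step, ruling out a spurious extra initial reduction on one side and pinning down which occurrences can have been turned from $\lst{v}$ into $\lst{e}$.

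Next I show $R\checkmark$. By Proposition~\ref{prop:safe-stab-vs-system-structure}, $T''$ is, up to structural equivalence, a parallel composition of mismatched asynchronous outputs and input-blocked processes under restrictions, each owning the relevant output/input permission. Now $R$ has the identical process skeleton and the identical permission annotations (substitution never touches permission sets), the channel names of outputs and inputs are unchanged (so there is still no matching output/input pair), and since $\lst{e}$ and $\lst{v}$ have the same values, $R$ exhibits no new conditional or call redexes either; hence $R\reducNot$ and $R\errNot$, i.e.\ $R\checkmark$. Then I show $\env, R\models\fV$ by induction on $\fV$ (equivalently, on the derivation of $\env, T''\models\fV$), transferring each clause of Figure~\ref{fig:assertion-satisfaction} from $T''$ to $R$: the only case with real content is $\fstate{c}{\lst{e'}}$, where $T''$ has an output $\pioutA{c}{\lst{e''}}$ with $\lst{e''}\evaluate\lst{w}$ and $R$ the corresponding $\pioutA{c}{\lst{e'''}}$ with $\lst{e'''}$ obtained from $\lst{e''}$ by replacing some $\lst{v}$'s by $\lst{e}$'s; by Assumption~\ref{asm:expression-eval} applied componentwise — extended trivially to $\subC{\lst{e}|\lst{v}}{\lst{x}}$ since each replaced occurrence still evaluates to the same value — we get $\lst{e'''}\evaluate\lst{w}$, and the permission side-condition $\env(c)\subseteq\eV$ is unchanged. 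Finally, $T\subC{\lst{e}}{\lst{x}}\reduc^n R$ with $R\checkmark$ gives $T\subC{\lst{e}}{\lst{x}}\evaluate R$, and with $\env, R\models\fV$ and Proposition~\ref{cor:satisfaction-convergence} we conclude $\env, T\subC{\lst{e}}{\lst{x}}\models\fV$.

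\textbf{Main obstacle.} The delicate part is the second step: making the substitution correspondence precise, in particular threading the non-deterministic substitution $\subC{\lst{e}|\lst{v}}{\lst{x}}$ through an entire reduction sequence and handling its interaction with the inner substitutions performed by the rules \rtit{cCom} and \rtit{cPrc} (so that the "residual" $U$ is well-defined at each step). This is exactly the bookkeeping for which the stability assumption $T\reducNot$ is needed, and it is the point where I would spend the most care.
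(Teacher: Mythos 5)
Your argument does establish the statement, but it takes a genuinely different and much heavier route than the paper, and it relocates the difficulty to exactly the place that the hypothesis $T\reducNot$ is designed to eliminate. The paper's proof is a direct structural induction on $\fV$: because $T$ is stable, both $T\subC{\lst{v}}{\lst{x}}$ and $T\subC{\lst{e}}{\lst{x}}$ already are (up to the trivial discarding and structural steps) the safely-stable witnesses demanded by each clause of Figure~\ref{fig:assertion-satisfaction}, so no reduction correspondence is needed at all; the only case with content is $\fstate{c}{\lst{e'}}$, where Assumption~\ref{asm:expression-eval} gives that the substituted output expressions still evaluate to the same values, and the cases $\femp$, $\fany$, $\fblk{c}$ and $\fcons{\fV_1}{\fV_2}$ follow directly or from the induction hypothesis. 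Your "core step" --- iterating Lemma~\ref{lem:sub-reduc} to thread a multi-step correspondence through the non-deterministic substitution $\subC{\lst{e}|\lst{v}}{\lst{x}}$ --- is precisely the content the paper assigns to the companion Lemma~\ref{lem:subs}, whose proof is stated as "Lemma~\ref{lem:sub-reduc} plus Lemma~\ref{lem:subs-stable}"; by folding that machinery into this lemma you effectively re-prove the general statement and leave as your declared "main obstacle" the one piece of bookkeeping that the stability hypothesis lets you skip entirely. Relatedly, your reading of $T\reducNot$ as "ruling out a spurious extra initial reduction" undersells it: its role is to make the entire evaluation trivial so that the lemma is purely about comparing stable systems clause by clause. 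Your final step --- the structural induction on $\fV$ transferring satisfaction from $T''$ to $R$ via Assumption~\ref{asm:expression-eval}, noting that substitution touches neither permission sets nor channel names --- is essentially the paper's whole proof; everything preceding it belongs to the other lemma.
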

\begin{proof}
  By induction on the structure of \fV
\end{proof}

\begin{lem}
  \label{lem:subs}
  \begin{math}
    \env,\,T\subC{\lst{v}}{\lst{x}}\satS \fV\text{ and }\ \lst{e}\evaluate\lst{v}\quad \text{implies}\quad\env,\,T\subC{\lst{e}}{\lst{x}}\satS \fV
  \end{math}
\end{lem}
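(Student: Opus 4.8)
The statement to prove is Lemma~\ref{lem:subs}:
\begin{displaymath}
  \env,\,T\subC{\lst{v}}{\lst{x}}\satS \fV\text{ and }\ \lst{e}\evaluate\lst{v}\quad \text{implies}\quad\env,\,T\subC{\lst{e}}{\lst{x}}\satS \fV
\end{displaymath}
and I have available the ``stable'' version, Lemma~\ref{lem:subs-stable}, which is the same statement but with the extra hypothesis $T\reducNot$, together with Lemma~\ref{lem:sub-reduc}, which tracks how a single reduction of $S\subC{v}{x}$ is mirrored by $S\subC{e}{x}$ (up to the non-deterministic substitution $\subC{e|v}{x}$), and Proposition~\ref{cor:satisfaction-evaluation} (satisfaction implies evaluation to a satisfying stable system) and Proposition~\ref{cor:satisfaction-convergence} (satisfaction is preserved under backward reduction).

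The plan is to reduce the open-term statement to the stable-term statement by ``running the substituted system to stability'' and then transporting the answer back along matching reductions. Concretely: assume $\env,\,T\subC{\lst{v}}{\lst{x}}\sat\fV$. By Proposition~\ref{cor:satisfaction-evaluation} there is a stable $S$ with $T\subC{\lst{v}}{\lst{x}}\reduc^\ast S$ and $\env,S\sat\fV$. Now I want a matching reduction sequence $T\subC{\lst{e}}{\lst{x}}\reduc^\ast S'$ where $S'$ is ``the same as'' $S$ but with some occurrences of the $x_i$ replaced by $e_i$ rather than $v_i$. This is exactly what iterating Lemma~\ref{lem:sub-reduc} gives, after writing $S = T'\subC{\lst{v}}{\lst{x}}$ for a suitable $T'$ (since $S$ arises by reducing a $\lst{v}$-substituted term, and reduction only propagates/copies/discards substituted values, it has the form $T'\subC{\lst{v}}{\lst{x}}$ for some process context $T'$ with the $x_i$ not otherwise bound); then $S' = T'\subC{\lst{e}|\lst{v}}{\lst{x}}$, i.e.\ a non-deterministic mix. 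Crucially $S'$ has the same underlying reduction/communication structure as $S$ because, by Assumption~\ref{asm:expression-eval}, every expression that evaluated to a given value in $S$ still evaluates to that same value in $S'$; in particular $S$ stable implies $S'$ stable, and $\err$ is unaffected, so $S' \checkmark$ whenever $S\checkmark$. Then I apply Lemma~\ref{lem:subs-stable} (with the role of $T$ played by $T'\subC{\lst{e}|\lst{v}}{\lst{x}}$'s ``already-$\lst{e}$-or-$\lst{v}$'' positions and the remaining $\lst{v}$-positions to be lifted to $\lst{e}$; alternatively one arranges things so that $S'$ is literally $R\subC{\lst{v}}{\lst{x}}$ for the positions still to be lifted and invokes Lemma~\ref{lem:subs-stable} there) to obtain $\env, T'\subC{\lst{e}}{\lst{x}} \sat \fV$, i.e.\ the fully-$\lst{e}$ version. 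Finally $T\subC{\lst{e}}{\lst{x}}\reduc^\ast T'\subC{\lst{e}}{\lst{x}}$ by the matched reduction sequence, so by Proposition~\ref{cor:satisfaction-convergence} (backward closure of satisfaction under reduction) we conclude $\env, T\subC{\lst{e}}{\lst{x}}\sat\fV$, as required.

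A cleaner way to organise the same idea, avoiding the awkwardness of $T'\subC{\lst{e}|\lst{v}}{\lst{x}}$, is to push the whole argument through in one go: show by induction on the number $n$ of reductions in $T\subC{\lst{v}}{\lst{x}}\reduc^n S\checkmark$ that $T\subC{\lst{e}}{\lst{x}}\reduc^n S'\checkmark$ with $\env, S'\sat\fV$, using Lemma~\ref{lem:sub-reduc} for the inductive step and Lemma~\ref{lem:subs-stable} (together with Assumption~\ref{asm:expression-eval} to know $S'$ is still stable and safe) for the base case, then close with Proposition~\ref{cor:satisfaction-convergence}. Either way the skeleton is: $\sat\Rightarrow$ evaluate to stable $S$; lift $\lst{v}\to\lst{e}$ along the reduction sequence via Lemma~\ref{lem:sub-reduc}; apply the stable lemma at the end; pull satisfaction back along the reductions.

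The main obstacle is the bookkeeping around \emph{which} occurrences of the variables have been substituted by $\lst{v}$ versus $\lst{e}$ at intermediate systems — this is precisely why Lemma~\ref{lem:sub-reduc} is phrased with the non-deterministic substitution $\subC{e|v}{x}$, and the delicate point is verifying that a system carrying such a mixed substitution (i) has exactly the same set of available reductions as its all-$\lst{v}$ counterpart, so the two reduction sequences stay in lockstep and reach stability at the same time, and (ii) satisfies the same formulas once stable; both facts rest on Assumption~\ref{asm:expression-eval}, since all that the semantics ever inspects about a substituted expression is the value it evaluates to, and $\subC{\lst{e}}{\lst{x}}$ and $\subC{\lst{v}}{\lst{x}}$ agree on every such value. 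Once that observation is made precise, the rest is a routine combination of the already-proved backward-closure and evaluation lemmas for the satisfaction relation.
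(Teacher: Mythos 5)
Your proposal is correct and follows essentially the same route as the paper, whose entire proof is the one-liner ``Follows from Lemma~\ref{lem:sub-reduc} and Lemma~\ref{lem:subs-stable}'': run the $\lst{v}$-substituted system to a stable state, mirror the reductions on the $\lst{e}$-substituted system via Lemma~\ref{lem:sub-reduc}, apply the stable case, and close with Proposition~\ref{cor:satisfaction-convergence}. You in fact supply more detail than the paper does, correctly identifying the bookkeeping around the mixed substitution $\subC{e|v}{x}$ as the only delicate point.
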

\begin{proof}
  Follows from Lemma~\ref{lem:sub-reduc} and Lemma~\ref{lem:subs-stable}.
\end{proof}

% We next prove that adding name restrictions around a system where those names are neither used in the domain of the environment (and thus not mentioned in the formulas, then it does not affect satisfaction:

\begin{lem}\label{lem:sat-rest}
  \begin{math}
    \env, S \models \fV \text{ and }\ d \not\in \dom{\env}  \text{ implies } \env, \rest{d}{S} \models \fV
  \end{math}
\end{lem}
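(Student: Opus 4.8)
The plan is to argue by case analysis on the structure of the formula $\fV$, in each case unfolding the matching satisfaction clause of \figref{fig:assertion-satisfaction} and exploiting two facts about $d$. First, since $d\notin\dom{\env}$, clause~(2) of \defref{def:permission-environment} gives that $\permin{d}$ and $\permout{d}$ lie in no $\env(c)$, and that $d$ differs from every channel mentioned by $\env$. Second, restriction commutes with reduction, so $S\reduc^\ast R$ lifts to $\rest{d}{S}\reduc^\ast\rest{d}{R}$ by repeated use of \rtit{cRes}. From $\env,S\sat\fV$ we obtain such an $R$ with $R\checkmark$ and $R$ of the shape $\fV$ demands; the work is then to turn $\rest{d}{R}$ into a safely-stable $R'$ of the same shape with $\rest{d}{S}\reduc^\ast R'$, after which the permission side-conditions of \figref{fig:assertion-satisfaction} (which are only ever of the form $\env(c)\subseteq\eV$ or $\permin{c}\in\eV$) transfer to $R'$ for free, precisely because $\permin{d},\permout{d}$ never belong to any $\env(c)$ and $d\neq c$.

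For $\femp$ this is immediate, since $\rest{d}{\con{\emptyset}{\inert}}\steq\con{\emptyset}{\inert}$ by \rtit{scNew}; for $\fany$ any safely-stable residual of $\rest{d}{R}$ suffices. For $\fstate{c}{\lst{e}}$ we have $R\steq\conC{\pioutA{c}{\lst{e'}}}$ with $\env(c)\subseteq\eV$ and $d\neq c$: if $\permin{d},\permout{d}\notin\eV$ then $d$ occurs nowhere in $\conC{\pioutA{c}{\lst{e'}}}$, so $\rest{d}{\conC{\pioutA{c}{\lst{e'}}}}\steq\conC{\pioutA{c}{\lst{e'}}}$ by \rtit{scNil}/\rtit{scExt}/\rtit{scNew}; otherwise $\sset{\permin{d},\permout{d}}\cap\eV\neq\emptyset$ and $d\notin\fn{\pioutA{c}{\lst{e'}}}$, so \rtit{cTgh} pulls the output out of the scope of $d$, yielding the safely-stable $\con{\eV\setminus\sset{\permin{d},\permout{d}}}{\pioutA{c}{\lst{e'}}}$, which still obeys $\env(c)\subseteq\eV\setminus\sset{\permin{d},\permout{d}}$. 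The case $\fblk{c}$ is analogous, pushing $\rest{d}$ inside the restriction prefix of $R$ with \rtit{scSwp} and, if $\permin{d}$ or $\permout{d}$ is in the input-blocked process's permission set, using \rtit{cTgh} to strip them off; note that the new name $d$ may simply remain in the restriction prefix, because the clause for $\fblk{c}$ already tolerates an arbitrary prefix $\lst{d}$ disjoint from $\dom{\env}$.

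The genuinely delicate case is the separating conjunction $\fcons{\fV_1}{\fV_2}$. Here $R\steq\restB{\lst{d_0}}{S_1\paral S_2}$ with $\lst{d_0}$ disjoint from $\dom{\env}$ and $\env,S_i\sat\fV_i$, and $R\checkmark$ forces $S_1\checkmark$ and $S_2\checkmark$ via \rtit{ePar}/\rtit{cPar}. Lifting gives $\rest{d}{S}\reduc^\ast\restB{d\,\lst{d_0}}{S_1\paral S_2}$, whose restriction prefix is still disjoint from $\dom{\env}$; the obstacle is that this residual need not itself be safely-stable, since \rtit{cTgh} may still discharge the scope of $d$ against a confined subprocess of $S_1$ or $S_2$ whose permission-set contains $\permin{d}$ or $\permout{d}$. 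By linearity of permissions there are at most two such subprocesses; applying \rtit{cTgh} to each pulls it out of the scope of $d$ and deletes $\permin{d},\permout{d}$ from it, and after re-grouping one reaches a $\restB{d\,\lst{d_0}}{S_1'\paral S_2'}$ that is safely-stable by Proposition~\ref{prop:safe-stab-vs-system-structure}. The remaining point — the main obstacle — is that $\env,S_i'\sat\fV_i$ must survive this permission deletion; I expect to discharge it with a short auxiliary lemma, proved by the same kind of induction on formulas: \emph{if $d\notin\dom{\env}$, then deleting $\permin{d}$ or $\permout{d}$ from any confined output or input occurring in a system preserves satisfaction of every formula}, the justification again being that the only permission side-conditions in \figref{fig:assertion-satisfaction} are $\env(c)\subseteq\eV$ and $\permin{c}\in\eV$, both unaffected by removing permissions of a name outside $\dom{\env}$. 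Throughout, I also rely on the routine facts that $\steq$ is a congruence and that $\checkmark$ is preserved by $\steq$.
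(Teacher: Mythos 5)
Your proposal is correct and follows essentially the same route as the paper: induction on the structure of $\fV$, lifting the reductions under the new restriction via \rtit{cRes}, discharging the scope of $d$ with \rtit{cTgh} (padding with $\con{\emptyset}{\inert}$ / using \rtit{scExt} where the $d$-permissions are absent), and appealing to closure of $\env$ (\defref{def:permission-environment}.2) to see that $\env(c)\subseteq\eV\setminus\sset{\permin{d},\permout{d}}$ still holds. The paper only writes out the $\fstate{c}{\lst{e}}$ case; your treatment of $\fcons{\fV_1}{\fV_2}$ and the auxiliary permission-deletion lemma you sketch are exactly the kind of machinery the paper develops elsewhere (\cf Lemma~\ref{lem:rest-satisfaction}), so the argument goes through.
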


\proof
  By induction on the structure of \fV.  For instance:
  \begin{desCription}
  \item[\bf\fstate{c}{\lst{e}}:] We know $S\evaluate \conC{\pioutA{c}{\lst{e_1}}}$ where $\lst{e}\evaluate\lst{v}$, $\lst{e_1}\evaluate\lst{v}$  and $\env(c)\subseteq\eV$.  By \rtit{cRes}  and then by \rtit{cTgh} and $d \not\in \nm{\fstate{c}{\lst{e}}}\cup \nm{\env}$ we deduce
    \begin{align*}
       & \rest{d}{S} \reduc^\ast\steq \restB{d}{\conC{\pioutA{c}{\lst{e_1}}}}\steq \restB{d}{\conC{\pioutA{c}{\lst{e_1}}} \paral \con{\emptyset}{\inert}} \\
&
\restB{d}{\conC{\pioutA{c}{\lst{e_1}}} \paral \con{\emptyset}{\inert}} \reduc \con{\eV\setminus\sset{\permin{d},\permout{d}}}{\pioutA{c}{\lst{e_1}}} \paral \restB{d}{\con{\emptyset}{\inert}} \steq \con{\eV\setminus\sset{\permin{d},\permout{d}}}{\pioutA{c}{\lst{e_1}}} 
    \end{align*}
 Since $d\not\in\dom{\env}$ then by \defref{def:permission-environment}.2, \ie the environment is suitably closed, it follows that $\env(c)\subseteq(\eV\setminus\sset{\permin{d},\permout{d}})$  and hence $\env, \con{\eV\setminus\sset{\permin{d},\permout{d}}}{\pioutA{c}{\lst{e_1}}}\sat  \fstate{c}{\lst{e}}$  and by Proposition~\ref{cor:satisfaction-convergence} that $ \env, \rest{d}{S} \models \fstate{c}{\lst{e}}$.\qed
  \end{desCription}

% We next prove adding more permissions to a system does not affect its formula satisfactions, \ie Lemma~\ref{lem:rest-satisfaction}

\begin{defi}[Permission Restriction]\label{def:perm-rest}
  \begin{align*}
    S \setminus \eVV & \deftxt
    \begin{cases}
      \con{\eV\setminus\eVV}{P} & \text{if } S = \conCP\\
      S_1\setminus \eVV \paralS S_2\setminus \eVV & \text{if } S = S_1 \paral S_2\\
      \restB{c}{T\setminus (\eVV\setminus\sset{\permin{c},\permout{c}})} & \text{if } S = \rest{c}{T} 
    \end{cases}
  \end{align*}
\end{defi}

\begin{prop}\label{prop:rest-err}
  \begin{math}
    S\setminus\eVV \errNot \text{ implies } S \errNot
  \end{math}
\end{prop}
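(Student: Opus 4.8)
The claim is the contrapositive of a simple monotonicity fact, so the natural route is to prove instead: for every permission-set $\eVV$, $S \err$ implies $S \setminus \eVV \err$. I would prove this by rule induction on the derivation of $S \err$, keeping $\eVV$ universally quantified so that the induction hypothesis is available for every permission-set. The whole argument rests on the trivial observation that deleting permissions can never re-supply a missing one: $\permout{c} \notin \eV$ implies $\permout{c} \notin \eV \setminus \eVV$, and similarly for $\permin{c}$.

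With that, the base cases \rtit{eOut} and \rtit{eIn} are immediate: if $S = \conC{\pioutA{c}{\lst{e}}}$ with $\permout{c} \notin \eV$, then by \defref{def:perm-rest} $S \setminus \eVV = \con{\eV \setminus \eVV}{\pioutA{c}{\lst{e}}}$ with $\permout{c} \notin \eV \setminus \eVV$, so $S \setminus \eVV \err$ by \rtit{eOut}; the input case is symmetric. For \rtit{ePar}, with $S = S_1 \paral S_2$ and the error coming from $S_1 \err$, the induction hypothesis gives $S_1 \setminus \eVV \err$, and since $S \setminus \eVV = (S_1 \setminus \eVV) \paral (S_2 \setminus \eVV)$, rule \rtit{ePar} gives $S \setminus \eVV \err$. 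For \rtit{eRes}, with $S = \rest{c}{T}$ and the error coming from $T \err$, we have $S \setminus \eVV = \restB{c}{T \setminus (\eVV \setminus \sset{\permin{c},\permout{c}})}$; applying the induction hypothesis to $T \err$ with the permission-set $\eVV \setminus \sset{\permin{c},\permout{c}}$ yields $T \setminus (\eVV \setminus \sset{\permin{c},\permout{c}}) \err$, whence \rtit{eRes} closes the case.

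The only case needing a side lemma is \rtit{eStr}, where $S \err$ comes from $S \steq S'$ and $S' \err$; the induction hypothesis gives $S' \setminus \eVV \err$, so it suffices to know that $S \steq S'$ implies $S \setminus \eVV \steq S' \setminus \eVV$, after which \rtit{eStr} finishes. I would prove this compatibility lemma by induction on the derivation of $S \steq S'$: the reflexivity, symmetry, transitivity and context-closure cases are immediate by induction, and the axioms \rtit{scCom}, \rtit{scAss}, \rtit{scSwp}, \rtit{scNil}, \rtit{scNew} are routine unfoldings of \defref{def:perm-rest} (using $\emptyset \setminus \eVV = \emptyset$). The delicate axiom, and the main obstacle, is \rtit{scExt}, i.e.\ $S \paral \rest{c}{T} \steq \restB{c}{S \paral T}$ under $c \notin \fn{S}$: unfolding gives $(S \setminus \eVV) \paral \restB{c}{T \setminus (\eVV \setminus \sset{\permin{c},\permout{c}})}$ on one side and $\restB{c}{(S \setminus (\eVV \setminus \sset{\permin{c},\permout{c}})) \paral (T \setminus (\eVV \setminus \sset{\permin{c},\permout{c}}))}$ on the other, and reconciling these requires (i) that $c \notin \fn{S}$ is preserved by permission restriction, so that \rtit{scExt} may be re-applied on the right, and (ii) that $S \setminus \eVV$ and $S \setminus (\eVV \setminus \sset{\permin{c},\permout{c}})$ actually coincide, which holds precisely because $c \notin \fn{S}$ means that no permission-set occurring inside $S$ mentions $\permin{c}$ or $\permout{c}$. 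This last point is the only place where one must be careful about the definition of free names for confined systems; everything else is bookkeeping.
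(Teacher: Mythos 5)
Your proof is correct. The paper states Proposition~\ref{prop:rest-err} without any proof, evidently regarding it as immediate from \defref{def:perm-rest} and the error rules; your contrapositive argument --- rule induction on $S\err$ showing $S\setminus\eVV\err$, with $\eVV$ universally quantified so that the \rtit{eRes} case can instantiate the hypothesis at $\eVV\setminus\sset{\permin{c},\permout{c}}$ --- is exactly the argument being elided, and the base cases rest on the right observation, namely that $\eV\setminus\eVV\subseteq\eV$ cannot restore a missing permission. The one subtlety you flag in the \rtit{scExt} case of the compatibility lemma (whether $c\not\in\fn{S}$ guarantees that no permission set inside $S$ mentions $\permin{c}$ or $\permout{c}$, which you need for $S\setminus\eVV$ and $S\setminus(\eVV\setminus\sset{\permin{c},\permout{c}})$ to coincide) is a genuine ambiguity, since the paper never defines free names for systems; but the reading you adopt is the only tenable one, because if \rtit{scExt} could move a subsystem holding a $c$-permission underneath a binder for $c$, then $\steq$ would fail to preserve $\perm{\cdot}$ and the paper's own Lemma~\ref{lem:locality} and Lemma~\ref{lem:resourcing} (whose proofs go through \rtit{cStr}) would already be broken. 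Under that reading your proof is complete; I would only suggest recording the compatibility lemma $S\steq S'\Rightarrow S\setminus\eVV\steq S'\setminus\eVV$ explicitly, since it is also implicitly used elsewhere in the appendix (e.g.\ in Lemma~\ref{lem:restr-stable}).
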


\begin{lem}\label{lem:restr-stable}
  \begin{math}
    S \setminus \mu \reducNot \text{ implies } \exists T. \ S \reduc^\ast T\reducNot \text{ where } (T\setminus\mu) \steq (S\setminus \mu)
  \end{math}
\end{lem}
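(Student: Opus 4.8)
The plan is to show that once $S\setminus\mu$ is stuck, the only reductions $S$ can still perform are bureaucratic garbage-collection steps that are invisible to $\setminus\mu$, and then to iterate them to a stable $T$. The starting point is that erasing permissions can only \emph{disable} reductions, and only for the three rules that actually test permission sets --- \rtit{cCom}, \rtit{cTgh}, \rtit{cDsc} --- since the premises of \rtit{cThn}, \rtit{cEls}, \rtit{cPrc}, \rtit{cSpl} and \rtit{cLcl} impose no condition on the confining set of the process they act on (\rtit{cSpl} admits any partition of the ambient set, \rtit{cLcl} any augmentation). I would also first record two routine facts: that $\setminus\mu$ respects structural equivalence, i.e.\ $S\steq S'$ implies $S\setminus\mu\steq S'\setminus\mu$ (rule induction on $\steq$, as in Lemma~\ref{lem:struct-corresp-1}), and that $\steq$ preserves the total number of permissions occurring, with multiplicity, across the confining sets of a system.

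\textbf{Constraining the shape of $S$.} Assume $S$ reduces and read off its shape from Lemma~\ref{lem:reduc-proc-struct}, augmented with the straightforward \rtit{cTgh} case. In each of the cases corresponding to \rtit{cThn}/\rtit{cEls}, \rtit{cPrc}, \rtit{cSpl}, \rtit{cLcl} and \rtit{cTgh}, pushing $\setminus\mu$ through the clauses of Definition~\ref{def:perm-rest} (using $\steq$-compatibility) shows that $S\setminus\mu$ is structurally equivalent to a system of exactly the same shape, which therefore fires the same-named reduction --- contradicting $S\setminus\mu\reducNot$. For \rtit{cTgh} the point is that, when Definition~\ref{def:perm-rest} descends under the offending restriction on $c$, it deletes $\permin{c},\permout{c}$ from the set still to be subtracted, so the guard $\{\permin{c},\permout{c}\}\cap\eV\neq\emptyset$ is preserved. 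Hence $S$'s only possible redex is a \rtit{cCom} or a \rtit{cDsc} one.

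\textbf{The main obstacle: ruling out \rtit{cCom}.} Suppose $S\steq\restB{\lst{c}}{\conC{\pioutA{c}{\lst{e}}}\paral\conCC{\piin{c}{\lst{x}}{P}}\paral R}$ with $\permout{c}\in\eV$, $\permin{c}\in\eVV$ and $\lst{e}\evaluate\lst{v}$. If $c\in\lst{c}$, Definition~\ref{def:perm-rest} never subtracts $\permin{c}$ or $\permout{c}$ from the two confined processes (it drops them from $\mu$ on its way under the restriction on $c$), so both retain those permissions in $S\setminus\mu$ and \rtit{cCom} applies there --- contradiction. If $c\notin\lst{c}$ and $\permout{c},\permin{c}\notin\mu$, the same conclusion holds. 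Otherwise $c\notin\lst{c}$ with $\permout{c}\in\mu$ or $\permin{c}\in\mu$, in which case the corresponding process in $S\setminus\mu$ no longer owns a permission it uses, so $S\setminus\mu\err$ by \rtit{eOut} or \rtit{eIn}; this contradicts $S\setminus\mu\errNot$, the error-freeness that accompanies $S\setminus\mu\reducNot$ in every application of the lemma (and which is exactly the hypothesis of Proposition~\ref{prop:rest-err}). So the \rtit{cCom} case cannot occur; this is the delicate step, since \rtit{cCom} is the only reduction that genuinely changes the underlying process and so cannot be absorbed by $\setminus\mu$.

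\textbf{Finishing with \rtit{cDsc} and induction.} The remaining possibility is $S\steq\restB{\lst{c}}{\conC{\inert}\paral R}$ with $\eV\neq\emptyset$ and $T\steq\restB{\lst{c}}{\con{\emptyset}{\inert}\paral R}$. Since $S\setminus\mu\reducNot$, \rtit{cDsc} must \emph{not} apply to $S\setminus\mu$, forcing the image of $\conC{\inert}$ under $\setminus\mu$ to be $\con{\emptyset}{\inert}$; then, again by $\steq$-compatibility and the clauses of Definition~\ref{def:perm-rest}, $T\setminus\mu\steq S\setminus\mu$. If $S$ is already stable, take $T=S$. Otherwise this \rtit{cDsc} step produces $S\reduc T$ with $T\setminus\mu\steq S\setminus\mu$ (hence $T\setminus\mu\reducNot$ and $T\setminus\mu\errNot$) and with strictly fewer permissions than $S$ (a non-empty confining set was erased, and $\steq$ preserves the count). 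An induction on this permission count then supplies a stable $T'$ with $T\reduc^{\ast}T'$ and $T'\setminus\mu\steq T\setminus\mu\steq S\setminus\mu$, which is the required witness.
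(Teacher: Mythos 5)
Your proof is correct in substance but takes a genuinely different route from the paper's. The paper argues top-down: it applies Proposition~\ref{prop:safe-stab-vs-system-structure} to $S\setminus\mu$ to exhibit it as a restricted soup of outputs and inputs on mismatching channels, each owning the relevant permission, and then observes that $S$ can differ from a pre-image of that normal form only in components $\conC{\inert}$ whose permission set was wholly erased; these are discharged in $S$ by \rtit{cDsc} (plus \rtit{scNil}), yielding the stable $T$. You argue bottom-up: classify the possible redexes of $S$ via Lemma~\ref{lem:reduc-proc-struct} (correctly noting that \rtit{cTgh} is missing from it and must be added), show that every non-\rtit{cDsc} redex would survive erasure and contradict $S\setminus\mu\reducNot$, and then eliminate the surviving \rtit{cDsc} redexes by induction on the total permission count. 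Both arguments secretly need $S\setminus\mu\errNot$ --- the paper smuggles it in by citing a proposition whose hypothesis is $\checkmark$ rather than mere $\reducNot$, whereas you make it explicit in the \rtit{cCom} case; since the lemma is false without it, your explicitness is a gain, as is your explicit termination measure.

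One soft spot deserves a patch. Your ``routine fact'' that $\setminus\mu$ commutes with $\steq$ fails for \rtit{scExt} under the natural reading of $\fn{\cdot}$: extruding or intruding a restriction on $c$ changes which confined processes have $\permin{c},\permout{c}$ shielded from erasure by the $\rest{c}{\cdot}$ clause of Definition~\ref{def:perm-rest}. The consequence is that a \rtit{cTgh} redex of $S$ need \emph{not} survive in $S\setminus\mu$: if $\{\permin{c},\permout{c}\}\cap\eV\subseteq\mu$ and the confined process sits syntactically outside the restriction, the guard is destroyed by erasure. This does not threaten the lemma --- in exactly that situation the \rtit{cTgh} step is invisible to $\setminus\mu$ and strictly decreases the permission count, so it folds into your \rtit{cDsc} induction --- but it must be treated as a second absorbable step rather than derived as a contradiction. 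The rules that impose no condition on specific permissions (\rtit{cThn}, \rtit{cEls}, \rtit{cPrc}, \rtit{cSpl}, \rtit{cLcl}) are unaffected by the discrepancy, so the rest of your case analysis stands.
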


\begin{proof}
  By Proposition~\ref{prop:safe-stab-vs-system-structure} we know,
  \begin{align}
    \label{eq:42}
S\setminus \mu &\steq\restB{\lst{d}}{ \;\paral_{i=0}^n \con{\eV_i}{\pioutA{c_i}{\lst{e_i}}} \; \paral_{j=0}^m \con{\eVV_j}{\piin{c'_j}{\lst{x_j}}{P_j}}}  &  \text{where }&\sset{c_1,\ldots, c_n} \cap \sset{c'_1,\ldots,c'_m} = \emptyset
  \end{align}
 By system structural equivalence, \steq,  the only sub-systems in $S$ that are abstracted away from $S\setminus \mu$  in $\restB{\lst{d}}{ \;\paral_{i=0}^n \con{\eV_i}{\pioutA{c_i}{\lst{e_i}}} \; \paral_{j=0}^m \con{\eVV_j}{\piin{c'_j}{\lst{x_j}}{P_j}}}$   are those of the form $\conC{\inert}$ where $\eV\subseteq \eVV$; the operation made these systems equivalent to  $\con{\emptyset}{\inert}$ which could then be eliminated through \rtit{scNil}.  In $S$, sub-systems of the form  $\conC{\inert}$ can still be eliminated through \rtit{cDsc} and then \rtit{scNil} (as before),  leaving us with the same array of mismatching confined output and input processes found in $S\setminus\eVV$, less the restricted permissions.
\end{proof}

\begin{lem} \label{lem:rest-reduc}
  \begin{math}
    S\setminus \eVV \reduc T \setminus \eVV \text{ implies } S \reduc T
  \end{math}
\end{lem}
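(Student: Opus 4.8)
The plan is to argue by rule induction on the derivation of $S \setminus \eVV \reduc T \setminus \eVV$, casing on the last rule of Figure~\ref{fig:confined-lang}. The key preliminary observation is that the operation $-\setminus\eVV$ of Definition~\ref{def:perm-rest} preserves the syntactic skeleton of a system --- its process components and its parallel/restriction structure --- and only shrinks the confining permission sets (with the removed set adjusted to $\eVV\setminus\sset{\permin{c},\permout{c}}$ underneath a restriction on $c$). Consequently, once the reduction rule at hand pins down the form of $S\setminus\eVV$, the form of $S$ is determined up to the precise permission sets it carries, and likewise for $T$ from $T\setminus\eVV$. One small auxiliary fact is needed for the \rtit{cStr} case, namely that $-\setminus\eVV$ commutes suitably with $\steq$: if $U\steq S\setminus\eVV$ then $U = S'\setminus\eVV$ for some $S'\steq S$. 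This is a routine induction on the rules of $\steq$, the only mildly interesting clauses being \rtit{scNil} and \rtit{scNew}, which mention $\con{\emptyset}{\inert}$.

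For the communication and branching rules the argument is immediate. In \rtit{cThn}, \rtit{cEls} and \rtit{cPrc} the permission sets are carried through untouched, so the reduction on $S$ is produced by the same rule with the same (larger) sets and the same boolean/expression evaluation side-conditions. In \rtit{cCom} the side-conditions $\permout{c}\in\cdot$ and $\permin{c}\in\cdot$ hold for the smaller sets of $S\setminus\eVV$, hence a fortiori for the corresponding sets of $S$, and the merged set produced in $T$ restricts correctly to the one produced in $T\setminus\eVV$. Similarly \rtit{cDsc} requires a non-empty permission set, which is only easier to meet for $S$ than for $S\setminus\eVV$, and the guard $\sset{\permin{c},\permout{c}}\cap\eV\neq\emptyset$ of \rtit{cTgh} is unchanged by $-\setminus\eVV$ since the set removed inside the scope of $c$ is disjoint from $\sset{\permin{c},\permout{c}}$. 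The inductive rules \rtit{cPar} and \rtit{cRes} follow by applying the induction hypothesis to the subderivation (with $\eVV\setminus\sset{\permin{c},\permout{c}}$ in place of $\eVV$ for \rtit{cRes}), and \rtit{cStr} follows by the $\steq$-commutation fact together with the induction hypothesis.

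The main obstacle is the \rtit{cSpl} case (and, relatedly, \rtit{cLcl}). There $S\setminus\eVV$ has the form $\con{\eV\setminus\eVV}{P\paral Q}$ and it reduces to $\con{\eVV_1}{P}\paral\con{\eVV_2}{Q}$ with $\eVV_1\uplus\eVV_2 = \eV\setminus\eVV$; we must exhibit a split $\con{\eVV_1'}{P}\paral\con{\eVV_2'}{Q}$ of $\con{\eV}{P\paral Q}$ --- i.e.\ with $\eVV_1'\uplus\eVV_2' = \eV$ --- whose image under $-\setminus\eVV$ is the given one, namely $\eVV_i'\setminus\eVV = \eVV_i$. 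This is always achievable, because the permissions removed by $-\setminus\eVV$, those in $\eV\cap\eVV$, are disjoint from $\eVV_1$ and $\eVV_2$ and may be allocated arbitrarily to either branch; concretely one may take $\eVV_1' = \eVV_1\cup(\eV\cap\eVV)$ and $\eVV_2' = \eVV_2$, which are disjoint, have union $\eV$, and restrict back to $\eVV_1,\eVV_2$. For \rtit{cLcl} analogous bookkeeping is needed for the freshly generated pair $\sset{\permin{c},\permout{c}}$; here one additionally uses $\alpha$-conversion to choose the bound channel $c$ fresh with respect to all permissions in play, so that the special treatment of $\sset{\permin{c},\permout{c}}$ in Definition~\ref{def:perm-rest} makes $-\setminus\eVV$ commute with the \rtit{cLcl} step on $S$ exactly as it does on $S\setminus\eVV$. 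Once the split (resp.\ the scoped reduction) is matched in this way, the corresponding rule instance yields $S\reduc T$, closing the induction.
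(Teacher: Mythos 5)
Your overall strategy---rule induction on the derivation of $S\setminus\eVV\reduc T\setminus\eVV$---is exactly the paper's (whose entire proof is that one sentence), and your treatment of the substantive cases is correct and considerably more explicit than the original: the monotonicity observations for \rtit{cCom}, \rtit{cDsc} and \rtit{cTgh}, and in particular the re-allocation $\eVV_1'=\eVV_1\cup(\eV\cap\eVV)$, $\eVV_2'=\eVV_2$ for \rtit{cSpl}, are precisely the points that make the induction go through.

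The one step that does not hold as stated is the auxiliary commutation fact you invoke for \rtit{cStr}: it is not true that $U\steq S\setminus\eVV$ implies $U=S'\setminus\eVV$ for some $S'\steq S$. Take $\eVV=\sset{\permout{d}}$ and $S=\rest{c}{\con{\sset{\permout{d}}}{\inert}}$ with $d\neq c$; then $S\setminus\eVV=\rest{c}{\con{\emptyset}{\inert}}\steq\con{\emptyset}{\inert}$ by \rtit{scNew}, but no $S'\steq S$ restricts to $\con{\emptyset}{\inert}$: since $\setminus\eVV$ preserves the top-level constructor, such an $S'$ would have to be some $\con{\mu}{\inert}$, and \rtit{scNew} (like \rtit{scNil}) applies only to the literal $\con{\emptyset}{\inert}$, whereas $S$ carries a permission set that becomes empty only \emph{after} the erasure. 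So the clauses you call ``mildly interesting'' are exactly where the claim fails. The damage is contained---\rtit{scCom}, \rtit{scAss}, \rtit{scSwp} and \rtit{scExt} are permission-insensitive, so the only unmatchable structural steps are discards of inert components, which can instead be carried along in parallel on the $S$-side via \rtit{cPar}---but then the target of $S$'s reduction agrees with $T\setminus\eVV$ only up to $\steq$ and a residual $\con{\mu}{\inert}$. That weaker conclusion ($S\reduc T'$ with $T'\setminus\eVV\steq T\setminus\eVV$) is all that Lemma~\ref{lem:rest-evaluate} actually needs, but it is not what you (or the lemma) literally assert, so you should either prove the statement in that weakened, existential form throughout, or restrict the commutation claim and treat the inert-discard steps separately.
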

\begin{proof}
  By rule induction on $S\setminus \eVV \reduc T \setminus \eVV$.
\end{proof}

\begin{lem}\label{lem:rest-evaluate}
  \begin{math}
    (S \setminus \eVV) \evaluate T \text{ implies } S \evaluate R \text{ where } R\setminus \eVV \steq T 
  \end{math}
\end{lem}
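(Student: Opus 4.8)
The plan is to argue by strong induction on the length $n$ of a reduction sequence $S\setminus\eVV\reduc^{n}T'$ that witnesses $S\setminus\eVV\evaluate T$, i.e.\ one with $T'\checkmark$ and $T\steq T'$. For the base case $n=0$ we have $T'=S\setminus\eVV$, so $S\setminus\eVV\reducNot$ and $S\setminus\eVV\errNot$. Lemma~\ref{lem:restr-stable} supplies an $R$ with $S\reduc^{\ast}R\reducNot$ and $(R\setminus\eVV)\steq(S\setminus\eVV)$. Since the error predicate is closed under structural equivalence (rule \rtit{eStr}) and $S\setminus\eVV\errNot$, we get $(R\setminus\eVV)\errNot$, and hence $R\errNot$ by Proposition~\ref{prop:rest-err}. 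Thus $R\checkmark$, so $S\evaluate R$, and $(R\setminus\eVV)\steq(S\setminus\eVV)=T'\steq T$, which is the required witness.

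For the inductive step I would first isolate a single-step lifting fact: if $S\setminus\eVV\reduc S_1$ then there is an $S'$ with $S\reduc S'$ and $(S'\setminus\eVV)\steq S_1$. Granting this, from $S\setminus\eVV\reduc S_1\reduc^{n-1}T'$ we obtain such an $S'$; since $(S'\setminus\eVV)\steq S_1\reduc^{n-1}T'$, rule \rtit{cStr} gives $S'\setminus\eVV\reduc^{n-1}T'$ (or $S'\setminus\eVV\checkmark$ directly when $n-1=0$, as $\checkmark$ is $\steq$-closed), so $S'\setminus\eVV\evaluate T$ through a strictly shorter witnessing sequence. The induction hypothesis then yields $S'\evaluate R$ with $(R\setminus\eVV)\steq T$, and prepending $S\reduc S'$ gives $S\evaluate R$ with $(R\setminus\eVV)\steq T$, as wanted.

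It remains to establish the single-step lifting fact. This is done by rule induction on the derivation of $S\setminus\eVV\reduc S_1$, unfolding the erasure via Definition~\ref{def:perm-rest} to read off the shape of $S$. The communication, conditional and call cases (\rtit{cCom}, \rtit{cThn}, \rtit{cEls}, \rtit{cPrc}) lift verbatim, with the resulting permission sets agreeing after erasure by elementary identities such as $(\eV\setminus\eVV)\cup(\eVV'\setminus\eVV)=(\eV\cup\eVV')\setminus\eVV$; rule \rtit{cDsc} lifts because $\eV\setminus\eVV\neq\emptyset$ implies $\eV\neq\emptyset$, and \rtit{cLcl}, \rtit{cTgh} go through by the same bookkeeping (choosing the bound name fresh from $\eVV$). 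The congruence cases \rtit{cPar}, \rtit{cRes} follow by the inner induction hypothesis. I expect the genuine obstacles to be confined to two cases. First, \rtit{cSpl}: a split $\con{(\eV\cup\eVV')\setminus\eVV}{P\paral Q}\reduc\con{\eV_a}{P}\paral\con{\eV_b}{Q}$ of the erased system must be lifted to a split of $\eV\cup\eVV'$; taking $\eV_a$ to additionally absorb all of $(\eV\cup\eVV')\cap\eVV$ while leaving $\eV_b$ unchanged gives a disjoint decomposition of $\eV\cup\eVV'$ whose erasure is exactly $\con{\eV_a}{P}\paral\con{\eV_b}{Q}$. Second, \rtit{cStr}: here the reduct is only \emph{structurally equivalent} to the concrete form, so one needs the companion fact that permission erasure commutes with structural equivalence, $S\steq S'\Rightarrow(S\setminus\eVV)\steq(S'\setminus\eVV)$ (an easy rule induction on $\steq$, analogous to Lemma~\ref{lem:struct-corresp-1}), together with its partial converse allowing a $\steq$-step to be pulled back through the erasure; coordinating these two equivalences is the one delicate point. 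Equivalently, one may view the whole single-step fact as Lemma~\ref{lem:rest-reduc} applied after re-expressing $S_1$ as $T\setminus\eVV$ for a suitable $T$ — the choice of $T$ being forced in every rule except \rtit{cSpl}, where it is the partition just described.
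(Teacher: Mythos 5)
Your proof is correct and follows essentially the same route as the paper, whose entire argument is the one-line citation of Lemma~\ref{lem:rest-reduc}, Lemma~\ref{lem:restr-stable} and Proposition~\ref{prop:rest-err}: your base case is exactly the combination of the latter two, and your single-step lifting fact is the (needed) existentially quantified strengthening of Lemma~\ref{lem:rest-reduc}, in which the reduct is not assumed to already be an erasure. Your observation that \rtit{cSpl} is the one rule where the lifted target is not forced, together with the repartitioning that absorbs the erased permissions into one component, fills in a detail the paper leaves implicit.
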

\begin{proof}
  Follows from Lemma~\ref{lem:rest-reduc}, Lemma~\ref{lem:restr-stable} and Proposition \ref{prop:rest-err}.
\end{proof}

\begin{lem}\label{lem:rest-satisfaction}
  \begin{math}
    \env, (S\setminus\eVV) \sat \fV \text{ implies } \env, S \sat \fV
  \end{math}
\end{lem}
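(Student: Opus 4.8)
The plan is to prove this by induction on the structure of the formula $\fV$, in close parallel to the proof of Lemma~\ref{lem:sat-rest} (the analogous statement for name restriction). The engine of every case is Lemma~\ref{lem:rest-evaluate}: from $\env,(S\setminus\eVV)\sat\fV$ we first unfold the relevant satisfaction clause of \figref{fig:assertion-satisfaction}, obtaining an evaluation $(S\setminus\eVV)\evaluate T$ to a safely-stable witness $T$ of the appropriate shape, and then Lemma~\ref{lem:rest-evaluate} reflects this to $S\evaluate R$ for some $R$ with $R\setminus\eVV\steq T$. It then suffices to establish $\env,R\sat\fV$, since $S\evaluate R$ together with Proposition~\ref{lemma:proc-struct-eq} and Proposition~\ref{cor:satisfaction-convergence} lifts satisfaction back up to $S$.

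The core observation driving the case analysis is that permission erasure $R\mapsto R\setminus\eVV$ leaves the underlying process $|R|$ untouched (so $|R|\steq|T|$ by Lemma~\ref{lem:struct-corresp-1}) and only shrinks the permission sets confining its subprocesses, with the erasure commuting with the system structure and going under name scopes while sparing the scoped channel's own permissions (cf.\ \defref{def:perm-rest}). Since $R$ is safely-stable, being a target of $\evaluate$, Proposition~\ref{prop:safe-stab-vs-system-structure} gives its canonical form as a scoped parallel soup of mismatching guarded outputs and input-blocked processes, and the same holds for $T$; matching these two canonical forms using $R\setminus\eVV\steq T$ shows that $R$ has exactly the same outputs and input-blocks as $T$, each carrying a permission set that is a \emph{superset} of the corresponding one in $T$. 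Consequently the permission side-conditions appearing in the satisfaction clauses --- $\env(c)\subseteq\eV$ for $\fstate{c}{\lst{e}}$, and the $\permin{c}\in\eV$ forced on the blocked process by rule \rtit{eIn} for $\fblk{c}$ --- that hold in $T$ hold \emph{a fortiori} in $R$, while the purely syntactic conditions ($\lst{e}\evaluate\lst{v}$, $\lst{e'}\evaluate\lst{v}$, $c\in\dom{\env}$, $c\notin\lst{d}$) are unaffected. The cases $\femp$ and $\fany$ are immediate variants: $\fany$ needs only that $R$ is some stable system, and $\femp$ uses that a safely-stable $R$ whose process erases to $\inert$ must be $\steq\con{\emptyset}{\inert}$, since any residual $\con{\eV}{\inert}$ with $\eV\neq\emptyset$ would reduce by \rtit{cDsc}.

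For the inductive case $\fV=\fcons{\fV_1}{\fV_2}$, unfolding gives $(S\setminus\eVV)\evaluate\restB{\lst{d}}{T_1\paral T_2}$ with $\lst{d}\notin\dom{\env}$ and $\env,T_i\sat\fV_i$; reflection yields $S\evaluate R$ with $R\setminus\eVV\steq\restB{\lst{d}}{T_1\paral T_2}$. Using Proposition~\ref{cor:satisfaction-evaluation} we may assume $T_1,T_2$ themselves safely-stable, put them in canonical form, and partition the outputs and input-blocks of $R$'s canonical form accordingly into $R\steq\restB{\lst{d}}{R_1\paral R_2}$, with $R_1\perp R_2$ since $S$ --- hence, by Lemma~\ref{lem:resourcing}, $R$ --- is well-resourced, and with each $R_i$ erasing (under the restriction induced by $\lst{d}$) to a system $\steq T_i$. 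By Proposition~\ref{lemma:proc-struct-eq} and the induction hypothesis we obtain $\env,R_i\sat\fV_i$, and since $\lst{d}\notin\dom{\env}$ the satisfaction clause for the separating conjunction gives $\env,R\sat\fcons{\fV_1}{\fV_2}$, whence $\env,S\sat\fcons{\fV_1}{\fV_2}$ as above.

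The main obstacle is the bookkeeping in the second step and in the conjunction case: turning the ``same process, larger permissions'' slogan into an honest structural decomposition of $R$ matching $T$'s canonical form --- in particular reconciling how permission erasure (\defref{def:perm-rest}) interacts with the scope lists $\lst{d}$ that decorate the witnesses of $\fblk{c}$ and $\fcons{\fV_1}{\fV_2}$ formulas, and ensuring that the partition of $R$ can be chosen to line up component-by-component with that of $\restB{\lst{d}}{T_1\paral T_2}$ across structural equivalence so that the induction hypothesis applies. These are routine but delicate manipulations of $\steq$, entirely analogous to those already carried out for Lemma~\ref{lem:sat-rest}.
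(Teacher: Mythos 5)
Your proposal is correct and follows exactly the paper's own (very terse) proof: structural induction on $\fV$ with Lemma~\ref{lem:rest-evaluate} doing the work of reflecting each evaluation witness of $S\setminus\eVV$ back to one of $S$, after which the permission side-conditions hold \emph{a fortiori} because erasure only shrinks permission sets. The additional case-by-case detail you supply (the \rtit{cDsc} argument for $\femp$, the canonical-form matching via Proposition~\ref{prop:safe-stab-vs-system-structure}, and the decomposition bookkeeping for the separating conjunction) is exactly what the paper leaves implicit.
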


\begin{proof}
  By induction on the structure of \fV using Lemma~\ref{lem:rest-evaluate}.
\end{proof}

% \begin{lem}[Substitution and Fresh Variables]
%   \label{lem:fresh-variable-subs} \quad
%   \begin{enumerate}
%   \item 
%      \begin{math}
%     e\sigma\evaluate v \text{ and }(\sigma(y)\neq \sigma'(y) \text{ implies } y\not\in\fv{e}) \quad \text{ implies }\quad   e\sigma'\evaluate v
%   \end{math}
% \item \begin{math}
%     \bV\sigma\evaluate \boolT \text{ and }(\sigma(y)\neq \sigma'(y) \text{ implies } y\not\in\fv{\bV}) \quad\text{ implies } \quad  \bV\sigma'\evaluate \boolT
%   \end{math}
%   \end{enumerate}
% \end{lem}

% \begin{proof}
%   Immediate by the fact that $ e\sigma$ and $\bV\sigma$  are identical to $e\sigma'$ and $\bV\sigma'$ respectively. 
% \end{proof}

\end{document}